\documentclass[11pt,letter]{article}
\usepackage[left,mathlines]{lineno} % line numbers
% Import the geometry package
\usepackage[top=1in, bottom=1in, left=0.92in,right=0.92in]{geometry}
\usepackage{graphicx} % Required for inserting images
\usepackage{mathtools}
\usepackage{amssymb}
\usepackage{sectsty}
\usepackage{titlesec}
\usepackage{subfigure}
\usepackage{color}
\usepackage[table]{xcolor} 
\usepackage{tikz}          % for drawing crosses

\usepackage{amsmath}
\usepackage{amsthm} % this is for \autoref stuff
\usepackage{verbatim}
\usepackage[english]{babel}
\usepackage{xspace}
\usepackage{amsfonts}
\usepackage{mathrsfs}
\usepackage{bbm}
\usepackage[all]{xy}
\usepackage{mathtools}
\usepackage{tabularx}
\usepackage{array}
\usepackage{enumitem}
\usepackage{listings}
\usepackage{url}
\usepackage{tikz}
\usepackage[backref]{hyperref} 
\usepackage{cleveref}
\usepackage[ruled,linesnumbered]{algorithm2e}
\usepackage[toc,page]{appendix}
\usepackage{cleveref} % fix autoref problems, use \Cref always.
\usepackage{turnstile}
\usepackage[normalem]{ulem} % for strikout text using \sout{<xxx>}
\sloppy

\usepackage{appendix}
\usepackage[sanserif,full]{complexity}
\usepackage[inline,final]{showlabels} % For draft mode, we want to see our labels. Make sure to add option "final" to hide all labels before publishing.
% 
%   If you want to change the font the labels appear in, redefine the 
%    \showlabelfont
%    command, which by default expands to \small\ttfamily. For %    
%   example, to have
%   labels in a slanted font, and green, you could include the 
%   defnition \renewcommand{\showlabelfont}
%   {\small\slshape\color{green}}
%   in the preamble of your document (as long as you have loaded the
%   fcolorg package, of course).
%\newcommand{\showlabelfont}{\tiny\color{gray}}
\usepackage{draftwatermark}

% Enable watermark from this point onward
 \SetWatermarkText{}
\SetWatermarkScale{7}     % Increase the watermark size (default is 1)
\SetWatermarkColor[gray]{0.94}  % Set transparency (0 = black, 1 = white)

%\DeclareEmphSequence{\bfseries\itshape}

\definecolor{airforceblue}{rgb}{0.36, 0.54, 0.76}
\definecolor{darkcerulean}{rgb}{0.03, 0.27, 0.49}
\definecolor{darkelectricblue}{rgb}{0.33, 0.41, 0.77}
\hypersetup{colorlinks = true,
        linkcolor =  airforceblue,
        urlcolor = blue,
        citecolor = darkelectricblue,
        pdfstartview=FitH}

\newtheorem{theorem}{Theorem}[section]
\newtheorem*{theorem*}{Theorem}
\newtheorem{lemma}[theorem]{Lemma}
\newtheorem{claim}[theorem]{Claim}
\newtheorem*{claim*}{Claim}
\newtheorem{definition}[theorem]{Definition}
\newtheorem{proposition}[theorem]{Proposition}
\newtheorem*{proposition*}{Proposition}
\newtheorem{corollary}[theorem]{Corollary}
\newtheorem*{corollary*}{Corollary}
\newtheorem{remark}[theorem]{Remark}
\newtheorem*{remark*}{Remark}

\newcommand{\para}{\paragraph}

%%%%%%%%%%%%%%%%%%%%%%%%%%%%%%%%%%%%%%%%
%%  Comment:  autoref names and CleverREF   
%%  As long as the lemma and definition 
%%  environments are defined 
%%  after the hyperref and cleveref packages 
%%  are loaded, using 
%%  \cref in lieu of \autoref should give 
%%  you what you want.
%%%%%%%%%%%%%%%%%%%%%%%%%%%%%%%%%%%%%%%%
 
% Original Jiaqi: 
%
%\theoremstyle{plain}
% \newtheorem{theorem}{Theorem}[section]
% \newtheorem{conjecture}{Conjecture}
% \newtheorem{lemma}[theorem]{Lemma}
% \newtheorem{proposition}[theorem]{Proposition}
% \newtheorem{corollary}[theorem]{Corollary}
% \newtheorem{maintheorem}[theorem]{Main Theorem}

%\theoremstyle{definition} % definitions should better be italicised 
 
\theoremstyle{remark} % To make proof sketch body text non italicised 

\makeatletter
\DeclareRobustCommand\onedot{\futurelet\@let@token\@onedot}
\def\@onedot{\ifx\@let@token.\else.\null\fi\xspace}

\def\ie{\emph{i.e}\onedot}

 \def\resp{\emph{resp}\onedot}

\makeatother

\newcommand{\probP}{\text{I\kern-0.15em P}}

%complexity command
 %Pigeonhole Principle
\newcommand{\RankP}{\mathrm{RankP}} %Rank principle
 %odd town principle
\newcommand{\TRankP}{\mathrm{TRankP}} %Tensor Rank Principle
\newcommand{\IRankP}{\mathrm{IRankP}} % iterated rank principle
\newcommand{\IRankPE}{\mathrm{IRankPE}} %iterated rank principle with extension variables
\newfunc{\depth}{Depth}
\newfunc{\sdeg}{sdeg}
\newcommand{\sgn}{{\rm sgn}}
\newcommand{\perm}{{\rm perm}}
\newcommand{\IMM}{\ComplexityFont{IMM}}
\newfunc{\coeff}{coeff}
\newfunc{\Size}{Size}
\newcommand{\alg}{\ComplexityFont{alg}}

\newfunc{\ac}{ac}
\newcommand{\AUB}{\mathrm{AUB}} %algebrai circuit upper bound formula
\newfunc{\refute}{ref}
\newfunc{\cnf}{cnf}
\newfunc{\ecnf}{ecnf}

\newcommand{\IPS}{%
    {
        \ComplexityFont{IPS} 
    }
}
\newcommand{\CNF}{ %
    {
        \ComplexityFont{CNF}
    }
}
\newcommand{\VAL}{ %
    {
        \ComplexityFont{VAL}
    }
}
\newcommand{\arit}{ %
    {
        \ComplexityFont{arit}
    }
}
\newcommand{\carry}{ %
    {
        \ComplexityFont{CARRY}
    }
}
\newcommand{\add}{ %
    {
        \ComplexityFont{ADD}
    }
}
\newcommand{\addition}{ %
    {
        \ComplexityFont{Addition}
    }
}
\newcommand{\modular}{ %
    {
        \ComplexityFont{Modular}
    }
}
\newcommand{\mult}{ %
    {
        \ComplexityFont{MULT}
    }
}

\newcommand{\SLP}{ %
    {
        \ComplexityFont{SLP}
    }
}

\newcommand{\PC}{ %
    {
        \ComplexityFont{PC}
    }
}
\newcommand{\PCR}{ %
    {
        \ComplexityFont{PCR}
    }
}

\newcommand{\SCNF}{%
    {
        \ComplexityFont{SCNF}
    }
}
\newcommand{\UBIT}{%
    {
        \ComplexityFont{UBIT}
    }
}

% This is changed back to newcommands instead of the dangerous "renewcommand"
\newcommand{\x}{\mathbf{x}}

\newcommand{\F}{\mathbb{F}}

\newcommand{\scnf}{
    {
        \ComplexityFont{scnf}
    }
}
\newcommand{\VAC}{
    {
        \ComplexityFont{VAC}
    }
}
\newcommand{\calM}{\mathcal{M}}

%% ST25 Macros for IPS proofs with better looking VDash

% No spaces in the below macros, to avoid forcing spaces after the symbols like in the above macros. 
\newcommand{\ACZ}{\ensuremath{\ComplexityFont{AC}^0}}
\newcommand{\ACZP}{\ensuremath{\ACZ[p]}}

\newcommand{\NCTwo}{\ensuremath{\ComplexityFont{NC^2}}}

% Commentout macro
\newcommand{\commentout}[1]{}

\let\etalchar\relax % some bug

%%%%%%%%%%%%%%%%%%%%%%%%%%%%%%%%%%%%%%%%%%%%%%%%%%%%%%%%%%%%%%%%%%%%%%
\title{\bf \ACZ$[p]$-Frege Cannot Efficiently Prove that Constant-Depth Algebraic Circuit Lower Bounds are Hard}
%%%%%%%%%%%%%%%%%%%%%%%%%%%%%%%%%%%%%%%%%%%%%%%%%%%%%%%%%%%%%%%%%%%%%%

\author{
 Jiaqi Lu\thanks{Department of Computing. This project has received funding from the European Research Council (ERC) under the European Union’s Horizon 2020 research and innovation programme (grant agreement No 101002742, \textit{EPRICOT} project).}
 \smallskip
 \\
 \small Imperial College London
 \and 
 Rahul Santhanam%
 \thanks{This project was partly funded by the EPSRC grant EP/Z534158/1, \textit{Integrated Approach to Computational Complexity: Structure, Self-Reference and Lower Bounds}.}
 \smallskip
 \\
 \small University of Oxford
 \and
 Iddo Tzameret\thanks{Department of Computing. This project has received funding from the European Research Council (ERC) under the European Union’s Horizon 2020 research and innovation programme (grant agreement No 101002742, \mbox{\textit{EPRICOT}} project). It was also supported by the Engineering and Physical Sciences Research Council (EPSRC) under grant EP/Z534158/1, \textit{Integrated Approach to Computational Complexity: Structure, Self-Reference and Lower Bounds}.. Email: \text{iddo.tzameret@gmail.com}}
  %\smallskip 
 \\ 
  \small Imperial College London
 }

% Comments ------
\newcommand{\iddo}[1]%{}
 {%
    [%
        {   
                {%
            \textcolor{green}%
                    {\footnotesize{Iddo: #1}}%
                }%
        }%
    ]%
}

\newcommand{\Rahul}[1]%
{%
    [%
        {   
                {%
            \textcolor{green}%
                    {\footnotesize{Rahul: #1}}%
                }%
        }%
    ]%
}

\newcommand{\Jiaqi}[1]%{}
{%
    [%
        {   
                {%
            \textcolor{green}%
                    {\footnotesize{Jiaqi: #1}}%
                }%
        }%
    ]%
}
\newcommand{\new}[1]{}%{{{#1}}}

%\date{September 6, 2025}

% Change line number color

%%%%%%%%%%%%%%%%%%%%%%%%
\begin{document}

%%%%%%%%%%%%%%%%%%%%%%%%
%\linenumbers
\maketitle

% no page number for toc
\thispagestyle{empty} 
  
\begin{abstract}
We study whether lower bounds against constant-depth algebraic circuits computing the Permanent over finite fields (Limaye--Srinivasan--Tavenas [{\footnotesize J.~ACM, 2025}] and Forbes [{\footnotesize CCC'24}]) are hard to prove in certain proof systems.
We focus on a DNF formula that expresses that such lower bounds are hard for constant-depth algebraic proofs. 
Using an adaptation of the diagonalization framework of Santhanam and Tzameret ({\footnotesize SIAM~J.~Comput., 2025}), we show \emph{unconditionally} that this family of DNF formulas does not admit polynomial-size propositional $\ACZ[p]$-Frege proofs, infinitely often.  
This rules out the possibility that the DNF family is easy, and establishes that its status is either that of a hard tautology for $\ACZ[p]$-Frege or else unprovable (i.e., not a tautology).  
While it remains open whether the DNFs in question are tautologies, we provide evidence in this direction.  
In particular, under the plausible assumption that certain (weak) properties of multilinear algebra---specifically, those involving tensor rank---do not admit short constant-depth algebraic proofs, the DNFs \emph{are} tautologies.  
We also observe that several weaker variants of the DNF formula are provably tautologies, and we show that the question of whether the DNFs are tautologies connects to conjectures of Razborov ({\footnotesize ICALP'96}) and Kraj\'{\i}\v{c}ek ({\footnotesize J.~Symb.~Log., 2004}).

\smallskip 
Additionally, our result has the following special features: 

(\textbf{i}) 
\textbf{Existential depth amplification}: the DNF formula considered is parameterised by a constant depth $d$ bounding the depth of the algebraic proofs. We show that there \emph{exists some fixed} depth $d$ such that if there are no small depth-$d$ algebraic proofs of certain circuit lower bounds for the Permanent, then there are no such small algebraic proofs in {\it any} constant depth. 

(\textbf{ii}) 
\textbf{Necessity}: 
We show that our result is a necessary step towards establishing lower bounds against constant-depth algebraic proofs, and more generally against any sufficiently strong proof system. 
%\emph{any} super-polynomial lower bound against constant-depth algebraic proofs imply that our DNF formulas do not have short constant-depth algebraic proofs. In this sense, our result is a \emph{necessary step} towards constant-depth algebraic proofs lower bounds.
In particular, showing there are no short proofs for our  DNF formulas, obtained by replacing `constant-depth algebraic circuits' with any ``reasonable" algebraic circuit class $\mathcal C$, is {\it necessary} in order to prove any super-polynomial lower bounds against algebraic proofs operating with circuits from $\mathcal C$.

    \end{abstract}

% \newpage
% \tableofcontents
% % no page number for toc
% \thispagestyle{empty} 
\newpage
% page numbering starts at this point
\clearpage
\pagenumbering{arabic}
    
    % ------------------------------
    \section{Introduction}

    Propositional proof complexity studies the sizes of proofs for propositional tautologies in {\it propositional proof systems} of interest~\cite{CR79, BP98, krajivcek2019proof}. In general, a propositional proof system $Q$ is defined by a polynomial-time computable binary relation
%\footnote{\sout{Cook and Reckhow originally defined %a propositional proof system as a polynomial-time %computable surjection from the set of strings to %the set of tautologies. Our formulation is essentially %equivalent, and is more convenient for our purposes.}{I %suggest not to add it; since this would be known %by all proof complexitist, and for the other it will %not be important to know this detail.}} 
$R_Q$ such that a formula $\phi$ is a propositional tautology if and only if there is some $y$ such that $R_Q(\phi,y)$ holds; any such $y$ is called a {\it proof} of $\phi$. The $R$-proof size of $\phi$ is the size of the smallest $y$ such that $R_Q(\phi,y)$ holds. For natural sequences of tautologies (such as the Pigeonhole Principle, Tseitin graph formulas, and formulas encoding circuit lower bounds) $\phi_n$ and propositional proof systems of interest (such as Resolution, Frege and Extended Frege), we would like to understand how the proof size of $\phi_n$ grows with the size of the formula  $|\phi_n|$, and in particular whether the proof size is polynomially bounded or not as a function of the size of the formula.
    
    %Here a propositional proof system, as defined by Cook and Reckhow, is any polynomial-time computable surjective function $f$ from the set of all binary strings (interpreted as descriptions of proofs) to the set of all propositional tautologies (encoded appropriately). For a tautology $\phi$ and a propositional proof system defined by a function $f$ as above, the $f$-proof size of $\phi$ is the size of the smallest $y$ such that $f(y) = \phi$. Note that such a $y$ exists because $f$ is a surjection. Given a sequence of tautologies $\phi_n$, we are interested in the growth rate of the $f$-proof size of $\phi_n$ as a function of $|\phi_n|$, and in particular we are interested in whether this growth rate is polynomially bounded or not.

     In their seminal paper on propositional proof complexity, Cook and Reckhow \cite{CR79} observed that $\mathsf{NP} = \mathsf{coNP}$ if and only if there is a propositional proof system in which every sequence of tautologies has polynomially bounded proof size. This is the basis of the ``\emph{Cook-Reckhow program}'' \cite{BP98}, of which the ideal limit is the separation of  $\mathsf{NP}$ from  $\mathsf{coNP}$ (and hence also $\mathsf{P}$ from  $\mathsf{NP}$) by showing super-polynomial proof size lower bounds for progressively stronger propositional proof systems. 

    The Cook-Reckhow program can be seen as a dual, nondeterministic, analogue of the {\it circuit complexity approach} to $\mathsf{P}$ vs $\mathsf{NP}$, which proceeds by showing super-polynomial circuit size lower bounds for progressively stronger circuit classes. This analogy is further strengthened by the fact that there is a close correspondence between Boolean circuit classes and inference-based propositional logic, i.e., Frege-style proof systems~\cite{BP98}, in which new proofs are derived from axioms and previously derived proof lines using simple sound derivation rules. The basic  Resolution proof system works with proof lines that are clauses; the Frege proof system with lines that are Boolean formulas; and the Extended Frege (EF) proof system with lines that are essentially Boolean circuits. The strength of these Frege-style proof systems is generally believed to grow as the proof lines get more expressive, just as the corresponding circuit classes are believed to increase in computational power as they grow more expressive.

\subsection{Hard Formulas}
%We call a result  showing the non-existence of short proofs for a given family of (usually combinatorial-based) propositional tautologies a \emph{concrete lower bound} result. By now there is a vast literature  about concrete proof complexity lower bounds. We shall discuss some that are relevant to our work below. 

We say that a sequence of formulas $\phi_n$ is \emph{hard} if there is no proof of $\phi_n$ of polynomial size in $|\phi_n|$. One of the earliest steps in the Cook-Reckhow program---establishing a lower bound on the size of proofs---was taken by Haken \cite{Hak85}. He showed a super-polynomial lower bound for the Pigeonhole Principle formulas in Resolution. Soon after, Ajtai \cite{Ajt88} showed a super-polynomial lower bound for the Pigeonhole Principle in $\AC^0$-Frege, which is the Frege-style system where proof lines are constant-depth Boolean circuits. Since then,  several improvements and extensions of Ajtai's result have been obtained  \cite{Ajt94, BeameIKPP96, BussIKPRS96}, and the lower bound techniques used in these works mirror the random restriction techniques used to prove lower bounds against the circuit class $\AC^0$, further reinforcing the analogy between proof complexity lower bounds and circuit complexity lower bounds (cf.~\cite{PBI93,KPW95}). In the circuit complexity setting, we also know lower bounds for the circuit class $\AC^0[p]$ of constant-depth Boolean circuits with prime modular gates, shown using the polynomial method of Razborov and Smolensky  \cite{Razborov87-eng,Smolensky87}. Can the polynomial method or other techniques used to show proof complexity lower bounds for the corresponding Frege-style proof system $\AC^0[p]$-Frege?

    Despite much effort, this question has remained open for more than three decades, and continues to be a frontier question in proof complexity. It is already highlighted as a key open problem in the 1998 survey of Beame and Pitassi \cite{BP98}, and the recent survey of Razborov re-iterates this \cite{Razb16}. Progress on the question has focused on restricted \emph{algebraic} subsystems of $\AC^0[p]$-Frege such as the Nullstellensatz \cite{BeameIKPP96} and Polynomial Calculus \cite{CEI96} proof systems, which we discussed next.\vspace{-4pt}

\subsection{Related Work on Algebraic Proof Systems}
    
    Much of the research on $\AC^0[p]$-Frege lower bounds has focused on subsystems such as Nullstellensatz and Polynomial Calculus, which encode a CNF and the Boolean constraints on its variables as polynomials and reduce the task of proving that the CNF has no satisfying Boolean assignments\footnote{We will sometimes switch back and forth in our discussion between the tasks of refuting that a CNF formula is satisfiable and of proving that a DNF formula is a tautology, which are equivalent by De Morgan's laws.} to proving that these polynomials do not have a common zero \cite{BeameIKPP96,BussIKPRS96}. Nullstellensatz and Polynomial Calculus are propositional proof systems in the Cook-Reckhow sense \cite{CR79}, since the verification of proofs can be done in deterministic polynomial time.

    Pitassi \cite{Pit97} proposed  more powerful algebraic proof systems where verifying the correctness of a proof requires identity testing of algebraic circuits, i.e., checking whether a given algebraic circuit is identically zero. Identity testing is known to be doable in randomized polynomial time, but it remains a long-standing open question whether it can be done in deterministic polynomial time for general algebraic circuits. Thus, the more general algebraic systems proposed by Pitassi are not propositional proof systems in the traditional Cook-Reckhow sense, but they do have efficient {\it randomized} verification.

    Grochow and Pitassi \cite{GP18} defined a strong algebraic proof system in this sense, called the {\it Ideal Proof System} (IPS), where a single algebraic circuit acts as a {\it certificate} that a set of polynomial equations does not have a common zero. The size of an IPS proof is simply the size of the corresponding algebraic circuit which acts as a certificate. \cite{GP18} showed that IPS is at least as strong as EF, and also that super-polynomial IPS lower bounds for {\it any} sequence of unsatisfiable formulas implies that the Permanent does not have polynomial-size algebraic circuits. This gives a long sought-after connection between proof complexity lower bounds for strong proof systems and (algebraic) circuit complexity lower bounds, but for an algebraic proof system with randomized verification rather than for a propositional proof system. In a more recent paper \cite{ST25}, the implication from proof complexity lower bounds to algebraic complexity lower bounds was strengthened to an {\it equivalence} for a certain explicit sequence of formulas. 

    Given that a proof in IPS is a single algebraic circuit, we can define and study variants of IPS where this circuit is of a restricted form. This has been done in several works in the past decade \cite{FSTW21,AF22,GHT22,HLT24}, which seek to show proof complexity lower bounds for subsystems of IPS or to find closer connections between IPS variants and propositional proof systems. Proof complexity lower bounds in this setting are typically shown by adapting algebraic circuit lower bound techniques. One apparent drawback to this approach is that the hard candidates in these works are not propositional formulas, but rather purely algebraic instances (e.g., $x_1+\dots+x_n + 1 =0$) that cannot be directly translated to propositional logic (cf.~\cite{EGLT25} for a discussion of this point).
    
    In 2021, a breakthrough super-polynomial algebraic circuit lower bound against constant-depth algebraic circuits was shown by Limaye, Srinivasan and Tavenas \cite{LST25} for large enough fields, and very recently has been extended by Forbes \cite{forbes2024low} to fields of characteristic $p$ for any prime $p$. This motivates the question of whether a similar super-polynomial {\it proof size} lower bound holds for constant-depth IPS, where the certificate is a constant-depth algebraic circuit. Some progress on this question has been made in recent work \cite{AF22,GHT22,HLT24}, but it remains open whether there are unsatisfiable CNF formulas (or propositional logic formulas more generally) requiring super-polynomial constant-depth IPS proofs.

%%%%%%%%%%%%%%%%%%%%%%%%%%%%%%%%%%%
\subsection{Framework and Results}\label{sec:framework-results}
%%%%%%%%%%%%%%%%%%%%%%%%%%%%%%%%%%%%

While most lower bounds in propositional proof complexity for concrete tautologies rely on combinatorial or algebraic techniques, one can also try to use logic—specifically, diagonalization—for this purpose. 
This is a natural idea, but in the propositional setting it faces an inherent obstacle: self-reference.  
Suppose that a formula $\Phi$ expresses a proof-complexity lower bound. Ideally, one would like $\Phi$ to encode the statement that ``$\Phi$ itself has no short proofs.'' This, however, is impossible: any reasonable propositional encoding of $\Phi$ must use at least $|\Phi|$ symbols, so $\Phi$ would necessarily be longer than itself.%
\footnote{Friedman~\cite{Friedman79} and Pudlak~\cite{Pudlak86, Pudlak87} show how to adapt the proof of G\"{o}del's Second Incompleteness Theorem to give {\it sub-linear} proof size lower bounds for strong enough propositional proof systems.}

One way to circumvent this problem, due to Kraj\'{\i}\v{c}ek, is to encode proofs implicitly, and hence more economically, via a circuit that computes the bits of the proof given their index (see \cite{Krajicek04b}).  

Santhanam--Tzameret \cite{ST25} proposed a different approach: instead of referring to itself directly, a formula refers to a smaller version of itself.  
Concretely, they introduced the \emph{iterated lower bound formulas}, which encode inductively the statement that ``the previous-level formula has no short proof,'' thus avoiding direct self-reference. This yields diagonalization-based formulas that provably lack short proofs infinitely often. Specifically, if at level $\ell$ the formula  
\[
\varphi_\ell := \text{``there is no short proof of } \varphi_{\ell-1}\text{''}
\]  
has either a long proof or a short proof, then in both cases it establishes the truth of the no-short-proof claim for $\varphi_{\ell-1}$. If $\varphi_\ell$ has no proof at all, then in particular it has no short proof. Thus, we are done: either $\varphi_\ell$ or $\varphi_{\ell-1}$ has no short proof, and this holds for infinitely many $\ell$.

By referring to a smaller (and therefore different) version of itself, the resulting formulas demonstrate that $\varphi_\ell$ has no short proofs infinitely often. However, this still leaves open whether these formulas are hard tautologies or not tautologies at all (and thus vacuously unprovable). Let us elaborate on the difference between showing that a statement is not easy, and showing that it is hard, i.e., that it is a tautology with no short proofs.  

In general, given a proof system, every propositional
formula $\varphi$ falls into one of three categories:  
1. a tautology with a short proof (easy),  
2. a tautology without short proofs (hard), or  
3. not a tautology (hence unprovable).  
This is depicted in \Cref{tab:1}.

\begin{table}[h]
\centering
\begin{tabular}{|>{\columncolor{gray!20}}c|c|c|}
\hline
\cellcolor{gray!22}\textbf{Proof Complexity $\downarrow$ / Validity $\rightarrow$} 
 & \cellcolor{gray!22}\textbf{Tautology} 
 & \cellcolor{gray!22}\textbf{Non-tautology} \\
\hline
Easy       & 1. easy formula & \cellcolor{gray!6}\textbf{X} \\
\hline
Hard       & 2. hard formula & \cellcolor{gray!6}\textbf{X} \\
\hline
Unprovable & \cellcolor{gray!6}\textbf{X} & 3. trivially unprovable \\
\hline
\end{tabular}
\caption{A priori, every propositional formula falls into one of the cells labeled 1, 2, or 3.}
\label{tab:1}
\end{table}

The standard aim in proof complexity is to establish hardness, i.e., to identify a formula in cell 2. For strong proof systems this remains open. Thus, a natural intermediate step is \emph{ruling out that a formula is easy} (cell 1), while leaving open whether it belongs to cell 2 or 3.  
For this to be meaningful, one must consider sequences of formulas whose tautological status is unknown. A natural choice is formulas expressing open lower bounds in complexity theory. This viewpoint was emphasised by Razborov \cite{Razb15-annals}, who highlighted the importance of studying the proof complexity of natural statements whose validity is unknown. This approach was pursued in the 1990s by Razborov and others~\cite{Razb95, Razb95a, Razb98, Razb15-annals} (see also Kraj\'{i}\v{c}ek \cite{Krajicek04c} and Raz \cite{Raz04-JACM}).\footnote{Typically, these propositional formulas are known unconditionally to be tautologies for random objects. For example, if a formula encodes a circuit lower bound for some Boolean function $f$, then for random $f$ the lower bound holds. Thus, while the interesting candidate formula stating, say, SAT $\not\subseteq \P/\poly$ is not known to be a tautology, for a random $f$ the statement $f \not\in \P/\poly$ is.}

In this work, as in \cite{ST25}, we follow this approach. We show that a family of DNF formulas of unknown validity has no short proofs. This situation is illustrated in \Cref{tab:2}.

\begin{table}[h]
\centering
\begin{tabular}{|>{\columncolor{gray!20}}c|c|c|}
\hline
\cellcolor{gray!20}\textbf{Proof Complexity $\downarrow$ / Validity $\rightarrow$} 
 & \cellcolor{gray!20}\textbf{Tautology} 
 & \cellcolor{gray!20}\textbf{Non-tautology} \\
\hline
Easy       & 1. \cellcolor{gray!10}\textbf{\color{red} X}& \cellcolor{gray!10}\textbf{X} \\
\hline
Hard       & 2. hard formula & \cellcolor{gray!10}\textbf{X} \\
\hline
Unprovable & \cellcolor{gray!10}\textbf{X} & 3. trivially unprovable \\
\hline
\end{tabular}
\caption{The DNF formula under consideration in this paper is shown to lie in either cell 2 or cell 3; in particular, we \emph{rule out} the possibility that it is in cell 1 (with respect to the proof system \ACZP-Frege).}\label{tab:2}
\end{table}

There is another reason to consider statements of unknown validity when proving lower bounds for strong systems: very few plausible hard candidates are known for Frege and Extended Frege. The only compelling ones are random CNFs and circuit lower bound formulas, and for neither do we have effective methods for certifying validity. Indeed, confirming the validity of a fixed formula $\phi$ typically requires a proof, and existing proof methods can usually be captured in polynomial-size Frege or Extended Frege proofs—implying that $\phi$ is not a plausible hardness candidate for these systems.

\smallskip

As mentioned above, iterated lower bound formulas provide a simple way to show that a sequence of formulas is not easy. However, unlike formulas asserting SAT $\notin \P/\poly$, it is unclear whether there is any reason to believe they are tautologies. %Indeed, there is a level of circularity here: 
%we rule out that the sequence $\{\varphi_{\ell}\}$ is easy, hard to yield that the sequence $\{\varphi_{\ell}\}$ is hard.   
%
%although we can rule out easy instances unconditionally, in order to obtain hard instances (and not just non-easy ones) we need to assert that the sequence $\{\varphi_{\ell}\}$ is hard. In other words, we  assert that the sequence $\{\varphi_{\ell}\}$ is hard to yield that the sequence $\{\varphi_{\ell}\}$ is hard.   
%
What \cite{ST25} demonstrated is that assuming some circuit lower bounds one can go beyond the iterated lower bound formulas: rule out that some formulas are easy for formulas that are \textit{more likely to be tautologies}---specifically, formulas expressing that proving algebraic \textit{circuit} lower bounds is hard. 
In particular, \cite{ST25} combined the diagonalization framework with the \cite{GP18} reduction from proof complexity to circuit lower bounds that showed that a proof-size lower bound implies an algebraic circuit-size lower bound (VP $\neq$ VNP). 
Hence, instead of stating recursively that lower bounds are hard to prove, one formulates
$\varphi$ to be the statement ``there are no short proofs of algebraic circuit lower bounds of the Permanent''. This avoids the recursively defined statement of the iterated lower bound formulas, and is a statement whose validity is easier to assess and use.
In this way we obtain:  
\[
    \text{circuit lower bound}
    \;\;\Rightarrow\;\;
    \text{no short proof of }%
    \overbrace{%
                \text{%
                    `no short proof of the %
                    circuit lower %            
                    bound'%
                    }%
              }^{\varphi}.%
\]  
However, this still leaves an extra conditional layer: we need to rely on $\VP\neq\VNP$ (the circuit lower bound) to rule out that $\varphi$ is easy.  
\bigskip

\emph{In the present work, we eliminate this conditionality in the constant-depth regime}. Using the unconditional constant-depth algebraic circuit lower bound of Limaye, Srinivasan and Tavenas \cite{LST25}, we extend the diagonalization framework to constant-depth circuits.
\begin{quote}
We demonstrate formulas that \textit{unconditionally} are not easy for $\ACZ[p]$-Frege, such that:  
\begin{itemize}
    \item \textbf{Plausibly tautologies:} we give some evidence supporting the validity of these formulas;

    \item \textbf{Necessary (complete):} any hard instance for a sufficiently strong proof system must imply that our formulas are also not easy for that system;

    \item \textbf{Amplifying:} exhibiting a form of existential depth amplification: there exists a constant $d$, such that if our formulas are hard for depth-$d$ proofs then they are hard for \emph{every} constant-depth $d'$ proofs.  
\end{itemize}
\end{quote}

\begin{theorem} [Corollary of the more formal \Cref{thm:mainthm2} below] \label{thm:mainthm1}
    For every prime $p$ there is an explicit sequence $\{\phi_n\}$ of DNF formulas (of unknown validity) such that there are no polynomial-size \ACZP-Frege proofs of $\{\phi_n\}$.
\end{theorem}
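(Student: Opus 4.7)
The plan is to instantiate the Santhanam--Tzameret diagonalisation template with the unconditional constant-depth algebraic circuit lower bound of \cite{LST25,forbes2024low}, playing the role that the conjectural $\mathrm{VP}\neq\mathrm{VNP}$ separation plays there. Concretely, I would fix a constant depth $d$ and let $\psi_n$ be the standard propositional encoding over $\F_p$ of ``$\perm_n$ has no depth-$d$ algebraic circuit of size $s(n)$,'' where $s(n)$ lies strictly within the superpolynomial regime of \cite{LST25,forbes2024low}. The target formula $\phi_n$ is then a DNF whose free variables encode a candidate constant-depth IPS certificate $V$ of size at most $n^c$; each term asserts a local witness that $V$ fails to be a valid IPS certificate for $\psi_n$. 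Thus $\phi_n$ is a tautology exactly when no short constant-depth IPS proof of $\psi_n$ exists, and $\phi_n$ falls within cell 2 or cell 3 of \Cref{tab:2}.

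The proof proceeds by contradiction: assume $\{\phi_n\}$ admits polynomial-size \ACZP-Frege proofs. First, a simulation step translates each such proof into a polynomial-size constant-depth algebraic (IPS-style) derivation of $\phi_n$. This uses the standard arithmetisation of \ACZP-gates over $\F_p$ (AND, OR, $\mathrm{MOD}_p$ realised by low-degree, low-depth polynomials), as developed in the Grochow--Pitassi framework \cite{GP18} and its bounded-depth refinements \cite{AF22,GHT22,HLT24}, and preserves depth up to a constant. The resulting constant-depth IPS derivation is a proof of a formula that, by construction, denies the existence of precisely such proofs for the Permanent lower bound $\psi_n$; a reflection-style extraction, in the spirit of \cite{GP18,ST25}, is then used to process this proof into an explicit small constant-depth algebraic circuit over $\F_p$ computing $\perm_n$, contradicting \cite{LST25,forbes2024low}.

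The main obstacle will be controlling the depth in the extraction step. The original Grochow--Pitassi reduction from proof-size lower bounds to algebraic circuit lower bounds reads off the hard polynomial from a refuter using general algebraic circuitry, and naively blows up the depth unboundedly. The crux is to refine this extraction, exploiting both the constant-depth structure of the hypothetical IPS proof of $\phi_n$ and the particular form of the propositional encoding of $\psi_n$, so that the Permanent circuit one extracts is itself of constant depth and size strictly below the \cite{LST25,forbes2024low} threshold. If this refinement works only up to a fixed depth $d$, it still suffices thanks to the ``existential depth amplification'' feature advertised in the paper. Finally, the ``infinitely often'' aspect of the conclusion follows, as in \cite{ST25}, by diagonalising over the polynomial bound $n^c$ across the sequence.
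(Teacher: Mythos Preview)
Your proposal has a genuine gap at the key step. The ``reflection-style extraction'' you invoke---turning a short constant-depth IPS proof of $\phi_n$ into a small constant-depth algebraic circuit for $\perm_n$---does not correspond to any known reduction, and you have not explained how it would work. The Grochow--Pitassi implication runs the \emph{other} way: a circuit \emph{upper bound} for the Permanent yields a proof-size \emph{upper bound} for IPS (every unsatisfiable CNF has a $\VNP$ certificate, so if $\VNP$ collapses to small constant-depth circuits then every such CNF has a small constant-depth IPS refutation). There is no mechanism that takes a \emph{proof} of a proof-complexity \emph{lower bound} statement and produces a small circuit for $\perm_n$; if anything, by soundness, such a proof certifies that short IPS proofs of $\psi_n$ do \emph{not} exist, which points toward $\VNP\neq\VAC^0$, not toward a circuit for the Permanent. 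The obstacle you identify (``controlling the depth in the extraction step'') presupposes an extraction that simply is not there.

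The paper's argument is a genuine diagonalisation that uses the contradiction hypothesis \emph{twice}, at different parameter settings, and reaches a contradiction with the \emph{soundness} of IPS rather than with the LST bound directly. Writing $\Phi_{d',n}$ for the CNF asserting ``there is a short depth-$d'$ IPS refutation of $\mathrm{ckt}_d(\perm_n)$'' (so $\phi_n=\neg\Phi_{d',n}$), assume $\Phi_{d',n}$ is refutable in polynomial-size constant-depth IPS for \emph{every} $d'$ and almost all $n$. The LST/Forbes bound makes $\mathrm{ckt}_d(\perm_n)$ unsatisfiable, hence it has a $\VNP$ IPS certificate; the constant-depth \emph{formalisation} of Grochow--Pitassi inside IPS (Lemma~\ref{lemma: grochow-pitassi formalization in cdIPS}, the technical heart) then gives, from the formula $\varphi_m$ encoding ``$\perm_m\in\VAC^0$'' at a suitably larger $m$, a short constant-depth IPS derivation of (the $\scnf$ of) $\Phi_{d,n}$. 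Composing with the assumed refutation of $\Phi_{d,n}$ (first use of the hypothesis, with $d'=d$) yields a short constant-depth IPS refutation of $\varphi_m$, of some depth $d_0$. That refutation \emph{witnesses} the satisfiability of $\Phi_{d_0,m}$. Applying the hypothesis a second time, now with $d'=d_0$ and input length $m$, says $\Phi_{d_0,m}$ is refutable---contradicting soundness. Your single-pass extraction misses this self-referential loop entirely; the ``existential depth amplification'' is a \emph{consequence} of this two-step structure (the outer $d''$ is unconstrained because the second application absorbs it), not a device available to rescue the missing extraction.
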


    As mentioned above,  \Cref{thm:mainthm1} is shown via a diagonalization argument applied to the algebraic proof system IPS. It exploits the lower bound in \cite{LST25} and Forbes' finite fields version  \cite{forbes2024low}, and the fact that Grochow-Pitassi \cite{GP18} showed that constant-depth IPS over finite prime fields simulates $\ACZ[p]$-Frege. 

The formulas $\phi_n$ express proof size lower bounds for algebraic proof systems, and the evidence for their validity comes from several sources.
%including the above constant-depth algebraic version of the Kraj\'{i}\v{c}ek-Razborov conjecture.

%We next describe the connections to algebraic proof complexity and algebraic circuit complexity, as well as the formulas $\phi_n$ in more detail. 

\begin{remark*}[Switching between DNF tautologies and unsatisfiable CNF formulas]
    Since we work with the \emph{refutation} system IPS (whose constant-depth version simulates \ACZ$[p]$-Frege), it is more natural to consider CNF formulas that are conjectured to be unsatisfiable, rather than DNF formulas that are conjectured to be valid. This is a matter of convenience, because of the trivial equivalence between showing that a CNF is unsatisfiable and showing that its complementary DNF is valid. We sometimes abuse notation and use ``refutations'' and ``proofs'' interchangeably; in all cases, the exact meaning should be clear from the context. 
\end{remark*}

% \medskip \ An added motivation for this question is that the Grochow-Pitassi simulation of EF by IPS generalizes to give that $\AC^0[p]$-Frege can be simulated by constant-depth IPS over fields of characteristic $p$. Thus, answering the question above would \emph{settle the long-standing open problem of $\AC^0[p]$-Frege lower bounds}. In our work, we answer the question for explicit CNF formulas that are conjectured to be unsatisfiable\footnote{Since IPS is a refutation system, it is more natural to consider CNF formulas that are conjectured to be unsatisfiable, rather than DNF formulas that are conjectured to be valid. This is simply a matter of convenience, because of the trivial equivalence between showing that a CNF is unsatisfiable and showing that its complementary DNF is valid. We sometimes abuse notation and use ``refutations'' and ``proofs'' interchangeably; in all cases, the exact meaning should clear from the context.} and we derive Theorem \ref{thm:mainthm1} as a corollary.

The CNF formulas we consider are related to the circuit lower bound formulas considered by Razborov~\cite{Razb95, Razb95a, Razb98, Razb15-annals}, but with two differences: 
we consider {\it algebraic} rather than Boolean circuits, and our formulas express lower bounds for {\it proving} constant-depth algebraic lower bounds in constant-depth IPS, rather than express the circuit lower bounds directly. This allows us to adapt the diagonalization technique used to show an equivalence between circuit complexity and proof complexity in \cite{ST25} to derive an {\it unconditional} proof complexity lower bound in our setting.

To state our result more precisely we need the following notation. Let $\mathbb{F}$ be some underlying (finite) field. We let:
\begin{itemize}
    \item ${\rm ckt}_d(\perm_n,s)$: a CNF formula expressing that the Permanent on $n \times n$ matrices has depth-$d$ algebraic circuits of size $s$ over $\mathbb{F}$.
    
    %Let ${\rm ckt}_d(\perm_n,s)$ be as before, the lower bound statement for constant-depth algebraic circuits. %CNF formula expressing that the Permanent on $n \times %n$ matrices has depth-$d$ algebraic circuits of size %$s$ over $\mathbb{F}$. 
    
    \item ${\rm ref\text{-}IPS}_{d}(\phi, t)$: a CNF formula expressing that $\phi$ has size-$t$ IPS refutations of depth-$d$ over $\mathbb{F}$.

    \item The {\bf \textit{diagonalizing CNF formula}} is:
        $$\psi_{d,d',n} := 
        {\rm ref\text{-}IPS}_{d'}%
        ({\rm ckt}_d(%
                        \perm_n, n^{O(1)}
                    )%
        , N^{O(1)}%
        ),%
        $$  
        where $N = |{\rm ckt}_d(\perm_n, n^{O(1)})|$ is $O(2^{n^{O(1)}})$, expressing that depth-$d'$ IPS refutes in polynomial-size that the permanent is computed by a depth-$d$ algebraic circuit of polynomial-size.
\end{itemize}
The use of $O(1)$ in the notation above is informal. We use it to avoid using more quantifiers that would make the statement below hard to parse. 
%The reader should think of $\omega(1)$ as an arbitrarily small super-constant function.%

\begin{theorem}[Informal Statement%
        ; \Cref{theorem: main theorem}%
            ]
    \label{thm:mainthm2}
    For every constant prime $p$ and for all positive integers $d$, there is a positive integer $d'$ such that for all positive integers $d''$, there are no polynomial-size depth-$d''$ IPS refutations of the formula 
    $
    \psi_{d,d',n} 
    $, 
    for infinitely many $n$ over $\mathbb{F}_{p}$.
%    {It should be $2^{n^{O(1)}}$---see size of VNP=SizeDepth formulas. Is it a problem here?}
\end{theorem}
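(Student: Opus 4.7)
The plan is to prove the theorem by a diagonalization argument in the spirit of Santhanam--Tzameret \cite{ST25}, upgraded from conditional to unconditional by replacing the circuit lower bound assumption with the unconditional constant-depth algebraic lower bound for the Permanent of Limaye--Srinivasan--Tavenas \cite{LST25}, in Forbes's \cite{forbes2024low} extension to $\mathbb{F}_p$. Suppose for contradiction there is some $d''$ for which, for all but finitely many $n$, the formula $\psi_{d,d',n}$ admits a depth-$d''$ IPS refutation $\pi_n$ of size polynomial in $|\psi_{d,d',n}|$. The goal will be to transform $\pi_n$ into a polynomial-size constant-depth algebraic circuit computing $\perm_n$ over $\mathbb{F}_p$, yielding a contradiction with Forbes's bound for infinitely many $n$.

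The main step is a proof-to-circuit extraction, adapted from Grochow--Pitassi \cite{GP18} to the constant-depth regime. The variables of $\psi_{d,d',n}$ encode a candidate inner depth-$d'$ IPS refutation of ${\rm ckt}_d(\perm_n, n^{O(1)})$, and its clauses verify validity of that inner refutation; the refutation $\pi_n$ is itself an algebraic circuit certifying the unsatisfiability of this system. Following the Santhanam--Tzameret proof/circuit correspondence, applied inside the constant-depth regime, one should be able to transform $\pi_n$ syntactically into a small algebraic circuit of constant depth $d^*$ (depending on $d,d',d''$) computing $\perm_n$: intuitively, $\pi_n$'s ability to algebraically certify the nonexistence of the inner refutation forces its algebraic structure to encode the obstruction, which can be read off as an explicit circuit for $\perm_n$.

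The choice of $d'$ will deliver the advertised existential depth amplification. Given $d$, one picks $d'$ large enough so that $d^*$ (the depth of the extracted circuit) is at most $d'$ for every constant $d''$. This is possible because $d^*$ depends on $d''$ only through a fixed function capturing the extraction overhead, which is bounded by a constant determined by $d$ alone. Consequently, one single $d'$ suffices for every $d''$, matching the $\forall d\,\exists d'\,\forall d''$ quantifier structure of the theorem. Once $d^* \le d'$ is fixed, the resulting polynomial-size depth-$d^*$ circuit for $\perm_n$ contradicts Forbes's bound, infinitely often.

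The hard part will be controlling the depth of the extracted circuit. The original GP18 construction operates on arbitrary algebraic circuits and does not obviously preserve constant depth; we would carry out the extraction using the constant-depth techniques developed in Andrews--Forbes \cite{AF22}, Govindasamy--Hakoniemi--Tzameret \cite{GHT22}, Hakoniemi--Limaye--Tzameret \cite{HLT24}, and Forbes \cite{forbes2024low}. Concretely, this entails (i) choosing the encoding of the inner IPS refutation inside $\psi_{d,d',n}$ so its verification clauses are low-depth, (ii) ensuring that the substitution and composition steps in the proof-to-circuit translation introduce only constant-depth overhead, and (iii) tracking the precise dependence of $d^*$ on $d, d', d''$ so as to justify the existence of a universal $d'$. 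All other steps---the use of IPS soundness, the comparison with Forbes's bound, and the ``infinitely often'' quantifier---are comparatively routine.
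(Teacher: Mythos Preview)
Your central step—the ``proof-to-circuit extraction'' turning the refutation $\pi_n$ of $\psi_{d,d',n}$ into a small constant-depth circuit for $\perm_n$—is not a known technique, and it is not what either \cite{GP18} or \cite{ST25} establishes. The Grochow--Pitassi implication runs the other way: it says that \emph{if} $\perm$ has small circuits, \emph{then} every unsatisfiable CNF has a small IPS refutation (contrapositively: an IPS lower bound implies a circuit lower bound). It does not say that from an IPS refutation of a particular formula you can read off a circuit for $\perm$. Your intuitive justification is also backwards: $\pi_n$ certifies the \emph{nonexistence} of an inner depth-$d'$ refutation of ${\rm ckt}_d(\perm_n,n^{O(1)})$, which is perfectly consistent with $\perm_n$ having \emph{no} small depth-$d$ circuit (the inner formula is then unsatisfiable, and it may simply lack a small depth-$d'$ refutation). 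Nothing forces $\pi_n$ to ``encode the obstruction'' as a $\perm$-circuit. Relatedly, your handling of $d'$ is circular: you want $d^*\le d'$ for all $d''$, but $d^*$ is the depth of a circuit extracted from a depth-$d''$ object, and you give no reason why this depth would be bounded independently of $d''$.

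The paper's argument is a two-level diagonalization carried out entirely at the proof level; it never extracts a circuit for $\perm$, and the final contradiction is with IPS soundness, not with \cite{LST25,forbes2024low}. The key technical lemma formalizes the (constant-depth) Grochow--Pitassi implication \emph{inside} constant-depth IPS: from the axioms ``$\perm_m$ has a size-$t$ depth-$\Delta$ circuit'' (for a suitably large $m$), constant-depth IPS derives the axioms of $\psi_{\Delta,\Delta,n}$ (this uses \cite{LST25,forbes2024low} only to know that ${\rm ckt}_\Delta(\perm_n,\cdot)$ is unsatisfiable, hence has a $\VNP$ certificate, which the hypothesised $\perm_m$-circuit then compresses). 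Assuming for contradiction that for all $d'$ the formula $\psi_{\Delta,d',n}$ has small depth-$O(1)$ refutations, one first takes $d'=\Delta$ and composes with the lemma to get a refutation of ``$\perm_m$ has a small depth-$\Delta$ circuit'' of some total depth $D$. That means the formula $\psi_{\Delta,D,m}$ is \emph{satisfiable}. But now apply the contradiction hypothesis a second time with $d'=D$: the same formula is refutable, hence unsatisfiable by soundness. This second application is where the quantifier structure $\forall d\,\exists d'\,\forall d''$ comes from: $d'$ is not chosen in advance but is the depth $D$ produced by the first round of the argument.
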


% \begin{theorem}[Informal Statement\footnote{The informal aspect is that we do not specify the $\omega(1)$ functions in the statement below to avoid using more quantifiers that would make the statement hard to parse. The reader should think of $\omega(1)$ as an arbitarily small super-constant function. }; \Cref{theorem: main theorem in Polynomial-size Finite Fields}]
% \jiaqi{This informal statement should not be referred %to \Cref{theorem: main theorem} because this informal %statement works for poly-size field. It should be %referred to \Cref{theorem: main theorem in Polynomial-size %Finite Fields}.}]
%     \label{thm:mainthm2}
%     Let $\{p_n\}$ be any sequence of primes such that $p_n = O(2^n)$, and $\mathbb{F}_{p_n}$ be the field of size $p_n$. For all positive integers $d$, there is a positive integer $d'$ such that for all positive integers $d''$, there are no polynomial-size depth-$d''$ IPS refutations of the formula $\psi_{d,d',n} = {\rm ref\text{-}IPS}_{d'}({\rm ckt}_d(\perm_n, n^{\omega(1)}), N^{\omega(1)})$ for infinitely many $n$ over $\mathbb{F}_{p_n}$, where $N = |{\rm ckt}_d(\perm_n, n^{\omega(1)})|$ is $O(2^{n^{O(1)}})$.
% {It should be $2^{n^{O(1)}}$---see size of VNP=SizeDepth formulas. Is it a problem here?}
% \end{theorem}

Theorem \ref{thm:mainthm2} states the existence of no polynomial constant-depth IPS refutations for formulas that themselves express constant-depth IPS short refutation of constant-depth circuit upper bounds.
%
%stating that there are no short constant-depth %IPS proofs of constant-depth circuit lower bounds %for the Permanent. 
Here the size of a proof is always measured as a function of the length of the formula being proved. The result holds for any finite field, and further for sequence of prime fields of increasing size that are not too large, as we show below in \Cref{thm:mainthm4}.

%    We highlight again that the lower bound on the sequence of formulas    $\{\psi_{d,d',n}\}_n$ is {\it unconditional} in the following sense: we prove that there are no polynomial-size constant-depth IPS proofs of $\psi_{d,d',n}$ for infinitely many $n$. However, we do not know whether the formulas $\psi_{d,d',n}$ are unsatisfiable CNF formulas.
%
% In the next sub-section, we present several arguments in favour of the unsatisfiability of these CNF formulas for large enough $d$ and arbitrary $d'$. Note that the lower bound still holds if the formulas are satisfiable, simply by the soundness of the proof system.

    Theorem \ref{thm:mainthm1} follows from Theorem \ref{thm:mainthm2} as follows: i) take the same $p$ as in the statement of Theorem \ref{thm:mainthm2};
    ii) let the formulas $\phi_n$ in the former result to be the negation of $\psi_{d,d',n}$ in the latter result\footnote{We take negation because we consider \ACZP-Frege to be a {\it proof system} for DNF tautologies, while constant-depth IPS is a {\it refutation system} for unsatisfiable CNFs.}, where $d$ (the stated depth of circuit computing Permanent) is chosen to be large enough and $d'$ (the stated depth of IPS refutations) is chosen as a function of $d$ so that Theorem \ref{thm:mainthm2} holds; 
    iii) finally, use the fact that constant-depth IPS over fields of characteristic $p$ simulates $\AC^0[p]$-Frege \cite{GP18}.

\subsubsection{Implications and Several Important Aspects of \Cref{thm:mainthm2}}

\paragraph{Supporting evidence for the unsatisfiability of the diagonalizing CNF formula.}
We describe three forms of supporting evidence that $\psi_{d,d^\prime,n}$ is unsatisfiable.

\medskip 

%--------------------------------------------------
\noindent{(I)\textit{\uline{Hardness of multilinear algebra for constant-depth proofs}}.
To establish the unsatisfiability of $\psi_{d,d',n}$, it suffices to show that constant-depth algebraic circuit upper bound formulas do not admit small-size constant-depth IPS refutations. 

The tensor rank principle denoted $\TRankP_{m,n}^r(A)$ and introduced in~\cite{GGLST25}, states that an order-$r$ tensor $A$ of rank $m$ can be written as the sum of $n$ rank-$1$ tensors of order $r$. This principle is easily shown to be unsatisfiable when $m > n$. 
%
%They showed that $\TRankP_{m,n}^r(A)$ requires $\PCR$ refutations of size $2^{cn}$ for some constant $c$. 
%
In~\cite{GGLST25} the tensor rank principle was reduced to \emph{constant-depth} algebraic circuit upper bound formulas in constant-depth IPS. As a result, proving super-polynomial-size lower bounds for the tensor rank principle against constant-depth IPS implies the unsatisfiability of the diagonalizing CNF formula:

\begin{corollary}[\cite{GGLST25}; informal, see \Cref{Cor:6.6}]
    If the sequence of CNF formulas $\psi_{d,d',n}$ are satisfiable then the tensor rank principle $\TRankP_{m,n}^r(A)$ admits polynomial-size refutations in depth-$O(d')$ IPS. 
\end{corollary}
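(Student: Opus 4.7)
The plan is to read the corollary as the contrapositive of a reduction already present (in spirit) in \cite{GGLST25}, together with a straightforward unpacking of what it means for $\psi_{d,d',n}$ to be satisfiable. Recall that $\psi_{d,d',n}$ is the CNF $\mathrm{ref\text{-}IPS}_{d'}(\mathrm{ckt}_d(\perm_n, n^{O(1)}), N^{O(1)})$; its satisfying assignments are, essentially by construction, encodings of a polynomial-size depth-$d'$ IPS refutation of the circuit upper bound formula $\mathrm{ckt}_d(\perm_n, n^{O(1)})$. So the first step is simply a decoding: from any satisfying assignment to $\psi_{d,d',n}$ we read off an explicit polynomial-size, depth-$d'$ IPS refutation $\Pi$ of $\mathrm{ckt}_d(\perm_n, n^{O(1)})$.

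Next I would invoke the reduction from \cite{GGLST25} of the tensor rank principle $\TRankP_{m,n}^r(A)$ to the constant-depth algebraic circuit upper bound formula. Concretely, this reduction provides a constant-depth, polynomial-size IPS derivation that, starting from the axioms of $\TRankP_{m,n}^r(A)$ (with $m>n$ so the instance is unsatisfiable), produces the axioms of an instance of $\mathrm{ckt}_d(\perm_n, n^{O(1)})$ under an appropriate substitution/interpretation of the circuit-encoding variables. Composing this constant-depth derivation with the refutation $\Pi$ extracted in the first step yields an IPS refutation of $\TRankP_{m,n}^r(A)$ whose size is polynomial and whose depth is bounded by $d'$ plus an additive constant coming from the reduction, i.e.\ $O(d')$. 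This is the desired polynomial-size depth-$O(d')$ IPS refutation.

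The main technical point to verify—and the step I expect to be the chief obstacle if one insists on a self-contained argument—is the interface between these two pieces. One has to check that the polynomial encoding of $\mathrm{ckt}_d(\perm_n, n^{O(1)})$ used inside $\psi_{d,d',n}$ coincides with (or is constant-depth, polynomial-size IPS-reducible to) the encoding targeted by the \cite{GGLST25} reduction, so that the composition makes sense. In particular, one must keep track of how the auxiliary variables introduced by the reduction interact with the variables quantified over in the \emph{ref-IPS} predicate, and confirm that the depth blow-up in stitching the two derivations together stays $O(1)$. Once this bookkeeping is in place, the composition and the size/depth accounting are routine, and the corollary follows immediately by contraposition: a super-polynomial lower bound for $\TRankP_{m,n}^r$ against constant-depth IPS forces $\psi_{d,d',n}$ to be unsatisfiable.
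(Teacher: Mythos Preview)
Your proposal is correct and matches the paper's approach: the paper does not give a standalone proof of this informal corollary but simply points to Corollary~\ref{Cor:6.6} (and the underlying Theorem~\ref{theorem: reduction to TRankP} from \cite{GGLST25}), which is exactly the two-step argument you describe---decode a satisfying assignment of $\psi_{d,d',n}$ into a polynomial-size depth-$d'$ IPS refutation of the constant-depth circuit upper bound formula, then compose with the \cite{GGLST25} reduction from $\TRankP$ to that upper bound formula to obtain a depth-$O(d')$ refutation of $\TRankP$. Your identification of the encoding-compatibility bookkeeping as the only nontrivial interface issue is also accurate.
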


It is known from \cite{LST25,forbes2024low} that the determinant cannot be computed by polynomial-size constant-depth algebraic circuits over any field. Consequently, following the informal alignment between proof complexity and circuit complexity, one expects that proof systems operating with constant-depth algebraic circuits cannot efficiently prove statements expressing linear (or multilinear) algebraic properties (whose standard proofs use notions like rank and determinants). Therefore, it is reasonable to expect that $\TRankP_{m,n}^r(A)$ does not admit polynomial-size refutations in constant-depth IPS, and hence, that $\psi_{d,d',n}$ is unsatisfiable by the corollary.

\medskip 
% -----------------------------------------
\noindent(II)\textit{\uline{Weaker versions of the CNF that are provably unsatisfiable}}.
To establish the unsatisfiability of $\psi_{d,d',n}$, it suffices to show that constant-depth algebraic circuit upper bound formulas do not admit small-size constant-depth IPS refutations. Here we mention two recent results that establish this for weaker variants of 
        $\psi_{d,d',n} = 
        {\rm ref\text{-}IPS}_{d'}%
        ({\rm ckt}_d(%
                        \perm_n, n^{O(1)}
                    )%
        , N^{O(1)}%
        ),%
        $ 
namely when the proof system is Polynomial Calculus with Resolution ($\PCR$) instead of depth-$d'$ IPS (denoted ``IPS$_{d'}$'' in $\psi_{d,d',n}$) and when the lower bound statement is against either algebraic circuits of unrestricted depth instead of general algebraic circuits (denoted ``{\rm ckt$_d$}'' in $\psi_{d,d',n}$), or against noncommutative algebraic branching programs.

%The rank principle $\RankP_n^m$, which asserts that a rank-$m$ matrix $A \in \F^{m \times m}$ can be expressed as the product of two matrices $X \in \F^{m \times n}$ and $Y \in \F^{n \times m}$ with $m > n$, was studied in~\cite{GGRT25}.

%Via reductions to variants of the rank principle~

% Two recent works, \cite{GGRT25} and \cite{GGLST25}, established lower bounds for algebraic circuit lower bound statements against algebraic proof systems that are weaker than constant-depth IPS. This establish that weaker ba

% \cite{GGRT25} proved that $\RankP_n^m$ requires $\PCR$ refutations of size $2^{n^\delta}$ for some constant $\delta$. By reduction, they further established that non-commutative algebraic branching program upper bound formulas are also hard for $\PCR$.

%Moreover, there is other evidence that $\psi_{d,d^\prime,n}$ is unsatisfiable. 

The refutation system $\PCR$ can be considered roughly as depth-$2$ $\IPS$ (see~\cite{GP18}).
Note that when we increase the strength of the algebraic circuit model replacing ${\rm ckt}_d$ in $\psi_{d,d',n}$, we are actually \emph{weakening} the statement, since proving lower bounds against stronger circuit model is \emph{harder}, meaning that it is \emph{easier} to show that such lower bounds are harder. 
%are hard to prove is \emph{weaker} than stating that lower bounds against even a weaker circuit model are hard. 

\begin{corollary}[\cite{GGLST25}; \Cref{theorem: super-poly lower bound for AUP}]
    \label{cor: 1.3}
        Let $f$ be any polynomial in $n$ variables over $\F_2$.
        The CNF formula      
        $%
        {\rm ref\text{-}\PCR}%
        ({\rm ckt}(%
                        f, n^{O(1)}
                    )%
        , N^{O(1)}%
        ),%
        $ stating that $\PCR$\ over $\F_2$ has a polynomial-size refutation of the statement that $f$ is computable by polynomial-size algebraic circuits, is unsatisfiable. 
\end{corollary}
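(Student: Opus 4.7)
The plan is to derive the corollary directly from the referenced super-polynomial $\PCR$ lower bound, \Cref{theorem: super-poly lower bound for AUP}, by unpacking the semantics of the meta-CNF so that a satisfying assignment would contradict that theorem; no new lower-bound argument is needed on top of what \cite{GGLST25} already supplies.

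First I would make explicit what a satisfying assignment to ${\rm ref\text{-}\PCR}({\rm ckt}(f, n^{O(1)}), N^{O(1)})$ actually encodes: a $\PCR$ refutation over $\F_2$, of size at most polynomial in $N = |{\rm ckt}(f, n^{O(1)})|$, of the CNF ${\rm ckt}(f, n^{O(1)})$ that asserts $f$ has an algebraic circuit of some fixed polynomial size in $n$. The task then reduces to showing that no such $\PCR$ refutation can exist, for any fixed polynomial $f$ over $\F_2$.

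I would then split on the satisfiability of ${\rm ckt}(f, n^{O(1)})$ itself. If $f$ happens to admit a circuit of the stipulated polynomial size, then ${\rm ckt}(f, n^{O(1)})$ is satisfiable, no refutation of any size exists, and the meta-CNF is vacuously unsatisfiable. Otherwise ${\rm ckt}(f, n^{O(1)})$ is exactly (a standard propositional encoding of) the algebraic upper bound principle AUP for $f$, and \Cref{theorem: super-poly lower bound for AUP} directly rules out any polynomial-size $\PCR$ refutation of it, contradicting the assumed satisfying assignment. Either way, the meta-CNF has no satisfying assignment.

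The step requiring the most care, rather than any genuine obstacle, is bookkeeping of parameters so that the meta-CNF and the cited theorem talk about the same object. Concretely I would verify that (i) the exponent hidden in the $n^{O(1)}$ circuit-size bound inside ${\rm ckt}(f, n^{O(1)})$ is consistent with the formulation of AUP used in \cite{GGLST25}, and (ii) the $N^{O(1)}$ refutation-size bound falls within the regime for which the referenced lower bound is super-polynomial in the size of the encoded AUP instance. Once those alignments are checked, the corollary follows from \Cref{theorem: super-poly lower bound for AUP} without any additional combinatorial or algebraic content.
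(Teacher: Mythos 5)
Your proposal is correct and follows essentially the same route as the paper, which states \Cref{cor: 1.3} as an immediate consequence of \Cref{theorem: super-poly lower bound for AUP}: a satisfying assignment to the meta-CNF would encode a polynomial-size $\PCR$ refutation of $\SLP(\AUB(f,s,l))$, which that theorem rules out. Your case split on the satisfiability of ${\rm ckt}(f,n^{O(1)})$ is harmless but already subsumed by the theorem's universally quantified form (``every $\PCR$ refutation requires $|\cdot|^{\omega(1)}$ monomials'' holds vacuously when no refutation exists), so no extra argument is needed there.
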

% The proof of this corollary is provided for self-containedness in \Cref{sec:supporting-evidence}
%  and \Cref{sec: proof of reduction IRankPE}. 
\smallskip

When the proof system in $\psi_{d,d',n}$ is weakened again to $\PCR$ instead of depth-$d'$ IPS, while the algebraic circuit model is very weak (which \emph{strengthens} the proof complexity lower bound statement), namely, noncommutative algebraic branching program (denoted ncABP), we have the following: 

\begin{corollary}[\cite{GGRT25}]
    \label{cor: ncABP}
        Let $f$ be any noncommutative polynomial in $n$ variables over $\F_2$.
        The CNF formula      
        $%
        {\rm ref\text{-}\PCR}%
        ({\rm ncABP}(%
                        f, n^{O(1)}
                    )%
        , N^{O(1)}%
        ),%
        $ stating that $\PCR$ over $\F_2$ has polynomial-size refutations of the statement that $f$ is computable by polynomial-size noncommutative algebraic branching programs, is unsatisfiable. 
\end{corollary}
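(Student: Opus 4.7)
The plan is a reductio ad absurdum: assume the CNF formula $\psi := {\rm ref\text{-}\PCR}({\rm ncABP}(f, n^{O(1)}), N^{O(1)})$ is satisfiable and show that this forces the existence of a polynomial-size noncommutative algebraic branching program for $f$, which contradicts the very refutation we just extracted. A satisfying assignment to $\psi$ encodes a size-$N^{O(1)}$ $\PCR$-refutation $\pi$ of the inner CNF $U := {\rm ncABP}(f, n^{O(1)})$; the job is then to convert $\pi$ into a polynomial-size ncABP computing $f$, which is by definition a satisfying assignment to $U$ and hence incompatible with the existence of $\pi$.

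First I would unpack $U$ carefully. Its Boolean variables describe the structure of a conjectured poly-size ncABP $\mathcal B$ on $n$ noncommutative variables over $\F_2$ --- edge weights, source/sink designations, layer assignments, and so on --- while its clauses, once arithmetised, assert that for every probe monomial $m$ the coefficient of $m$ in the polynomial computed by $\mathcal B$ equals the corresponding coefficient of $f$. The satisfying assignments of $U$ are therefore exactly the polynomial-size ncABPs computing $f$, and a $\PCR$-refutation of $U$ is a polynomial certificate that no such ncABP exists.

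Second, I would exploit the tight correspondence between $\PCR$ over $\F_2$ and depth-$2$ $\IPS$ from~\cite{GP18}. The refutation $\pi$ is then a sum-of-products polynomial identity witnessing that $1$ lies in the ideal generated by the arithmetised clauses of $U$. Using that ncABPs are closed under sums, scalar multiples, and bounded-fan-in products in the noncommutative setting, I would compile $\pi$ into a polynomial-size ncABP $\mathcal B_\pi$ that computes $f$ directly. The compilation is coefficient-by-coefficient: each ``coefficient-extraction'' operation needed to translate $\pi$'s algebraic certificate into an explicit computational object is itself realisable inside the ncABP model at size linear in the size of the corresponding $\PCR$ proof line, so the total blow-up remains polynomial. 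The existence of $\mathcal B_\pi$ then satisfies $U$, contradicting the fact that $\pi$ refutes $U$.

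The main obstacle will be making the compilation step rigorous. $\PCR$ is a generic algebraic proof system whose steps --- multiplying derived lines by arbitrary monomials and taking $\F_2$-linear combinations --- do not a priori respect the rigid layered structure of ncABPs, and in particular the noncommutative ordering of variables must be preserved throughout the translation. To get the reduction to go through, I would maintain an invariant that every intermediate $\PCR$ line carries a canonical ncABP representation whose size grows only additively per proof step, and that the final ``$1$-line'' of $\pi$, after coefficient extraction, precisely reveals the ncABP for $f$. I expect this to mirror the witness-extraction behind Grochow--Pitassi in~\cite{GP18} and the argument underlying Corollary~\ref{cor: 1.3}, but specialised to the noncommutative branching program model, where the required closure properties are especially clean and Nisan-style normal forms for ncABPs provide a natural target for the compilation.
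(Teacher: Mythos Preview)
Your plan has a fundamental conceptual gap. A satisfying assignment to $\psi$ gives you a $\PCR$-refutation $\pi$ of $U={\rm ncABP}(f,n^{O(1)})$, i.e., a certificate that $U$ is \emph{unsatisfiable}---that no small ncABP computes $f$. You then propose to ``compile $\pi$ into a polynomial-size ncABP $\mathcal B_\pi$ that computes $f$ directly,'' i.e., to produce a \emph{satisfying assignment} to $U$ out of a proof that none exists. There is no mechanism for this. The polynomial $\pi$ lives in the Boolean variables of $U$ (the encoding variables of a hypothetical ncABP), not in the $n$ algebraic variables of $f$; its final line is the constant $1$, and nothing in the intermediate lines carries an ncABP for $f$. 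Closure of ncABPs under sums and products lets you represent $\pi$ itself as an ncABP in the wrong variable set, which is irrelevant. Your appeal to Grochow--Pitassi is also misplaced: that argument does not extract a witness from a refutation; it observes that the refutation polynomial of any unsatisfiable CNF is in $\VNP$, and hence has a small circuit \emph{if} $\VNP=\VP$. No analogue of that conditional collapse is available here, and in any case it would not produce an ncABP for the specific $f$.

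The paper's route (from \cite{GGRT25}) is entirely different and does not attempt any self-referential extraction. One reduces the \emph{iterated rank principle} $\IRankPE_{L,n,K}$ to the ncABP upper-bound formulas: because an ncABP is characterised by iterated matrix multiplication, the layered matrix-product structure of $\IRankPE$ embeds into ${\rm ncABP}(f,\cdot)$ (this parallels Razborov's reduction from proof-complexity generators to Boolean circuit upper-bound formulas in \cite{Razb15-annals}). Then the unconditional exponential $\PCR$ lower bound for $\IRankPE$ over $\F_2$ (\Cref{theorem: exp lower bound for IRankP}) transfers to a super-polynomial lower bound for ${\rm ncABP}(f,n^{O(1)})$, which is exactly the unsatisfiability of $\psi$. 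To fix your attempt you would need to abandon the witness-extraction idea and instead exhibit a concrete hard principle that reduces to the inner formula.
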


\Cref{cor: ncABP} is proved via a reduction similar to the reduction from iterated proof complexity generators to Boolean circuit upper bound formulas in \cite{Razb15-annals}. Since an algebraic branching program is characterized by iterated matrix multiplication, one can reduce the iterated rank principle to the ncABP upper bound formulas \cite{GGRT25}. Thus, \Cref{cor: ncABP} follows by an exponential lower bound for the iterated rank principle.

%As before, we give more details on these results in \Cref{sec:supporting-evidence} and the appendix.

%Via reductions to variants of the rank principle~\cite{GGRT25, GGLST25}, lower bounds for these formulas against $\PCR$ have been obtained. The rank principle $\RankP_n^m$, which asserts that a rank-$m$ matrix $A \in \F^{m \times m}$ can be expressed as the product of two matrices $X \in \F^{m \times n}$ and $Y \in \F^{n \times m}$ with $m > n$, was studied in~\cite{GGRT25}. In particular, \cite{GGRT25} proved that $\RankP_n^m$ requires $\PCR$ refutations of size $2^{n^\delta}$ for some constant $\delta$. By reduction, they further established that non-commutative algebraic branching program upper bound formulas are also hard for $\PCR$.

%Building on this, \cite{GGLST25} introduced the tensor rank principle $\TRankP_{m,n}^r(A)$, which states that an order-$r$ tensor $A$ of rank $m$ can be written as the sum of $n$ rank-$1$ tensors, where $m > n$. They showed that $\TRankP_{m,n}^r(A)$ requires $\PCR$ refutations of size $2^{cn}$ for some constant $c$. They further reduced the tensor rank principle to constant-depth algebraic circuit upper bound formulas in IPS. As a result, proving exponential lower bounds for the tensor rank principle against constant-depth IPS would directly imply the unsatisfiability of the diagonalizing CNF formula.

\medskip 

%------------------------------------------------------------
\noindent{(III)}\textit{\uline{Algebraic analogues of proof complexity conjectures}}: Kraj\'{i}\v{c}ek~\cite{Krajicek04c} and Razborov~\cite{Razb96, Razb16} have conjectured the following:%
\footnote{Razborov conjectured in \cite{Razb15-annals} that Frege cannot efficiently prove super-polynomial circuit lower bounds for any Boolean function. More specifically, \cite[Conjecture 1]{Razb15-annals} with suitable parameters for the underlying combinatorial designs implies under some hardness assumptions that Frege cannot efficiently prove that SAT$\not\subseteq\P/{\poly}$. Further conjectures about the impossibility of \emph{Extended} Frege to efficiently prove circuit lower bounds have been circulated in the proof complexity literature and discussions  (cf.~\cite{Razb-2016-talk,Razb-2021-talk,Kra11}).}
Extended Frege cannot efficiently prove {\it any} super-polynomial Boolean circuit lower bound.
Since Extended Frege is essentially a proof system that operates with Boolean circuits, this conjecture says that proof systems operating with Boolean circuits cannot efficiently prove Boolean circuit lower bounds. 
We raise the following analogous conjecture: 

%\begin{quote}
%\textbf{Kraj\'{i}\v{c}ek-Razborov Conjecture}:
% Extended Frege  cannot efficiently prove {\it any} super-polynomial Boolean circuit lower bound.
%\end{quote}

\begin{quote}
\textbf{Constant-depth algebraic analogue of 
 Kraj\'{i}\v{c}ek-Razborov conjecture:}
For every $d$, there is a $d'$ such that there are no polynomial-size depth-$d'$ IPS proofs of super-polynomial depth-$d$ lower bounds for {\it any} polynomial $f$. \end{quote}
(Of course, the conjectured statement above depends on a natural encoding or formulation of the lower bound statement.)      

%It is clear that under the constant-depth algebraic analogue of Kraj\'{i}\v{c}ek-Razborov conjecture, the diagonalizing DNF formulas we consider in this work are tautologies. 

%
%the algebraic constant-depth analogue of the Kraj\'{i}\v{c}ek-Razborov lower conjecture implies that the formulas $\Phi_n$ are tautologies.
%
The constant-depth algebraic analogue of 
the Kraj\'{i}\v{c}ek-Razborov conjecture implies the unsatisfiability of the CNF formulas $\psi_{d,d',n}$ for arbitrary $d$ and large enough $d'$. 
%\sout{for large enough $d$ and arbitrary $d'$.}
%
% The constant-depth algebraic analogue of the Kraj\'{i}\v{c}ek-Razborov conjecture states that for every $d$, there is a $d'$ such that there are no polynomial-size depth-$d'$ IPS proofs of super-polynomial depth-$d$ lower bounds for {\it any} polynomial $f$. 
%
The formulas $\psi_{d,d',n}$ assert this only for $f = \perm$, and therefore follow trivially from the conjecture.

\smallskip

%----------------------------------------

\paragraph{Diagonalizing formulas are necessary for lower bounds.}
% The formulas for which lower bounds are shown in Theorem \ref{thm:mainthm2} are defined using diagonalization, and might not look very natural at first. Does the lower bound in Theorem \ref{thm:mainthm2} actually make progress towards lower bounds for tautologies? 
% The answer is seemingly \emph{yes}:
 We show that the diagonalizing formulas are not easy % 
 %of the type of $\psi_{d,d',n}$
 is \emph{logically necessary} in order to prove super-polynomial lower bounds for $\mathcal{C}$-IPS for \emph{any} ``reasonable'' algebraic circuit class $\mathcal{C}$. In other words, we show that the non-easiness of the diagonalizing formulas is implied by {\it any} super-polynomial lower bounds on tautologies for algebraic proofs operating with circuits from $\mathcal{C}$. To show this, we use the circuits-to-proofs connection of \cite{GP18}. 
 
 The notation $\mathcal{C}$-IPS stands for the IPS proof system in which an
 IPS refutation (i.e., certificate) is written as an algebraic circuit from the class $\mathcal{C}$ (for instance, depth-$d$ circuits, for a constant $d$,  algebraic formulas, etc.). A ``reasonable" algebraic circuit class $\mathcal{C}$ is one for which the Grochow-Pitassi implication from $\mathcal{C}$-IPS lower bounds to $\mathcal{C}$ circuit lower bounds holds, and moreover this implication is efficiently provable in $\mathcal{C}$-IPS. Our methods in this paper imply that all the commonly studied algebraic circuit classes which contain the class of constant-depth algebraic circuits are reasonable.

 \begin{theorem}[Informal; \Cref{theorem: necessity}; If algebraic proofs are not p-bounded then it is not easy to prove that circuit lower bounds are hard for algebraic proofs.]
%there are no short proofs proving that algebraic %\textcolor{green}{algebraic circuit lower bound statements %are hard for algebraic proofs;}{it's actually %algebraic proof lower bounds  are hard, not circuit %lower bounds}
%
    \label{thm:mainthm3}
    Let $\mathcal{C}$ be any ``reasonable'' algebraic circuit class. If there is a sequence $\{\phi_n\}_n$ of unsatisfiable CNF formulas that requires super-polynomial size $\mathcal{C}$-\textup{IPS} proofs for infinitely many $n$, then the sequence $\{\psi_n\}_n$ of CNF formulas does not have polynomial size $\mathcal{C}$-{\textup {IPS}} proofs for infinitely many $n$, where $\psi_n = $ {\rm ref}-$\mathcal{C}$-{\textup {IPS}}$(\mathcal{C}$-\textup{ckt}$(\perm_n, n^{O(1)}), N^{O(1)})$, with $N = |\mathcal{C}$-{\rm ckt}$(\perm_n, n^{O(1)})|$.
 \end{theorem}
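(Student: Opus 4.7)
The plan is to argue by contradiction, using the two clauses of the reasonableness assumption in a mildly self-referential way. Suppose, for contradiction, that $\psi_n = \mathrm{ref}\text{-}\mathcal{C}\text{-}\IPS(\mathcal{C}\text{-ckt}(\perm_n, n^{O(1)}), N^{O(1)})$ admits polynomial-size $\mathcal{C}\text{-}\IPS$ refutations for all but finitely many $n$. By soundness this forces $\psi_n$ to be unsatisfiable; unwinding the definition of $\psi_n$, this says that $\mathcal{C}\text{-ckt}(\perm_n, n^{O(1)})$ admits no $\mathcal{C}\text{-}\IPS$ refutation of size $N^{O(1)}$.

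Next, I invoke the theorem's hypothesis together with clause~(i) of reasonableness: the existence of the hard sequence $\{\phi_n\}_n$ yields, via the $\mathcal{C}$-analogue of the Grochow--Pitassi implication, that $\perm$ has no polynomial-size $\mathcal{C}$-circuit; equivalently, $\mathcal{C}\text{-ckt}(\perm_n, n^{O(1)})$ is itself unsatisfiable. Now I apply clause~(ii): since the GP implication is efficiently provable in $\mathcal{C}\text{-}\IPS$, there is, for every unsatisfiable CNF $\phi$, a $\mathcal{C}\text{-}\IPS$ refutation of $\mathcal{C}\text{-ckt}(\perm_n, \mathrm{poly}) \cup \phi$ of size polynomial in $|\phi|$---intuitively, the symbolic $\mathcal{C}$-circuit for $\perm$ furnished by the axioms is used by the GP reduction to build a refutation of $\phi$. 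Instantiate this with $\phi := \mathcal{C}\text{-ckt}(\perm_n, n^{O(1)})$, which is legitimate because this formula is unsatisfiable by the previous step, to obtain a $\mathcal{C}\text{-}\IPS$ refutation of $\mathcal{C}\text{-ckt}(\perm_n, n^{O(1)})$ of size $N^{O(1)}$. But such a refutation is precisely a witness to the satisfiability of $\psi_n$, contradicting the unsatisfiability established above.

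The main obstacle is justifying the self-referential instantiation in the last step. One must verify that the $\mathcal{C}\text{-}\IPS$ formalisation of the GP reduction remains polynomial in $|\phi|$ uniformly over $\phi$, even when $\phi$ has the exponential length $N = 2^{n^{O(1)}}$, and that $\mathcal{C}$ is closed under the syntactic operations used by the GP construction at this scale. This is exactly what the paper's definition of ``reasonable'' is designed to encode, and it holds for the standard algebraic circuit classes containing constant-depth circuits. A logically equivalent but cleaner variant of the argument avoids the literal self-reference: first derive, in polynomial $\mathcal{C}\text{-}\IPS$ size, the implication $\mathcal{C}\text{-ckt}(\perm_n, n^{O(1)}) \Rightarrow \psi_n$ (a direct $\mathcal{C}\text{-}\IPS$ formalisation of GP), and then combine it via a case split on $\psi_n$ with the assumed small refutation of $\psi_n$ to extract the required small refutation of $\mathcal{C}\text{-ckt}(\perm_n, n^{O(1)})$, producing the same contradiction.
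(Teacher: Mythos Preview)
Your strategy is on track and, once corrected, coincides with the paper's: deduce $\VNP\neq\mathcal{C}$ from the hypothesis via Grochow--Pitassi (Theorem~\ref{theorem: grochow-pitassi for C}), then run the diagonalization of Theorem~\ref{theorem: main theorem}. But the ``self-referential instantiation'' you flag is a genuine obstruction, and your proposed resolution does not address it. The formalised GP implication (what ``reasonable'' packages, namely the $\mathcal{C}$-analogue of Lemma~\ref{lemma: grochow-pitassi formalization in cdIPS}) cannot produce $\mathcal{C}\text{-ckt}(\perm_n)\vdash\psi_n$ at the \emph{same}~$n$: the Permanent supplying the $\VNP$ certificate for refuting $\varphi_n=\mathcal{C}\text{-ckt}(\perm_n)$ must have dimension $m$ at least linear in the number of variables of~$\varphi_n$, which is $N=2^{n^{O(1)}}$. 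So what one actually gets is $\mathcal{C}\text{-ckt}(\perm_m)\vdash\psi_n$ with $m\gg n$. Chaining this with the assumed short refutation of~$\psi_n$ yields a short refutation of $\mathcal{C}\text{-ckt}(\perm_m)$---witnessing satisfiability of the diagonalizing formula at index~$m$, not~$n$. The contradiction is then closed by invoking the almost-everywhere assumption a \emph{second} time, now at parameter~$m$. This two-level parameter shift is exactly what the paper's proof of Theorem~\ref{theorem: main theorem} (and hence Theorem~\ref{theorem: necessity}) carries out; once you track $n$ versus~$m$, your ``cleaner variant'' becomes the paper's argument.

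A secondary point: your reading of clause~(ii) as ``a short $\mathcal{C}$-IPS refutation of $\mathcal{C}\text{-ckt}(\perm)\cup\phi$'' is not what is proved or used. Lemma~\ref{lemma: grochow-pitassi formalization in cdIPS} derives the $\IPS_{\refute}$ \emph{predicate} (i.e., $\psi_n$ itself) from $\mathcal{C}\text{-ckt}(\perm_m)$; it is this object-level derivation that chains with the assumed refutation of~$\psi_n$ to refute $\mathcal{C}\text{-ckt}(\perm_m)$. A refutation of the union $\mathcal{C}\text{-ckt}(\perm_m)\cup\varphi_n$ would not, by itself, isolate a short refutation of either conjunct, so your first route does not close even after the parameter fix.
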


Theorem \ref{thm:mainthm3} is shown by abstracting the argument of Theorem \ref{thm:mainthm2} and combining the resulting generalization with the Grochow-Pitassi implication from proof complexity lower bounds to circuit complexity lower bounds~\cite{GP18}.

%------------------------------------------
\para{Existential depth amplification.}
The statements $\psi_{d,d',n}$ themselves refer to proof complexity lower bounds. Nevertheless, the lower bounds stated in the formulas and the lower bounds we get from our result are \emph{different}.

First, the formulas state hardness of proving circuit lower bounds, while we get hardness of proving \emph{proof complexity lower bounds}. 

Second, notice the quantification over depths in Theorem \ref{thm:mainthm2}: there is some {\it fixed} depth $d'$ such that if depth-$d'$ IPS lower bounds on certain circuit lower bounds for the Permanent hold, then we get super-polynomial lower bounds for proofs of {\it any} constant depth. This shows that we can escalate the depth-$d'$ IPS lower bounds stated in the formulas to \emph{any} constant-depth IPS lower bounds. In other words, our result does not merely repeat the stated lower bound we assumed against depth-$d'$ proofs of circuit lower bounds, but goes beyond it to rule out short proofs in \emph{any} depth of proof complexity lower bounds.

\subsubsection{Extension to Larger Fields}

While \cite{ST25} and  Theorem \ref{thm:mainthm2} crucially use fields of constant size, and the ability to efficiently encode computation over finite fields of constant size by CNF formulas, we show further how to encode and reason about larger and growing fields. This technical contribution may be interesting on its own right. 

Specifically, by reasoning about the bits of polynomial expressions with algebraic proofs, Theorem \ref{thm:mainthm2} can be extended to underlying fields of size polynomially bounded by $|\psi_{d,d',n}|$. Bit arithmetic in proof complexity was used before (cf. \cite{Goe90, Bus87, AGHT24, IMP20}). We show how to reason about iterated addition, iterated multiplication and modular arithmetic in constant-depth IPS over polynomial-size fields.

 \begin{theorem}[Informal Statement; \Cref{theorem: main theorem in Polynomial-size Finite Fields}]
     \label{thm:mainthm4}
     Let $\{p_n\}$ be any sequence of primes such that $p_n = O(2^n)$, and $\mathbb{F}_{p_n}$ be the field of size $p_n$. For all positive integers $d$, there is a positive integer $d'$ such that for all positive integers $d''$, there are no polynomial-size depth-$d''$ IPS refutations of the formula $\psi_{d,d',n} = {\rm ref\text{-}IPS}_{d'}({\rm ckt}_d(\perm_n, n^{O(1)}), N^{O(1)})$ for infinitely many $n$ over $\mathbb{F}_{p_n}$, where $N = |{\rm ckt}_d(\perm_n, n^{O(1)})|$ is $O(2^{n^{O(1)}})$.
 \end{theorem}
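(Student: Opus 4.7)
The plan is to lift the proof of Theorem~\ref{thm:mainthm2} from constant-size fields $\mathbb{F}_p$ to polynomial-size fields $\mathbb{F}_{p_n}$ by replacing the direct CNF encoding of field elements (a constant-size lookup over $p$ values) with a bit-level encoding of $\mathbb{F}_{p_n}$-arithmetic inside both the CNFs and the constant-depth IPS refutations that reason about them. Each element of $\mathbb{F}_{p_n}$ will be represented by $\lceil\log_2 p_n\rceil = O(n)$ Boolean variables together with the range predicate enforcing that the encoded integer is $< p_n$. Since the diagonalization framework of \cite{ST25}, the circuits-to-proofs translation of \cite{GP18}, and the constant-depth algebraic circuit lower bounds of \cite{LST25,forbes2024low} are all available over $\mathbb{F}_{p_n}$, once the bit-level arithmetic is in place the overall skeleton of the argument proceeds as in Theorem~\ref{thm:mainthm2}.

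First I would implement the basic primitives: bit-level addition, subtraction, multiplication, and reduction modulo $p_n$, each as a CNF of size polynomial in $n$. For an algebraic circuit with polynomially many gates, the critical issue is the \emph{iterated} operations encountered when evaluating a circuit. I would use the $\ACZ$-style $\carry/\add$ decomposition already referenced in the paper's macro list (in the tradition of \cite{Goe90,Bus87,AGHT24,IMP20}) so that iterated addition is computed in constant Boolean depth, and handle iterated multiplication via Chinese-remaindering against a bank of $O(\log p_n)$ small auxiliary primes, again keeping the encoding $\ACZ$-computable. The key verification is that the algebraic identities these encodings satisfy---commutativity, associativity, distributivity, and mod-$p_n$ congruence preservation---admit polynomial-size constant-depth IPS refutations of their negations; this can be established by a direct induction on the bit-parallel structure of the encoding.

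With these primitives, I would then bit-encode the formulas ${\rm ckt}_d(\perm_n,\cdot)$ and ${\rm ref\text{-}IPS}_{d'}(\cdot,\cdot)$ by replacing each algebraic wire with its bit-level analogue, incurring only a polynomial blow-up in $|\psi_{d,d',n}|$. The diagonalization of Theorem~\ref{thm:mainthm2} is then rerun verbatim: a short depth-$d''$ IPS refutation of $\psi_{d,d',n}$ would, after peeling off the bit encoding, yield a short depth-$O(d')$ IPS refutation of the depth-$d$ upper bound on $\perm_n$ over $\mathbb{F}_{p_n}$, which via the \cite{GP18} collapse would contradict the \cite{forbes2024low} lower bound at characteristic $p_n$.

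The main obstacle will be controlling the \emph{depth} of the refutations throughout the chain of simulations. Each bit-level field operation contributes a fixed constant depth to every proof line that manipulates it, and the diagonal argument composes several such layers (circuit evaluation, substitution into the IPS relation, and verification of the outer refutation), so a naive accounting risks unbounded constant overhead. Two ideas neutralise this. First, all bit-level arithmetic can be packaged into a single constant-depth subcircuit whose auxiliary variables are introduced once by a polynomial-size $\IPSnb$-style extension and then reused, so that composing the diagonal layers does not compound the depth multiplicatively. Second, the statement of Theorem~\ref{thm:mainthm2} is universally quantified over the depth $d''$ of the refuting system, so the additive constant overhead from the bit arithmetic is absorbed simply by choosing a slightly larger $d'$ and relying on the outer quantifier on $d''$. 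With these two points in hand, the same proof structure delivers the conclusion over $\mathbb{F}_{p_n}$ for every sequence of primes with $p_n = O(2^n)$, yielding Theorem~\ref{thm:mainthm4}.
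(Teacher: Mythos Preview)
Your high-level plan is correct and matches the paper's: encode $\mathbb{F}_{p_n}$-elements in binary, build bit-level arithmetic, prove translation lemmas showing that the bit-encoded CNFs and the algebraic circuit equations are inter-derivable in constant-depth polynomial-size $\IPS^{\alg}$, and then rerun the diagonalization of Theorem~\ref{thm:mainthm2} verbatim with the new translation lemma substituted in. The paper does exactly this in Section~7, culminating in Lemma~\ref{lemma: Translating between extended CNF formulas and circuit equations} and Corollary~\ref{theorem: main theorem in Polynomial-size Finite Fields}.

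The main divergence is your treatment of multiplication. You propose Chinese remaindering against a bank of small primes to keep iterated multiplication $\ACZ$-computable. The paper takes a simpler route: it never faces iterated multiplication at all, because every unbounded-fan-in gate is first broken into a chain of \emph{binary} operations via fresh extension variables $v_i^g$ (Definition~\ref{def: cnf encoding of algebraic circuit}). Each binary product is then encoded by the schoolbook method---generate the $k$ partial products, reduce each modulo $q$ via a sequence of $\modular$ steps, then add them with $\CNF\text{-}\add$---and the resulting encoding is not $\ACZ$-computable as a Boolean circuit, nor does it need to be. What matters is that the $\IPS^{\alg}$ \emph{proof} that $\VAL(\overline{x})\cdot\VAL(\overline{y})=\VAL(\overline{z})$ follows from the encoding is constant-depth; this is achieved (Lemmas~\ref{lemma: CNF ADD(x,y,z)}--\ref{lemma: CNF MULT(x,y,z)}) by reasoning algebraically about the $\VAL$ extension axioms and invoking Proposition~\ref{prop: 0-1 implication completeness of IPS} on the few remaining Boolean variables, incurring an $O(2^k\cdot\poly(k))$ size factor that is polynomial in the instance since $q$ is. Your CRT machinery would work in principle but is strictly more complex: you would additionally have to formalize CRT reconstruction and prove its correctness in constant-depth IPS, whereas the extension-variable approach sidesteps the issue entirely. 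The paper also does not enforce a ``range predicate'' that the encoded integer is $<p_n$; it allows values up to $2^k-1$ and handles the output-equals-zero test by adding $q$ and comparing bitwise to the binary representation of $q$.
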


\vspace{-8pt}

\subsubsection{The Diagonalization Argument}
% We now provide intuition for the techniques used in our main results. Most of the technical work is in proving Theorems \ref{thm:mainthm2}, \ref{thm:mainthm4} and \ref{thm:mainthm5}; we consider each of these in turn. 

% Theorem \ref{thm:mainthm1} follows as a straightforward corollary of Theorem \ref{thm:mainthm2}, and is highlighted primarily because it has a clean statement and is about $\AC^0[p]$-Frege lower bounds.

Here we explain informally the idea behind the proof of \Cref{thm:mainthm2}, showing that the diagonalizing CNFs have no short refutations. This is where most of the nontrivial technical work lies.
The key idea is to combine diagonalization with the known implication from proof complexity lower bounds to circuit complexity lower bounds \cite{GP18}.

The argument builds on the following three nontrivial technical points:
\begin{enumerate}
\item  There is a reasonable $\CNF$ encoding expressing that the permanent polynomial can be computed by bounded-depth small-size algebraic circuits ``$\VNP = \VAC^0$'' (this is %
    ${\rm ckt}_d(%
        \perm_n, n^{O(1)}%
                )%
    $ %
    from before).%
\footnote{We use $\VAC^0$ to denote Valiant’s analogue of $\AC^0$, in the same way that $\VP$ and $\VNP$ correspond to $\P$ and $\NP$. The class $\VAC^0$ consists of families of polynomials computable by constant-depth, polynomial-size algebraic circuits.}\vspace{-7pt}
\item  There is a reasonable $\CNF$ encoding of the statement that there are constant-depth $\IPS$ refutations of size $s$ for a $\CNF$ $\phi$ %
    (this is ${\rm ref\text{-}IPS}_{d}(\phi, t)$ from before).%
    \vspace{-7pt}
\item  If $\phi$ is unsatisfiable and there are short and constant-depth $\IPS$ refutations of ``constant-depth $\IPS$ efficiently refutates $\phi$", then there are short and constant-depth $\IPS$ refutations of ``$\VNP = \VAC^0$".
\label{it:intro:GP18-in-constant-depth}
\end{enumerate}

The work of \cite{ST25} formalized the Grochow-Pitassi implication from $\IPS$ lower bounds for $\CNF$s to $\VNP \neq \VP$ within $\IPS$. Assumption 3 above is a novel formalisation of a \textit{constant-depth} version of the Grochow-Pitassi implication within constant-depth $\IPS$. 

\newcommand{\ubb}{\ensuremath{\mathsf{ub}}}

\newcommand{\lbb}{\ensuremath{\mathsf{lb}}}

According to \cite{LST25}, the permanent polynomial cannot be computed by constant-depth  small-size algebraic circuits, which means ``$\VNP = \VAC^0$" is an unsatisfiable $\CNF$, using Assumption 1. Assume for the sake of contradiction that 
        $\psi_{d,d',n}$, namely, 
        $%
        {\rm ref\text{-}IPS}_{d'}%
        (%
                        ``\VNP = \VAC^0\text{''}, \poly
        )%
        $ %
(or more formally, 
        $%
        {\rm ref\text{-}IPS}_{d'}%
        ({\rm ckt}_d(%
                        \perm_n, n^{O(1)}
                    )%
        , N^{O(1)}%
        )%
        $ 
)
%$\lbb_{\IPS}(``\VNP \neq \VAC^0$'', $n^{\omega(1)})$%
has polynomial-size refutations in constant-depth $\IPS$. Then by Assumption 3, ``$\VNP = \VAC^0$" has polynomial-size constant-depth $\IPS$ refutations. But this contradicts the soundness of $\IPS$, since $\IPS$ has (polynomial-size) refutations of the statement that ``$\VNP = \VAC^0$" has  polynomial-size IPS-refutations (formally, one has to account for instances of different sizes when following this argument, and we explain this more precisely below).
\smallskip

To clarify further the argument, we describe a slightly more detailed overview, highlighting the logic behind the argument. We show that infinitely often there are no polynomial-size constant-depth $\IPS$ refutations of the formula ${\rm ref\text{-}IPS}_{d^\prime}({\rm ckt}_d(\perm_n,n^c), n^{c^\prime})$, expressing that there exists a constant $c^\prime$ such that ${\rm ckt}_d(\perm_n,n^c)$ has $\IPS$ refutations of size bounded from above by the polynomial $n^{c'}$ and depth bounded by $d^\prime$.
\medskip

\newcommand{\io}{\text{i.o.}}
\newcommand{\ale}{\text{a.e.}}

\noindent\textit{Proof sketch of  \Cref{thm:mainthm2} (formally \Cref{theorem: main theorem}).}
Let \io\ abbreviate \emph{infinitely often}, and let \ale\ denote its converse \emph{almost everywhere}, that is, ``always except for finite many cases''. Let $G\sststile{d}{f(n)}1=0$ stands for an $\IPS$ refutation of $G$ of refutation-size bounded from above by $f(n)$ and refutation-depth bounded from below by $d$.
Our goal is to prove:\vspace{-12pt} 
\begin{equation}
    \forall\ c, d \ \exists \ c^\prime, d^\prime\ \forall \ c^{\prime\prime}, d^{\prime\prime}\
\io~\underbrace{{\rm ref\text{-}IPS}_{d^\prime}(\overbrace{{\rm ckt}_d(\perm_n,n^c)}^{\gamma}, |\gamma|^{c\prime}) }_{\lambda}\not \sststile{d^{\prime\prime}}{|\lambda|^{c^{\prime\prime}}} 1=0,
\end{equation}
meaning that for all constants $c,d$ there are constants $c^\prime,d^\prime$, such that for all constants $c^{\prime\prime},d^{\prime\prime}$, $\lambda$ does not have polynomial-size $|\lambda|^{c^{\prime\prime}}$ and constant depth $d^{\prime\prime}$ refutation, infinitely often.

Assume by way of contradiction that the converse holds, namely:\vspace{-8pt} 
\begin{equation}\label{eq:328:30}
\exists\ c, d \ \forall \ c^\prime, d^\prime\ \exists \ c^{\prime\prime} ,d^{\prime\prime}\ \ale~
\underbrace{{\rm ref\text{-}IPS}_{d^\prime}(\overbrace{{\rm ckt}_d(\perm_n,n^c)}^{\gamma}, |\gamma|^{c\prime}) }_{\lambda} \sststile{d^{\prime\prime}}{|\lambda|^{c^{\prime\prime}}} 1=0,
\end{equation}
%Using the bounded-depth version of the argument of \cite{GP18} (i.e., point \Cref above), we get that if $\VNP$ has polynomial-size depth-$d$ circuits, then depth-$d$ $\IPS$ is polynomially bounded, namely:
Using the bounded-depth version of the argument of \cite{GP18} (i.e.,  \Cref{it:intro:GP18-in-constant-depth} above), we get that if $\VNP$ has polynomial-size depth-$d$ circuits, then depth-$d$ $\IPS$ is polynomially bounded, namely:
\begin{equation*}
\exists c_1, d_1 \  \underbrace{{\rm ckt}_d(\perm_{|\gamma|},|\gamma|^c)}_{\gamma^\prime} \sststile{d_1}{|\gamma^\prime|^{c_1}} {\rm ref\text{-}IPS}_{d}({\rm ckt}_d(\perm_n,n^c), |\gamma|^{c}) \sststile{d^{\prime\prime}}{|\lambda|^{c^{\prime\prime}}} 1=0,
\end{equation*}
where $|\lambda|$ is polynomially bounded by $|\gamma^\prime|$. The second part from left of the refutation above is given by the fact that ${\rm ref\text{-}IPS}_{d^\prime}({\rm ckt}_d(\perm_n,n^c), |\gamma|^{c^\prime})$ can be efficiently refuted in bounded-depth $\IPS$ for any $d^\prime$ and $c^\prime$. Hence, we take $d^\prime$ to be $d$ and $c^\prime$ to be $c$, yielding the second part of the refutation above.

Therefore, in particular, by combining the two proofs into one, we get \vspace{-8pt} 
\begin{equation*}
\exists c_2, d_2 \ {\rm ckt}_d(\perm_{|\gamma|},|\gamma|^c) \sststile{d_2}{|\gamma^\prime|^{c_2}}1=0.
\end{equation*}
Thus,\vspace{-8pt}
\begin{equation*}
\exists c_2,d_2,\ \underbrace{{\rm ref\text{-}IPS}_{d_2}({\rm ckt}_d(\perm_{|\gamma|},|\gamma|^c), |\gamma^\prime|^{c_2})}_{\Lambda} \end{equation*}
is a \emph{satisfiable} $\CNF$ formula.
Since \Cref{eq:328:30} holds almost everywhere, we can take $n$ to be $|\gamma|$. Then, by taking $d^\prime$ to be $d_2$ and $c^\prime$ to be $c_2$ in \Cref{eq:328:30},
\begin{equation*}\vspace{-8pt}
    \exists C, D \ \underbrace{{\rm ref\text{-}IPS}_{d_2}({\rm ckt}_d(\perm_{|\gamma|},|\gamma|^c), |\gamma^\prime|^{c_2})}_{\Lambda} \sststile{D}{|\Lambda|^C} 1=0,
\end{equation*}
which means $\Lambda$ is refutable. By the soundness of $\IPS$, $\Lambda$ is unsatisfiable which is a contradiction.
\bigskip

    \subsection{Relation to Previous Work and Conclusion}

%           { We need to note that our formalisation is a loglog formulation, but the dimension of the matrix we need to lower bound to prove the size lower bound via LST25 is in the log formulation, namely, is of dimension polynomial in the size of the formula).}\rahul{Do you mean in Theorem \ref{thm:mainthm4}? Perhaps we should discuss this in the Techniques section? The current section is more high-level.}{Yes, ok.}
 
    %In this subsection, we offer some perspectives on our results and relate them to previous work.

    Theorem \ref{thm:mainthm1} gives a first instance in which a formula whose validity is unknown is shown unconditionally to have no short proofs in $\AC^0[p]$-Frege. Razborov~\cite{Razb96} and Kraj\'{i}\v{c}ek \cite{Krajicek04c} {\it conjecture} that all Boolean circuit lower bound formulas are hard for EF, but this is open even for $\AC^0$-Frege and even when we relax hardness to being non-easy (as we do in our work). Our lower bound is for {\it proof complexity lower bound} formulas rather than for circuit lower bound formulas, but our work is somewhat similar in spirit to \cite{Razb95}.
    %showing that $\AC^0[p]$-Frege cannot efficiently prove certain (plausible) proof complexity lower bounds.

    From a technical point of view, our work is related to recent works by \cite{ST25,PS19}, which also use diagonalization ideas. The main result of \cite{ST25} is a {\it conditional} existence of hard formulas for  IPS, under the assumption that $\VNP \neq \VP$. In contrast, our result ruling out easy formulas for constant-depth IPS, i.e., Theorem \ref{thm:mainthm2}, is unconditional. One way to interpret our result is that it strengthens the equivalence between proof complexity lower bounds and circuit complexity lower bounds shown in \cite{ST25} to hold for constant-depth circuits, and then applies the recent breakthroughs constant-depth algebraic circuit lower bounds \cite{LST25, forbes2024low} to get unconditional proof complexity lower bounds.

    Unconditional lower bounds for conjectured tautologies are also shown in \cite{PS21} using a somewhat different diagonalization technique, however these are for highly non-explicit formulas\footnote{There are two sources of non-explicitness in \cite{PS21}. First, they consider a {\it distribution} on conjectured hard formulas rather than a fixed hard formula at every length. Second the formulas refer to a non-constructively defined proof system with non-uniform verification.}, and do not seem directly relevant to progress on lower bounds for tautologies. In contrast, Theorem \ref{thm:mainthm3} shows that Theorem \ref{thm:mainthm2} is a {\it necessary step} toward super-polynomial lower bounds for constant-depth IPS.

    One novel aspect in Theorem \ref{thm:mainthm1} is that a proof complexity lower bound for a propositional proof system is shown via {\it algebraic circuit lower bounds}. The theory of feasible interpolation in propositional proof complexity enables proof complexity lower bounds for weak systems such as Resolution and Cutting Planes to be derived from lower bounds on monotone circuit complexity \cite{BPR97,Kra97-Interpolation,Pud97}. However, there is cryptographic evidence against the applicability of feasible interpolation techniques to $\AC^0$-Frege and stronger proof systems \cite{BPR00}. Implications from average-case circuit lower bounds to proof size lower bounds for strong systems were conjectured by Razborov \cite{Razb15-annals} in the context of {\it proof complexity generators} \cite{ABSRW00}, but these conjectures are so far unproven.
    
    %In contrast, there is a proven connection between circuit complexity and proof complexity for IPS \cite{GP18}. Unfortunately, this connection goes in the {\it wrong direction} for our purposes: it states that super-polynomial IPS lower bounds for unsatisfiable CNFs implies a super-polynomial algebraic circuit lower bound for Permanent. We are able to use {\it diagonalization} to reverse the direction of the implication for constant-depth IPS, by building on the ideas of \cite{ST25}, and combine the implication with recently established algebraic circuit lower bounds \cite{LST25, forbes2024low} to derive an unconditional proof complexity lower bound. The caveat is that the proof complexity lower bounds thus shown are not for unsatisfiable CNFs, but for CNFs we believe to be unsatisfiable. This is an artifact of our use of diagonalization.
    %Since $\AC^0[p]$-Frege is polynomially simulated by constant-depth IPS \cite{GP18}, we get the same lower bound quantitatively for $\AC^0[p]$-Frege, which is a propositional proof system rather than an algebraic one.

    We note that the question of proving proof complexity lower bounds for constant-depth IPS has in itself been highlighted and studied in recent works \cite{AF22, GHT22, HLT24, EGLT25, BLRS25}. Andrews and Forbes \cite{AF22} show a super-polynomial constant-depth IPS lower bound for refuting certain sets of polynomial equations. However, their hard instances do not themselves have polynomial-size constant-depth circuits, and in particular are not CNFs. Govindasamy, Hakoniemi and Tzameret \cite{GHT22} give a super-polynomial multilinear constant-depth IPS lower bound for refuting polynomial equations expressible as small depth-2 algebraic circuits. Hakoniemi, Limaye and Tzameret \cite{HLT24} extend and strengthen these results from multilinear to low individual degree proofs  \cite{GHT22}. However, they also show that the lower bound framework of \cite{GHT22, HLT24} is incapable of proving lower bounds for CNFs. Elbaz, Govindasamy, Lu and Tzameret \cite{EGLT25} showed that any constant-depth IPS lower bounds over finite fields would lead to lower bounds for CNF formulas and thus \ACZ$[p]$-Frege lower bounds. Alas, the strongest constant-depth IPS lower bounds \cite{HLT24} are restricted to low individual degree refutations, for which the result of \cite{EGLT25} do not hold.
    Proving lower bounds for CNFs is essential for the application to $\AC^0[p]$-Frege lower bounds, and none of the previous works achieve this. Our work provides an explicit CNF formula with no short refutations and with some evidence for its unsatisfiability.

\commentout{diagonalization with conjectured tautologies: Assume that $\Phi$ expresses a proof complexity lower bound. One problem in making this diagonalization idea work, namely the idea to encode a self-reference statement that states that the statement itself is unprovable, or doesn't have short proof, has an obstacle in the setup of propositional proof complexity:  $\Phi$ cannot refer to itself because any reasonable and useful propositional encoding of $\Phi$ should use at least $|\Phi|$ symbols, and so in total $\Phi$ would need to have more symbols than it has. \cite{ST25} showed how to circumvent this problem: they introduce the \emph{iterated lower bound formulas}, which expresses  a proof complexity lower bound of this same statement in a different input length (inductively). This allows to establish the non-short-provability statement using diagonalization: one ... two-points....
    
    Using basically a similar idea \cite{ST25} invoked  GP18 reduction within this diagonalization process: a lower bound implies circuit complexity lower bounds, VP VNP, and hence if our $\Phi$ states that there is no short proof of this circuit lower bound then we get the following implication: if there is a circuit lower bound then there is no short proof of the statement "no short proof of the lower bound". This gives: ...

    There is a level of conditionality that is still less desirable here: we are conditioned on VP neq VNP.
    
    In this work we are going to get rid of this extra level of conditionality: we use the fact that LST25 established a constant depth algebraic circuit lower bound to   realise this approach. For this to work, we need to scale the approach to constant-depth circuits.
} % end commentout

    \section{Preliminaries}

    \subsection{Algebraic Complexity}
    Let $\mathbb{F}$ be a field. Denote by $\mathbb{F}[\overline{x}]$ the field of polynomials with coefficients from $\mathbb{F}$ and variables $\overline{x}=[x_1,\cdots,x_n]$. A polynomial is a formal linear combination of monomials, where a monomial is a product of variables. Two polynomials are identical if all their monomials have the same coefficients. The degree of a polynomial is the maximum total degree of a monomial in it.

    \begin{definition}[Depth-$\Delta$ algebraic circuits and algebraic formulas]
        An \emph{algebraic circuit} over a field $\mathbb{F}$ is a finite directed acyclic graph. The leaves are called input nodes, which have in-degree zero. Each input node is labelled either with a variable or a field element in $\mathbb{F}$. All the other nodes have unbounded in-degree and are labelled by  $+$ or $\times$. The output of a $+$ (or $\times$) node computes the addition (product, \resp) of the polynomials computed by its incoming nodes. An algebraic circuit is called an algebraic \emph{formula} if the underlying directed acyclic graph is a tree (every node has at most one outgoing edge). The \emph{size} of an algebraic circuit $C$ is the number of nodes in it, denoted by $|C|$. The \emph{depth} of $C$ is the length of the longest directed path in it, denoted by $\depth(C)$. If $\depth(C)=\Delta$ we call $C$ a \emph{depth-$\Delta$ circuit}.
    \end{definition}

    The \emph{product-depth} of an algebraic circuit is the maximum number of product gates on a root-to-leaf path. The product-depth is, without loss of generality, equal to the depth up to a factor of two.
    \begin{definition}[Syntactic-degree $\sdeg(\cdot)$]
        Let $C$ be an algebraic circuit and $v$ a node in $C$. The \emph{syntactic-degree} $\sdeg(v)$ of $v$ is defined as follows:
        \begin{enumerate}
            \item If $v$ is a field element or a variable, then $\sdeg(v) \coloneqq 0$ and $\sdeg(v) \coloneqq 1$, respectively;
            \item If $v=\sum_{i=0}^t u_i$ then $\sdeg(v) \coloneqq \max\{\sdeg(u_0),\cdots, \sdeg(u_t)\}$;
            \item If $v=\prod_{i=0}^t u_i$ then $\sdeg(v) \coloneqq \sum_{i=0}^t \sdeg(u_i)$.
        \end{enumerate}
    \end{definition}
    
    \begin{definition}[$\VP$ 
    \cite{Val79:ComplClass}]
         Over a field $\mathbb{F}$, $\VP_\mathbb{F}$ is the class of families $f=(f_n)_{n=1}^{\infty}$ of polynomials $f_n$ such that $f_n$ has $\poly(n)$ input variables, is of $\poly(n)$ degree, and can be computed by algebraic circuits over $\mathbb{F}$ of $\poly(n)$ size.
    \end{definition}

    \begin{definition}[$\VNP$ \cite{Val79:ComplClass}]
         Over a field $\mathbb{F}$, $\VNP_\mathbb{F}$ is the class of families $g=(g_n)_{n=1}^{\infty}$ of polynomials $g_n$ such that $g_n$ has $\poly(n)$ input variables and is of $\poly(n)$ degree, and can be written as
        \[
            g_n(x_1, \cdots , x_{\poly(n)})=\sum_{\overline{e} \in \{0,1\}^{\poly(n)}} f_n(\overline{e},\overline{x})
        \]
        for some family $(f_n) \in \VP_\mathbb{F}$.
    \end{definition}

    \begin{definition}[$\VAC^0$]
        \label{def: VAC0}
        Over a field $\F$, $\VAC^0_\F$ is the class of families $f=(f_n)_{n=1}^{\infty}$ of polynomials $f_n$ such that $f_n$ has $\poly(n)$ input variables, is of $\poly(n)$ degree, and can be computed by algebraic circuits over $\mathbb{F}$ of $\poly(n)$ size and depth $O(1)$.
    \end{definition}
    Notice that $\VP$,$\VNP$ and $\VAC^0$ are nonuniform complexity classes.

    \begin{definition}[Projection reduction \cite{Val79:ComplClass}]
         A polynomial $f(x_1,\cdots, x_n)$ is a \emph{projection} of a polynomial $g(y_1,\cdots, y_m)$ if there is a mapping $\sigma$ from $\{y_1,\cdots,y_m\}$ to $
         \{0,1, x_1, \cdots, x_n\}$ such that $f(x_1,\cdots,x_n) = g(\sigma(y_1),\cdots,\sigma(y_m))$.
        A family of polynomials $(f_n)$ is a \emph{polynomial projection} or \emph{p-projection} of another family $(g_n)$ if there is a function $t(n)=n^{\Theta(1)}$ such that $f_n$ is a projection of $g_{t(n)}$ for all (sufficiently large) $n$.
        We say that $f$ is projection-reducible to $g$ if $f$ is a projection of $g$.
    \end{definition}

    The symmetric group, denoted by $S_n$, over $n$ elements $\{1,\dots,n\}$ is the group whose elements are all bijective functions from $[n]$ to $[n]$ and whose group operation is that of function composition. The sign $\sgn(\sigma)$ of a permutation $\sigma\in S_n$ is $1$, if the permutation can be obtained with an even number of transpositions (exchanges of two [not necessarily consecutive] entries); otherwise, it is $-1$.

    \begin{definition}[Determinant]
        The \emph{Determinant} of an $n\times n$ matrix $A$ is defined as
        \[
            \det(A) := \sum_{\sigma \in S_n} (\sgn(\sigma) \prod_{i=1}^n a_{i,\sigma(i)}).
        \]
    \end{definition}
    \begin{definition}[Permanent] 
        The \emph{Permanent} of an $n \times n$ matrix $A = (a_{ij})$ is defined as
        \[
            \perm(A) := \sum_{\sigma \in S_n} \prod_{i=1}^n a_{i,\sigma(i)}.
        \]
    \end{definition}

    It is known that the Permanent polynomial is complete under p-projections for $\VNP$ when the field $\mathbb{F}$ is a field of characteristic different from $2$. The Determinant polynomial is not known to be complete for $\VP$ under p-projections.
    
    \begin{theorem}[\cite{Val79:ComplClass}]
        For every field $\mathbb{F}$, every polynomial family on $n$ variables that is computable by an algebraic formula of size $u$ is projection reducible to the Determinant polynomial (over the same field) on $u+2$ variables.
        For every field $\mathbb{F}$, except those that have characteristic $2$, every polynomial family in $\VNP_\mathbb{F}$ is projection reducible to the Permanent polynomial (over the same field) with polynomially more variables.
    \end{theorem}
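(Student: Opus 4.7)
The plan is to prove the two claims by explicit constructions on weighted directed graphs, following Valiant's classical strategy.

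For the first part (formulas to Determinant), I would first convert an algebraic formula of size $u$ into a \emph{skew} algebraic branching program (ABP) with $O(u)$ vertices, using the standard recursive transformation that exploits the tree structure of the formula and preserves the computed polynomial. The ABP has a source $s$ and a sink $t$, and the computed polynomial equals the sum over $s$--$t$ paths of the products of the edge labels. I would then construct a matrix $M$ whose entries are variables, field constants, or $0$, with the property that the nonzero cycle covers of $M$ biject with $s$--$t$ paths in the ABP: this is done by topologically ordering the ABP vertices, adding a self-loop of weight $1$ on every non-terminal vertex, and including a single back-edge of weight $1$ from $t$ to $s$, so that the weight of each cycle cover equals the weight of the corresponding path. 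A careful sign analysis (all cycle covers use a common parity because all non-path cycles are self-loops) shows that $\det(M) = \pm f$. The projection is then simply the substitution of variables and constants according to the entries of $M$, and the variable count can be squeezed to $u+2$ by a frugal layout of the ABP into the matrix.

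For the second part ($\VNP$ to Permanent), I would start from the definition $g_n(x) = \sum_{e \in \{0,1\}^{\poly(n)}} f_n(e, x)$ with $f_n \in \VP$, and first replace $f_n$ by an ABP via the standard $\VP$-to-ABP simulation (which costs only a polynomial blow-up in $n$, even though it is superpolynomial in formula size in general). The key tool is a polynomial-size weighted directed graph whose permanent (i.e., sum over weighted cycle covers) evaluates to $g_n$. Concretely, I would build a graph encoding the $s$--$t$ paths of the ABP for $f_n$, then attach Valiant's \emph{XOR gadget} and \emph{equality gadget} at each Boolean variable $e_i$ to simulate the summation over $\{0,1\}^{\poly(n)}$: each gadget is a small subgraph whose total cycle-cover contribution equals $+1$ when the simulated Boolean value is used consistently and $0$ otherwise. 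The cancellation in these gadgets uses a factor of $-1$ inside a cycle, which is nontrivial precisely when the characteristic is different from $2$; this is where the hypothesis enters. The resulting matrix is the projection.

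The main obstacle I expect is the gadget construction in the $\VNP$ part: verifying that the cycle-cover sums telescope correctly to the intended Boolean sum, and that the characteristic $\neq 2$ hypothesis is exactly what makes the cancellations work out. By contrast, the first part is essentially sign and variable-count bookkeeping, and the delicate piece there is squeezing the matrix size down to $u+2$ rather than a crude $O(u)$. In both parts the correctness ultimately reduces to matching a known combinatorial identity (paths versus cycle covers, permanents versus perfect matchings) to the explicit construction, so no genuinely new ideas are required beyond Valiant's toolkit.
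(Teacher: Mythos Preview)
The paper does not prove this theorem; it is stated in the Preliminaries section as a known background result cited to Valiant~\cite{Val79:ComplClass}, with no accompanying proof. So there is nothing to compare your proposal against in the paper itself.

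That said, your outline is the standard classical approach and is broadly correct. One small comment on the second part: the usual route to the $\VNP$-completeness of Permanent does not literally go through a ``$\VP$-to-ABP'' conversion applied to $f_n$ followed by gadgets for the Boolean sum. Rather, Valiant's argument uses his criterion that a $\VNP$ family has coefficients computable by polynomial-size Boolean circuits, and then reduces via the $\sharp\P$-completeness of Permanent (or equivalently, encodes a 3-SAT/Hamilton-cycle-type problem directly into a permanent using the iff/equality gadgets you mention). Your variant can be made to work too, and you have correctly identified that the characteristic $\neq 2$ hypothesis enters precisely through the $-1$ weights in the cancellation gadgets. The $u+2$ bound in the first part is the sharp version due to the careful formula-to-ABP layout; a crude $O(u)$ is easy, and tightening it is indeed just bookkeeping as you say.
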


    \begin{definition}[Iterated Matrix Multiplication]
        Let $n$ and $d$ be such that $N=dn^2$. The \emph{Iterated Matrix Multiplication} $\IMM_{n,d}$ on $N=dn^2$ variables is defined as the following polynomial. The underlying variables are partitioned into $d$ sets $X_1,\cdots,X_d$ of size $n^2$, each of which is represented as an $n \times n$ matrix with distinct variable entries. Then $\IMM_{n,d}$ is defined to be the polynomial that is the $(1,1)$th entry of the product matrix $X_1\cdot X_2 \cdots X_d$.
    \end{definition}

    Iterated Matrix Multiplication is in $\VP$.

    \begin{theorem}[Super-polynomial lower bounds against constant-depth circuits over large field \cite{LST25}]\label{theorem: first-LST25}
        Assume $d \leq \frac{\log n}{100}$ and the characteristic of $\mathbb{F}$ is $0$ or greater than $d$. For any product-depth $\Delta \geq 1$, any algebraic circuit $C$ computing $\IMM_{n,d}$ of product-depth at most $\Delta$ must have size at least $n^{d^{\exp{-O(\Delta)}}}$. 
    \end{theorem}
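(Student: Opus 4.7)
The plan is to adapt the three-stage strategy of Limaye--Srinivasan--Tavenas: (i) reduce the problem from general low-depth circuits to \emph{set-multilinear} low-depth circuits with only a mild size blow-up; (ii) introduce a rank-based complexity measure tailored to the set-multilinear model and prove a strong upper bound on this measure for low-depth set-multilinear circuits; (iii) verify that $\IMM_{n,d}$ realises the maximum value of the measure, so any circuit computing it must be large.

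First I would set-multilinearise with respect to the natural partition of the $dn^2$ variables of $\IMM_{n,d}$ into $d$ blocks $X_1,\dots,X_d$. Given a product-depth $\Delta$ circuit $C$ of size $s$ computing $\IMM_{n,d}$, I split each gate into at most $(d+1)^{O(d)}$ copies indexed by syntactic-degree vectors on the blocks, keep only those copies whose vector is a $0/1$ pattern on $\{X_1,\dots,X_d\}$, and rewire accordingly. The result is a set-multilinear circuit of product-depth $\Delta$ and size at most $s\cdot(d+1)^{O(d)}$ still computing $\IMM_{n,d}$. The hypothesis that $\mathrm{char}(\mathbb{F})=0$ or $>d$ enters precisely here, to ensure that the homogeneous components at each degree level are faithfully represented and that the projection onto the correct degree pattern is a linear isomorphism.

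Next I would define the measure. For a balanced two-colouring $[d]=P\sqcup N$, associate to any set-multilinear polynomial $f$ over $X_1,\dots,X_d$ its \emph{relative rank} $\mathrm{relrk}_{P,N}(f)$: the rank of the coefficient matrix of $f$ whose rows are indexed by set-multilinear monomials over $\bigcup_{i\in P}X_i$ and columns by those over $\bigcup_{i\in N}X_i$, normalised by the trivial full-rank upper bound. The key technical lemma, proved by induction on product-depth, is that for a uniformly random balanced colouring $(P,N)$, every product gate of a set-multilinear product-depth $\Delta$ circuit of size $s$ incurs a relative-rank loss of $n^{-\Omega(d^{1/\Delta})}$ in expectation, because the sizes $|S_i\cap P|$ and $|S_i\cap N|$ of the block-sets feeding a product gate cannot all be balanced. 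Unwinding the recursion across $\Delta$ product layers yields $\mathrm{relrk}_{P,N}(f)\le s\cdot n^{-d^{\exp(-O(\Delta))}}$ for every $f$ computed by such a circuit.

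Finally I would show that $\IMM_{n,d}$ itself has relative rank essentially $1$: for any balanced $(P,N)$, its coefficient matrix is (up to reordering) a Kronecker product of per-block identity and all-ones blocks of the right sizes, hence has near-maximal rank. Combining this with the upper bound above forces $s\ge n^{d^{\exp(-O(\Delta))}}$, proving the theorem. I expect the hard part to be the product-gate anti-concentration step: one has to prove that under a uniformly random balanced partition of $[d]$, the cumulative imbalance across the children of a product gate is large in expectation, against \emph{all} combinatorial structures of block-decompositions the circuit can induce. This requires a careful averaging argument over random balanced partitions and is the technical core of \cite{LST25}; the rest of the proof is bookkeeping around it.
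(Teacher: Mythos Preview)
The paper does not prove this theorem; it is stated in the preliminaries as a cited result from \cite{LST25} and used as a black box. So there is no ``paper's own proof'' to compare against.

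Your sketch is a reasonable high-level outline of the actual LST argument, but a few details are off. First, the measure in \cite{LST25} is not built from a balanced two-colouring $P\sqcup N$ of $[d]$; rather each block $X_i$ is assigned an integer weight $w_i$ (positive or negative) and the variable sets are restricted so that $|X_i|=2^{|w_i|}$. The ``relative rank'' is then defined with respect to this signed-weight word, and the key imbalance argument is about prefix sums of the word, not about $|S\cap P|$ versus $|S\cap N|$. A plain balanced two-colouring would not give the quantitative loss you need at a product gate. Second, the per-gate loss is not $n^{-\Omega(d^{1/\Delta})}$ directly; the exponent degrades geometrically (roughly by a square root) at each product layer, which is why the final bound has $d^{2^{-O(\Delta)}}=d^{\exp(-O(\Delta))}$ in the exponent. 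Your ``unwinding'' sentence lands on the right answer, but the intermediate claim is misstated. Third, the condition $d\le(\log n)/100$ is what makes the $(d+1)^{O(d)}$ set-multilinearisation blow-up polynomial in $n$; you should say this explicitly, since otherwise the reduction in step (i) swamps the lower bound in step (ii).
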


    \cite{BDS24} improved the lower bound for $\IMM$ against constant-depth. Let $\mu(\Delta) = 1/ (F(\Delta)-1)$, where $F(n)= \Theta(\varphi^n)$ is the $n$th Fibonacci number (starting with $F(0) = 1$, $F(1) =2$) and $\varphi = (1 + \sqrt{5})/2$ is the golden ratio.
    \begin{theorem}[\cite{BDS24}]
        \label{theorem: BDS24}
        Fixed a field $\F$ of characteristic $0$ or greater than $d$. Let $N, d, \Delta$ be such that $d= O(\log / \log \log N)$. Then, any product-depth $\Delta$ circuit computing $\IMM_{n,d}$ on $N = dn^2$ variables must have size at least $N^{\Omega(d^{\mu(2\Delta)}/ \Delta)}$.
    \end{theorem}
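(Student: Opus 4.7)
}
The plan is to follow the set-multilinear framework introduced by Limaye--Srinivasan--Tavenas \cite{LST25} and then sharpen the quantitative trade-off by a more careful choice of the coordinate partition. The first step is a \emph{set-multilinearization} reduction: any size-$s$ product-depth-$\Delta$ circuit computing the set-multilinear polynomial $\IMM_{n,d}$ (whose variables are partitioned into $d$ sets $X_1,\dots,X_d$ of size $n^2$ each, so that every monomial picks exactly one variable from each set) can be transformed into a set-multilinear circuit of product-depth $\Delta$ and size $s \cdot d^{O(\Delta)}$ computing the same polynomial. This is the LST25 hardness-amplification lemma and is largely off-the-shelf in characteristic $0$ or greater than $d$.

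After this reduction, the task reduces to lower-bounding the size of set-multilinear depth-$\Delta$ circuits for $\IMM_{n,d}$. The key measure is the \emph{relative rank} of a random partial derivative matrix: sample a random partition of the $d$ variable buckets into ``row'' and ``column'' sides according to a distribution $\mathcal{D}$ on $\{+1,-1\}^d$, form the flattened partial-derivative matrix of the polynomial with respect to this partition, and consider $\mathrm{rank}/\sqrt{D}$ where $D$ is the maximal possible rank. I would then show two things. First, an \emph{upper bound on the relative rank of any product gate}: if a circuit of product-depth $\Delta$ has size $s$, then its output has expected relative rank at most $s \cdot (\text{loss factor})^\Delta$, where the loss factor depends on the variance of the partition distribution. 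Second, a \emph{lower bound on the relative rank of $\IMM_{n,d}$}, which is close to $1$ for a suitable choice of $\mathcal{D}$ because iterated matrix multiplication has full flattening rank on almost every balanced partition.

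The sharpening over the LST25 bound comes from optimising $\mathcal{D}$ \emph{layer by layer} across the $\Delta$ product levels. Concretely, one assigns different bias parameters $\beta_1,\dots,\beta_\Delta$ to the partition at different depths and writes the loss factor at depth $i$ in terms of $\beta_i$ and $\beta_{i-1}$. The resulting recurrence governing the optimal trade-off between the size-blow-up in set-multilinearisation and the loss factor at each product layer is essentially a Fibonacci recurrence, which is where $F(\Delta)$ and $\mu(\Delta) = 1/(F(\Delta)-1)$ enter: the optimal balance gives a relative-rank drop of at most $d^{-1/F(2\Delta)}$ per product layer, so any circuit of size $s$ must satisfy $s \geq N^{\Omega(d^{\mu(2\Delta)}/\Delta)}$ after combining with the rank lower bound for $\IMM_{n,d}$.

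The main obstacle I expect is the \emph{tight} analysis of the per-layer relative-rank loss. The naive LST25 analysis uses a uniform partition and yields an $\exp(-O(\Delta))$ exponent; obtaining the Fibonacci improvement requires showing that the adversarial partition can be chosen so that the variance of the rank defect is balanced across layers of different syntactic degree, and that this balance is preserved under the set-multilinearisation blow-up. In particular, one must verify that the loss factor obeys a two-term recurrence (rather than a one-term one), and that the boundary conditions at the leaves and at the top gate do not swamp the gain. Once this recurrence is established and solved, the stated $N^{\Omega(d^{\mu(2\Delta)}/\Delta)}$ bound follows by choosing $d = O(\log N / \log \log N)$ so that $d^{\mu(2\Delta)}/\Delta$ remains super-constant.
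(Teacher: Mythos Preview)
The paper does not prove this theorem at all: it is stated in the preliminaries as a result quoted from \cite{BDS24}, with no proof given. So there is nothing in the paper to compare your proposal against. Your sketch is a plausible high-level outline of how the BDS24 argument goes (set-multilinearisation, relative-rank measure, layer-by-layer optimisation of the partition leading to a Fibonacci-type recurrence), but since the present paper simply imports the statement as a black box, the correct ``proof'' here is just a citation.
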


    A recent result by Forbes \cite{forbes2024low} extended these results to \emph{any} field, including finite fields.
    
    \begin{corollary}[Super-polynomial lower bounds on constant-depth circuits over any field \cite{forbes2024low}]
        Let $\mathbb{F}$ be any field, and $d=o(\log n)$. Then the iterated matrix multiplication polynomial $\IMM_{n,d}$ where $X_i$ requires
         \[
             n^{\Theta(d^{\mu(2\Delta)}/ \Delta)}
         \]
         size algebraic circuits of product depth $\Delta$.
    \end{corollary}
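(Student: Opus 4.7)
The plan is to adapt the Limaye--Srinivasan--Tavenas framework of \Cref{theorem: first-LST25} so that it no longer depends on $\mathrm{char}(\mathbb{F})=0$ or $\mathrm{char}(\mathbb{F})>d$, which is the only obstruction to extending the bound to arbitrary (in particular small characteristic finite) fields. First, I would recall the two-stage structure of the LST argument: (i) a random restriction/``set-multilinearization'' step that converts an arbitrary size-$s$, product-depth-$\Delta$ circuit computing $\IMM_{n,d}$ into a set-multilinear circuit of comparable parameters computing a set-multilinear projection of $\IMM_{n,d}$, and (ii) a lower-bound step against set-multilinear product-depth-$\Delta$ circuits via a \emph{shifted partial derivative} (APSD) complexity measure. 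Part (i) is essentially combinatorial/structural and goes through in any characteristic with minor bookkeeping, so the entire characteristic obstruction lives in part (ii).

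The hard step is replacing the APSD measure by a field-independent analogue. The standard shifted-partials measure uses iterated ordinary partials $\partial^{\boldsymbol{\alpha}}$, whose action on a monomial $\mathbf{x}^{\boldsymbol{\beta}}$ introduces the factor $\boldsymbol{\beta}!/(\boldsymbol{\beta}-\boldsymbol{\alpha})!$, which can vanish in characteristic $p\le d$. I would therefore substitute the \emph{Hasse} (divided-power) derivative $H^{\boldsymbol{\alpha}}$, defined by $H^{\boldsymbol{\alpha}}\mathbf{x}^{\boldsymbol{\beta}}=\binom{\boldsymbol{\beta}}{\boldsymbol{\alpha}}\mathbf{x}^{\boldsymbol{\beta}-\boldsymbol{\alpha}}$, which is well-defined over any field and satisfies the expected Leibniz rule. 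With this replacement one defines a Hasse-shifted-partials measure
\[
  \mu_{k,\ell}(f) \;:=\; \dim_{\mathbb{F}}\bigl(\mathbf{x}^{\le \ell}\cdot\{H^{\boldsymbol{\alpha}} f : |\boldsymbol{\alpha}|=k\}\bigr),
\]
suitably restricted to the set-multilinear ``word'' structure of $\IMM_{n,d}$. Because the Hasse derivative commutes with scalar extension and is additive/multiplicative in the correct sense, the standard upper-bound calculus for product-depth-$\Delta$ set-multilinear circuits (the chain of bounds that drives the LST recursion on $\Delta$) carries over nearly verbatim, yielding an upper bound on $\mu_{k,\ell}$ of the form $s\cdot\mathrm{poly}$-factor, uniform in $\mathbb{F}$.

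Next I would prove a matching lower bound $\mu_{k,\ell}(\IMM_{n,d})\ge N^{\Omega(d^{\mu(2\Delta)}/\Delta)}$ for an appropriately chosen set-multilinear projection of $\IMM_{n,d}$. The LST analysis of this quantity reduces to a rank computation on a combinatorially defined ``word polynomial'' matrix whose entries are $\pm 1$ (or $0/1$) depending only on the partition structure, \emph{not} on the characteristic; hence its rank is field-independent, and the lower bound survives the switch to Hasse derivatives. Combining the Hasse-APSD upper bound on arbitrary product-depth-$\Delta$ set-multilinear circuits with this field-independent lower bound on $\IMM_{n,d}$, and then composing with the random-restriction step (i), gives the claimed $n^{\Theta(d^{\mu(2\Delta)}/\Delta)}$ size lower bound for any field and any $d=o(\log n)$.

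The main obstacle, as indicated, is verifying that the entire recursive size-vs-measure calculation of \cite{LST25,BDS24} goes through for the Hasse derivative rather than the ordinary derivative. The two dangerous points are (a) the analogue of the product rule, where a binomial coefficient $\binom{\boldsymbol{\alpha}}{\boldsymbol{\beta}}$ appears instead of a factorial ratio and must be handled carefully in the dimension-counting step, and (b) controlling the Hasse derivatives of the ``error'' terms arising from the depth-reduction for set-multilinear circuits, where characteristic $p$ can cause additional cancellations that must be absorbed into the shifting parameter $\ell$. Once these two checks are in place, everything else is bookkeeping, and the corollary follows by instantiating the parameters as in \Cref{theorem: BDS24}.
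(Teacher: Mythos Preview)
The paper does not prove this corollary; it is quoted from \cite{forbes2024low} without proof. Your sketch, however, misdiagnoses where the characteristic obstruction in \cite{LST25} lies, so the proposed fix does not close the gap.

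You locate the obstruction in step (ii), the set-multilinear lower bound, and propose replacing ordinary partials by Hasse derivatives in a shifted-partials measure. But LST does not use shifted partials in this step: their measure is \emph{relative rank}, the rank of a coefficient matrix $M_w(P)$ whose rows and columns are indexed by set-multilinear monomials on a chosen bipartition $w$ of the variable sets. Because set-multilinear polynomials are multilinear, Hasse and ordinary partials with respect to such monomials coincide, and the rank lower bound for the hard polynomial is established by exhibiting a full-rank $0/1$ submatrix---already field-independent. Step~(ii) works over any field as written. The hypothesis $\mathrm{char}(\mathbb{F})=0$ or $\mathrm{char}(\mathbb{F})>d$ enters in step~(i), the depth-preserving reduction from general product-depth-$\Delta$ circuits to homogeneous set-multilinear ones: LST carries this out via interpolation on a scaling parameter, which requires $d{+}1$ suitable scalars in $\mathbb{F}$. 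Forbes' contribution is precisely to repair this reduction over arbitrary fields; your Hasse-derivative modification of the complexity measure leaves that step untouched. (Characteristic-free derivative tools are indeed relevant, but to the homogenization in step~(i), not to the rank measure in step~(ii).)
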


    Since $\IMM_{n,d}$ is a p-projection of the Permanent polynomial with $\poly(n,d)$ many variables, it follows that the Permanent does not have constant-depth polynomial-size circuits over any field, in the following sense.
    
    \begin{theorem}[No polynomial-size constant depth circuits for the Permanent \cite{LST25,forbes2024low}]
        \label{theorem: LST25}
        Let $\Delta \geq 1$ and let $\mathbb{F}$ be any field.  
        There are no constants $c_1,c_2$, such that the Permanent polynomial $\perm(A)$ of the $n \times n$ symbolic matrix $A$ over the field $\mathbb{F}$ is computable by an algebraic circuit of size $c_1 n^{c_2}$ and depth $\Delta$ where $\Delta$ is independent with $n$, for sufficiently large $n$. 
%        there exists a natural number $n_0$ such that for every $n \geq n_0$,
    \end{theorem}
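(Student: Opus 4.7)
The plan is to prove the theorem by contraposition from the Forbes corollary for $\IMM_{n,d}$ stated just above, together with the standard fact that $\IMM_{n,d}$ is a polynomial projection of $\perm_m$ for some $m = \poly(n,d)$, and the elementary observation that projections do not increase the depth of an algebraic circuit.

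First, I would assume toward contradiction that there exist constants $c_1, c_2$ and a constant $\Delta \geq 1$ such that, for all sufficiently large $m$, the polynomial $\perm_m$ is computed by an algebraic circuit $C_m$ over $\mathbb{F}$ of size at most $c_1 m^{c_2}$ and depth at most $\Delta$. For fields of characteristic different from $2$, Valiant's theorem (the final theorem of the preliminaries) gives, since $\IMM \in \VP \subseteq \VNP$, a function $t(n,d) = \poly(n,d)$ and a projection $\sigma$ mapping the variables of $\perm_{t(n,d)}$ into $\{0,1\} \cup \{y_1,\dots,y_N\}$ (the variables of $\IMM_{n,d}$, where $N = d n^2$) whose image is exactly $\IMM_{n,d}$; for characteristic $2$, the Permanent coincides with the Determinant, and the classical polynomial projection of $\IMM$ into the Determinant yields the same kind of reduction. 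Applying $\sigma$ at the leaves of $C_{t(n,d)}$ thus produces an algebraic circuit $C'_{n,d}$ computing $\IMM_{n,d}$ with $\depth(C'_{n,d}) \leq \Delta$, since substituting field elements or variables at the leaves introduces no new gates, and with $|C'_{n,d}| \leq c_1\, t(n,d)^{c_2} = \poly(n,d)$.

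Finally, I would choose a slowly growing parameter $d = d(n) = o(\log n)$, for instance $d(n) = \lceil \log\log n \rceil$, so that the Forbes corollary applies and forces
\[
|C'_{n,d(n)}| \;\geq\; n^{\Omega\!\left(d(n)^{\mu(2\Delta)}/\Delta\right)}.
\]
Since $\mu(2\Delta) > 0$ and $\Delta$ is a fixed constant, the exponent tends to infinity with $n$, so the right-hand side is super-polynomial in $N = d(n)\cdot n^2$, and \emph{a fortiori} super-polynomial in $\poly(n,d(n))$. This contradicts the upper bound on $|C'_{n,d(n)}|$ obtained in the previous paragraph, and we are done. The only (minor) obstacle is the bookkeeping for the projection step: verifying that substitution at the leaves preserves depth and blows up size by at most a polynomial factor, together with the separate handling of characteristic $2$ via the Determinant rather than via $\VNP$-completeness of the Permanent.
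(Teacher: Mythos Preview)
Your proposal is correct and follows essentially the same approach as the paper: the paper's entire argument is the single sentence ``Since $\IMM_{n,d}$ is a p-projection of the Permanent polynomial with $\poly(n,d)$ many variables, it follows that the Permanent does not have constant-depth polynomial-size circuits over any field,'' and you have simply spelled out the contrapositive in detail, including the separate treatment of characteristic $2$ and an explicit choice of $d(n)$ to force the super-polynomial contradiction with the Forbes corollary.
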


    %------------------------------------
    \subsection{Proof Complexity}

    \begin{definition}[Propositional proof system, \cite{CR79}] 
        \label{def: propositional proof system}
        A propositional proof system is a polynomial-time computable relation $R(\cdot, \cdot)$ such that for each $x \in \{0,1\}^\ast$, $x\in \mathrm{TAUT}$, if and only if there exists $y\in \{0,1\}^\ast$ such that $R(x,y)$ holds. Given $x\in \mathrm{TAUT}$, any $y$ for which $R(x,y)$ holds is called an $R$-proof of $x$. A propositional proof system $R$ is polynomially bounded (\textbf{p-bounded}) if there exists a polynomial $p$ such that for each $x \in \mathrm{TAUT}$, there is an $R$-proof $y$ of $x$ of size at most $p(|x|)$ ($\ie \ |y| \leq p(|x|)$).
    \end{definition}

    \begin{definition}[p-simulation]
        \label{def: p-simulation}
        Let $P$ and $Q$ be propositional proof systems. We say that \emph{$P$ p-simulates $Q$}, written $Q \leq_p P$, if there exists a polynomial $p(\cdot)$ such that for every propositional tautology $\varphi$ and every $Q$-proof $\pi$ of $\varphi$ of size $s$, there exists a $P$-proof $\pi'$ of $\varphi$ whose size is at most $p(s)$.
    \end{definition}

    \begin{definition}[Frege, $\ACZ$-Frege and $\ACZP$-Frege]
        \label{def: Frege, AC-Frege and ACZP-Frege}
        A \emph{Frege rule} is an inference rule of the form: $B_1, \dots, B_n \implies B$, where $B_1, \dots, B_n, B$ are propositional formulas. If $n=0$, then the rule is an axiom. A \emph{Frege system} is specified by a finite set, $R$, of rules. Given a collection $R$ of rules, a derivation of a 3DNF formula $f$ is a sequence of formulas $f_1, \dots, f_m$ such that each $f_i$ is either an instance of an axiom scheme or follows from previous formulas by one of the rules in $R$ and such that the final formula $f_m$ is $f$.

        \emph{$\ACZ$-Frege} are Frege proofs but with the additional restriction that each formula in the proof has bounded depth.

        \emph{$\ACZP$-Frege} are bounded-depth Frege proofs that also allow unbounded fan-in $\mathrm{MOD}_p$ connectives, namely $\mathrm{MOD}_p^i$ for $i \in \{0, \dots, p-1\}$. $\mathrm{MOD}_p^i(x_1,\dots, x_k)$ evaluates to true if the number of $x_i$ that are true is congruent to $i$ $\mod(p)$ and evaluates to false otherwise.
    \end{definition}

    \begin{definition}[Polynomial Calculus \cite{CEI96}]
            \label{def: polynomial calculus}
            Given a field $\mathbb{F}$ and a set of variables, a \emph{polynomial calculus} ($\PC$) refutation of the set of axioms $P$ is a sequence of polynomials such that the last line is the polynomial $1$ and each line is either an axiom or is derived from the previous lines using the following inference rules:
            \begin{equation*}
                \frac{f \qquad g}{\alpha f + \beta g}
            \end{equation*}
            and
            \begin{equation*}
                \frac{f}{x\cdot f},
            \end{equation*}
            where $\alpha, \beta \in \mathbb{F}$ are any scalars and $x$ is an variable. The refutation has degree $d$ if all the polynomials in it have degrees at most $d$.

             The \emph{degree} of a $\PC$ proof is defined as the maximal degree of a polynomial appearing in it, and its \emph{size} is the number of different \emph{monomials} in this proof.
        \end{definition}
    
        \begin{definition}[Polynomial Calculus with Resolution \cite{MR1919962}]
            \label{def: PCR}
             Let $\mathbb{F}$ be a fixed field. \emph{Polynomial Calculus with Resolution} ($\PCR$) is the proof system whose lines are polynomials from $\mathbb{F}[x_1,\cdots,x_n,\overline{x_1},\cdots,\overline{x_n}]$, where $\overline{x_1},\cdots,\overline{x_n}$ are treated as new formal variables. $\PCR$ has all default axioms and inference rules of $\PC$ (including, of course, those that involve new variables $\overline{x_i}$), plus additional default axioms $x_i + \overline{x_i}=1$ $(i \in [n])$.
    
            For a clause $C$, denote by $\Gamma_C$ the monomial
            \begin{equation*}
                \Gamma_C \coloneqq \prod_{\overline{x} \in C} x \cdot \prod_{x \in C} \overline{x}
            \end{equation*}
            and for a $\CNF$ $\tau$, let $\Gamma_\tau \coloneqq \{\Gamma_C | C \in \tau \}$. (Note that $\tau$ is unsatisfiable if and only if the polynomials $\Gamma_\tau$ have no common root in $\mathbb{F}$ satisfying all default axioms of $\PCR$.) A \emph{$\PCR$ refutation} of a $\CNF$ $\tau$ is a $\PCR$ proof of the contradiction $1=0$ from $\Gamma_\tau$.
    
            The \emph{degree} of a $\PCR$ proof is defined as the maximal degree of a polynomial appearing in it, and its \emph{size} is the number of different \emph{monomials} in this proof.
        \end{definition}
    
        $\PC$ and $\PCR$ are equivalent with respect to the degree measure (via the linear transformation $\overline{x} \implies 1- x_i$).

    % -------------------------------------------
     \subsection{Ideal Proof System}
    % -------------------------------------------
    
     Given $f_1,\cdots, f_m \in \mathbb{F}[x_1, \cdots,x_n]$ over some field $\mathbb{F}$, Hilbert's Nullstellensatz shows that $f_1(\overline{x})=\cdots = f_m(\overline{x})=0$ is unsatisfiable (over the algebraic closure of $\mathbb{F}$) if and only if there are polynomials $g_1,\cdots,g_m \in \mathbb{F}[\overline{x}]$ such that $\sum_j g_j(\overline{x}) f_j(\overline{x})=1$ (as a formal identity), or equivalently, that 1 is in the ideal generated by the $\{f_j\}_j$.
    \begin{definition}[(Boolean) Ideal proof system ($\IPS$) \cite{GP18}] \label{def: Boolean ips}
        Let $f_1(\overline{x}),\cdots,f_m(\overline{x}), p(\overline{x})\in \mathbb{F}[x_1,\cdots,x_n]$ be a collection of polynomials. An $\IPS$ proof of $p(\overline{x})=0$ from $\{f_j(\overline{x})\}_{j=1}^m$, showing that $p(\overline{x})=0$ is semantically implied from the assumptions $\{f_j(\overline{x})=0\}_{j=1}^m$ over 0-1 assignments, is an algebraic circuit $C(\overline{x},\overline{y},\overline{z}) \in \mathbb{F}[\overline{x}, y_1,\cdots,y_m,z_1,\cdots,z_n]$, such that (the equalities in what follows stand for formal polynomial identities\footnote{That is, $C(\overline{x},\overline{0},\overline{0})$ computes the zero polynomial and $C(\overline{x},f_1(\overline{x}),\cdots,f_m(\overline{x}),x_1^2-x_1, \cdots, x_n^2 -x_n)$ computes the polynomial $p(\overline{x})$})
        \begin{enumerate}
            \item $C(\overline{x},\overline{0},\overline{0})=0$.
            \item $C(\overline{x},f_1(\overline{x}), \cdots, f_m(\overline{x}),x_1^2-x_1 , \cdots, x_n^2 -x_n)=p(\overline{x})$.
        \end{enumerate}
        
        The size of the $\IPS$ proof is the size of the circuit $C$. The variables $\overline{y},\overline{z}$ are sometimes called the placeholder variables since they are used as a placeholder for the axioms. An $\IPS$ proof $C(\overline{x},\overline{y},\overline{z})$ of $1=0$ from $\{f_j(\overline{x})=0\}_{j=1}^m$ is called an $\IPS$ refutation of $\{f_j(\overline{x})=0\}_{j=1}^m$. 
        If $C$ comes from a restricted class of algebraic circuits $\mathcal{C}$, then this is called a $\mathcal{C}$-$\IPS$ refutation.
    \end{definition}

    We shall also use the algebraic version of \IPS (which does not use the Boolean axioms):
    \begin{definition}[(Algebraic) Ideal proof system ($\IPS^{\alg}$) \cite{GP18}]
         Let $f_1(\overline{x}),\cdots,f_m(\overline{x}), p(\overline{x})\in \mathbb{F}[x_1,\cdots,x_n]$ be a collection of polynomials. An $\IPS^{\alg}$ proof of $p(\overline{x})=0$ from $\{f_j(\overline{x})\}_{j=1}^m$, showing that $p(\overline{x})=0$ is semantically implied from the assumptions $\{f_j(\overline{x})=0\}_{j=1}^m$ over assignments by field elements, is an algebraic circuit $C(\overline{x},\overline{y}) \in \mathbb{F}[\overline{x}, y_1,\cdots,y_m]$, such that 
        \begin{enumerate}
            \item $C(\overline{x},\overline{0})=0$.
            \item $C(\overline{x},f_1(\overline{x}), \cdots, f_m(\overline{x}))=p(\overline{x})$.
        \end{enumerate}
       The size and refutation are defined similarly to \Cref{def: Boolean ips}.
    \end{definition}

    Now, we introduce some notation we will use in the following sections. Let $\overline{\mathcal{F}}=\{f_i(\overline{x})=0\}_{i=1}^m$ be a collection of circuit equations, namely the $f_i$'s are written as algebraic circuits. We use $|\overline{\mathcal{F}}|$ to denote the total size of the circuit equations in $\overline{\mathcal{F}}$. We denote by $C: \overline{\mathcal{F}} \sststile{\IPS}{s,\Delta} 1=0$ the fact that $\overline{\mathcal{F}}$ has an $\IPS$ refutation $C$ of size at most $s$ and depth at most $\Delta$. If we do not care about the explicit size of the $\IPS$ refutation, we denote by $C: \overline{\mathcal{F}} \sststile{\IPS}{\ast, \Delta} 1=0$ the fact that $\overline{\mathcal{F}}$ has an $\IPS$ refutation $C$ of size polynomially bounded by $|\overline{\mathcal{F}}|$ and of depth $\Delta$.\par
    
    When we deal with algebraic $\IPS$ refutations, we will use the same notation as above, only using $\IPS^\alg$ instead of $\IPS$.

    Polynomial identities are proved for free in $\IPS$, which was observed in \cite{AGHT24}, and this also holds for constant-depth IPS proofs.

    \begin{proposition}
        \label{prop: polynomial identities that can be written as constant-depth circuits are proved for free in constant-depth IPS}
        If $C(\overline{x})$ is a Depth-$\Delta$ algebraic circuit in the variables $\overline{x}$ over the field $\mathbb{F}$ that computes the zero polynomial, then there is an Depth-$\Delta$ $\IPS$ proof of $C(\overline{x})=0$ of size $|F|$.
    \end{proposition}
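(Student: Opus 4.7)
The plan is to exhibit the $\IPS$ proof explicitly by using the circuit $C$ itself as the certificate. Since there are no axiom polynomials $f_j$ in the target equation $C(\overline{x})=0$, an $\IPS$ proof in this setting is a circuit $D(\overline{x},\overline{z})$ in $\overline{x}$ together with the placeholder variables $\overline{z}=(z_1,\ldots,z_n)$ for the Boolean axioms $x_i^2-x_i$, satisfying $D(\overline{x},\overline{0})=0$ and $D(\overline{x},x_1^2-x_1,\ldots,x_n^2-x_n)=p(\overline{x})$, where $p := C$ is the polynomial being certified (which equals the zero polynomial by hypothesis; the $|F|$ in the statement should read $|C|$).

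I would take $D(\overline{x},\overline{z}) := C(\overline{x})$, i.e., the circuit $C$ regarded as a circuit on the enlarged variable set that simply ignores $\overline{z}$. Both defining conditions are then immediate: first, $D(\overline{x},\overline{0}) = C(\overline{x}) = 0$ as polynomials, since $C$ computes zero; second, $D(\overline{x},x_1^2-x_1,\ldots,x_n^2-x_n) = C(\overline{x}) = 0 = p(\overline{x})$, for the same reason, and because the substitution of the placeholders is vacuous. The depth of $D$ equals that of $C$, namely $\Delta$, and the number of nodes in $D$ equals $|C|$, so the size and depth bounds in the statement are met.

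There is no substantial obstacle here; the entire content of the proposition is the observation that the $\IPS$ formalism permits a certificate that simply reproduces the given circuit, so certifying a polynomial identity expressed by a depth-$\Delta$ size-$|C|$ circuit incurs no blow-up in either parameter. The only thing to verify is that the $\IPS$ definition allows a proof circuit that does not depend on the placeholder variables $\overline{z}$, which holds trivially here since both defining identities for an $\IPS$ proof are satisfied.
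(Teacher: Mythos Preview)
Your proposal is correct and is exactly the standard argument: taking the certificate to be $C$ itself (ignoring the placeholder variables) satisfies both IPS conditions because $C$ computes the zero polynomial, and this incurs no overhead in size or depth. The paper does not supply its own proof of this proposition (it is stated as an observation from \cite{AGHT24}), so there is nothing further to compare.
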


    The following proposition can be regarded as a constant-depth analogue of Proposition A.5 in \cite{AGHT24}.

    \begin{proposition}[proof by boolean cases in bounded-depth $\IPS$]
        \label{prop: 0-1 implication completeness of IPS}
        Let $\F$ be a field. Let $\overline{\mathcal{F}}$ be a collection of $m$ many circuit equations over $n$ many variables $\overline{x}$. Assume that for every fixed assignment $\overline{\alpha} \in \{0,1\}^r$ where $0 \leq r \leq n$ we have 
        \[
            \sum_{i=1}^m G_i \cdot F_i + \sum_{i=1}^r L_i \cdot (x_i - \alpha_i) + \sum_{i=1}^n Q_i \cdot (x_i^2 -x_i)  = f(\overline{x})
        \]
        where each $G_i$, $L_i$ and $Q_i$ has size $s$ and depth-2 (in other words, each $G_i$, $L_i$, and $Q_i$ is just a summation of terms% 
        \footnote{A term is a monomial multiplied by a field element.}), then there exist $G_i^\prime$ and $Q_i^\prime$ such that
        \[
            \sum_{i=1}^m G_i^\prime \cdot F_i + \sum_{i=1}^n Q_i^\prime \cdot (x_i^2 -x_i) = f(\overline{x})
        \]
        where each $G_i^\prime$, $Q_i^\prime$ has size $c^r \cdot s$ and depth-2 for some constant $c$ independent of $r$.
    \end{proposition}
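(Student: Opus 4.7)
\begin{proofsketch}
The plan is to proceed by induction on $r$. The base case $r=0$ is immediate: the unique assignment is the empty one, so the hypothesis already coincides with the conclusion upon setting $G_i' = G_i$ and $Q_i' = Q_i$.

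For the inductive step we use the Boolean identity $(1-x_r) + x_r = 1$ to fuse pairs of hypotheses that agree on $\overline{\alpha}' \in \{0,1\}^{r-1}$ but differ in the $r$-th coordinate. Writing the two hypotheses as
\[
    \sum_{i} G_i^{(0)} F_i + \sum_{i < r} L_i^{(0)}(x_i - \alpha'_i) + L_r^{(0)}\, x_r + \sum_i Q_i^{(0)}(x_i^2 - x_i) = f
\]
and
\[
    \sum_{i} G_i^{(1)} F_i + \sum_{i < r} L_i^{(1)}(x_i - \alpha'_i) + L_r^{(1)}(x_r - 1) + \sum_i Q_i^{(1)}(x_i^2 - x_i) = f,
\]
we multiply the first by $(1-x_r)$, the second by $x_r$, and add. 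The right-hand side collapses to $f$. On the left, each $F_i$-coefficient becomes $(1-x_r) G_i^{(0)} + x_r G_i^{(1)}$, each coefficient of $(x_i - \alpha'_i)$ for $i < r$ and of $(x_i^2 - x_i)$ combines analogously, and the two ``$x_r$-axiom'' contributions merge via
\[
    (1-x_r) L_r^{(0)} x_r + x_r L_r^{(1)}(x_r - 1) = (L_r^{(1)} - L_r^{(0)})(x_r^2 - x_r),
\]
which gets absorbed into the coefficient of $(x_r^2 - x_r)$.

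Each resulting coefficient is a sum of terms multiplied by a single monomial plus another sum of terms, hence still depth-$2$, and its size is at most three times the previous one. So after this fusion we obtain $2^{r-1}$ derivations, one per $\overline{\alpha}' \in \{0,1\}^{r-1}$, each of the required form with one fewer Boolean axiom and coefficients of size at most $3s$. Applying the induction hypothesis to this collection yields the single combined derivation with coefficients of size at most $3^r s$, so any constant $c \geq 3$ works.

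The only step that requires genuine care—more of bookkeeping than of obstruction—is verifying that multiplying a sum of terms by $(1-x_r)$ or by $x_r$ and then adding another sum of terms never creates a new layer of depth: this is immediate because distributing over the terms keeps everything as a single sum of monomials with field coefficients. The same observation underlies the absorption of $(1-x_r)x_r = -(x_r^2 - x_r)$ into the $Q_r$ coefficient, since $L_r^{(1)} - L_r^{(0)}$ is itself a sum of terms of size at most $2s$.
\end{proofsketch}
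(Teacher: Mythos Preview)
Your proof is correct and follows essentially the same inductive argument as the paper: split on one Boolean coordinate, multiply the two branches by $x$ and $1-x$, add, and absorb the cross term $(1-x)x$ into the Boolean axiom for that variable. The only cosmetic differences are that you peel off $x_r$ first and then invoke the induction hypothesis, whereas the paper applies the induction hypothesis first (treating the surviving $x_1$-axiom as an extra $F_i$) and peels off $x_1$ last; and your constant $3$ is slightly optimistic---the $Q_r$ coefficient also absorbs $L_r^{(1)}-L_r^{(0)}$, so the paper uses $c=6$---but this is immaterial to the argument.
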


    \begin{proof}
        We prove by induction on $r$.

        \emph{Base case:} $r=0$. Assume that 
        \[
            \sum_{i=1}^m G_i \cdot F_i + \sum_{i=1}^n Q_i \cdot (x_i^2 -x_i)  = f(\overline{x})
        \]
        where each $G_i$ and $Q_i$ has size $s$ and depth-2, then clearly there exist $G_i^\prime$ and $Q_i^\prime$ such that
        \[
            \sum_{i=1}^m G_i^\prime \cdot F_i + \sum_{i=1}^n Q_i^\prime \cdot (x_i^2 -x_i) = f(\overline{x})
        \]
        where each $G_i^\prime$, $Q_i^\prime$ has size $c^r \cdot s$ (which is $s$ in the base case) and depth-2 for some constant $c$ independent of $r$.

        \emph{induction step:} $r > 0$. Suppose for any assignment $\alpha \in \{0,1\}^r$, 
        \[
            \sum_{i=1}^m G_i \cdot F_i + \sum_{i=1}^r L_i \cdot (x_i - \alpha_i) + \sum_{i=1}^n Q_i \cdot (x_i^2 -x_i)  = f(\overline{x})
        \]
        where each $G_i$, $L_i$ and $Q_i$ has size $s$ and depth-2. We aim to show that there exist $G_i^\prime$ and $Q_i^\prime$ such that
        \[
            \sum_{i=1}^m G_i^\prime \cdot F_i + \sum_{i=1}^n Q_i^\prime \cdot (x_i^2 -x_i) = f(\overline{})
        \]
        where each $G_i^\prime$, $Q_i^\prime$ has size $c^r \cdot s$ and depth-2 for some constant $c$ independent of $r$.

        Then, by our assumption, we know that for every fixed assignment $\overline{\alpha} \in \{0,1\}^{r-1}$ we have:
        \begin{align}
            \sum_{i=1}^m G_i \cdot F_i + \sum_{i=2}^r L_i \cdot (x_i - \alpha_i) + M \cdot x_1 + \sum_{i=1}^n Q_i \cdot (x_i^2 - x_i) = f(\overline{x})\\
            \sum_{i=1}^m P_i \cdot F_i + \sum_{i=2}^r K_i \cdot (x_i - \alpha_i) + N \cdot (1-x_1) + \sum_{i=1}^n W_i \cdot (x_i^2 - x_i) = f(\overline{x})
        \end{align}
        where each $G_i$, $L_i$, $M$, $Q_i$, $P_i$, $K_i$, $N$ and $W_i$ is of size $s$ and depth $2$.
        By the induction hypothesis,
        \begin{align}
            \sum_{i=1}^m G_i^\prime \cdot F_i + M^\prime \cdot x_1 + \sum_{i=1}^n Q_i^\prime \cdot (x_i^2 -x_i) = f(\overline{x}) \label{eq: 01 implication completeness 1}\\
            \sum_{i=1}^m P_i^\prime \cdot F_i + N^\prime \cdot (1-x_1) + \sum_{i=1}^n W_i^\prime \cdot (x_i^2 - x_i) = f(\overline{x}) \label{eq: 01 implication completeness 2}
        \end{align}
        where each $G_i^\prime$, $M^\prime$, $Q_i^\prime$, $P_i^\prime$, $N^\prime$ and $W_i^\prime$ is of size $c^{r-1} \cdot s$ and depth $2$.
        
        By multiplying \Cref{eq: 01 implication completeness 1} and \Cref{eq: 01 implication completeness 2} with $1-x_1$ and $x_1$, respectively, we get
        \begin{align}
            \sum_{i=1}^m (1-x_1)\cdot G_i^\prime \cdot F_i + (1-x) \cdot M^\prime \cdot x_1 + \sum_{i=1}^n (1-x_1) Q_i^\prime \cdot (x_i^2 -x_i) = (1-x_1)\cdot f(\overline{x}) \label{eq: 01 implication completeness 3}\\
            \sum_{i=1}^m x_1 \cdot P_i^\prime \cdot F_i + x_1 \cdot N^\prime \cdot (1-x_1) + \sum_{i=1}^m x_1\cdot W_i^\prime \cdot (x_i^2 - x_i) = x_1\cdot f(\overline{x}) \label{eq: 01 implication completeness 4}
        \end{align}

        By summing \Cref{eq: 01 implication completeness 3} and \Cref{eq: 01 implication completeness 4}, we get
        \begin{align*}
            \sum_{i=1}^m [(1-x_1)  G_i^\prime + x_1  P_i^\prime] \cdot F_i + [ (1-x_1)  M^\prime + (1-x_1) Q_1^\prime + x_1 N^\prime + x_1 W_1^\prime] \cdot (x_1^2 - x_1) + \\
            \sum_{i=2}^n [(1-x_1) Q_i^\prime + x_1 W_i^\prime] \cdot (x_i^2 - x_i) = f(\overline{x})
        \end{align*}

        Note that both $(1-x_1)  G_i^\prime + x_1  P_i^\prime$, $(1-x_1)  M^\prime + (1-x_1) Q_1^\prime + x_1 N^\prime + x_1 W_1^\prime$ and $(1-x_1) Q_i^\prime + x_1 W_i^\prime$ can be computed by an algebraic circuit of size at most $6 \cdot c^{r-1} \cdot s \leq c^r \cdot s$ and depth $2$ for large enough $c$ independent with $r$. This concludes the proof of the proposition.
    \end{proof}

    The following theorem is from \cite{GP18}, and it already holds for $\IPS^\alg$.
    \begin{theorem}[Superpolynomial $\IPS$ lower bounds imply $\VNP \neq \VP$ \cite{GP18}]
        \label{theorem: IPS lower bound implies VNP neq VP}
        For any field $\mathbb{F}$, a superpolynomial lower bound on $\IPS^\alg$ (also $\IPS$) refutations over $\mathbb{F}$ for any family of $\CNF$ formulas implies $\VNP_\mathbb{F} \neq \VP_\mathbb{F}$. The same result holds if we assume that the $\IPS^\alg$ ($\IPS$) refutation size lower bound holds only infinitely often. 
    \end{theorem}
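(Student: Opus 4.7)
The plan is to establish the contrapositive: assuming $\VNP_{\F} = \VP_{\F}$, every family $\{\phi_n\}$ of unsatisfiable $\CNF$ formulas admits polynomial-size $\IPS$ (and $\IPS^{\alg}$) refutations; the infinitely-often variant follows from the same reduction since each length is handled independently. Fix an unsatisfiable $\CNF$ $\phi = C_1 \wedge \cdots \wedge C_m$ on $n$ variables $\overline{x}$, and let $f_j(\overline{x})$ be the standard arithmetization of $\neg C_j$ (a small depth-$2$ algebraic formula), so that refuting $\phi$ amounts to certifying that $\{f_j=0\}_{j\le m}$ together with $\{x_i^2-x_i=0\}_{i\le n}$ has no common zero.

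The core construction is the canonical Grochow--Pitassi certificate. For each Boolean $\alpha\in\{0,1\}^n$, let $\chi_\alpha(\overline{x}) := \prod_{i=1}^{n} x_i^{\alpha_i}(1-x_i)^{1-\alpha_i}$ be the indicator monomial that picks out the assignment $\alpha$ on the Boolean cube. Since $\phi$ is unsatisfiable, for every $\alpha$ there exists some least index $j(\alpha)\in[m]$ with $f_{j(\alpha)}(\alpha)\ne 0$; set $c_\alpha := 1/f_{j(\alpha)}(\alpha)\in\F$. Define
\[
C(\overline{x},\overline{y},\overline{z}) \;:=\; \sum_{\alpha\in\{0,1\}^n} \chi_\alpha(\overline{x}) \cdot c_\alpha \cdot y_{j(\alpha)} \;+\; \sum_{i=1}^{n} Q_i(\overline{x},\overline{y})\cdot z_i,
\]
where the correction terms $Q_i\cdot z_i$ (with $z_i$ the placeholder for $x_i^2-x_i$) absorb the error so that the placeholder substitution produces exactly $1$ as a formal polynomial identity, not merely on the Boolean cube. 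First I would check the two defining conditions of $\IPS$: setting $\overline{y}=\overline{z}=\overline{0}$ kills every term, so $C(\overline{x},\overline{0},\overline{0})=0$; and substituting $y_j\mapsto f_j(\overline{x})$, $z_i\mapsto x_i^2-x_i$ yields the polynomial $\sum_\alpha \chi_\alpha(\overline{x}) = 1$ modulo $\{x_i^2-x_i\}$, which becomes an identity after adding the correcting multiples of $z_i$.

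Next I would argue the complexity bound. The coefficient polynomial $g_j(\overline{x}) := \sum_{\alpha : j(\alpha)=j} c_\alpha\,\chi_\alpha(\overline{x})$ is manifestly a $\VNP$ polynomial: its exponential sum definition over $\alpha\in\{0,1\}^n$ has an inner verifier computing $[j(\alpha)=j]\cdot c_\alpha \cdot \chi_\alpha(\overline{x})$, which is polynomial-size since $j(\alpha)$ is the least index such that $f_{j(\alpha)}(\alpha)\ne 0$ and each $f_j$ has polynomial size. A similar argument handles $Q_i$. Under the hypothesis $\VNP_{\F}=\VP_{\F}$, each $g_j$ (and each $Q_i$) has a polynomial-size $\VP$ circuit, so the whole certificate $C$ is a polynomial-size algebraic circuit, i.e., a polynomial-size $\IPS$ refutation. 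For the $\IPS^{\alg}$ case, the Boolean axioms are not available, so one instead encodes $\{x_i^2-x_i=0\}$ inside the CNF axiom set (this preserves unsatisfiability and blows the size up by only $O(n)$), and then the same construction yields an $\IPS^{\alg}$ refutation of polynomial size. The infinitely-often statement follows by contraposition length-by-length: if $\VP=\VNP$ then the above reduction gives polynomial-size refutations for \emph{all} sufficiently large $n$, contradicting an i.o.\ superpolynomial lower bound.

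The main obstacle I anticipate is the careful verification that the ``selector plus correction'' polynomial $C$ lies in $\VNP$ uniformly in the $\CNF$ family and satisfies the formal polynomial identities required by $\IPS$; in particular, producing the $Q_i$ correction terms so that the substitution gives exactly $1$ as a formal identity (not just on the Boolean cube) requires a small amount of additional bookkeeping, and this is where the distinction between $\IPS$ and $\IPS^{\alg}$ must be handled with care. Everything else—soundness checking and converting $\VP$ circuits into $\IPS$ certificates—is routine once the $\VNP$ membership of the coefficients is established.
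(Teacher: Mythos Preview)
Your proposal is correct and follows precisely the Grochow--Pitassi argument that the paper invokes; the paper does not supply its own proof of this theorem but cites it from \cite{GP18}, isolating the key step as Lemma~\ref{lemma: unsat CNF has VNP refutation} (every unsatisfiable $\CNF$ family has $\VNP$ certificates), from which the theorem follows immediately by contraposition---see also the short proof of the generalization in Theorem~\ref{theorem: grochow-pitassi for C}. Your explicit certificate via indicator polynomials $\chi_\alpha$ and a least-falsified-clause selector is exactly the standard construction behind Lemma~\ref{lemma: unsat CNF has VNP refutation}, and the obstacle you flag---showing the correction terms $Q_i$ lie in $\VNP$---is indeed the only nontrivial bookkeeping, handled by explicit multilinearization whose quotients admit the same Valiant-criterion argument you use for the $g_j$.
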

    \begin{lemma}[\cite{GP18}]
        \label{lemma: unsat CNF has VNP refutation}
        Every family of unsatisfiable $\CNF$ formulas $(\varphi_n)$ has a family of $\IPS^\alg$ (also $\IPS$) certificates $(C_n)$ in $\VNP_\mathbb{F}$.
    \end{lemma}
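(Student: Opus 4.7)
The plan is to build the $\IPS$ certificate explicitly via Lagrange interpolation over Boolean assignments, exploiting unsatisfiability to pick, for each assignment $\alpha$, a locally falsified clause that serves as a one-term certificate at $\alpha$; gluing these together via a Boolean indicator produces a polynomial naturally of exponential-sum form over $\{0,1\}^n$, which is the defining form of $\VNP$.

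First I would encode each clause $C_i$ of $\varphi_n$ as the standard clause polynomial $f_i(\vec{x})$ (a product of literal polynomials, of size $O(n)$), so that $f_i(\alpha)\in\{0,1\}$ and $f_i(\alpha)=1$ iff $\alpha$ falsifies $C_i$. Since $\varphi_n$ is unsatisfiable, for every $\alpha\in\{0,1\}^n$ there is a least index $i(\alpha)$ with $f_{i(\alpha)}(\alpha)=1$, and $y_{i(\alpha)}$ is selected by the first-hit expression $\sum_{i=1}^m y_i\cdot f_i(\alpha)\prod_{j<i}(1-f_j(\alpha))$. Writing $\chi_\alpha(\vec{x})=\prod_j(\alpha_j x_j+(1-\alpha_j)(1-x_j))$ for the standard Boolean interpolator, I would then define the main part of the certificate as
$$G(\vec{x},\vec{y}) \;=\; \sum_{\alpha\in\{0,1\}^n}\chi_\alpha(\vec{x})\cdot\sum_{i=1}^m y_i\cdot f_i(\alpha)\prod_{j<i}(1-f_j(\alpha)).$$
The integrand has polynomial size in $\alpha,\vec{x},\vec{y}$, so $G\in\VNP$ directly from the definition.

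One then checks the $\IPS$ conditions: $G(\vec{x},\vec{0})=0$ holds termwise, and at any Boolean point $\beta$ only $\alpha=\beta$ survives in the outer sum and the first-hit expression collapses to $y_{i(\beta)}$, so substituting $y_i=f_i(\vec{x})$ and evaluating at $\beta$ yields $f_{i(\beta)}(\beta)=1$. Hence $P(\vec{x}):=G(\vec{x},\vec{f}(\vec{x}))-1$ vanishes on $\{0,1\}^n$, and by the Boolean Nullstellensatz it has the form $\sum_j h_j(\vec{x})(x_j^2-x_j)$ for some polynomials $h_j$. The final certificate is
$$C(\vec{x},\vec{y},\vec{z}) \;=\; G(\vec{x},\vec{y})-\sum_{j=1}^n h_j(\vec{x})\cdot z_j,$$
which satisfies both defining conditions of an $\IPS$ certificate by construction.

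The main obstacle is ensuring that the corrections $h_j$ are themselves in $\VNP$, so that $C$ as a whole lies in $\VNP$. My plan is to exhibit each $h_j$ as an explicit exponential sum: starting from $P\in\VNP$ (which inherits the exponential-sum form from $G$ by closure of $\VNP$ under substitution by $\VP$ polynomials), iteratively reduce $P$ modulo $x_j^2-x_j$ coordinate by coordinate, using the identity $x_j^k-x_j=(x_j^2-x_j)(x_j^{k-2}+\cdots+1)$ to peel off each factor, and verify by direct bookkeeping on the exponential-sum representation that each $h_j$ remains a sum over $\{0,1\}^{\poly(n)}$ of a $\VP$-computable integrand. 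This is a standard closure property of $\VNP$ under taking Nullstellensatz coefficients of Boolean-vanishing polynomials. For the $\IPS^{\alg}$ version of the lemma, one simply treats the Boolean axioms $x_j^2-x_j$ as additional placeholder polynomials alongside the $f_i$'s, so the very same construction yields a $\VNP$-computable $\IPS^{\alg}$ certificate.
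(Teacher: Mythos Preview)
Your approach is essentially the Grochow--Pitassi construction: Lagrange interpolation over Boolean assignments with a clause-selector at each point, plus a Boolean-axiom correction. The first-hit selector is a fine way to make the choice $\kappa(\alpha)$ circuit-computable in $\alpha$.

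The one place to tighten is the $h_j$'s. Your identity $x_j^k-x_j=(x_j^2-x_j)(x_j^{k-2}+\cdots+1)$ is unnecessary here, since $G(\vec x,\vec f(\vec x))$ already has individual degree at most $2$ in each $x_j$ (both $\chi_\alpha$ and every clause polynomial $f_i$ are multilinear). More to the point, the ``standard closure property'' you appeal to is not usually available as a black box, and over $\mathbb{F}_2$ you cannot recover the degree-$2$ coefficient by evaluation-interpolation. The original argument sidesteps this with a structural observation you are one step away from: since $\alpha$ falsifies clause $\kappa(\alpha)$, the clause polynomial $f_{\kappa(\alpha)}$ is literally a subproduct of $\chi_\alpha$, so $\chi_\alpha\cdot f_{\kappa(\alpha)}=g_\alpha\cdot f_{\kappa(\alpha)}^2$ for the explicit complementary product $g_\alpha$, and hence (using $\sum_\alpha\chi_\alpha=1$ identically)
\[
G(\vec x,\vec f(\vec x))-1 \;=\; \sum_{\alpha} g_\alpha\bigl(f_{\kappa(\alpha)}^2-f_{\kappa(\alpha)}\bigr).
\]
Telescoping $\prod_j\ell_j^2-\prod_j\ell_j=\sum_i\bigl(\prod_{j<i}\ell_j\bigr)(\ell_i^2-\ell_i)\bigl(\prod_{j>i}\ell_j^2\bigr)$, together with $\ell^2-\ell=x^2-x$ for either literal $\ell\in\{x,1-x\}$, writes each $h_j$ directly as an exponential sum over $\alpha\in\{0,1\}^n$ of polynomial-size products (your first-hit selector encodes the dependence on $\kappa(\alpha)$), placing the whole certificate in $\VNP$ over any field without an abstract closure lemma. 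I would replace your last paragraph with this explicit construction.
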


    % -----------------------------------------------------
    \subsection{Encoding in Fixed Finite Fields}
% -----------------------------------------------------

    In the section, we are working in the finite field $\mathbb{F}_q$ where $q$ is a constant (independent of the size of the formulas and their number of variables). When we work with $\CNF$ formulas in $\IPS$ we assume that the $\CNF$ formulas are translated as follows:
    \begin{definition}[Algebraic translation of CNF formulas]
        \label{def: Algebraic translation of CNF formulas}
        Given a $\CNF$ formula in the variables $\overline{x}$, every clause $\bigvee_{i \in P}x_i \lor \bigvee_{j \in N} \neg x_j$ is translated into $\prod_{i \in P}(1-x_i) \cdot \prod_{j \in N}x_j=0$. (Note that these terms are written as algebraic circuits as displayed, where products are not multiplied out.)
    \end{definition}
    Notice that a $\CNF$ formula is satisfiable by 0-1 assignment if and only if the assignment satisfies all the equations in the algebraic translation of the $\CNF$.\par
    
    The following definitions are taken from \cite{ST25}, and we supply them here for completeness. 
    \begin{definition}[Algebraic extension axioms and unary bits \cite{ST25}] 
        \label{def: algebraic extension axioms and unary bits}
        Given a circuit $C$ and a node $g$ in $C$, we call the equation
        \begin{equation*}
            x_g = \sum_{i=0}^{q-1} i \cdot x_{g_i}
        \end{equation*}
        the algebraic extension axiom of $g$, with each variables $x_{g_i}$ being the $i$th unary-bit of $g$.
    \end{definition}
    
    \begin{definition}[Plain CNF encoding of bounded-depth algebraic circuit $\cnf(C(\overline{x}))$ \cite{ST25}]
        \label{def: plain CNF encoding of constant-depth algebraic circuit cnf(C(x))}
        Let $C(\overline{x})$ be a circuit in the variables $\overline{x}$. The plain $\CNF$ encoding of the circuit $C(\overline{x})$ denoted $\cnf(C(\overline{x}))$ consists of the following $\CNF$s in the unary-bit variables corresponding to   all the gates in $C$ and all the extra extension variables in \Cref{it:1914}:
        \begin{enumerate}
            \item If $x_i$ is an input node in $C$, the plain $\CNF$ encoding of $C$ uses the variables $x_{x_{i0}},\cdots,x_{x_{i(q-1)}}$ that are the unary-bits of $x_i$, and contains the clauses that express that precisely one unary-bit is $1$ and all other unary-bits are $0$:
            \begin{equation*}
                    \bigvee_{j=0}^{q-1} x_{x_i j} \land \bigwedge_{j \neq l \in \{0,\cdots,q-1\}} (\neg x_{x_i j} \lor \neg x_{x_i l}).
            \end{equation*}
            \item If $\alpha \in \mathbb{F}_q$ is a scalar input node in $C$, the plain $\CNF$ encoding of $C$ contains the $\{0,1\}$ constants  corresponding to the unary-bits of $\alpha$. These constants are used when fed to (translation of) gates according to the wiring of $C$ in item 4.
            \item \label{it:1914} For every node $g$ in $C(\overline{x})$ and every satisfying assignment $\overline{\alpha}$ to the plain $\CNF$ encoding, the corresponding unary-bit $x_{g_i}$ evaluates to $1$ if and only if the value of $g$ is $i \in \{0,\cdots, q-1\}$ (when the algebraic inputs $\overline{x} \in (\mathbb{F}_q)^\ast$ to $C(\overline{x})$ take on the values corresponding to the Boolean assignment $\overline{\alpha}$; "$\ast$" here means the Kleene star). This is ensured with the following equations: if $g$ is a $\circ \in \{+, \times\}$ node that has inputs $u_1,\cdots, u_t$. Then we consider the following equations:
                \begin{align*}
                     &u_1 \circ u_2 = v_1^g \\
                    &u_{i+2} \circ v_{i}^g = v_{i+1}^g, \qquad 1 \leq i \leq t-3 \\
                    &u_{t} \circ v_{t-2}^g = g.
                \end{align*}
                In other words, we add the \emph{extension variables} $v_i^g$ for each $+,\times$ gate, to sequentially compute the unbounded fan-in gate $g$ into a sequence of binary operations in the obvious way. 
                For simplicity, we denote each equation above by $x \circ y =z$. Then, for each $x\circ y=z$ we have a $\CNF$ $\phi$ in the unary-bits variables of $x, y,z$ that is satisfied by assignment precisely when the output unary-bits of $z$ get their correct values based on the (constant-size) truth table of $\circ$ over $\mathbb{F}_q$ and the input unary-bits of $x,y$ (we ensure that if more than one unary-bit is assigned $1$ in any of the unary-bits of  $x,y,z$ then the $\CNF$ is unsatisfiable). 
            \item For every unary-bit variable $x_{g_i}$, we have the Boolean axiom (recall we write these Boolean axioms explicitly since we are going to work with $\IPS^\alg$):
            \begin{equation*}
                x_{g_i}^2 - x_{g_i} = 0.
            \end{equation*}
        \end{enumerate}
        Therefore, we can see that the formula size of $\cnf(C(\overline{x})=0)$ is $\poly(q^2 \cdot |C|)$.
    \end{definition}

    Note that the only variables in a plain CNF encoding are unary-bit variables.
    
    \begin{definition}[Plain CNF encoding of a bounded-depth circuit equation $\cnf(C(\overline{x})=0)$ \cite{ST25}]
        \label{def: plain CNF encoding of a constant-depth circuit equation cnf(C(x)=0)}
         Let $C(\overline{x})$ be a circuit in the variables $\overline{x}$. The plain $\CNF$ encoding of the circuit equation $C(\overline{x})=0$ denoted $\cnf(C(\overline{x})=0)$ consists of the following $\CNF$ encoding from \Cref{def: plain CNF encoding of constant-depth algebraic circuit cnf(C(x))} in the unary-bits variables of all the gates in $C$ ( and only in the unary-bit variables), together with the equations:
         \begin{equation*}
                x_{g_{out}0} =1 \quad \text{and} \quad x_{g_{out}i} = 0, \quad \text{for all }i=1,\cdots,q-1,
            \end{equation*}
            which express that $g_{out} = 0$, where $g_{out}$ is the output node of $C$.
        
    \end{definition}

    \begin{definition}[Extended CNF encoding of a circuit equation (circuit, \resp); $\ecnf(C(\overline{x})=0)$ ($\ecnf(C(\overline{x}))$, \resp) \cite{ST25}]
        \label{def: Extended CNF encoding of a circuit equation cnf(C(x)=0)}
         Let $C(\overline{x})$ be a circuit in the variables $\overline{x}$ over the finite filed $\mathbb{F}_q$. The extended $\CNF$ encoding of the circuit equation $C(\overline{x})=0$ (circuit $C(\overline{x})$, \resp), in symbols $\ecnf(C(\overline{x})=0)$ ($\ecnf(C(\overline{x}))$, \resp), is defined to be a set of algebraic equations over $\mathbb{F}_q$ in the variables $x_g$ and $x_{g0},\cdots,x_{gq-1}$ which are the unary-bit variables corresponding to the node $g$ in $C$, that consist of:
         \begin{enumerate}
             \item the plain $\CNF$ encoding of the circuit equation $C(\overline{x})=0$ (circuit $C(\overline{x})$, \resp), namely, $\cnf(C(\overline{x})=0)$ ($\cnf(C(\overline{x}))$, \resp); and
             \item the algebraic extension axiom of $g$, for every gate $g$ in $C$.
         \end{enumerate}
         
    \end{definition}

    Since we work with extension variables for each gate in a given circuit equation $C(\overline{x})=0$, it is more convenient to express circuit equations as a set of equations that correspond to the straight line program of $C(\overline{x})$ (which is equivalent in strength formulation to algebraic circuits):
    
    \begin{definition}[Straight line program ($\SLP$)]
        \label{def: Straight line program SLP}
        An $\SLP$ of a circuit $C(\overline{x})$, denoted by $\SLP(C(\overline{x}))$, is a sequence of equations between variables such that the extension variable for the output node computes the value of the circuit assuming all equations hold. Formally, we choose any topological order $g_1,g_2, \cdots,g_i,\cdots,g_{|C|}$ on the nodes of the circuit $C$ (that is, if $g_j$ has a directed path to $g_k$ in $C$ then $j < k$) and define the following set of equations to be the $\SLP$ of $C(\overline{x})$:
        \begin{center}
            $g_i = g_{j1} \circ g_{j2} \circ \cdots \circ g_{jt}$ for $\circ \in \{+, \times \}$ iff $g_i$ is a $\circ$ node in $C$ with $t$ incoming edges from $g_{j1},\cdots,g_{jt}$.
        \end{center}
        An $\SLP$ representation of a circuit equation $C(\overline{x})=0$ means that we add to the $\SLP$ above the equation $g_{|C|}=0$, where $g_{|C|}$ is the output node of the circuit.
    \end{definition}
    The below lemma, which we refer to as the translation lemma in this paper, shows that we can derive the circuit equations from the extended $\CNF$ formulas encoding those circuit equations with some additional axioms and vice versa.
    \begin{lemma}[Translating between extended $\CNF$s and circuit equations in fixed finite fields \cite{ST25}] 
        \label{lemma: Translating between extended CNF formulas and circuit equations in Fixed finite fields}
        Let $\mathbb{F}_q$ be a finite field, and let $C(\overline{x})$ be a circuit of depth $\Delta$ in the $\overline{x}$ variables over $\mathbb{F}_q$. Then, the following both hold
        \begin{align}
            \ecnf(C(\overline{x})=0) \sststile{\IPS^\alg}{*,O(\Delta)} C(\overline{x})=0  
        \end{align}

        \begin{equation}
            \begin{array}{rcl}
            \left \{x_{g} = \sum_{i=0}^{q-1} i \cdot x_{gi} : g \text{ a node in } C \right \}, & & \\
            \left \{x_{gi}^2 - x_{gi} = 0 : g \text{ is a node in } C,\; 0 \le i < q\right \}, & 
             \scalebox{1.5}{$\sststile{\IPS^\alg}{*,O(\Delta)}$} &
            \mathrm{ecnf}\big(C(\overline{x}) = 0\big), \\[2mm]
        \left \{\sum_{i=0}^{q-1} x_{gi} = 1 : g \text{ is a node in } C\right \},\; C(\overline{x}) = 0, \; \SLP(C(\overline{x})) & &
            \end{array}
        \end{equation}

    \end{lemma}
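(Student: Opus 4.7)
The plan is to prove both directions by a gate-by-gate argument that walks through the circuit $C$ in topological order, using \Cref{prop: 0-1 implication completeness of IPS} (proof by boolean cases in bounded-depth $\IPS$) as the main engine. The key quantitative fact is that $\mathbb{F}_q$ is a \emph{constant}-size field, so at every gate $g$ the local set of unary-bit variables $\{x_{g0},\dots,x_{g,q-1}\}$ has constant size, and any case analysis over these variables is of constant size as well. This keeps each local derivation of constant depth and constant size, so that summing contributions over all gates yields polynomial size, and the depth never exceeds the natural depth $O(\Delta)$ imposed by the circuit itself.

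\textbf{Forward direction} $\ecnf(C(\overline{x})=0) \sststile{\IPS^\alg}{*,O(\Delta)} C(\overline{x})=0$. I proceed in topological order on the nodes of $C$. The plain-$\CNF$ part of $\ecnf$ decomposes every unbounded fan-in $\{+,\times\}$-gate into binary operations $u \circ v = w$ via the extension variables $v_i^g$ of \Cref{it:1914}, and attaches a constant-size $\CNF$ $\phi_{u,v,w}$ in the $3q$ unary-bit variables encoding the truth table of $\circ$ over $\mathbb{F}_q$. For each such binary operation, I apply \Cref{prop: 0-1 implication completeness of IPS} to $\phi_{u,v,w}$ together with the Boolean axioms $x_{\bullet i}^2=x_{\bullet i}$ to conclude, by exhausting the $q^2$ Boolean cases for the unary bits of $u$ and $v$, the polynomial identity $x_w = x_u \circ x_v$, using the extension axioms $x_g=\sum_i i\cdot x_{gi}$ to translate the unary encoding back into the extension variable. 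The output clauses in $\cnf$ force $x_{g_{out}0}=1$ and $x_{g_{out}i}=0$ for $i\ge 1$, and the extension axiom at $g_{out}$ yields $x_{g_{out}}=0$; concatenating this with the derived SLP identities is exactly the circuit equation $C(\overline{x})=0$.

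\textbf{Backward direction}. The target is now the plain $\CNF$ part of $\ecnf(C(\overline{x})=0)$, since the extension axioms are both hypotheses and conclusions. The ``exactly one unary bit is $1$'' clauses follow from the Boolean axioms $x_{gi}^2-x_{gi}=0$ together with the exactness axiom $\sum_i x_{gi}=1$, using \Cref{prop: polynomial identities that can be written as constant-depth circuits are proved for free in constant-depth IPS} to dispose of the polynomial identities that appear. The per-gate truth-table clauses $\phi_{u,v,w}$ follow, for each binary operation $w=u\circ v$ in $\SLP(C(\overline{x}))$, by applying \Cref{prop: 0-1 implication completeness of IPS} to the Boolean cases on the unary bits of $u$ and $v$: in each case, the extension axioms reduce $x_u, x_v$ to explicit field elements, the SLP equation pins down $x_w$, and the extension and exactness axioms at $w$ force the unique consistent pattern of unary bits at $w$. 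The output clauses are obtained the same way from $x_{g_{out}}=0$ (supplied by $C(\overline{x})=0$ together with the SLP), the extension axiom at $g_{out}$, and the Boolean and exactness axioms at $g_{out}$.

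\textbf{Main obstacle.} The delicate step is the depth bookkeeping. Each local per-gate derivation, by \Cref{prop: 0-1 implication completeness of IPS} over the $O(q)$ unary bits at that gate, produces a depth-$2$ combination with a constant $c^{O(q)}$ blow-up in size, hence constant depth and constant size since $q$ is constant. The only way depth can accumulate is through the ``aggregation'' step that stitches together the $O(|C|)$ per-gate identities into a single $\IPS^\alg$ certificate of $C(\overline{x})=0$ (or of a given clause of $\ecnf$). The way to keep this at depth $O(\Delta)$ rather than $O(|C|)$ is to organize the aggregation along the tree/DAG structure of $C$ itself: the certificate is built as a linear combination of the per-gate contributions, with multipliers that are products of depth-$1$ polynomials following the wiring of $C$, giving total depth $O(\Delta)+O(1)=O(\Delta)$. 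Verifying this depth accounting at every layer of the recursion is the only nontrivial piece; everything else is an application of \Cref{prop: 0-1 implication completeness of IPS} and \Cref{prop: polynomial identities that can be written as constant-depth circuits are proved for free in constant-depth IPS} over the constant-size alphabet $\mathbb{F}_q$.
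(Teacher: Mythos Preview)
The paper does not give its own proof of this lemma; it is stated with a citation to \cite{ST25} and no argument appears. So there is no in-paper proof to compare against directly. Your plan is nonetheless the right one, and it mirrors exactly the strategy the paper \emph{does} carry out explicitly in Section~7 for the analogous translation lemma over growing fields (\Cref{lemma: Translating from extended CNFs to circuit equations} and \Cref{lemma: translating from circuit equations and addition axioms to ecnf}): derive per-gate SLP identities from the local CNF encoding via constant-size case analysis, then aggregate along the circuit structure so that the depth stays $O(\Delta)$ rather than $O(|C|)$. Your identification of \Cref{prop: 0-1 implication completeness of IPS} as the engine and of the aggregation step as the only place where depth can accumulate is exactly right.

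One point to tighten. In the backward direction you write ``for each binary operation $w=u\circ v$ in $\SLP(C(\overline{x}))$'', but $\SLP(C(\overline{x}))$ as defined in \Cref{def: Straight line program SLP} records the \emph{unbounded} fan-in gates of $C$, not the binary decomposition with the auxiliary extension variables $v_i^g$ introduced in \Cref{it:1914} of \Cref{def: plain CNF encoding of constant-depth algebraic circuit cnf(C(x))}. The truth-table clauses you need to derive live at the level of those binary steps $u_{i+2}\circ v_i^g=v_{i+1}^g$, so you must either (i) first derive the binary SLP equations for the $v_i^g$ from the unbounded ones, or (ii) make explicit that the extension, Boolean, and exactness axioms in the hypotheses are meant to range over the $v_i^g$ as well. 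The paper's definitions are slightly loose on whether ``node in $C$'' includes these auxiliary nodes; when you write out the argument, state clearly which reading you adopt and check that the hypotheses you are given actually pin down the values of the $v_i^g$.
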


    \begin{proposition}[Proposition 3.7 in \cite{ST25}]
        \label{prop: C <-> cnf <-> ecnf fixed finite fields}
        Let $C(\overline{x})=0$ be a circuit equation over $\mathbb{F}_q$ where $q$ is any constant prime. Then, $C(\overline{x})=0$ is unsatisfiable over $\mathbb{F}_q$ iff $\cnf(C(\overline{x})=0)$ is an unsatisfiable $\CNF$ iff $\ecnf(C(\overline{x})=0)$ is an unsatisfiable set of equations over $\mathbb{F}_q$.
    \end{proposition}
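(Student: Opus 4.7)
The plan is to establish the two biconditionals by giving mutual constructions of satisfying assignments, and then noting that unsatisfiability is just the negation. The key observation is that both encodings are essentially explicit definitions: the plain CNF fixes unary-bit variables so that, clause-by-clause, they are forced to carry the values of the gates of $C$, while the extended CNF additionally introduces gate variables $x_g$ whose values are pinned down by the algebraic extension axioms. Thus the argument is essentially propagation along the topological order of $C$.

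First I would handle the direction ``$C(\overline{x})=0$ satisfiable over $\F_q$ implies $\ecnf(C(\overline{x})=0)$ satisfiable implies $\cnf(C(\overline{x})=0)$ satisfiable''. Given $\overline{\alpha}\in\F_q^n$ with $C(\overline{\alpha})=0$, evaluate every node $g$ of $C$ on $\overline{\alpha}$ to obtain a field value $\beta_g\in\F_q$, and set $x_g:=\beta_g$ together with the unary-bit variables $x_{g_i}:=1$ if $\beta_g=i$ and $x_{g_i}:=0$ otherwise; for the auxiliary extension variables $v_j^g$ introduced in item~3 of \Cref{def: plain CNF encoding of constant-depth algebraic circuit cnf(C(x))} assign the value of the corresponding binary partial operation. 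By construction: (a) exactly one unary-bit is $1$ for each node, satisfying the uniqueness and existence clauses of item~1; (b) each algebraic extension axiom $x_g=\sum_{i=0}^{q-1} i\cdot x_{g_i}$ holds; (c) the truth-table CNFs encoding each $x\circ y=z$ hold because $\beta_g$ is computed by actually performing the gate operation over $\F_q$; (d) the output constraint $x_{g_{out} 0}=1$, $x_{g_{out} i}=0$ for $i\neq 0$ holds because $C(\overline{\alpha})=0$. Restricting to the unary-bit variables gives the same assignment for $\cnf(C(\overline{x})=0)$.

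For the converse, suppose $\overline{\tau}$ is a satisfying assignment of $\cnf(C(\overline{x})=0)$. For each node $g$, item~1 of the definition forces exactly one $x_{g_i}$ to be $1$, and we define $\beta_g:=i$ for this unique $i$. Processing the nodes of $C$ in topological order, item~3 then ensures that $\beta_g$ is the correct output of the corresponding $+$ or $\times$ gate applied to the values $\beta_{u_1},\dots,\beta_{u_t}$ of its children (using the $v_j^g$ as intermediate partial evaluations). In particular, $\beta_{g_{out}}=0$ by the output clause, so the projection of $\overline{\tau}$ to the input-variable unary-bits encodes an assignment $\overline{\alpha}\in\F_q^n$ with $C(\overline{\alpha})=0$. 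Extending $\overline{\tau}$ by setting $x_g:=\beta_g$ also satisfies the algebraic extension axioms, giving a satisfying assignment of $\ecnf(C(\overline{x})=0)$. Negating gives the three stated unsatisfiability equivalences.

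The only delicate point is the bookkeeping for unbounded fan-in gates. Since item~3 of \Cref{def: plain CNF encoding of constant-depth algebraic circuit cnf(C(x))} only encodes binary operation tables, each $+$ or $\times$ gate of arity $t$ is split into $t-1$ binary operations using the auxiliary variables $v_1^g,\dots,v_{t-2}^g$. One must verify that fixing these variables (forward direction) and propagating through them (backward direction) preserves the semantic value of the gate; this is immediate from the associativity of $+$ and $\times$ in $\F_q$ and from the fact that the serialisation in Definition~\ref{def: plain CNF encoding of constant-depth algebraic circuit cnf(C(x))} is canonical. Once this is checked the rest of the argument is purely mechanical, so I do not expect any substantive obstacle; the proposition is a bookkeeping statement about the encodings rather than a result requiring new ideas.
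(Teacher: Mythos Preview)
Your proposal is correct and follows the natural semantic argument one would expect: translate satisfying assignments back and forth between the circuit equation, its plain $\CNF$ encoding, and its extended $\CNF$ encoding by evaluating gates in topological order and reading off/setting unary-bit variables accordingly. Note, however, that the paper does not actually supply a proof of this proposition; it is simply quoted as Proposition~3.7 of \cite{ST25}, so there is no in-paper argument to compare against. Your write-up is exactly the kind of bookkeeping verification that underlies the cited result.
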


    Using results in \cite{EGLT25}, we could remove some extension axioms used in \cite{ST25} when working over fixed finite fields.
    We use $\UBIT_j(x)$ to denote the following Lagrange polynomial:
    \begin{equation}
        \label{UBITs}
        \UBIT_j(x):=
        \frac{\prod_{i=0, i \neq j}^{q-1}(x-i)}{\prod_{i=0, i \neq j}^{q-1}(j-i)}
    \end{equation}
    where $x$ can be a single variable or an algebraic circuit. Hence, it is easy to observe that
    \begin{equation*}
        \UBIT_j(x) = \begin{cases}
            1, \quad x = j,\\
            0, \quad \text{otherwise}.
        \end{cases}
    \end{equation*}
    Also, suppose $x$ has size $|x|$ and depth $\depth(x)$ (when $x$ is a single variable, it has size $1$ and depth $1$), $\UBIT_j(x)$ can be computed by an algebraic circuit of size $O(|x|^{q-1})$ and depth $\depth(x)+2$.

    \begin{definition}[Semi-CNF $\SCNF$ encoding of a bounded-depth circuit equation $\SCNF(C(\overline{x})=0)$]
        \label{def: semi-CNF encoding}
        Let $C(\overline{x})$ be a circuit in the variables $\overline{x}$. The semi-CNF encoding of the circuit equation $C(\overline{x})=0$ denoted $\SCNF(C(\overline{x}))$ is a substitution instance of the plain CNF encoding in \Cref{def: plain CNF encoding of a constant-depth circuit equation cnf(C(x)=0)} where each unary-bits $x_{uj}$ of all the gates and extra extension variables\footnote{These extension variables are used in Item 3 of \Cref{def: plain CNF encoding of constant-depth algebraic circuit cnf(C(x))} to help encode the circuit.} $u$ is substituted with $\UBIT_j(C_u)$ where $C_u$ is the bounded-depth algebraic circuit computed by $u$.\footnote{This $C_u$ can be constructed from $\SLP$s easily.}
    \end{definition}

    We call $x^q -x =0$ the \emph{field axiom} for the variable $x$.

    \begin{lemma}[Translate semi-CNFs from circuit equations in Fixed Finite Fields, \cite{EGLT25}]
        \label{lemma: Translate semi-CNFs from circuit equations in Fixed Finite Fields}
        Let $\mathbb{F}_q$ be a finite field, and let $C(\overline{x})$ be a circuit of depth $\Delta$ in the $\overline{x}$ variables over $\mathbb{F}_q$. Then, the following hold
        \begin{gather*}
            \{x^q -x =0: x \text{ is a variable in } C\}, C(\overline{x})=0  \sststile{\IPS^\alg}{*,O(\Delta)} \SCNF(C(\overline{x})=0)
        \end{gather*}
        
    \end{lemma}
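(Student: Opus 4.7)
The plan is to derive each equation of $\SCNF(C(\overline{x})=0)$ from the given axioms in two phases. First, for every node $u$ of $C$, I would derive the \emph{intermediate field axiom} $C_u^q - C_u = 0$, where $C_u$ denotes the sub-circuit computed at $u$. Second, using these intermediate field axioms (together with $C(\overline{x}) = 0$ for the output equations), I would derive each equation of $\SCNF(C(\overline{x})=0)$ by observing that, after the $\UBIT$-substitution, every such equation becomes a polynomial identity modulo the relevant intermediate field axioms.

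For the first phase, I would proceed by induction on the topological order of the gates of $C$. The base case for input variables is immediate from the given field axioms, and for scalar constants $\alpha \in \mathbb{F}_q$ the identity $\alpha^q - \alpha = 0$ holds in $\mathbb{F}_q$. For an addition gate $u = v_1 + \cdots + v_t$, the Frobenius endomorphism of $\mathbb{F}_q$ yields the \emph{formal} polynomial identity $C_u^q = C_{v_1}^q + \cdots + C_{v_t}^q$, which is provable for free in constant-depth $\IPS$ by \Cref{prop: polynomial identities that can be written as constant-depth circuits are proved for free in constant-depth IPS}; combining this identity with the inductive hypotheses $C_{v_j}^q - C_{v_j} = 0$ then gives $C_u^q - C_u = 0$. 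For a multiplication gate $u = v_1 \times \cdots \times v_t$, I would use $C_u^q = \prod_j C_{v_j}^q$ (again a formal identity) together with the standard telescoping $\prod_j A_j - \prod_j B_j = \sum_j (A_j - B_j) \cdot M_j$, where each $M_j$ is a product of the $A_k$ and $B_k$, applied to $A_j = C_{v_j}^q$ and $B_j = C_{v_j}$. Each inductive step adds only $O(1)$ to the derivation depth, so after $\Delta$ layers the total depth is $O(\Delta)$ and the size remains polynomial in $|C|$.

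For the second phase, each equation of $\SCNF(C(\overline{x})=0)$ is obtained by substituting $\UBIT_j(C_u)$ for the unary-bit variable $x_{uj}$ in the plain CNF encoding, and as a polynomial in the $C_u$'s it vanishes at every $\mathbb{F}_q$-assignment to each $C_u$. Concretely, the at-least-one clause $\prod_j (1 - \UBIT_j(C_u)) = 0$, the at-most-one clause $\UBIT_j(C_u) \cdot \UBIT_l(C_u) = 0$ for $j \neq l$, and the Boolean axiom $\UBIT_j(C_u)^2 - \UBIT_j(C_u) = 0$ are all, as polynomials in $C_u$, divisible by $C_u^q - C_u$ (by Lagrange interpolation at $0, 1, \ldots, q-1$). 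So each is derived by multiplying the intermediate field axiom from Phase~1 by an explicit polynomial in $C_u$ of degree $O(q)$, adding only $O(1)$ depth. The gate equations reduce to the polynomial identity $C_z = C_x \circ C_y$, which holds by construction of the circuit, combined with the preceding clauses. Finally, the output equations $\UBIT_0(C_{g_{\mathrm{out}}}) - 1 = 0$ and $\UBIT_i(C_{g_{\mathrm{out}}}) = 0$ for $i \neq 0$ follow from $C(\overline{x}) = 0$, because $\UBIT_0(y) - 1$ and each $\UBIT_i(y)$ for $i \neq 0$ are divisible by $y$.

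The main technical obstacle I anticipate is careful depth bookkeeping, especially at product gates of large fan-in and at the Frobenius step. Since $q$ is a constant, raising $C_u$ to the $q$-th power adds only one layer, and each telescoping step, each UBIT-substitution, and each polynomial cofactor can be written as a constant-depth circuit over the relevant sub-circuits. \Cref{prop: polynomial identities that can be written as constant-depth circuits are proved for free in constant-depth IPS} is essential here, since it guarantees that every intermediate formal polynomial identity (Frobenius, telescoping, divisibility of $\UBIT$-expressions by $y^q - y$) is derivable in the same depth as the circuit witnessing it; summing these contributions across the $\Delta$ layers of $C$ yields the claimed $O(\Delta)$-depth, polynomial-size $\IPS^\alg$ derivation.
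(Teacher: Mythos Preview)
The paper does not prove this lemma; it is stated with a citation to \cite{EGLT25} and no proof is given in the text. So there is nothing in the paper to compare your argument against directly.

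That said, your two-phase plan is sound and essentially what one expects the cited proof to do. Phase~1 (propagating the field axiom $C_u^q - C_u = 0$ up the circuit via the Frobenius identity $(\sum a_i)^q = \sum a_i^q$ over $\mathbb{F}_q$ for addition gates and the telescoping identity for product gates) is correct; note that the Frobenius step really is a formal polynomial identity over any field of characteristic $p$ with $q$ a power of $p$, so \Cref{prop: polynomial identities that can be written as constant-depth circuits are proved for free in constant-depth IPS} applies as you say. Your depth analysis is right: the certificate for $C_u^q - C_u$ can be written as a sum of products where each factor is either a placeholder $y_i$ or a sub-circuit $C_{v}$ or $C_{v}^q$ of depth at most $\Delta+O(1)$, and the nesting of sums and products follows the structure of $C$, giving total depth $O(\Delta)$.

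Phase~2 is also correct. For the unary-bit clauses and the Boolean axioms, the substituted equation is a univariate polynomial in $C_u$ vanishing on all of $\mathbb{F}_q$, hence divisible by $C_u^q - C_u$ with a constant-degree cofactor. For the gate clauses encoding $x \circ y = z$, after the $\UBIT$ substitution one has $C_z = C_x \circ C_y$ \emph{syntactically} (by the definition of $C_u$ for extension variables in \Cref{def: semi-CNF encoding}), so the substituted clause is a polynomial in $C_x, C_y$ alone that vanishes on $\mathbb{F}_q \times \mathbb{F}_q$ and hence lies in the ideal $(C_x^q - C_x, C_y^q - C_y)$ with constant-degree cofactors. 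The output equations follow from $C(\overline{x})=0$ exactly as you describe, since $\UBIT_0(y)-1$ and $\UBIT_i(y)$ for $i\neq 0$ are divisible by $y$. All cofactors are constant-depth circuits in sub-circuits of depth at most $\Delta$, so the overall proof stays at depth $O(\Delta)$ and polynomial size.
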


    \begin{lemma}[Translate circuit equations from semi-CNFs in fixed finite fields, \cite{EGLT25}]
        \label{lemma: Translate circuit equations from semi-CNFs in Fixed Finite Fields}
        Let $\mathbb{F}_q$ be a finite field, and let $C(\overline{x})$ be a circuit of depth $\Delta$ in the $\overline{x}$ variables over $\mathbb{F}_q$. Then, the following hold
        \begin{gather*}
            \{x^q -x =0: x \text{ is a variable in } C\}, \SCNF(C(\overline{x})=0)  \sststile{\IPS^\alg}{*,O(\Delta)} C(\overline{x})=0
        \end{gather*}
    \end{lemma}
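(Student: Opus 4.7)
\begin{proofsketch}
The plan is to reduce the derivation to two steps. First, derive $C(\overline{x})^q - C(\overline{x}) = 0$ in $\IPS^{\alg}$ from the field axioms $\{x^q - x = 0\}$ in depth $O(\Delta)$ and size $\poly(|C|)$. Second, combine this with the ``output'' equation $\UBIT_0(C(\overline{x})) - 1 = 0$ contained in $\SCNF(C(\overline{x})=0)$ to extract $C(\overline{x}) = 0$. All remaining equations in $\SCNF(C(\overline{x})=0)$ become polynomial identities (or trivial consequences of the field axioms) after the $\UBIT$ substitution, so they are free by \Cref{prop: polynomial identities that can be written as constant-depth circuits are proved for free in constant-depth IPS}.

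For the second step, I would use the polynomial identity
\[
    C(\overline{x}) \;=\; -\,C(\overline{x})\cdot\bigl(\UBIT_0(C(\overline{x})) - 1\bigr) \;+\; \beta^{-1}\bigl(C(\overline{x})^q - C(\overline{x})\bigr),
\]
where $\beta := \prod_{i=1}^{q-1}(-i)\in\mathbb{F}_q^{\times}$ (well-defined since $q$ is prime). Indeed, over $\mathbb{F}_q$ the polynomial $\prod_{i=0}^{q-1}(y - i)$ equals $y^q - y$ since both are monic of degree $q$ with all elements of $\mathbb{F}_q$ as roots; specialising $y = C(\overline{x})$ gives $C\cdot\UBIT_0(C) = \beta^{-1}(C^q - C)$, and the displayed identity follows by direct expansion. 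The $\IPS^{\alg}$ proof of $C = 0$ is then the single circuit $D(\overline{x},\overline{y}) := -\,C(\overline{x})\cdot y_{\rm out} + \beta^{-1}\,D_{\star}(\overline{x},\overline{y})$, where $y_{\rm out}$ is the placeholder for $\UBIT_0(C)-1$ and $D_{\star}$ is the $\IPS^{\alg}$ proof of $C^q - C = 0$ from the first step. One verifies that $D(\overline{x},\overline{0}) = 0$ and that, after axiom substitution, $D$ evaluates to $C(\overline{x})$. This step adds only $O(1)$ depth and $O(|C|)$ size on top of $D_\star$.

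The main work is to construct $D_\star$ by induction on the circuit structure of $C$. For a leaf $x_i$, take the placeholder variable for $x_i^q - x_i$. For a $+$-gate $g = u_1 + \cdots + u_t$, the Frobenius identity over $\mathbb{F}_q$ gives the polynomial identity $g^q - g = \sum_i (u_i^q - u_i)$, so $D_g := D_{u_1} + \cdots + D_{u_t}$ works. For a $\times$-gate $g = u_1 \cdots u_t$, the telescoping identity
\[
    g^q - g \;=\; \sum_{i=1}^{t}\,(u_1\cdots u_{i-1})\,(u_i^q - u_i)\,(u_{i+1}^q \cdots u_t^q)
\]
gives $D_g := \sum_{i=1}^{t} (u_1 \cdots u_{i-1})\cdot D_{u_i}\cdot (u_{i+1}^q\cdots u_t^q)$. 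Since $q$ is constant, each factor $u_j^q$ is a depth-$O(1)$ product of copies of $u_j$, and the inductive combination satisfies $D_g(\overline{x}, \overline{0}) = 0$ and, upon axiom substitution, evaluates to $g^q - g$ exactly, by the two polynomial identities above.

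The main obstacle to watch is the depth analysis at $\times$-gates: the coefficient circuits $(u_1 \cdots u_{i-1})$ and $(u_{i+1}^q \cdots u_t^q)$ are themselves of depth $\Theta(\Delta)$ and must be multiplied into $D_{u_i}$ without nesting depths multiplicatively across levels. The resolution is that each level of the recursion adds only one product gate and one sum gate on top of the parallel sub-circuits $D_{u_i}$ and the coefficient circuits (all of which are sub-circuits of $C$), so $\depth(D_g) \le \max_i \max(\depth(D_{u_i}), \Delta_g) + O(1)$; a straightforward induction yields $\depth(D_\star) = O(\Delta)$. Similarly, each gate contributes only polynomially many additional nodes, giving size $\poly(|C|)$. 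Concatenating with the second step yields the desired depth-$O(\Delta)$, size-$\poly(|C|)$ $\IPS^{\alg}$ proof of $C(\overline{x}) = 0$ from $\{x^q - x = 0\}$ and $\SCNF(C(\overline{x})=0)$.
\end{proofsketch}
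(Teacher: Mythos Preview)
The paper does not prove this lemma itself; it is stated with a citation to \cite{EGLT25}, so there is no in-paper argument to compare against. Your proof is correct and self-contained. The identity $y\cdot\UBIT_0(y)=\beta^{-1}(y^q-y)$ over $\mathbb{F}_q$ (which holds because $\prod_{i=0}^{q-1}(y-i)=y^q-y$) cleanly reduces the task to deriving $C^q-C=0$ from the field axioms, and your inductive construction of $D_\star$---Frobenius linearity at $+$-gates, the telescoping product identity at $\times$-gates---is valid. The depth recursion $d(g)\le\max\bigl(\max_i d(u_i),\,\Delta_g\bigr)+O(1)$ indeed yields $d(D_\star)=O(\Delta)$, and the size is polynomial since each gate contributes $O(t_g)$ new gates with the subcircuits $u_j$ shared.

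One minor remark: you need not argue that the remaining SCNF axioms ``become polynomial identities after substitution.'' In an $\IPS^{\alg}$ proof you are free to ignore any axiom by leaving its placeholder variable unused; only the output equation $\UBIT_0(C)-1=0$ and the field axioms are needed, and the rest play no role in your certificate.
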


    We will use our new translation lemma for the next section, which is our main result in fixed finite fields. For polynomial-size finite fields, we have a different translation lemma that we will explain later.

    \begin{proposition}
        \label{prop: C <-> scnf fixed finite fields}
        Let $C(\overline{x})=0$ be a circuit equation over $\mathbb{F}_q$ where $q$ is any constant prime. Then, $C(\overline{x})=0$ is unsatisfiable over $\mathbb{F}_q$ iff $\scnf(C(\overline{x})=0)$ is an unsatisfiable $\SCNF$. 
    \end{proposition}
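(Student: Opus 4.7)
The plan is to derive both directions of the equivalence from the two translation lemmas just established---\Cref{lemma: Translate semi-CNFs from circuit equations in Fixed Finite Fields} and \Cref{lemma: Translate circuit equations from semi-CNFs in Fixed Finite Fields}---combined with the soundness of $\IPS^{\alg}$ over $\mathbb{F}_q$. The underlying semantic fact is that, by construction from Lagrange interpolation, each polynomial $\UBIT_j(y)$ behaves as the indicator of $y=j$ on $\mathbb{F}_q$: $\UBIT_j(a)=1$ if $a=j$ and $\UBIT_j(a)=0$ otherwise. Since $\SCNF(C(\overline{x})=0)$ is obtained from the plain CNF encoding by substituting every unary-bit $x_{uj}$ with $\UBIT_j(C_u(\overline{x}))$, an $\mathbb{F}_q$-assignment automatically makes the unary-bits one-hot at every gate, and the hard step is really just to transfer satisfying assignments through this substitution.

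For the ``$\SCNF$ satisfiable $\Rightarrow$ $C(\overline{x})=0$ satisfiable'' direction, I would take $\overline{\alpha}\in\mathbb{F}_q^n$ satisfying $\SCNF(C(\overline{x})=0)$. Because $\overline{\alpha}$ lies in $\mathbb{F}_q$, it satisfies every field axiom $x^q-x=0$ automatically. Invoking \Cref{lemma: Translate circuit equations from semi-CNFs in Fixed Finite Fields} yields an $\IPS^{\alg}$ derivation of $C(\overline{x})=0$ from $\{x^q-x=0\}\cup\SCNF(C(\overline{x})=0)$, and soundness of $\IPS^{\alg}$ forces $C(\overline{\alpha})=0$. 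The contrapositive gives that $C(\overline{x})=0$ unsatisfiable implies $\SCNF(C(\overline{x})=0)$ unsatisfiable.

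The converse is symmetric: given $\overline{\alpha}\in\mathbb{F}_q^n$ with $C(\overline{\alpha})=0$, the field axioms again hold at $\overline{\alpha}$, and \Cref{lemma: Translate semi-CNFs from circuit equations in Fixed Finite Fields} provides an $\IPS^{\alg}$ derivation of every equation of $\SCNF(C(\overline{x})=0)$ from $\{x^q-x=0\}\cup\{C(\overline{x})=0\}$. Soundness then implies that $\overline{\alpha}$ satisfies $\SCNF(C(\overline{x})=0)$, so the contrapositive completes the equivalence.

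No genuine obstacle arises: the proposition is essentially a packaging of the translation lemmas, and the only bookkeeping is to notice that any $\mathbb{F}_q$-valued assignment trivially satisfies the field axioms, so the lemmas can be applied with their full hypotheses in force. A reader who prefers a purely semantic argument could bypass $\IPS^{\alg}$ altogether by verifying directly, using the indicator property of $\UBIT_j$, that the substitution $x_{uj}\mapsto \UBIT_j(C_u(\overline{x}))$ carries the satisfaction relation back and forth between $\cnf(C(\overline{x})=0)$ and $\SCNF(C(\overline{x})=0)$ over $\mathbb{F}_q^n$; this gives the same statement by combining with \Cref{prop: C <-> cnf <-> ecnf fixed finite fields}.
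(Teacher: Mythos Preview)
Your proposal is correct. The paper states this proposition without proof, immediately after the two translation lemmas from \cite{EGLT25}, implicitly treating it as a direct consequence; your argument via those lemmas together with soundness of $\IPS^{\alg}$ (and the observation that any $\mathbb{F}_q$-assignment satisfies the field axioms $x^q-x=0$) is exactly the intended route, and your alternative semantic argument through the indicator property of $\UBIT_j$ and \Cref{prop: C <-> cnf <-> ecnf fixed finite fields} is an equally valid way to fill in the details.
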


    % --------------------------------------------------------
    \section{Universal Algebraic Circuits for Bounded-Depth}
    % --------------------------------------------------------

    Here, we develop the necessary technical information regarding universal circuits. This is a novel adaptation of the work of Raz \cite{Raz10} to the bounded-depth setting, in which both the universal circuit and the circuits it encodes are of bounded depth.

    In this section, we will work with algebraic circuits whose edges can be labelled by field elements. This does not make too much difference, as we can easily replace them with a multiplication, which only increases the depth and size of a circuit up to a factor of 2.
    
    For general algebraic circuits, we have the following. 

    \begin{theorem}[Existence of universal circuits for homogeneous polynomials \cite{Raz10}]
        \label{theorem: Existence of universal circuits for homogeneous polynomials}
        Let $\mathbb{F}$ be a field and $\overline{x}$ be $n$ variables, and let $C_{s,d}^{\hom}$ denote the class of all homogeneous polynomials of total degree exactly $d$ in $\mathbb{F}[\overline{x}]$ that have algebraic circuits of size at most $s$. Then there is a circuit $U(\overline{x},\overline{w}) \in \mathbb{F}[\overline{x},\overline{w}]$ of size $O(d^2 s^8)$ and syntactic-degree $O(d)$ such that $\overline{w}$ are $K_{s,d} = O(d^2 s^8)$ many variables which are disjoint from $\overline{x}$, that is universal for $C_{s,d}^{\hom}$ in the following sense: if $f(\overline{x}) \in C_{s,d}^{\hom}$, then there exists $\overline{a} \in \mathbb{F}^{K_{s,d}}$ such that $U(\overline{x},\overline{a})=f(\overline{x})$. Notice that given $s$ and $d$, $K_{s,d}$ can be computed efficiently.
    \end{theorem}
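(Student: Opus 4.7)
The plan is to follow Raz's original construction, which proceeds in three conceptual steps: homogenization of an arbitrary size-$s$ circuit, normalization into a canonical layered form, and the construction of a universal template whose ``structural choices'' are parameterised by the weight variables $\overline{w}$.

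First, I would apply Strassen's classical homogenization procedure to convert any algebraic circuit of size $s$ computing a homogeneous polynomial $f$ of degree $d$ into a homogeneous circuit $C'$ of size $O(sd^2)$, in which every gate computes a homogeneous polynomial of some fixed degree in $\{0,1,\ldots,d\}$. This reduces the task to constructing a universal object for the class of homogeneous circuits of size $O(sd^2)$, and in particular bounds the syntactic degree in $\overline{x}$ at every gate by $d$.

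Second, I would place $C'$ into a layered canonical form indexed by degree: a sum gate at level $k$ combines polynomials of degree $k$, and a product gate at level $k$ multiplies two gates whose degrees sum to $k$. Since $|C'| = O(sd^2)$, each level contains at most $O(sd^2)$ gates, and each gate has at most $O(sd^2)$ candidate inputs (for sum gates) or $O(s^2d^4)$ candidate input pairs (for product gates). Third, I would construct $U(\overline{x},\overline{w})$ by introducing a ``slot'' for every potential gate in the canonical form and realising it as a $\overline{w}$-weighted linear combination of all its syntactically valid inputs, i.e.\ $\sum_{j} w_j \cdot h_j$ where the $h_j$ run over same-layer gates (for sum slots) or pairs of lower-layer gates (for product slots). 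Given a concrete target $f \in C_{s,d}^{\hom}$, one recovers $f$ from $U$ by setting those $w_j$'s to the field coefficients of the corresponding edges in $C'$ and the remaining $w_j$'s to $0$. Counting slots and weights yields the claimed size $O(d^2 s^8)$, and because each weight $w_j$ is used in a single linear combination, the number of variables $K_{s,d}$ matches the size up to constants.

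The main obstacle I anticipate is maintaining the syntactic-degree bound $\sdeg(U)=O(d)$ while using weight variables to encode all structural choices. A naive scheme that multiplies by fresh $w_j$'s at every internal node inflates $\sdeg$ far beyond $d$, since $\sdeg$ counts $\overline{x}$ and $\overline{w}$ contributions uniformly. The fix is that each $w_j$ appears only as a linear selector at a single slot, so that along any root-to-leaf directed path in $U$ one collects at most one $w$-factor per layer and thus $O(d)$ in total, and the $\overline{x}$-contributions along the path already sum to at most $d$ by layering; combined, $\sdeg(U)=O(d)$. The remaining work is a careful accounting of (i) how field coefficients appearing in $C'$ are absorbed into the $w_j$'s without creating additional multiplicative depth, and (ii) verifying that the number of gate slots and the number of candidate inputs per slot multiply out to exactly $O(d^2 s^8)$; both are routine once the layered template is fixed.
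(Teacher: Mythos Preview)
The paper does not prove this theorem; it is stated as a known result cited from \cite{Raz10} and used as a black box (the paper's own contribution in that section is the bounded-depth analogue, \Cref{theorem: Existence of universal circuits for constant-depth polynomials}). Your sketch is a faithful outline of Raz's original argument---homogenize via Strassen, put the result into a layered normal form indexed by degree, and then build the universal template by weighting every candidate edge with a fresh $w$-variable---and your explanation of why the syntactic degree stays $O(d)$ (each $w_j$ contributes linearly, and any root-to-leaf path crosses $O(d)$ layers) is the right one.
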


    We use the following simple adaptation from \cite{ST25}:
    \begin{definition}[Universal circuits for polynomials \cite{ST25}]  
        The universal circuit for degree $d$ and size $s$ circuits is defined as:
        \begin{equation*}
            U(\overline{x},\overline{w})=\sum_{i=0}^d U_i(\overline{x},\overline{w}),
        \end{equation*}
        where $U_i(\overline{x},\overline{w})$ is the universal circuit for homogeneous $\overline{x}$-polynomials of $i$ degree $C_{s,i}^{\hom}$ and where the $\overline{w}$-variables in each distinct $U_i(\overline{x},\overline{w})$ are pairwise disjoint.
    \end{definition}
    The size of $U(\overline{x},\overline{w})$ is $\sum_{i=0}^d O(i^4s^8)=O(d^5s^8)$.\par
    \begin{definition}[Circuit-graph \cite{Raz10}]
        Let $\Phi$ be an algebraic circuit. We denote by $G_\Phi$ the underlying graph of $\Phi$, together with the labels of all nodes. That is, the entire circuit, except for the labels of the edges. We call $G_\Phi$, the circuit-graph of $\Phi$.
    \end{definition}

    We will need universal circuits for bounded-depth circuits, where the universal circuits are bounded-depth themselves. We say a circuit-graph $G$ is depth-$\Delta$ if $\depth(G) \leq \Delta$.
    \begin{definition}[Normal-Depth-Form]
        Let $G$ be a depth-$\Delta$ circuit-graph. We say that $G$ is in Normal-Depth-Form if it satisfies:
        \begin{enumerate}
            \item All edges from the leaves are to $+$ nodes.
            \item All output-nodes are $+$ nodes.
            \item The nodes of $G$ are alternating. That is, if $v$ is a $+$ node and $(u,v)$ is an edge, then $u$ is either a leaf or a $\times$ node and if $v$ is a $\times$ node and $(u,v)$ is an edge then $u$ is a $+$ node.
            \item The out-degree of every node is at most $1$.
            \item The depth of every leaf is the same.
            
        \end{enumerate}
        We say that an algebraic circuit $\Phi$ is in \emph{normal-depth-form} if the circuit-graph $G_\Phi$ is in normal-depth-form.
    \end{definition}

    \begin{lemma}[Existence of normal-depth-form algebraic circuits for bounded-depth algebraic circuits]
        \label{lemma: existence of n-d-2d-f}
        Let $\mathbb{F}$ be a field and $\Delta$ be a constant. Let $\Phi$ be a depth-$\Delta$ algebraic circuit of size $s$ for a polynomial $g \in \mathbb{F}[x_1,\cdots,x_n]$. Then, there exists an algebraic circuit $\Phi^\prime$ that computes $g$ such that $\Phi^\prime$ is a normal-depth-form, and the number of nodes in $\Phi^\prime$ is $\poly(s)$. Moreover, given $\Phi$ (as an input), $\Phi^\prime$ can be efficiently constructed.
    \end{lemma}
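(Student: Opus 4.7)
The plan is to obtain $\Phi'$ from $\Phi$ by a sequence of five local transformations, each preserving the polynomial computed and maintaining polynomial size in $s$. The constancy of $\Delta$ is crucial only for the first step; all subsequent steps add at most a polynomial overhead.

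First, I would \emph{unfold} $\Phi$ into a tree by repeatedly duplicating any node of out-degree greater than one. Since every gate has fan-in at most $s$ and $\depth(\Phi) \leq \Delta$, the resulting formula $\Phi_1$ has size at most $s^{O(\Delta)}$, which is $\poly(s)$ because $\Delta$ is a constant. This step secures property 4 (out-degree at most one). Next, I would \emph{collapse consecutive same-type gates}: whenever a $+$-gate feeds into a $+$-gate, or a $\times$-gate feeds into a $\times$-gate, merge them into a single unbounded fan-in gate whose inputs are the union of the two (well-defined since the underlying graph is now a tree). Iterating this to saturation yields alternation between internal $+$ and $\times$ layers, the internal content of property 3, without increasing size.

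For property 1, whenever a leaf feeds directly into a $\times$-gate, I would insert a fan-in-one $+$-gate between them computing the identity $\ell + 0$; similarly, for property 2, if the top gate is a $\times$-gate, prepend a fan-in-one $+$-gate as the new output. Finally, to secure property 5 (equal leaf depth), let $D$ be the maximum leaf depth in the current formula, and for each shallower leaf insert above it an alternating chain of fan-in-one $+$- and $\times$-gates (each computing the identity $y + 0$ or $y \cdot 1$), with length and parity chosen so that the bottom of the chain is a $+$-gate directly above the leaf (respecting property 1) and the top of the chain plugs into the existing parent while respecting alternation. Each chain has length $O(\Delta)$ and adds $O(\Delta)$ nodes per leaf, so $\Phi'$ still has size $\poly(s)$ and depth $O(\Delta)$.

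The main subtlety is the parity matching in the last step: the chain must begin with a $+$-gate (by property 1) and end with a gate of the type required to feed its parent (by property 3), so for some leaves the required chain length and parity do not match $D - k$ exactly. This is handled by inflating the common target depth $D$ by an additional $O(1)$ layers applied uniformly to the whole formula (e.g.\ by prepending an alternating chain of fan-in-one gates at the root, then re-padding), which is routine because the global freedom is constant. Correctness is immediate since every transformation either unfolds the DAG, merges commutatively, or composes the identity function, while efficiency follows because each step is a local rewrite applied polynomially many times; hence $\Phi'$ can be constructed in time $\poly(s)$ from $\Phi$.
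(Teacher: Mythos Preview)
Your proposal is correct and follows essentially the same approach as the paper: unfold to a formula, merge consecutive same-type gates, insert dummy $+$-gates above leaves and at the root as needed, then pad shallow leaves with alternating dummy gates. The parity subtlety you flag is in fact a non-issue: after your steps 1--4 every leaf-to-root path alternates $+,\times,+,\ldots,+$ and hence has odd internal length, so all depth discrepancies are even and your even-length padding chains always fit without any global adjustment.
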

     \begin{proof}
         First, we turn our depth-$\Delta$ algebraic circuit $\Phi$ into a depth-$\Delta$ algebraic formula $\varphi$ of size $\poly(s \cdot 2^\Delta)$. Since $\Delta$ is a constant, $\varphi$ is of size $\poly(s)$.
        
         Then, by merging nodes and adding dummy nodes ($+$ nodes or $\times$ nodes such that only have one input and one output), we can construct $\Phi^\prime$ from $\varphi$. To be specific, for an edge $(u,v)$ in $\varphi$, if both $u$ and $v$ are $+$ nodes (\resp, $\times$ nodes), we merge them to one $+$ node (\resp, one $\times$ node). Then, if $(u,v)$ is an edge after merging $u$ is a $\times$ node, and $v$ is a leaf, then we add a $+$ node $o$ between $u$ and $v$ such that $(u,o)$ and $(o,v)$ are labelled with $1$. If the output node $u$ is a $\times$ node, we add a $+$ node $v$ above it and label $(u,v)$ with $1$. Also, if the depth of a leaf is smaller than the maximum depth, by alternatively adding dummy $+$ nodes and $\times$ nodes, we can make the depth of every leaf the same.
         
         We can see that $\Phi^\prime$ has bounded-depth and is of size $\poly(s)$. $\Phi^\prime$ is in normal-depth-form.
     \end{proof}

    \begin{remark}
        Note that if we consider the universal circuit for bounded-depth circuits with size polynomially bounded by the number of variables, we can further bound the maximum fan-in of multiplication gates in the normal-depth-form. 
        We bound the multiplication fan-in of $\varphi$ by replacing each $\times$ node with polynomial many $\times$ nodes. This can be achieved since the size of the circuit is polynomially bounded. After this replacement, $\varphi$ is a bounded-depth algebraic formula with bounded multiplication fan-in of size $\poly(s)$.
    \end{remark}

     \begin{theorem}[Existence of bounded-depth universal circuits for polynomials computed by bounded-depth circuits]
        \label{theorem: Existence of universal circuits for constant-depth polynomials}
         Let $\mathbb{F}$ be a field and $\overline{x}$ be $n$ variables, and let $\VAC^0_{s,\Delta}$ denote the class of all polynomials in $\mathbb{F}[\overline{x}]$ that have algebraic circuits of size at most $s$ and depth at most $\Delta$, where $\Delta$ is a constant. Then there is a circuit $U(\overline{x},\overline{w}) \in \mathbb{F}[\overline{x},\overline{w}]$ of size $\poly(s)$ and depth $\poly(\Delta)$ such that $\overline{w}$ are $K_{s,d}$ variables that are disjoint from $\overline{x}$, that is universal for $\VAC^0_{s,\Delta}$ in the following sense: if $f(\overline{x}) \in \VAC^0_{s,\Delta}$, then there exists $\overline{a} \in \mathbb{F}^{K_{s,d}}$ such that $U(\overline{x},\overline{a})=f(\overline{x})$. Notice that given $s$ and $d$, $K_{s,d}$ can be computed efficiently and is bounded by $\poly(s)$. Also, the universal circuit preserves the maximum multiplication fan-in.
     \end{theorem}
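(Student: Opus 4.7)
The plan is to adapt Raz's universal circuit construction (\Cref{theorem: Existence of universal circuits for homogeneous polynomials}) to the bounded-depth setting, where the key additional requirement is that the universal circuit itself has bounded depth. The starting point is \Cref{lemma: existence of n-d-2d-f}: any depth-$\Delta$ size-$s$ circuit can be replaced by an equivalent normal-depth-form circuit of depth $\Delta$ and size $N=\poly(s)$. Combined with the remark following that lemma, we may further assume that the multiplication fan-in is bounded by $B=\poly(s)$ and that the circuit has a uniform layered structure with alternating $+/\times$ levels and all leaves at the same depth.

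First, I would construct a \emph{template tree} $T$ of depth $\Delta$ as follows: $T$ is a complete alternating tree (output $+$ on top, then alternating $\times$ and $+$ levels downward) in which every internal node has fan-in $N$. Since $\Delta$ is constant, the total number of nodes in $T$ is $O(N^\Delta)=\poly(s)$. At the bottom layer, each ``universal leaf'' is a $+$ gate whose inputs are all the variables $x_1,\ldots,x_n$ together with the constant $1$, with one fresh variable from $\overline{w}$ labelling each edge. Every other edge in $T$ is also labelled by a fresh variable from $\overline{w}$. The total number of variables, $K_{s,\Delta}$, is the number of edges, which is $\poly(s)$ and can be computed efficiently from $s$ and $\Delta$. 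The universal circuit $U(\overline{x},\overline{w})$ is obtained by interpreting each edge label $w_e$ as a scalar multiplier on its edge.

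Universality is shown by embedding. Given $f\in\VAC^0_{s,\Delta}$ with witnessing circuit $\Phi$, first apply \Cref{lemma: existence of n-d-2d-f} to obtain a normal-depth-form circuit $\Phi'$ of size at most $N$ and depth $\Delta$ computing $f$. Since $\Phi'$ has out-degree $\le 1$, it is a tree; we embed it into $T$ by assigning each node of $\Phi'$ to a distinct node of $T$ at the same level and of the same type ($+$ or $\times$). For every edge in the embedded image, we set the corresponding $w_e$ to its label in $\Phi'$. At each universal leaf in $T$ that represents a leaf $x_i$ of $\Phi'$, we set the $w_e$ multiplying $x_i$ to $1$ and all others to $0$; scalar leaves of $\Phi'$ are handled analogously using the constant-$1$ input.

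\textbf{The main obstacle} is handling ``unused'' edges in $T$, particularly those entering multiplication gates. For a $+$ gate, an unused incoming edge can be safely zeroed out. For a $\times$ gate, setting an edge label to $0$ would zero the entire product; instead, the corresponding child $+$ gate must be forced to \emph{evaluate to $1$}, and then the edge label is set to $1$. This is achieved by a recursive ``turn-off'' argument. At the bottom layer, a $+$ gate is forced to $1$ by activating the constant-$1$ leaf input and zeroing every other incoming edge. At a higher $+$ level, we pick one of the $\times$ children, recursively force it to output $1$ (which in turn requires all \emph{its} children to be forced to $1$), set its edge label to $1$, and zero the remaining edge labels. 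The recursion terminates within $\Delta$ steps, so the assignment is well defined on every unused subtree of $T$. Putting everything together, $U$ has depth $\Delta+O(1)=\poly(\Delta)$, size $\poly(s)$, and the maximum multiplication fan-in in $U$ matches that of the normalised $\Phi'$, proving the claimed preservation property. Efficiency of computing $K_{s,\Delta}$ follows from the explicit formula for the number of edges in the template $T$.
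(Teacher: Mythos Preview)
Your universality and size/depth arguments are sound: the complete alternating template $T$ together with the recursive force-to-$1$ mechanism simulates every normal-depth-form $\Phi'$, and $|T|=O(N^\Delta)=\poly(s)$ for constant $\Delta$. The gap is your final sentence. In your construction every $\times$-gate of $T$ has fan-in exactly $N$, so the maximum multiplication fan-in of $U$ is $N$, not the (potentially much smaller) fan-in bound $B$ of the normalised circuits. The force-to-$1$ trick recovers the correct \emph{polynomial}, but it does nothing to reduce the \emph{syntactic} fan-in of $U$'s $\times$-gates; the preservation clause of the theorem is therefore not established by your argument.

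The paper takes a different route that sidesteps this. Rather than giving all $\times$-gates uniform fan-in $N$, at each $\times$-level it creates, for every $j=1,\dots,t$ (where $t$ is the maximum multiplication fan-in after normalisation), a pool of $\lceil \Size(\Phi')/j\rceil$ many $\times$-gates of fan-in exactly $j$, with pairwise-disjoint child sets within each pool. Since $\Phi'$ has out-degree $\le 1$, at most $\Size(\Phi')/j$ of its $\times$-gates can have fan-in $j$, so every $\times$-gate of $\Phi'$ is mapped to a $\times$-gate of $U$ with \emph{identical} fan-in; no unused $\times$-inputs ever arise, only unused $+$-edges (which are zeroed). This gives $U$ maximum $\times$-fan-in $t$ directly, and also a total size polynomial in $\Size(\Phi')$ without $\Delta$ appearing in the exponent. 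Your construction can be patched by setting the $\times$-fan-in of $T$ to $t$ instead of $N$ (the force-to-$1$ trick still handles the $t-j$ unused children), but the paper's grouping-by-fan-in device is cleaner and yields tighter bounds.
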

     \begin{proof}
         Let $\mathbb{F}$ be a field. A polynomial $g \in \mathbb{F}[\overline{x}]$ is computed by an algebraic circuit of size $s$ and depth at most $\Delta$ where $\Delta$ is a constant. Then, by \Cref{lemma: existence of n-d-2d-f}, there exists a bounded-depth algebraic circuit $\Phi^\prime$ that computes $g$ such that $\Phi^\prime$ is in normal-depth-form of size $\poly(s)$ and depth $2\Delta^\prime-1$, which is a constant. Let $t$ be the maximum multiplication fan-in in $\Phi^\prime$.
         
         We partition the nodes in $\Phi^\prime$ into $2\Delta^\prime$ levels by their depths as follows:
         \begin{itemize}
             \item For every $i \in \{1,\cdots, \Delta^\prime-1 \}$, level $2i$ contains the $\times$ nodes where each of them has a length $2i-1$ path to the output node.
             \item For every $i \in \{1. \cdots, \Delta^\prime\}$, level $2i-1$ contains the $+$ nodes where each of them has a length $2i-2$ path to the output node.
             \item Level $2\Delta^\prime$ contains all the leaves.
         \end{itemize}
         Now, we construct the universal circuit of depth $2\Delta^\prime-1$ as follows:
         \begin{itemize}
             \item For every $+$ node in level $2i-1$ ($i \in \{1,\cdots, \Delta^\prime\}$), the children of it are all the nodes in level $2i$. There is only one $+$ node in level $1$ as the output node.
             \item For every $\times$ node in level $2i$ ($i \in \{1. \cdots, \Delta^\prime-1\}$), the children of it are $+$ nodes in level $2i+1$. All the $\times$ nodes in level $2i$ ($i \in \{1, \cdots, \Delta^\prime-1\}$) are partitioned into $t$ many groups, where $\times$ nodes in group $j$ ($1 \leq j \leq t$) has $j$ many children in level $2i$. Also, all $\times$ nodes in the same group have distinct children.
         \end{itemize}
         Since the out-degree of every $+$ node in $\Phi^\prime$ is at most $1$, it is sufficient to have $\Size(\Phi^\prime)/k$ many $\times$ nodes with in-degree $k$ in the above level. Therefore, there are at most $\Size(\Phi^\prime)+\lceil \frac{\Size(\Phi^\prime)}{2} \rceil+ \lceil \frac{\Size(\Phi^\prime)}{3} \rceil+ \cdots + \lceil \frac{\Size(\Phi^\prime)}{t} \rceil \leq \Size(\Phi^\prime)^2$ many $\times$ nodes in the above level. Therefore, there are $O(\Size(\Phi^\prime)^3)$ many edges between $+$ nodes in level $2i$ and $\times$ nodes in level $2i+1$.

        Hence, we get a depth $2\Delta^\prime-1$ universal circuit of size $\poly(\Size(\Phi^\prime)) = \poly(s)$. Also, note that such universal circuit has the same maximum multiplication fan-in as $\Phi^\prime$.
     \end{proof}

\section{Extracting Coefficients and $\IPS$ Refutation Formula}
    Let $f(\overline{x},\overline{w}) \in \mathbb{F}[\overline{x}, \overline{w}]$ be a polynomial, and let $M=\prod_{i \in I} x_i^{\alpha_i} \cdot \prod_{j \in J}w_j^{\beta_j}$ be a monomial in $f(\overline{x},\overline{w})$, for some $\alpha_i,\beta_i \in \mathbb{N}$ (where $ 0 \in \mathbb{N}$). Then, we call $\Sigma_{i \in I}\alpha_i$ the $\overline{x}$-degree of $M$.
    \begin{definition}[$\coeff_M(\cdot)$ \cite{ST25}]
        \label{def: coefficient circuit}
        Let $f(\overline{x},\overline{w})$ be a polynomial in $\mathbb{F}[\overline{x},\overline{w}]$ in the disjoint sets of variables $\overline{x}, \overline{w}$. Let $M$ be an $\overline{x}$-monomial of degree $j$. Then, $\coeff_M(f(\overline{x},\overline{w}))$ is the (polynomial) coefficient in $\mathbb{F}[\overline{w}]$ (that is, in the $\overline{w}$-variables only) of $M$ in $f(\overline{x},\overline{w})$.
    \end{definition}
    Note that $f(\overline{x},\overline{w})=\Sigma_{M_i} M_i \cdot \coeff_{M_i}(f(\overline{x},\overline{w}))$, where the $M_i$'s are all possible $\overline{x}$-monomials of degree at most $d$, for $d$ the maximal $\overline{x}$-degree of a monomial in $f(\overline{x},\overline{w})$.

    \begin{proposition}[Computation of coefficients in general circuits \cite{ST25}]
        \label{prop: computation of coefficients in general circuits}
        Let $f(\overline{x},\overline{w}) \in \mathbb{F}[\overline{x},\overline{w}]$ be a polynomial in $\mathbb{F}[\overline{x},\overline{w}]$ in the disjoint sets of variables $\overline{x},\overline{w}$. Suppose that $M$ is an $\overline{x}$-monomial of degree $d$, and assume that there is an algebraic circuit computing $f(\overline{x},\overline{w})$ of size $s$ and syntactic-degree $l$. Then, there is a circuit of size $O(7^d\cdot s)$ computing $\coeff_M(f(\overline{x},\overline{w}))$ of syntactic-degree $l^{O(1)}$.
    \end{proposition}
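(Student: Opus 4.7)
The plan is to compute $\coeff_M(f)$ directly via iterated partial differentiation, relying on Baur--Strassen to keep the size blowup multiplicative at each step. Write $M = x_{i_1}^{\alpha_1}\cdots x_{i_k}^{\alpha_k}$ with $\sum_j \alpha_j = d$. Over a field of characteristic $0$ or greater than $d$, the identity $\coeff_M(f)(\overline{w}) = \tfrac{1}{M!}\cdot (\partial_{x_{i_1}}^{\alpha_1}\cdots \partial_{x_{i_k}}^{\alpha_k} f)(\overline{0},\overline{w})$ reduces the task to computing a $d$-fold iterated partial derivative followed by a free substitution and scaling.

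The main step is iterating Baur--Strassen. Starting from the given circuit for $f$ of size $s$ and syntactic degree $l$, a single invocation produces a circuit of size $c\cdot s$ computing $f$ together with all its first partial derivatives, for a small universal constant $c$. Selecting the output gate for $\partial_{x_{i_1}}f$ gives a sub-circuit of size at most $c\cdot s$; applying Baur--Strassen to that sub-circuit gives a circuit of size $c^2\cdot s$ computing all second-order partials in which $x_{i_1}$ is differentiated once; and so on. After $d$ iterations along the multiset of derivatives $(x_{i_1}$ repeated $\alpha_1$ times, $x_{i_2}$ repeated $\alpha_2$ times, $\ldots)$, we obtain a circuit of size $c^d\cdot s$ that contains (as an output gate) the specific iterated derivative $\partial_{x_{i_1}}^{\alpha_1}\cdots \partial_{x_{i_k}}^{\alpha_k} f$. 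Substituting $\overline{x}=\overline{0}$ and multiplying by $1/M!$ are free as circuit operations, so the resulting circuit has size $O(c^d\cdot s)$, and taking $c\le 7$ as a comfortable bound on the per-step blowup gives the stated $O(7^d\cdot s)$. Differentiation does not raise syntactic degree, and each Baur--Strassen application introduces only a constant-factor overhead, so after $d$ steps the syntactic degree is $l^{O(1)}$.

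For small or arbitrary characteristic, ordinary derivatives fail when $\alpha_j!$ vanishes in the field. The remedy is Hasse (divided-power) derivatives: the operator $H_M$ satisfies $(H_M f)(\overline{0},\overline{w}) = \coeff_M(f)(\overline{w})$ over any field, and admits a Baur--Strassen-style construction with the same constant-factor multiplicative overhead per application. Running the argument above with Hasse derivatives in place of ordinary derivatives yields the $O(7^d\cdot s)$ bound uniformly across all fields, and in particular over the finite prime fields $\mathbb{F}_p$ relevant to the paper.

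The main obstacle is verifying that the per-step size blowup is genuinely multiplicative, so that $d$ iterations give $c^d\cdot s$ and not $c\cdot d\cdot s$ or, worse, $d^d\cdot s$ (which is what a naive variable-by-variable interpolation approach would yield). This follows by re-applying Baur--Strassen to the current circuit in its entirety at each step, so that the input to each iteration is in essentially the same form as the output of the previous one. A secondary bookkeeping point is to confirm that the syntactic degree stays controlled by $l$ throughout: this follows because differentiation only decreases degree and the auxiliary gates introduced by Baur--Strassen add only constant-factor overhead per gate, so the syntactic degree remains $O(l)$, i.e., $l^{O(1)}$, after all $d$ iterations.
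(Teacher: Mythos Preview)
The paper does not prove this proposition directly (it is cited from \cite{ST25}), but the bounded-depth analogue, Proposition~\ref{prop: computation of coefficients in constant-depth circuits}, \emph{is} proved in the paper and makes the intended method explicit. That method is not differentiation at all: one introduces the quotient operator $P_{x_i}$ defined by the identity $f = x_i\cdot P_{x_i}(f) + f|_{x_i=0}$ and computes it gate-by-gate through the circuit (for a fan-in-two product gate $uv$ one has $P_{x_i}(uv)=x_i\,P_{x_i}(u)P_{x_i}(v)+P_{x_i}(u)\,v|_{x_i=0}+u|_{x_i=0}\,P_{x_i}(v)$ and $(uv)|_{x_i=0}=u|_{x_i=0}\cdot v|_{x_i=0}$). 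Each pass multiplies size by roughly $7$, whence the $O(7^d s)$ after $d$ passes, and after the final pass one substitutes $\overline{x}=\overline{0}$. This construction is purely syntactic and valid over every field.

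Your Baur--Strassen route is a genuinely different idea and is fine in characteristic $0$ or characteristic $>d$, but the small-characteristic patch via Hasse derivatives is where it breaks. Higher-order Hasse derivatives are \emph{not} obtained by iterating first-order ones: one has $H^{(1)}_{x_i}\circ H^{(1)}_{x_i}=2\,H^{(2)}_{x_i}$, which is identically $0$ in characteristic $2$, so repeatedly applying Baur--Strassen (which gives first partials) will never produce $H^{(2)}_{x_i}$ there. Your assertion that Hasse derivatives ``admit a Baur--Strassen-style construction with the same constant-factor multiplicative overhead per application'' is exactly the missing step, and there is no standard result of that form to invoke. Since the proposition must hold over every field (in particular $\mathbb{F}_p$ for small $p$, the main case for the paper), and the monomials $M$ range over all degree-$\le l$ monomials including non-multilinear ones, this is a real gap. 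The quotient operator $P_{x_i}$ is precisely the characteristic-free replacement: iterating it $a$ times and evaluating at $x_i=0$ yields $\coeff_{x_i^a}$ directly, with no binomial coefficients and hence no division.
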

    
    While \cite{ST25} gave the above proposition about the computation of the coefficient of an $\overline{x}$-monomial in general algebraic circuits, we present the computation of the coefficient of an  $\overline{x}$-monomial in \emph{bounded-depth} circuits.

    Since we can decrease the maximum multiplication fan-in to $n$ in each polynomial-size bounded-depth circuit with at most a polynomial-size blow up and depth blowing up by at most a constant factor, we can assume that the maximum multiplication fan-in in each polynomial-size bounded-depth circuit is $n$.
    
    \begin{proposition}[Computation of coefficients in bounded-depth circuits] 
        \label{prop: computation of coefficients in constant-depth circuits}
        Let $f(\overline{x},\overline{w}) \in \mathbb{F}[\overline{x},\overline{w}]$ be a polynomial in $\mathbb{F}[\overline{x},\overline{w}]$ in the disjoint sets of variables $\overline{x},\overline{w}$. Suppose that $M$ is an $\overline{x}$-monomial of degree $d$, and assume that there is an algebraic circuit $C(\overline{x},\overline{w})$ computing $f(\overline{x},\overline{w})$ of maximum multiplication fan-in $t$, size $s$, syntactic-degree $l$ and depth $\Delta$ where $\Delta$ is a constant such that
        \begin{enumerate}
            \item All edges from the leaves are to $+$ nodes.
            \item All output-nodes are $+$ nodes.
            \item The nodes of $G$ are alternating. That is, if $v$ is a $+$ node and $(u,v)$ is an edge, then $u$ is a $\times$ node, and if $v$ is a $\times$ node and $(u,v)$ is an edge then $u$ is either a leaf or a $+$ node.
        \end{enumerate}
        Then, there is a bounded-depth algebraic circuit of depth $\Delta$, size $O(2^{(t+d)d} \cdot s)$ computing $\coeff_M(f(\overline{x},\overline{w}))$ of syntactic-degree $l^{O(1)}$.
    \end{proposition}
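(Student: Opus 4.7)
The plan is to build the coefficient circuit bottom-up. For each gate $v$ of $C$ and each divisor $M'$ of the target monomial $M$, I maintain a sub-circuit $C_{v,M'}$ that computes $\coeff_{M'}(f_v)$, where $f_v$ is the polynomial computed at $v$. Since $M$ has $\overline{x}$-degree $d$, it has at most $2^d$ divisors. The recursive rules are: at a leaf, set $C_{v,M'}$ to the obvious constant, variable, or zero; at a $+$ gate $v = u_1 + \cdots + u_k$, take $C_{v,M'} = \sum_i C_{u_i,M'}$; and at a $\times$ gate $v = u_1 \cdots u_k$ with $k \le t$, take
\begin{equation*}
C_{v,M'} \;=\; \sum_{M'_1 \cdots M'_k \,=\, M'}\; \prod_{i=1}^{k} C_{u_i, M'_i},
\end{equation*}
where the sum ranges over ordered $k$-way factorizations of $M'$ into sub-monomials. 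Correctness is immediate from the definition of coefficient extraction and its distributivity over addition and multiplication.

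The crucial point is \emph{depth preservation}, which is exactly what the alternating-structure hypotheses (items 1--3) buy us. Naively, the $\times$-gate rule introduces an extra summation layer above the original $\times$-level. However, by items 2 and 3 every $\times$ gate sits immediately below a $+$ gate in $C$. Inlining the $\times$-gate rule into the $+$-gate rule for a parent $v = \sum_j x_j$ with each $x_j$ a $\times$ gate yields
\begin{equation*}
C_{v,M'} \;=\; \sum_j \sum_{M'_1 \cdots M'_k = M'} \prod_{i} C_{y_{j,i}, M'_i},
\end{equation*}
and the outer $\sum_j$ absorbs the inner factorization-sum into a single wider $+$ gate, sitting at the original $+$-level. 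The products $\prod_i C_{y_{j,i},M'_i}$ remain at the original $\times$-level, and each $C_{y_{j,i},M'_i}$ is itself a $+$ gate by the recursive construction. Thus the alternating $+$/$\times$ pattern is preserved layer by layer, and the constructed circuit has depth exactly $\Delta$.

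For size, at each gate one keeps at most $2^d$ sub-circuits. At a $\times$ gate each sub-circuit expands into at most $\prod_i \binom{\alpha_i + t - 1}{t-1} \le 2^{(t+d)d}$ product terms, using $\binom{a + t - 1}{t-1} \le 2^{a+t-1} \le 2^{t+d}$ and the fact that a divisor of $M$ has at most $d$ distinct variables. Hence the per-gate blow-up is $O(2^{(t+d)d})$ and the total size is $O(2^{(t+d)d} \cdot s)$. For syntactic degree, each $C_{v,M'}$ extracts a coefficient of a polynomial of syntactic degree at most $l$; since coefficient extraction with respect to an $\overline{x}$-monomial of bounded degree is a polynomial-time construction whose output degree grows at most polynomially, the final circuit has syntactic degree $l^{O(1)}$.

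The main obstacle is the precise bookkeeping of the depth-preservation step: one must check that the merging of the newly introduced factorization-sum with the parent $+$ gate proceeds uniformly across every $+$/$\times$ transition in $C$, and that the base case integrates smoothly via item 1 (leaves feeding $+$ gates) so no extra depth is introduced at the bottom. A secondary technicality is ensuring that whenever a $\times$ gate is adjacent to a leaf (allowed by item 3), the leaf's coefficient collection behaves like a trivial $+$ gate in the merging step, so that the alternating structure closes without a depth increment.
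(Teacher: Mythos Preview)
Your approach is correct and genuinely different from the paper's. The paper proceeds \emph{variable-by-variable}: for each of the $d$ variables in $M$ it proves a standalone claim that one can build, from a circuit for $f$, a circuit of the same depth computing the unique $g$ with $f = x_i\cdot g + f\!\restriction_{x_i=0}$; at a $\times$ gate of fan-in $t_2$ it writes $P_{x_i}(p_v)$ explicitly as a sum of $2^{t_2}-1$ products and merges that sum into the parent $+$ gate, so each iteration preserves depth but bumps the multiplication fan-in by one. Iterating $d$ times and then setting $\overline{x}\leftarrow 0$ gives the bound $|C|\cdot\prod_{i=0}^{d-1}(2^{t+i}-1)\cdot 2^d = O(2^{(t+d)d}s)$. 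Your construction is a single bottom-up pass that keeps, at every gate, the full table of coefficients $\{\coeff_{M'}(f_v)\}_{M'\mid M}$ and combines them by convolution at $\times$ gates; the absorption of the convolution sum into the parent $+$ gate is exactly the same depth-preservation mechanism, just applied once globally rather than $d$ times. Your route is arguably cleaner and yields the syntactic-degree bound more transparently: by induction $\mathrm{sdeg}(C_{v,M'})\le\mathrm{sdeg}(v)\le l$, since at a $\times$ gate the product $\prod_i C_{u_i,M'_i}$ has syntactic degree $\sum_i\mathrm{sdeg}(C_{u_i,M'_i})\le\sum_i\mathrm{sdeg}(u_i)=\mathrm{sdeg}(v)$. (This is the argument you should have given instead of the vague ``polynomial-time construction'' sentence; the degree is in fact preserved, not merely $l^{O(1)}$.) The paper's route buys a reusable single-variable extraction lemma, which may be handy elsewhere, at the cost of tracking how the fan-in grows across iterations.
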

    \begin{proof}
        For a variable $x_i$, we show how to construct a circuit, which is in the same depth as $C(\overline{x},\overline{w})$, computing a polynomial $g(\overline{x},\overline{w})$ such that $f=x_i \cdot g + h$ with $h$ having no occurrences of $x_i$. Then, using $d$ such iterations for each of the $d$ variables in $M$, we shall get the circuit $D$ that computes the coefficient of $M$ in $f(\overline{x},\overline{w})$ of the same depth. Then, by assigning zeros to all $\overline{x}$-variables in $D$, we can eliminate all the monomials in $D$ in both $\overline{x}$ and $\overline{w}$ variables. Now, we prove the following claim. The following claim shows how to construct the circuit that extracts the coefficient of a single variable. To construct the circuit that extracts the coefficient of a monomial of degree at most $d$, we apply $d$ iterations of the following claim.  
        We denote by $C(\overline{x},\overline{w}) \restriction _{x_i = 0}$ the polynomial $C(\overline{x},\overline{w})$ where $x_i$ is assigned $0$.\par
        \bigskip
        \begin{claim}\label{claim:01} Let $C(\overline{x},\overline{w})$ be an algebraic circuit over the field $\mathbb{F}$ of maximum addition fan-in $t_1$, maximum multiplication fan-in $t_2$, syntactic-degree $l$ and depth $\Delta$ such that 
        \begin{enumerate}
            \item All edges from the leaves are to $+$ nodes.\label{it:2127:1}
            \item The output node is a $+$ node. \label{it:2127:2}
            \item The nodes of $G$ are alternating. That is, if $v$ is a $+$ node and $(u,v)$ is an edge, then $u$ is a $\times$ node, and if $v$ is a $\times$ node and $(u,v)$ is an edge then $u$ is either a leaf or a $+$ node.\label{it:2127:3}
        \end{enumerate}
        Then, for every variables $x_i$, there is a depth-$\Delta$ circuit in the same form as $C$ (i.e., with \Cref{it:2127:1}-\Cref{it:2127:3} holding) of maximum addition fan-in $t_1\cdot(2^{t_2}-1)$, maximum multiplication fan-in $t_2+1$, size $O(2^{t_2}|C|)$ and syntactic-degree $l^{O(1)}$ that computes the polynomial $g(\overline{x},\overline{w})$, such that $C(\overline{x}) = x_i \cdot g(\overline{x},\overline{w}) + C(\overline{x},\overline{w}) \restriction_{x_i = 0}$.
        \end{claim}

        %\begin{proof}[Proof of Claim]
        \noindent\textit{Proof of claim.}
        The proof is obtained by induction on circuit size. Denote by $p$ the polynomial computed by $C$ and for every gate $v$ in $C$ denote by $p_v$ the polynomial computed at gate $v$.\par
        Denote by $P_{x_i}(p_v)$ the unique polynomial such that $p_v = x_i \cdot P_{x_i}(p_v) + p_v \restriction_{x_i = 0}$. For a $\times$ gate $v$ with fan-in $t$ in $C$, we add at most $2^t$ new gates. Each gate $v$ itself is duplicated twice so that the first duplicate computes $P_{x_i}(p_v)$ and the second duplicate computes $p_v\restriction_{x_i = 0}$.\par
        Base case:\par
        \textbf{Case 1:} $p_v = x_i$. Then,  $P_{x_i}(p_v) := 1$ and $p_v\restriction_{x_i = 0} := 0$.\par
        \textbf{Case 2:} $p_v = x_j$, for $j \neq i$. Then, $P_{x_i}(p_v) \coloneqq 0$ and $p_v\restriction_{x_i= 0} \coloneqq x_j$.\par
        \textbf{Case 3:} $p_v = \alpha$, for $\alpha \in\mathbb{F}$. Then, $P_{x_i}(p_v) = 0$ and $p_v\restriction_{x_i = 0} = \alpha$.\par
        Induction step:\par
        \textbf{Case 1}: $p_v = \Sigma_{j=1}^{t_1} u_j$. Then, $P_{x_i}(p_v) =  \Sigma_{j=1}^{t_1} P_{x_i}(p_{u_j}) $ and $p_v \restriction_{x_i = 0} = \Sigma_{j=1}^{t_1} p_{u_j}\restriction_{x_i =0}$.\par
        \textbf{Case 2}: $p_v = \prod_{j=1}^{t_2} u_j$. Then, 
        \begin{equation*}
            P_{x_i}(p_v) = \sum_{j=1}^{t_2} \left ( \sum_{\substack{S \subseteq [t_2] \\ |S| = k} } \prod_{k \in S } P_{x_i}(p_{u_k}  )  \prod_{l \notin S} p_{u_l}\restriction_{x_i =0}\right) x_i^{j-1}
        \end{equation*}
        which is equivalent to
        $
            {\prod_{j=1}^{t_2} (x_i P_{x_i}(p_{u_j}) + p_{u_j} \restriction_{x_i = 0}) - \prod_{j=1}^{t_2} p_{u_j} \restriction_{x_i =0}}
        $
        ``divided by $x_i$'', namely when we decrease by $1$ the power of every $x_i^b$ in every monomial in this polynomial  (noting that $x_i$ appears with a positive power $b\ge 1$ in every monomial),
        and
        \begin{equation*}
            p_v \restriction_{x_i = 0} = \prod_{j=1}^{t_2} p_{u_j}\restriction_{x_i = 0}.
        \end{equation*}
        Note that by expanding  brackets, $P_{x_i}(p_v)$ is written explicitly with $2^{t_2}-1$ many terms as depth-2 circuits that have one $+$ node at the top and $2^{t_2}-1$ many $\times$ nodes as children.\par
        Then, by merging the $+$ node in $P_{x_i}(p_v) =  \Sigma_{j=1}^{t_1} P_{x_i}(p_{u_j}) $ and $ P_{x_i}(p_v) = \sum_{j=1}^{t_2} \left ( \sum_{\substack{S \subseteq [t_2] \\ |S| = k} } \prod_{k \in S } P_{x_i}(p_{u_k}  )  \prod_{l \notin S} p_{u_l}\restriction_{x_i =0}\right) x_i^{j-1}$ 
        (Since this formula is written explicitly and the circuit is alternating, which means there is always a $+$ node above any $\times$ node), our circuit has the same depth as $C$ and is in the following form:
         \begin{enumerate}
            \item All edges from the leaves are to $+$ nodes.
            \item The output nodes is a $+$ node.
            \item The nodes of $G$ are alternating. That is, if $v$ is a $+$ node and $(u,v)$ is an edge, then $u$ is a $\times$ node, and if $v$ is a $\times$ node and $(u,v)$ is an edge then $u$ is either a leaf or a $+$ node.
        \end{enumerate}
        Moreover, by computing the $x_i^2, x_i^3, \cdots, x_i^{t_2}$ at the bottom of the circuit using a trivial depth-2 circuit with maximum multiplication fan-in $t_2$, the maximum addition fan-in is $t_1\cdot (2^{t_2}-1)$ and the maximum multiplication fan-in is $t_2+1$. The size of the circuit after one iteration is $2\cdot (2^{t_2}-1) \cdot |C|$. \qed$_\text{claim}$
    %\end{proof}
    
    \medskip
     As we showed above, we can construct a circuit of the same depth as $C$ that extracts the coefficient of a single variable in size $2\cdot (2^{t_2}-1) \cdot |C|$. To construct a circuit of the same depth as $C$ that extracts the coefficient of a monomial of degree at most $d$, we just need to do $d$ iterations of the above claim. Therefore, after $d$ iterations, the size of the circuits after the last iteration is $|C|\cdot \prod_{i=0}^{d-1} (2^{t_2+i}-1) \cdot 2^d=O(2^{(t_2+d)d}|C|)$ and the syntactic-degree is $l^{O(1)}$.
    This concludes the proof of \Cref{prop: computation of coefficients in constant-depth circuits}.
    \end{proof} % Proposition

    \begin{definition}[Bounded-depth $\IPS$ refutation predicate $\IPS_{\refute}(s,\Delta,l,\overline{\mathcal{F}})$]
        \label{def: Constant-depth IPS refutation predicate}
         Let $\overline{\mathcal{F}}$ be a $\CNF$ formula with m clauses and n variables $\overline{x}$ written as a set of polynomial equations according to \Cref{def: Algebraic translation of CNF formulas}. Let $U(\overline{x},\overline{y},\overline{w})$ be the bounded-depth universal circuit for depth $\Delta$ and size $s$ circuits in the $\overline{x}$ variables and the $m$ placeholder variables $\overline{y}$, and the $K_{s,\Delta}$ edge label variables $\overline{w}$. We formalize the existence of a size $s$, depth $\Delta$ circuit that computes the $\IPS$ refutation of $\overline{\mathcal{F}}$ in degree at most $l$, denoted $\IPS_{\refute}(s,\Delta,l,\overline{\mathcal{F}})$, with the following set of circuit equations (in the $\overline{w}$ variables only):
            \begin{gather*}
                \coeff_{M_i}(U(\overline{x},\overline{0},\overline{w})) = 0\\
                \coeff_{M_i}(U(\overline{x},\overline{\mathcal{F}},\overline{w}))= 
                    \begin{cases}
                        1,\ M_i =1  \ (\ie, \text{the constant 1 monomial});\\
                        0,\ \text{otherwise},
                    \end{cases}
            \end{gather*}
        where $i \in [N]$ so that $\{M_i\}_{i=1}^N$ are the set of all possible $\overline{x}$-monomials of degree at most $l$, and $N=\sum_{j=0}^l \binom{n+j-1}{j}=2^{O(n+l)}$ is the number of monomials of total degree at most $l$ over n variables, and $\overline{0}$ is the all-zero vector of length m.

        The size of $\IPS_{\refute}(s,\Delta,l,\overline{\mathcal{F}})$ is $O(2^{(t+l)l}\cdot |U(\overline{x},\overline{y},\overline{w})| \cdot |\overline{\mathcal{F}}| \cdot N)$ where $t$ is maximum multiplication fan-in in $U(\overline{x},\overline{y},\overline{w})$.
    \end{definition}

    \begin{definition}[Formalization of $\VNP = \VAC^0$]
        \label{def: formalization of VNP = c.d. poly-size} The formalization of $\VNP=\VAC^0(n,s,l,\Delta)$ denoted ``\,$\VNP=\VAC^0(n,s,l,\Delta)$", expressing that there is a bounded-depth universal circuit for size $s$ and depth $\Delta$ circuits that compute the Permanent polynomial of dimension $n$ (with $\overline{x}$ being the $n^2$ variables of the Permanent), is the following set of polynomial equations (in the $\overline{w}$-variables only):
         \[
            \{\coeff_{M_i} (U(\overline{x},\overline{w})) = b_i : 1\leq i \leq N\},
        \]
        where $\overline{b}=\coeff(\perm(\overline{x})) \in \mathbb{F}^N$ is the coefficient vector of the polynomial $\perm(\overline{x})$ of dimension $n$, $U(\overline{x},\overline{w})$ is the constant-depth universal circuit for polynomials of depth at most $\Delta$ and have circuits of size at most $s$, $\overline{w}$ are the $K_{s,\Delta}$ edge variables, $\{M_i\}_{i=1}^N$ is the set of all possible $\overline{x}$-monomials of degree at most $l$, and $N=\Sigma_{j=0}^l \binom{n^2+j-1}{j}=2^{O(n^2+l)}$ is the number of monomials of total degree at most $l$ over $n^2$ variables. Then, the size of the above set of polynomial equations is $O(2^{(t+l)l}\cdot |U(\overline{x},\overline{w})| \cdot N)$ where $t$ is maximum multiplication fan-in in $U(\overline{x},\overline{w})$.\par 
    \end{definition}

% -----------------------------------------

\section{No Short $\AC^0[p]$-Frege Proofs for Diagonalizing DNF Formulas}\label{sec:main-theorem-sec}

    This section presents our main result. we show unconditionally that constant-depth $\IPS$ cannot efficiently refute certain constant-depth $\IPS$ upper bounds. As a corollary, we obtain the same result for $\ACZ[p]$-Frege, since this proof system is simulated by constant-depth $\IPS$ over $\F_p$ (\Cref{thm:cdIPS simulates AC0p-Frege}). More precisely, we prove that constant-depth $\IPS$ does not admit polynomial-size refutations of the diagonalizing $\CNF$ formulas $\Phi_{t,l,\Delta',n,s,\Delta}$, infinitely often. These formulas express the existence of size-$t$, depth-$\Delta'$ $\IPS$ refutation of the statement that the Permanent polynomial is computable by size-$s$, depth-$\Delta$ algebraic circuits.

    In addition, this section contains the proof of Theorem~\ref{thm:mainthm3}, which shows that ruling out short proofs for the diagonalizing formulas is a necessary step towards constant-depth IPS lower bounds. We begin by introducing the formulas that will be used throughout the argument.

Let  $N = \sum_{j=0}^l \binom{n^2+j-1}{j}=2^{O(n^2+l)}$ be the number of monomials of total degree at most $l$ over $n^2$ variables. We shall work over a finite field $\F_q$ here to enable the encoding of circuit equations as CNF formulas.
    
    \begin{itemize}
        \item $\VNP=\VAC^0(n,s,l,\Delta)$: circuit equations expressing that there is an algebraic circuit of size $s$ and depth $\Delta$ that \textit{agrees} with the Permanent polynomial of dimension $n$ on all the coefficients of monomials of degree at most $l$ (i.e., every monomial $M$ computed by the algebraic circuit has the same coefficient as in the Permanent polynomial).       
            \begin{itemize}
                \item Type: circuit equations;
                \item Number of variable: $K_{s, \Delta}=\poly(s, \Delta);$
                \item Size: $O(2^{(n+l)l} \cdot \poly(s, \Delta) \cdot N).$ 
            \end{itemize}
        \item $\varphi_{n,s,l,\Delta}^\cnf$: the $\CNF$ encoding of $\VNP=\VAC^0(n,s,l,\Delta)$ based on \Cref{def: plain CNF encoding of a constant-depth circuit equation cnf(C(x)=0)}.
            \begin{itemize}
                \item Type: $\CNF$ formula;
                \item Number of variable: $O(2^{(n+l)l} \cdot \poly(s, \Delta) \cdot N)$; 
                \item Size: $O(q \cdot 2^{(n+l)l} \cdot \poly(s, \Delta) \cdot N)$. 
            \end{itemize}
        \item $\varphi_{n,s,l,\Delta}^\scnf$: the $\SCNF$ encoding of $\VNP=\VAC^0(n,s,l,\Delta)$ based on \Cref{def: semi-CNF encoding} together with the field axioms ($x^q -x =0$) for all variables.
            \begin{itemize}
                \item Type: $\SCNF$ formula;
                \item Number of variable: $O(2^{(n+l)l} \cdot \poly(s, \Delta) \cdot N)$.
            \end{itemize}
        \item $\IPS_{\refute}(t,\Delta,l,\overline{\mathcal{F}})$: circuit equations expressing that there exists an algebraic circuit for size $t$ and depth $\Delta$ that agrees with the $\IPS$ refutation of $\overline{\mathcal{F}}$ on all the coefficients of monomials of degree at most $l$. 
            \begin{itemize}
                \item Type: circuit equations;
                \item Number of variable: $K_{t, \Delta}=\poly(t, \Delta)$; 
                \item Size: $O(2^{(n+l)l} \cdot \poly(t, \Delta) \cdot |\overline {\mathcal{F}}| \cdot N)$.
            \end{itemize}
        \item  $\Phi_{t,l,\Delta^\prime,n,s,\Delta}$: \emph{The diagonalizing CNF formula}. More precisely, the $\CNF$ encoding of $\IPS_\refute(t,\Delta^\prime,l,\varphi_{n,s,l,\Delta}^\scnf)$ expressing that there is an algebraic circuit (the purported \IPS\ refutation) of size $t$ and depth $\Delta^\prime$ that agrees with the $\IPS$ refutation of $\varphi_{n,s,l,\Delta}^\scnf$ on all the coefficients of monomials of degree at most $l$.
            \begin{itemize}
                \item Type: $\CNF$ formulas;
                \item Number of variable: $K_{t, \Delta}$ which is $\poly(t, \Delta)$;
                \item Size: $O(q \cdot 2^{(n+l)l} \cdot \poly(t, \Delta) \cdot |\mathcal{F}| \cdot N)$.
            \end{itemize}
    \end{itemize}

    The following lemma easily follows from \Cref{lemma: unsat CNF has VNP refutation}.
    \begin{lemma}
        \label{lemma: unsat eCNF formulastar has VNP refutation}
        Every family of unsatisfiable formulas $(\varphi_n)$, which contains a set of unsatisfiable $\CNF$ formulas, has a family of $\IPS^\alg$ (also $\IPS$) certificates $(C_n)$ in $\VNP_\mathbb{F}$.
    \end{lemma}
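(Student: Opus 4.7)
\begin{proofsketch}
The plan is to reduce to the base lemma (\Cref{lemma: unsat CNF has VNP refutation}) by exploiting the fact that an $\IPS$ (or $\IPS^{\alg}$) refutation of a subset of axioms is automatically an $\IPS$ refutation of any super-set of those axioms, once the new placeholder variables are assigned coefficient zero. Concretely, write $\varphi_n = \psi_n \cup \chi_n$, where by hypothesis $\psi_n$ is an unsatisfiable $\CNF$ subfamily, and $\chi_n$ is whatever additional polynomial equations occur in $\varphi_n$ (for example algebraic extension axioms, $\SLP$ equations, or the field axioms $x^q - x = 0$ used in the $\scnf$/$\ecnf$ encodings of \Cref{def: Extended CNF encoding of a circuit equation cnf(C(x)=0)}). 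Since $\psi_n$ is already unsatisfiable, it suffices to refute it and ignore $\chi_n$.

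The steps I would carry out are:
\begin{enumerate}
    \item Translate $\psi_n$ into its algebraic form $\{f_j^{\psi}(\overline{x}) = 0\}_{j=1}^{m_\psi}$ via \Cref{def: Algebraic translation of CNF formulas}, and apply \Cref{lemma: unsat CNF has VNP refutation} to obtain a family of $\IPS^{\alg}$ (also $\IPS$) certificates $C_n(\overline{x},\overline{y}^{\psi})$ (respectively $C_n(\overline{x},\overline{y}^{\psi},\overline{z})$ in the Boolean case) in $\VNP_{\F}$ for $\psi_n$.
    \item Extend $C_n$ to a certificate $\widetilde{C}_n$ for the whole family $\varphi_n$ by introducing fresh placeholder variables $\overline{y}^{\chi}$ corresponding to the axioms in $\chi_n$, and defining
    \[
        \widetilde{C}_n(\overline{x},\overline{y}^{\psi},\overline{y}^{\chi}) \;:=\; C_n(\overline{x},\overline{y}^{\psi}) + 0 \cdot \sum_{k} y_k^{\chi}.
    \]
    One directly verifies the two defining conditions: $\widetilde{C}_n(\overline{x},\overline{0},\overline{0}) = C_n(\overline{x},\overline{0}) = 0$, and substituting the axioms of $\varphi_n$ for the placeholders yields $C_n(\overline{x},f_1^{\psi}(\overline{x}),\ldots,f_{m_\psi}^{\psi}(\overline{x})) = 1$, as required. (The Boolean case is identical, keeping the $\overline{z}$ placeholders for $x_i^2 - x_i$.)
    \item Argue that $\widetilde{C}_n \in \VNP_{\F}$. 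This is immediate because $\VNP$ is closed under adding a linear form in fresh variables with zero coefficients, and more straightforwardly because $\widetilde{C}_n$ is literally the same polynomial as $C_n$ (viewed as an element of a larger polynomial ring), so the $\VNP$ representation of $C_n$ supplied by \Cref{lemma: unsat CNF has VNP refutation} is inherited unchanged.
\end{enumerate}

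There is no real obstacle: the lemma is essentially a bookkeeping observation that $\IPS$/$\IPS^{\alg}$ is monotone with respect to adding axioms, combined with the fact that $\VNP$ membership is preserved under introducing dummy variables. The only mild subtlety is making sure the ``contains a set of unsatisfiable $\CNF$ formulas'' hypothesis is read syntactically, so that the $\CNF$ subset's algebraic translation is literally a sub-multiset of the equations defining $\varphi_n$; once this is spelled out, the conclusion follows by direct substitution, with no growth in the $\VNP$ complexity beyond that already guaranteed by \Cref{lemma: unsat CNF has VNP refutation}.
\end{proofsketch}
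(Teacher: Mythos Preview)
Your proposal is correct and matches the paper's intent: the paper gives no explicit proof, stating only that the lemma ``easily follows from \Cref{lemma: unsat CNF has VNP refutation},'' and your reduction---refuting the unsatisfiable $\CNF$ subset via the base lemma and then padding with zero-coefficient placeholders for the remaining axioms---is precisely the bookkeeping observation this implies.
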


    Since Semi-CNFs are substitution instances of CNFs, we get the following lemma by substituting the occurrence of Boolean variables with their corresponding $\UBIT$.
    \begin{lemma}
        \label{lemma: unsat SCNF formula has VNP refutation}
        $\{\varphi_{n,s,l,\Delta}^\scnf\}_n$ is a family of unsatisfiable $\SCNF$ formulas and has a family of $\IPS^\alg$ (also $\IPS$) certificates $(C_n)$ in $\VNP_\mathbb{F}$.
    \end{lemma}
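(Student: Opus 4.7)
The plan is to reduce this lemma to the preceding \Cref{lemma: unsat eCNF formulastar has VNP refutation} by routing the argument through the plain $\CNF$ encoding $\varphi_{n,s,l,\Delta}^{\cnf}$. Two things need to be established: (i) unsatisfiability of $\varphi_{n,s,l,\Delta}^\scnf$, and (ii) the existence of a family of $\IPS^\alg$ certificates in $\VNP_{\mathbb{F}}$. For (i) I would first recall that by \Cref{theorem: LST25} (combining \cite{LST25} and \cite{forbes2024low}), the Permanent has no constant-depth polynomial-size algebraic circuit over $\mathbb{F}_q$; hence for the relevant choice of parameters the set of circuit equations $\VNP = \VAC^0(n,s,l,\Delta)$ is unsatisfiable, as it asserts the existence of such a circuit. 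Since $\varphi_{n,s,l,\Delta}^\scnf$ is the $\SCNF$ encoding of these circuit equations (together with field axioms), \Cref{prop: C <-> scnf fixed finite fields} converts this into unsatisfiability of $\varphi_{n,s,l,\Delta}^\scnf$ over $\mathbb{F}_q$.

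For (ii) I would proceed in two steps. First, the plain $\CNF$ encoding $\varphi_{n,s,l,\Delta}^{\cnf}$ is itself an unsatisfiable $\CNF$ over its unary-bit variables (by \Cref{prop: C <-> scnf fixed finite fields} again, or directly by the equivalence of the two encodings after the translation lemmas of \Cref{lemma: Translating between extended CNF formulas and circuit equations in Fixed finite fields}). Hence by \Cref{lemma: unsat CNF has VNP refutation}, there is a family of $\IPS$ (equivalently $\IPS^\alg$) refutations $(D_n)$ of $\varphi_{n,s,l,\Delta}^{\cnf}$ whose certificates lie in $\VNP_{\mathbb{F}}$. Second, recall from \Cref{def: semi-CNF encoding} that $\varphi_{n,s,l,\Delta}^\scnf$ is obtained from $\varphi_{n,s,l,\Delta}^{\cnf}$ by substituting each unary-bit variable $x_{u j}$ by the Lagrange polynomial $\UBIT_j(C_u)$, which is computable by a bounded-depth algebraic circuit of polynomial size (since $q$ is a constant). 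Substituting these polynomials into the $\VNP$ certificate $D_n$ yields a certificate $C_n$ for $\varphi_{n,s,l,\Delta}^\scnf$. Because $\VNP_{\mathbb{F}}$ is closed under polynomially-bounded $\VP$ substitutions (the $\UBIT_j$'s being in $\VAC^0 \subseteq \VP$), the resulting family $(C_n)$ remains in $\VNP_{\mathbb{F}}$.

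The routine bookkeeping step I would need to verify carefully is that the substitution of $\UBIT_j(C_u)$ for $x_{u j}$ truly yields a valid $\IPS^\alg$ refutation of $\varphi_{n,s,l,\Delta}^\scnf$, i.e., the two $\IPS$ conditions $C_n(\overline{x}, \overline{0}) = 0$ and $C_n(\overline{x}, \overline{\mathcal F}) = 1$ survive the substitution; this follows syntactically because the axioms of $\varphi_{n,s,l,\Delta}^\scnf$ are by definition exactly the substitution instances of the axioms of $\varphi_{n,s,l,\Delta}^{\cnf}$ under the same map, and the unused field axioms $x^q - x = 0$ only enlarge the axiom set, so any certificate for the smaller set remains a certificate. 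The main obstacle, if any, is merely to confirm that the closure of $\VNP$ under the relevant substitution is fine in the nonuniform setting used here; this is standard since each $\UBIT_j$ has size $O(1)$ over a fixed field, so degrees and circuit sizes blow up only polynomially.
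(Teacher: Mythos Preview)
Your proposal is correct and matches the paper's approach: the paper's one-line justification is precisely that semi-CNFs are substitution instances of CNFs, so one substitutes the $\UBIT$ polynomials into the $\VNP$ certificate for the plain $\CNF$ encoding. Your write-up simply spells out the bookkeeping (unsatisfiability via \Cref{theorem: LST25} and \Cref{prop: C <-> scnf fixed finite fields}, preservation of the two $\IPS$ conditions under substitution, and closure of $\VNP$ under constant-size $\UBIT$ substitutions) that the paper leaves implicit.
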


    We fix $l: \mathbb{N} \to \mathbb{N}$ to be a (monotone) size function $l(r) = \lceil r^\epsilon\rceil$ for some constant $\epsilon$.

    The main result of this section is the following.

    \begin{theorem}[Main; no short proofs over fixed finite field]
        \label{theorem: main theorem}
        Let $q$ be a constant prime.
        The $\CNF$ family $\{\Phi_{t,l,\Delta^\prime,n,s,\Delta}\}$ does not have constant-depth polynomial-size $\IPS$ refutations infinitely often over $\mathbb{F}_q$, in the following sense: 
        for every constant $\Delta$ there exist constants $c_1$ and  $\Delta^\prime$,  such that for every sufficiently large constant $c_2$ and every constants $\Delta^{\prime\prime}$ and $c_0$, for infinitely many $n, t(n), s(n) \in \mathbb{N}$, such that $t(n) > |\varphi_{n,s,l,\Delta}^\scnf|^{c_1}$ and $n^{c_1} < s(n) < n^{c_2}$, $\Phi_{t,l,\Delta^\prime,n,s,\Delta}$ has no $\IPS$ refutation of size at most $|\Phi_{t,l,\Delta^\prime,n,s,\Delta}|^{c_0}$ and depth at most $\Delta^{\prime\prime}$.
    \end{theorem}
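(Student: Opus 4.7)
The approach is a diagonalization argument following the sketch given in \Cref{sec:framework-results}, combining a bounded-depth formalization of the Grochow--Pitassi simulation with the unconditional constant-depth algebraic lower bound of \Cref{theorem: LST25}. The plan is to proceed by contradiction: fix $\Delta$ and assume that for every choice of $c_1$ and $\Delta'$, one can find $c_2, \Delta'', c_0$ so that almost all $n$ admit a depth-$\Delta''$, size-$|\Phi|^{c_0}$ $\IPS$ refutation of $\Phi_{t,l,\Delta',n,s,\Delta}$, where $t > |\varphi_{n,s,l,\Delta}^\scnf|^{c_1}$ and $n^{c_1} < s < n^{c_2}$.

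The first and critical technical step is to establish a constant-depth analogue of the Grochow--Pitassi implication, tailored to the $\scnf$ encoding. Using the bounded-depth universal circuit of \Cref{theorem: Existence of universal circuits for constant-depth polynomials}, the bounded-depth coefficient-extraction of \Cref{prop: computation of coefficients in constant-depth circuits}, and the translation lemmas \Cref{lemma: Translate semi-CNFs from circuit equations in Fixed Finite Fields} and \Cref{lemma: Translate circuit equations from semi-CNFs in Fixed Finite Fields}, I would construct constants $c_{\mathrm{GP}}, \Delta_{\mathrm{GP}}$ (depending on $\Delta$ and $c_1$ but not on $\Delta'', c_0$) together with a polynomial-size, constant-depth $\IPS$ derivation
\[
\varphi_{m,\,m^c,\,l,\,\Delta}^\scnf \ \sststile{\Delta_{\mathrm{GP}}}{m^{c_{\mathrm{GP}}}} \ \Phi_{m^{c_1},\,l,\,\Delta',\,n,\,s,\,\Delta},
\]
where $m := |\varphi_{n,s,l,\Delta}^\scnf|$. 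Informally: from the hypothesis that the Permanent at dimension $m$ has a polynomial-size bounded-depth circuit, constant-depth $\IPS$ can efficiently derive that the original upper-bound formula at dimension $n$ admits a polynomial-size bounded-depth refutation (witnessed by the universal circuit encoding $\Phi$). This is where $\Delta'$ promised by the theorem gets fixed: take $\Delta' := \Delta_{\mathrm{GP}}$ and $c_1 := c_{\mathrm{GP}}$.

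Given $c_1, \Delta'$ fixed as above, the contradiction hypothesis supplies a depth-$\Delta''$, size-$|\Phi|^{c_0}$ refutation of $\Phi_{t,l,\Delta',n,s,\Delta}$ almost everywhere. Composing with the derivation from Step 1 yields constants $c_\star, \Delta_\star$ and an explicit $\IPS$ refutation
\[
\varphi_{m,\,m^c,\,l,\,\Delta}^\scnf \ \sststile{\Delta_\star}{m^{c_\star}} \ 1=0.
\]
This refutation, viewed as a syntactic object, is precisely an assignment witnessing that $\Lambda := \Phi_{m^{c_\star},\,l,\,\Delta_\star,\,m,\,m^c,\,\Delta}$ is \emph{satisfiable}. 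Now instantiate the contradiction hypothesis a second time, at input length $m$ (polynomial in $n$, so still infinitely often) and with target parameters suited to $\Lambda$: this produces a polynomial-size, bounded-depth $\IPS$ refutation of $\Lambda$, forcing $\Lambda$ to be unsatisfiable by soundness of $\IPS$. These two conclusions about $\Lambda$ contradict each other. \Cref{theorem: LST25} is invoked implicitly to ensure that $\varphi_{m,m^c,l,\Delta}^\scnf$ is genuinely unsatisfiable, so that the refutation constructed in Step 2 does not itself violate soundness.

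The main obstacle is the Grochow--Pitassi step in bounded depth. The original \cite{GP18} reduction, and its \cite{ST25} formalization inside $\IPS$, manipulates algebraic circuits in ways that naïvely inflate the proof depth; to keep the derivation constant-depth and polynomial-size one must route everything through the bounded-depth universal circuit of Section 3 and use the bounded-depth coefficient extractor of Section 4, carefully tracking depth blow-up by only constant multiplicative factors. A secondary bookkeeping subtlety is the quantifier order in the theorem: $\Delta'$ must be produced \emph{before} the adversary chooses $\Delta''$, so all constants arising in the bounded-depth Grochow--Pitassi derivation must be independent of the outer refutation's depth --- a condition accommodated by the structure of the universal-circuit construction, but which must be verified explicitly in the composition of Step 2.
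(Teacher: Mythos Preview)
Your proposal is correct and follows essentially the same diagonalization as the paper: the heart of the argument is the bounded-depth formalization of the Grochow--Pitassi implication (the paper's \Cref{lemma: grochow-pitassi formalization in cdIPS}), composed with two applications of the contradiction hypothesis at different values of $\Delta'$. Two small corrections worth noting: first, the role of \Cref{theorem: LST25} is inside the GP step, to guarantee that the \emph{inner} formula $\varphi_{n,s,l,\Delta}^\scnf$ is unsatisfiable so that a $\VNP$ refutation exists to encode via the universal circuit (not, as you wrote, to certify unsatisfiability of the outer $\varphi_m$, which follows automatically from soundness once you have built its refutation); second, the inner depth parameter of the derived $\Phi$ is fixed by the GP construction to be $\Delta$ (the paper takes $\Delta'=\Delta$ there), not the derivation depth $\Delta_{\mathrm{GP}}$, though both are $O(\Delta)$ so your choice still works.
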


    Before proving this theorem, we provide an overview of its proof. 

\medskip

    \noindent\textit{Proof overview:} 
    By way of contradiction, we assume that there exists a constant $\Delta$ such that for every constant $\Delta^\prime$, for every sufficiently large $n$, and for every $t(n), s(n) \in \mathbb{N}$, such that $t(n) > |\varphi_{n,s,l,\Delta}^\scnf|^{c_1}$ and $n^{c_1} < s(n) < n^{c_2}$,
        $\Phi_{t,l,\Delta^\prime,n,s,\Delta} \coloneqq \cnf(\IPS_\refute(t,\Delta^\prime,l,\varphi_{n,s,l,\Delta}^\scnf))$ has a small depth-$O(1)$ refutation. Then, by substituting the occurrences of Boolean variables with the correspondence $\UBIT$s, we can assume that $\scnf(\IPS_\refute(t,\Delta^\prime,l,\varphi_{n,s,l,\Delta}^\scnf))$ has a small and depth-$O(\Delta^\prime)$ refutation.

        \begin{enumerate}
            
            \item \label{it:2292}
            By applying \Cref{lemma: Translate circuit equations from semi-CNFs in Fixed Finite Fields}, from $\varphi_{m,t,l,\Delta}^\scnf$ where $m,t,l$ are parameters, that meet the conditions in \Cref{theorem: main theorem}, we can derive the circuit equations
            \[
                \VNP=\VAC^0(m,t,l,\Delta),
            \]
            in $O(\Delta)$ depth and polynomial-size $\IPS$. Recall that $\VNP=\VAC^0(m,t,l,\Delta)$ is the set of circuit equations expressing that there is an algebraic circuit of size $t$ and depth $\Delta$ \ that agrees with the Permanent polynomial of dimension $m$ on all the coefficients of monomials of degree at most $l$.

            \item \label{it:2301}
             In \Cref{lemma: grochow-pitassi formalization in cdIPS} we prove (the contrapositive of) the bounded-depth version of \Cref{theorem: IPS lower bound implies VNP neq VP} \emph{within bounded-depth $\IPS$}.
            The contrapositive of the bounded-depth version of \Cref{theorem: IPS lower bound implies VNP neq VP} expresses that if the Permanent polynomial can be computed by bounded-depth polynomial-size circuits, then bounded-depth $\IPS$ can efficiently refute any family of $\CNF$ formulas. 
            To be more specific, from $\VNP=\VAC^0(m,t,l,\Delta)$ we derive $\IPS_\refute(t,\Delta,l,\varphi_{n,s,l,\Delta}^\scnf)$ in depth $O(\Delta)$ and polynomial size. Note that $\IPS_\refute(t,\Delta,l,\varphi_{n,s,l,\Delta}^\scnf)$ is the set of circuit equations expressing that $\IPS$ can refute $\varphi_{n,s,l,\Delta}^\scnf$ in size $t$ and depth $\Delta$.
        
            \item \label{it:2306} Applying \Cref{lemma: Translate semi-CNFs from circuit equations in Fixed Finite Fields}, we can derive $\scnf(\IPS_\refute(t,\Delta,l,\varphi_{n,s,l,\Delta}^\scnf))$ from $\IPS_\refute(t,\Delta,l,\varphi_{n,s,l,\Delta}^\scnf)$ in depth $O(\Delta)$ polynomial size $\IPS$.
            Since we assumed that for any constant $\Delta^\prime$, $\scnf(\IPS_\refute(t,\Delta^\prime,l,\varphi_{n,s,l,\Delta}^\scnf))$ has a small and bounded-depth refutation, it follows that $\scnf(\IPS_\refute(t,\Delta,l,\varphi_{n,s,l,\Delta}^\scnf))$ has a small and bounded-depth refutation, particularly when we take $\Delta^\prime = \Delta$.
        \end{enumerate}

        Hence, we get a small and bounded-depth refutation of $\varphi_{m,t,l,\Delta}^\ast$, as follows:
        \begin{align*}
            \varphi_{m,t,l,\Delta}^\ast &\sststile{\IPS^\alg}{\ast, O(\Delta)} \VNP=\VAC^0(m,t,l,\Delta) \qquad  \text{\Cref{lemma: Translate circuit equations from semi-CNFs in Fixed Finite Fields} (see \Cref{it:2292})}\\ 
            & \sststile{\IPS^\alg}{\ast, O(\Delta)} \IPS_\refute(t,\Delta,l,\varphi_{n,s,l,\Delta}^\scnf) \qquad \text{\Cref{lemma: grochow-pitassi formalization in cdIPS} (see \Cref{it:2301})}\\
            & \sststile{\IPS^\alg}{\ast, O(\Delta)} \scnf(\IPS_\refute(t,\Delta,l,\varphi_{n,s,l,\Delta}^\scnf)) \qquad \text{\Cref{lemma: Translate semi-CNFs from circuit equations in Fixed Finite Fields} (see \Cref{it:2306})}\\
             & \sststile{\IPS^\alg}{\ast, O(\Delta)}  1=0 \qquad \text{by assumption}.
        \end{align*}

    Then we know that for some constant $\Delta^{\prime\prime\prime}$ and large enough $w$, the system of circuit equations $\IPS_\refute^\alg(w,\Delta^{\prime\prime\prime},l,\varphi_{m,t,l,\Delta}^\ast)$ is satisfiable. Hence, by \Cref{prop: C <-> cnf <-> ecnf fixed finite fields}, $\cnf(\IPS_\refute^\alg(w,\Delta^{\prime\prime\prime},l,\varphi_{m,t,l,\Delta}^\ast))$ is satisfiable. 
        
    By assumption, for any constant $\Delta^\prime$, for all sufficiently large $n$, for all proper $t(n), s(n) \in \mathbb{N}$, $\Phi_{t,l,\Delta^\prime,n,s,\Delta} \coloneqq \cnf(\IPS_\refute(t,\Delta^\prime,l,\varphi_{n,s,l,\Delta}^\scnf))$ has a small and depth-$O(1)$ refutation. Taking $t=w$, $\Delta^\prime = \Delta^{\prime\prime\prime}$, $n=m$, $s=t$, we know that $\cnf(\IPS_\refute^\alg(w,\Delta^{\prime\prime\prime},l,\varphi_{m,t,l,\Delta}^\scnf))$ is refutable which implies it is not satisfiable. This is a contradiction.

    \bigskip

    \begin{proof}[Proof of \Cref{theorem: main theorem}]
    
        For the sake of contradiction, we assume that there exists a constant $\Delta$ for all constants $c_1$, $\Delta^\prime$ such that there exist constants $c_2$, $\Delta^{\prime\prime}$ and $c_0$ for all $n, t(n), s(n) \in \mathbb{N}$ such that $t(n)> |\varphi_{n,s,l,\Delta}^\scnf|^{c_1}$ and $n^{c_1} < s(n) < n^{c_2}$,
        \begin{equation}
            \underbrace{\cnf(\IPS_\refute^\alg(t,l ,\Delta^\prime,\varphi_{n,s,l,\Delta}^\scnf))}_{\lambda} \sststile{\IPS}{|\lambda|^{c_0}, \Delta^{\prime\prime}} 1=0.
        \end{equation}

        By substituting the occurrence of Boolean variables with their corresponding $\UBIT$s, we get that
        there exists a constant $\Delta$ for all constants $c_1$, $\Delta^\prime$ such that there exist constants $c_2$, $\Delta^{\prime\prime}$ and $c_0$ for all $n, t(n), s(n) \in \mathbb{N}$, if $t(n)> |\varphi_{n,s,l,\Delta}^\scnf|^{c_1}$ and $n^{c_1} < s(n) < n^{c_2}$,
        \begin{equation}
            \underbrace{\scnf(\IPS_\refute^\alg(t,l ,\Delta^\prime,\varphi_{n,s,l,\Delta}^\scnf))}_{\lambda} \sststile{\IPS^\alg}{|\lambda|^{c_0}, \Delta^{\prime\prime}} 1=0.
        \end{equation}
        Since the Boolean axioms of each $\UBIT$ can be easily derived, as shown in the proof of \Cref{lemma: Translate semi-CNFs from circuit equations in Fixed Finite Fields}, $\IPS$ can be replaced by $\IPS^\alg$ here. 
        
        We take $\Delta^\prime = \Delta$, which gives us the following:
        \begin{equation}
            \underbrace{\scnf(\IPS_\refute^\alg(t,l, \Delta,\varphi_{n,s,l,\Delta}^\scnf))}_{\lambda} \sststile{\IPS^\alg}{|\lambda|^{c_0}, \Delta^{\prime\prime}} 1=0.
        \end{equation}

        Let $m = 6(L^\prime+P^\prime)$ where $L^\prime$ is the number of variables in $ \varphi_{n,s,l,\Delta}^\scnf$ and $P^\prime$ is the number of placeholders needed for axioms in $ \varphi_{n,s,l,\Delta}^\scnf$. Let $\gamma \coloneqq \varphi_{m,t,l,\Delta}^\scnf$ be the Semi-CNF formulas $\varphi_{m,t,l,\Delta}^\scnf$ together with the field axioms for the variables in $\IPS_\refute^\alg(t,l,\Delta,\varphi_{n,s,l,\Delta}^\scnf)$.
        
        By \Cref{lemma: grochow-pitassi formalization in cdIPS}, there exist constants $c_3$ and $\Delta^{\prime\prime\prime}$ such that
        \[
            \overbrace{\varphi_{m,t,l,\Delta}^\scnf}^\gamma \sststile{\IPS^\alg}{|\gamma|^{c_3}, \Delta^{\prime\prime\prime}} \scnf(\IPS_\refute^\alg(t,l, \Delta, \varphi_{n,s,l,\Delta}^\scnf)).
        \]
        Combining the above equation with the assumption, we know that

        \begin{equation}
            \varphi_{m,t,l, \Delta}^\scnf \sststile{\IPS^\alg}{|\gamma|^{c_3}, \Delta^{\prime\prime\prime}}  \underbrace{\scnf(\IPS_\refute^\alg(t,l, \Delta,\varphi_{n,s,l,\Delta}^\scnf))}_{\lambda} \sststile{\IPS^\alg}{|\lambda|^{c_0},\Delta^{\prime\prime}} 1=0.
        \end{equation}
        Since $|\lambda|$ is polynomially bounded by $|\gamma|$, there exists a constant $c_1$ such that
        \[
            |\gamma|^{c_3} + |\lambda|^{c_0} < |\gamma|^{c_1}.
        \]
        We pick a big enough $c_1$ such that
        \[
            \underbrace{\varphi_{m,t,l, \Delta}^\scnf}_{\gamma}  \sststile{\IPS^\alg}{|\gamma|^{c_1}, \Delta^{\prime\prime\prime}+\Delta^{\prime\prime}} 1 =0.
        \]

        From the above equation, we can conclude that $\IPS_\refute^\alg(w,l,\Delta^{\prime\prime\prime}+\Delta^{\prime\prime}, \varphi_{m,t,l,\Delta}^\scnf)$ is satisfiable for some $w \geq |\gamma|^{c_1}$ over $\mathbb{F}_q$ and is polynomially bounded by $|\gamma|$. Hence, by \Cref{prop: C <-> cnf <-> ecnf fixed finite fields}, $\cnf(\IPS_\refute^\alg(w,l,\Delta^{\prime\prime\prime}+\Delta^{\prime\prime}, \varphi_{m,t,l,\Delta}^\scnf))$ is also satisfiable over $\mathbb{F}_q$. However, by our assumption, when we take $\Delta^\prime = \Delta^{\prime\prime\prime}+\Delta^{\prime\prime}$, we know that for all big enough $w$ and $t$, $\IPS_\refute^\alg(w,l,\Delta^{\prime\prime\prime}+\Delta^{\prime\prime}, \varphi_{m,t,l,\Delta}^\scnf)$ is refutable. By the soundness of $\IPS$, $\cnf(\IPS_\refute^\alg(w,l,\Delta^{\prime\prime\prime}+\Delta^{\prime\prime}, \varphi_{m,t,l,\Delta}^\scnf))$ should be unsatisfiable which is a contradiction.
    \end{proof}

    It remains to prove the following. 

\begin{lemma}[Constant-depth version of  Grochow-Pitassi formalization in $\IPS^\alg$]
    \label{lemma: grochow-pitassi formalization in cdIPS}
    There are constants $c_3$ and $\Delta^{\prime\prime\prime}$ such that under the above notation and parameters:
    \begin{equation*}
        \overbrace{\varphi_{m,t,l,\Delta}^\scnf}^\gamma \sststile{\IPS^\alg}{|\gamma|^{c_3}, \Delta^{\prime\prime\prime}} \scnf(\IPS_\refute^\alg(t,l, \Delta, \varphi_{n,s,l,\Delta}^\scnf)).
    \end{equation*}
\end{lemma}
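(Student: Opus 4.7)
The plan is to formalize the bounded-depth Grochow--Pitassi implication \cite{GP18} inside $\IPS^\alg$. The intuition: the premise $\varphi_{m,t,l,\Delta}^\scnf$ encodes that $\perm_m$ has a depth-$\Delta$, size-$t$ algebraic circuit. Since the Permanent is $\VNP$-complete under p-projections, this small circuit can simulate \emph{any} $\VNP$ family, in particular the $\VNP$-certificate guaranteed for the unsatisfiable $\varphi_{n,s,l,\Delta}^\scnf$ by \Cref{lemma: unsat SCNF formula has VNP refutation}. The resulting object is a constant-depth, polynomial-size $\IPS^\alg$-certificate for $\varphi_{n,s,l,\Delta}^\scnf$, and the circuit equations in $\IPS_\refute^\alg(t,l,\Delta,\varphi_{n,s,l,\Delta}^\scnf)$ assert precisely this existence. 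The whole argument is carried out syntactically inside constant-depth $\IPS^\alg$.

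I would carry out four steps. \emph{Step 1 (translate in).} By \Cref{lemma: Translate circuit equations from semi-CNFs in Fixed Finite Fields}, derive from $\varphi_{m,t,l,\Delta}^\scnf$ the circuit equations $\VNP=\VAC^0(m,t,l,\Delta)$ in constant depth and polynomial size; these expose edge-label values $\overline{w}^*$ such that the bounded-depth universal circuit $U(\overline{x},\overline{w}^*)$ agrees with $\perm_m$ on every $\overline{x}$-monomial coefficient of degree at most $l$. \emph{Step 2 (symbolic certificate).} Express the $\VNP$-certificate $C_\mathrm{cert}$ of $\varphi_{n,s,l,\Delta}^\scnf$ as a p-projection of $\perm_m$ (the choice $m=6(L'+P')$ is exactly to accommodate this projection), then substitute $U(\overline{x},\overline{w}^*)$ for $\perm_m$ to obtain a constant-depth, polynomial-size symbolic circuit $\tilde{C}(\overline{x},\overline{y},\overline{z})$. \emph{Step 3 (verify the certificate).} The defining equations of $\IPS_\refute^\alg(t,l,\Delta,\varphi_{n,s,l,\Delta}^\scnf)$ are $\coeff_M(\tilde{C}(\overline{x},\overline{0},\overline{0}))=0$ and $\coeff_M(\tilde{C}(\overline{x},\varphi_{n,s,l,\Delta}^\scnf))=[M=1]$, for every $\overline{x}$-monomial $M$ of degree $\le l$. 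By \Cref{prop: computation of coefficients in constant-depth circuits} each $\coeff_M$ is itself a constant-depth, polynomial-size circuit, and by the classical Grochow--Pitassi correctness argument each of the required identities holds as a formal polynomial identity. Hence by \Cref{prop: polynomial identities that can be written as constant-depth circuits are proved for free in constant-depth IPS} each admits a free constant-depth, polynomial-size $\IPS^\alg$ proof. \emph{Step 4 (translate out).} A final application of \Cref{lemma: Translate semi-CNFs from circuit equations in Fixed Finite Fields} converts the derived circuit equations to $\scnf(\IPS_\refute^\alg(t,l,\Delta,\varphi_{n,s,l,\Delta}^\scnf))$ in constant depth.

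The main obstacle is the size accounting in Step 3. \Cref{prop: computation of coefficients in constant-depth circuits} extracts coefficients at cost $O(2^{(t+l)l}\cdot s)$, and there are $N=2^{O(n^2+l)}$ monomial equations to verify. These factors are benign only because $|\gamma|$ itself already carries the $N$ factor (so ``polynomial in $|\gamma|$'' is exactly what we need) and because the size function $l(r)=\lceil r^\epsilon\rceil$ keeps $l$ small relative to $\log|\gamma|$. A secondary subtlety is ensuring that Valiant's p-projection from $\VNP$ to $\perm_m$ is realized by a single input-relabeling layer, adding only $O(1)$ depth; when working over $\F_2$ one substitutes a characteristic-2 $\VNP$-complete surrogate for the Permanent in Valiant's construction, and the rest of the argument is unaffected. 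With these checks in place, each of the four steps adds at most $O(\Delta)$ depth, so the total depth $\Delta^{\prime\prime\prime}$ is an absolute constant and the total size is $|\gamma|^{c_3}$ for some absolute constant $c_3$, as required.
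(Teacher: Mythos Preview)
Your four-step outline matches the paper's proof structure closely: translate in via \Cref{lemma: Translate circuit equations from semi-CNFs in Fixed Finite Fields}, exploit the $\VNP$-completeness of Permanent with $m=6(L'+P')$ to realize the certificate as a projection, derive the $\IPS_\refute^\alg$ equations, and translate out via \Cref{lemma: Translate semi-CNFs from circuit equations in Fixed Finite Fields}. The gap is in Step~3.

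You write that after substituting $U(\overline{x},\overline{w}^*)$ for $\perm_m$, ``each of the required identities holds as a formal polynomial identity'' and can therefore be proved for free via \Cref{prop: polynomial identities that can be written as constant-depth circuits are proved for free in constant-depth IPS}. This is not correct. The equations $\coeff_M(U(\overline{x}\restriction_{\overline{\alpha}},\overline{0},\overline{w}))=0$ and $\coeff_M(U(\overline{x}\restriction_{\overline{\alpha}},\varphi,\overline{w}))=[M=1]$ are \emph{not} identities in the $\overline{w}$-variables: they hold only for those $\overline{w}$ satisfying the premise equations, not for all $\overline{w}$. The premise does not ``expose values $\overline{w}^*$''; it merely constrains $\overline{w}$ by a set of polynomial equations that must be used as axioms.

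What is actually needed (and what the paper supplies as its Claim~5.3) is a syntactic derivation of $\coeff_M(U(\overline{x}\restriction_{\overline{a}},\overline{w}))=\coeff_M(\perm(\overline{x}\restriction_{\overline{a}}))$ \emph{from} the axioms $\{\coeff_{M_i}(U(\overline{x},\overline{w}))=b_i\}_i$. The key is that $\coeff_M(U(\overline{x}\restriction_{\overline{a}},\overline{w}))=\sum_i \coeff_{M_i}(U(\overline{x},\overline{w}))\cdot\coeff_M(M_i\restriction_{\overline{a}})$ is a genuine polynomial identity (provable for free), and likewise for $\perm$; the premise axioms then let you replace each $\coeff_{M_i}(U)$ by $b_i$ via a linear combination of the axioms (multiply the $i$th axiom by $\coeff_M(M_i\restriction_{\overline{a}})$ and sum). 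This linear-combination step is where the axioms are actually consumed, and it must be done explicitly to keep the proof in $\IPS^\alg$ with the stated size and depth bounds. Once you insert this argument, your Steps~1,~2,~4 go through as written.
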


\begin{proof}
    Recall that $\gamma \coloneqq \varphi_{m,t,l,\Delta}^\scnf$ is the Semi-CNF formula $\varphi_{m,t,l,\Delta}^\scnf$ together with field axioms for all variables in it.
    Let $N^\prime$ be the number of $\overline{x}$-monomials with degree at most $l$ over $m$ variables.
    According to
    \Cref{lemma: Translate circuit equations from semi-CNFs in Fixed Finite Fields}, from $\varphi_{m,t,l,\Delta}^\scnf$, we can derive the following circuit equations in polynomial-size $\IPS^\alg$ 
    \[
        \{\coeff_{M_i} (U(\overline{x},\overline{w})) = b_i : 1\leq i \leq N^\prime\},
    \]
    where $\overline{b}=\coeff(\perm(\overline{x})) \in \mathbb{F}_q^{N^\prime}$ is the coefficient vector of the Permanent polynomial $\perm(\overline{x})$, and $U(\overline{x},\overline{w})$ is the universal circuit for algebraic circuits of depth at most $\Delta$ and of size at most $t$.
    Formally, there exist constants $c_4$ and $\Delta_1$ such that
    \begin{equation*}
            \lambda \sststile{\IPS^\alg}{|\lambda|^{c_4}, \Delta_1}  \{\coeff_{M_i} (U(\overline{x},\overline{w})) = b_i : 1\leq i \leq N^\prime\}
    \end{equation*}
    which is
        \[
            \lambda \sststile{\IPS^\alg}{|\lambda|^{c_4}, \Delta_1} \VNP = \VAC^0(m,t,l,\Delta).
        \]

        \bigskip
        \bigskip

    Now, we will show that from $\VNP = \VAC^0(m,t,l,\Delta)$, there is a depth-$O(\Delta)$ polynomial size $\IPS$ derivation of $\IPS_\refute^\alg(t,l,\Delta, \varphi_{n,s,l,\Delta}^\scnf)$.

    \begin{claim}
        \label{claim: main part of GP in IPS}
        Suppose $M_i$ is an $\overline{x}$-monomial with degree at most $l$. Let $\overline{a}$ be any possibly partial substitution of polynomials (including field elements) for the variables $\overline{x}$. Given $\VNP = \VAC^0(m,t,l,\Delta) \coloneqq \{\coeff_{M_i} (U(\overline{x},\overline{w})) = b_i : 1\leq i \leq N^\prime\}$, we can derive
        \begin{equation}
            \coeff_M(U(\overline{x}\restriction_{\overline{a}}, \overline{w})) = \coeff_M(\perm(\overline{x}\restriction_{\overline{a}}))
        \end{equation}
        in depth-$O(\Delta)$ polynomial size $\IPS^\alg$ for every $\overline{x}$-monomial $M$ with degree at most $l$.
    \end{claim}

    \begin{proof}[Proof of \Cref{claim: main part of GP in IPS}]
        First, it is easy to get the following polynomial identity,
    \begin{equation}
        \label{equa: assignment can be taken out}            U(\overline{x}\restriction_{\overline{a}}, \overline{w}) = U(\overline{x},\overline{w})\restriction_{\overline{a}} \,.
    \end{equation}
    And by \Cref{def: coefficient circuit}, we have the following polynomial identity,
    \begin{equation}
        \label{equa: by definition 1}
        U(\overline{x},\overline{w})\restriction_{\overline{a}} = 
        \left(
            \sum_{i\in [N^\prime]}    \coeff_{M_i}(U(\overline{x}, \overline{w}))\cdot M_i%
        \right)\restriction_{\overline{a}}\,.
    \end{equation}
    
    And since $\coeff_{M_i}(U(\overline{x},\overline{w}))$ only contains $\overline{w}$ variables, we have the following polynomial identity,
    \begin{equation}
        \label{equa: coeff only have w variables}
        \left (\sum_{i\in [N^\prime]} \coeff_{M_i}(U(\overline{x}, \overline{w})) \cdot M_i \right )\restriction_{\overline{a}} = \sum_{i\in [N^\prime]} \coeff_{M_i}(U(\overline{x}, \overline{w})) \cdot (M_i\restriction_{\overline{a}})\,.
    \end{equation}
    
    Hence, combining three polynomial identities in \Cref{equa: assignment can be taken out}, \Cref{equa: by definition 1} and \Cref{equa: coeff only have w variables} above, we have the following polynomial identity,
    \begin{equation}
        \label{equa: first combination}
        U(\overline{x}\restriction_{\overline{a}}, \overline{w}) = \sum_{i\in [N^\prime]} \coeff_{M_i}(U(\overline{x}, \overline{w})) \cdot (M_i\restriction_{\overline{a}}) \,.
    \end{equation}
        
    Let $M$ be any $\overline{x}$-monomial with degree at most $l$. By the above polynomial identity, we have the following polynomial identity,
    \begin{equation}
        \label{equa: by definition 2}
        \coeff_M( U(\overline{x}\restriction_{\overline{a},} \overline{w})) = \coeff_M(\sum_{i\in [N^\prime]} \coeff_{M_i}(U(\overline{x}, \overline{w})) \cdot (M_i\restriction_{\overline{a}}))\,.
    \end{equation}
    
    Since $\coeff_M$ is a linear operator, we have the following polynomial identity,
        \begin{equation}
            \label{equa: coefficient of an algebraic circuit is the same as the sum of all the coefficients of subcircuit}
            \coeff_M(\sum_{i\in [N^\prime]} \coeff_{M_i}(U(\overline{x}, \overline{w})) \cdot (M_i\restriction_{\overline{a}})) = \sum_{i\in [N^\prime]} \coeff_M(\coeff_{M_i}(U(\overline{x}, \overline{w})) \cdot (M_i\restriction_{\overline{a}})) \,.
        \end{equation}
        
    Since $\coeff_M(\coeff_{M_i}(U(\overline{x}, \overline{w})) \cdot (M_i\restriction_{\overline{a}})) = \coeff_{M_i}(U(\overline{x}, \overline{w})) \cdot \coeff_M(M_i\restriction_{\overline{a}})$, we have the following polynomial identity,
        \begin{equation}
            \label{equa: coefficient for subcircuits in disjoint variables}
            \sum_{i\in [N^\prime]} \coeff_M(\coeff_{M_i}(U(\overline{x}, \overline{w})) \cdot (M_i\restriction_{\overline{a}})) = \sum_{i\in [N^\prime]} \coeff_{M_i}(U(\overline{x}, \overline{w})) \cdot \coeff_M( (M_i\restriction_{\overline{a}}))\,.
        \end{equation}
        Hence, combining the three polynomial identities in \Cref{equa: by definition 2}, \Cref{equa: coefficient of an algebraic circuit is the same as the sum of all the coefficients of subcircuit} and \Cref{equa: coefficient for subcircuits in disjoint variables}, we have the following polynomial identity,
        \begin{equation}
            \label{equa: first polynomial identity in IPS}
            \coeff_M( U(\overline{x}\restriction_{\overline{a},} \overline{w})) = \sum_{i\in [N^\prime]} \coeff_{M_i}(U(\overline{x}, \overline{w})) \cdot \coeff_M( (M_i\restriction_{\overline{a}}))\
        \end{equation}
        in depth-$O(\Delta)$ linear-size $\IPS^\alg$ (we increased the depth of $U(\overline{x}, \overline{w})$ by a constant factor using \Cref{prop: computation of coefficients in constant-depth circuits}). 
        
        \bigskip
        
        Also, since we have already derived $\{\coeff_{M_i} (U(\overline{x},\overline{w})) = b_i : 1\leq i \leq N^\prime\}$ from $\lambda$, by multiplying $\coeff_M(M_i\restriction_{\overline{a}})$ to each circuit equation and adding them, we can derive the following circuit equation in depth-$O(\Delta)$ polynomial-size $\IPS^\alg$,

        \begin{equation}
            \label{equa: second polynomial in IPS}
            \sum_{i\in [N^\prime]} \coeff_{M_i}(U(\overline{x}, \overline{w})) \cdot \coeff_M( (M_i\restriction_{\overline{a}})) = \sum_{i \in [N^\prime]} b_i \cdot \coeff_M(M_i\restriction_{\overline{a}})\,.
        \end{equation}

        \bigskip

        By definition, we have the following polynomial identity,
        \begin{equation}
            \label{equa: perm identity}
            \sum_{i\in [N^\prime]} b_i \cdot (M_i\restriction_{\overline{a}}) = \perm(\overline{x}\restriction_{\overline{a}})\,.
        \end{equation}
        
        Therefore, for any $\overline{x}$-monomial $M$ with degree at most $l$, we have the following polynomial identity,
        \begin{equation}
            \label{equa: apply coefficient to perm identity}
            \coeff_M(\sum_{i\in [N^\prime]} b_i \cdot (M_i\restriction_{\overline{a}})) = \coeff_M(\perm(\overline{x}\restriction_{\overline{a}}))\,.
        \end{equation}
        
        Also, since $b_i$ are just field elements and linearity of $\coeff_M$, 
        \begin{equation}
            \label{equa: take bi out}
            \coeff_M(\sum_{i \in [N^\prime]} b_i \cdot (M_i\restriction_{\overline{a}}))  = \sum_{i \in [N^\prime]} b_i \cdot \coeff_M(M_i\restriction_{\overline{a}}) \,.
        \end{equation}
        
        Combining the two polynomial identities in \Cref{equa: apply coefficient to perm identity} and \Cref{equa: take bi out} above and using \Cref{prop: polynomial identities that can be written as constant-depth circuits are proved for free in constant-depth IPS} again, we prove the following polynomial identities,

        \begin{equation}
            \label{equa: third polynomial identity in IPS}
            \sum_{i \in [N^\prime]} b_i \cdot \coeff_M(M_i\restriction_{\overline{a}}) = \coeff_M(\perm(\overline{x}\restriction_{\overline{a}}))
        \end{equation}
        in depth-$O(\Delta)$ polynomial size $\IPS^\alg$.

        \bigskip
        
        Hence, by combining those three circuit equations in \Cref{equa: first polynomial identity in IPS}, \Cref{equa: second polynomial in IPS} and \Cref{equa: third polynomial identity in IPS},
        \begin{align*}
            \coeff_M( U(\overline{x}\restriction_{\overline{a},} \overline{w})) &= \sum_{i\in [N^\prime]} \coeff_{M_i}(U(\overline{x}, \overline{w})) \cdot \coeff_M( (M_i\restriction_{\overline{a}}))\\
            \sum_{i\in [N^\prime]} \coeff_{M_i}(U(\overline{x}, \overline{w})) \cdot \coeff_M( (M_i\restriction_{\overline{a}})) &= \sum_{i \in [N^\prime]} b_i \cdot \coeff_M(M_i\restriction_{\overline{a}})\\
            \sum_{i \in [N^\prime]} b_i \cdot \coeff_M(M_i\restriction_{\overline{a}}) &= \coeff_M(\perm(\overline{x}\restriction_{\overline{a}})),
        \end{align*}
        we can derive
        
        \begin{equation}
            \label{equa: final circuit equation}
            \coeff_M(U(\overline{x}\restriction_{\overline{a}}, \overline{w})) = \coeff_M(\perm(\overline{x}\restriction_{\overline{a}}))
        \end{equation}
        in depth-$O(\Delta)$ polynomial size $\IPS^\alg$. This concludes the proof of \Cref{claim: main part of GP in IPS}.
    \end{proof}
    
        \bigskip
        Now, we divide $\overline{x}$ into $\overline{x^\prime}$ and $\overline{y}$ two disjoint parts of variables, where $\overline{y}$ represents the placeholder for axioms and $\overline{x^\prime}$ represents the rest. 
        
        By \Cref{theorem: LST25}, we know that $\varphi_{n,s,l,\Delta}^\scnf$, which contains $\varphi_{n,s,l,\Delta}^\scnf$ that denotes the $\SCNF$ encoding of the circuit equation $\VNP=\VAC^0(n,s,l,\Delta)$ is unsatisfiable. Therefore, by \Cref{lemma: unsat SCNF formula has VNP refutation}, there exists a $\VNP$-$\IPS^\alg$ refutation for $\varphi_{n,s,l,\Delta}^\scnf$.

        Note that the Permanent polynomial $\perm(\overline{x})$ of dimension $m= 6(L^\prime+P^\prime)$ is complete for $\VNP$ with $m/6 = L^\prime+P^\prime$ variables which is the number of variables in $\IPS_\refute^\alg(t,l,\Delta, \varphi_{n,s,l,\Delta}^\scnf)$ \cite{Val79:ComplClass}. Therefore, there exists an substitution $\overline{\alpha}$ such that $\perm(\overline{x}\restriction_{\overline{\alpha}})$ computes exactly a $\VNP$-$\IPS^\alg$ refutation of $\varphi_{n,s,l,\Delta}^\scnf$.
 
        Notice that under substitution $\overline{\alpha}$, $\overline{y}$ must be unassigned because they are the variables representing placeholders which means $\perm(\overline{x}\restriction_{\overline{\alpha}})=\perm(\overline{x^\prime}\restriction_{\overline{\alpha}}, \overline{y})$.
        
        We use $\overline{y}\restriction_{\overline{0}}$ to represent that all $\overline{y}$ variables are assigned zero, and $\overline{y}\restriction_{\varphi_{n,s,l,\Delta}^\scnf}$ to represent that each $y_i$ is replaced by the corresponding $i$th axiom from $\varphi_{n,s,l,\Delta}^\scnf$.

        By \Cref{claim: main part of GP in IPS}, from 
        \[
            \VNP = \VAC^0(m,t,l,\Delta),
        \]
        we can derive the following circuit equation,
        \begin{equation}
            \label{equa: equation for refutation predicate 1}
            \coeff_M(U(\overline{x^\prime}\restriction_{\overline{\alpha}}, \overline{y}\restriction_{\overline{0}},\overline{w})) = \coeff_M(\perm(\overline{x^\prime}\restriction_{\overline{\alpha}},\overline{y}\restriction_{\overline{0}}))
        \end{equation}
        where $\overline{\alpha}$ is the substitution such that $\perm(\overline{x}\restriction_{\overline{\alpha}})$ computes exactly the $\VNP$-$\IPS^\alg$ refutation of $\varphi_{n,s,l,\Delta}^\scnf$.

        Therefore, $\perm(\overline{x^\prime}\restriction_{\overline{\alpha}},\overline{y}\restriction_{\overline{0}})$ is the $\IPS^\alg$ refutation of $\varphi_{n,s,l,d}^\scnf$ with placeholder variables replaced with all zero. By the definition of $\IPS^\alg$, we have the following polynomial identity,
        \begin{equation}
            \label{equa: equation for refutation predicate 2}
            \coeff_M(\perm(\overline{x^\prime}\restriction_{\overline{\alpha}},\overline{y}\restriction_{\overline{0}})) = 0\,.
        \end{equation}

        Combining \Cref{equa: equation for refutation predicate 1} and \Cref{equa: equation for refutation predicate 2}, we have
        \begin{equation}
            \label{equa: first half of refutation predicate}
            \coeff_M(U(\overline{x^\prime}\restriction_{\overline{\alpha}}, \overline{y}\restriction_{\overline{0}},\overline{w})) = 0\,.
        \end{equation}
        Again, by \Cref{claim: main part of GP in IPS}, we have
        \begin{equation}
            \label{equa: equation for refutation predicate 3}
            \coeff_M(U(\overline{x^\prime}\restriction_{\overline{\alpha}},\overline{y}\restriction_{\varphi_{n,s,l,\Delta}^\scnf}, \overline{w})) = \coeff_M(\perm(\overline{x^\prime}\restriction_{\overline{\alpha}}, \overline{y}\restriction_{\varphi_{n,s,l,\Delta}^\scnf}))\,.
        \end{equation}

        Note that $\perm(\overline{x^\prime}\restriction_{\overline{\alpha}}, \overline{y}\restriction_{\varphi_{n,s,l,\Delta}^\scnf})$ is the $\IPS^\alg$ refutation of $\varphi_{n,s,l,\Delta}^\scnf$ with placeholder variables replaced with all axioms in $\varphi_{n,s,l,\Delta}^\scnf$. By the definition of $\IPS^\alg$, we have the following polynomial identity,
        \begin{equation}
            \label{equa: equation for refutation predicate 4}
            \coeff_M(\perm(\overline{x^\prime}\restriction_{\overline{\alpha}}, \overline{y}\restriction_{\varphi_{n,s,l,\Delta}^\scnf})) 
            =  \begin{cases}
                 1,\ M_i =1  \ (\ie, \text{the constant 1 monomial});\\
                0,\ \text{otherwise},
            \end{cases}\,.
        \end{equation}

        Combining \Cref{equa: equation for refutation predicate 3} and \Cref{equa: equation for refutation predicate 4}, we have
        \begin{equation}
            \label{equa: second half of refutation predicate}
            \coeff_M(U(\overline{x^\prime}\restriction_{\overline{\alpha}},\overline{y}\restriction_{\varphi_{n,s,l,\Delta}^\scnf}, \overline{w})) = \begin{cases}
                 1,\ M_i =1  \ (\ie, \text{the constant 1 monomial});\\
                0,\ \text{otherwise}.
            \end{cases}
        \end{equation}

        Note that \Cref{equa: first half of refutation predicate} and \Cref{equa: second half of refutation predicate} are exactly the circuit equations in $\IPS_\refute^\alg(t,l, \Delta,\varphi_{n,s,l,\Delta}^\scnf)$. Also, note that we can prove \Cref{equa: first half of refutation predicate} and \Cref{equa: second half of refutation predicate} for every monomial $M$ in parallel, and there are only a constant many polynomial identities and derivations in the proof for each $M$. Moreover, the depth of each polynomial identity and derivation is bounded by $O(\Delta)$. Hence, we can conclude that there exist constants $c_5$ and $\Delta_2= O(\Delta)$ such that
        \begin{equation}
            \underbrace{\varphi_{m,t,l, \Delta}^\scnf}_{\gamma} \sststile{\IPS^\alg}{|\gamma|^{c_5}, \Delta_2} \IPS_\refute^\alg(t,l,\Delta, \varphi_{n,s,l,\Delta}^\scnf)\,.
        \end{equation}
        Note that $\varphi_{m, t,l,\Delta}^\scnf$ already includes all the field axioms of variables in $ \IPS_\refute^\alg(t,l,\Delta, \varphi_{n,s,l,\Delta}^\scnf)$. By \Cref{lemma: Translate semi-CNFs from circuit equations in Fixed Finite Fields}, we can derive $\scnf(\IPS_\refute^\alg(t,l,\Delta, \varphi_{n,s,l,\Delta}^\scnf))$ in depth-$O(\Delta)$ polynomial-size $\IPS^\alg$.
        
        We can conclude that there exist constants $c_3$ and $\Delta^{\prime\prime\prime}$ such that \begin{equation}
            \underbrace{\varphi_{m,t,l, \Delta}^\scnf}_{\gamma} \sststile{\IPS^\alg}{|\gamma|^{c_3}, \Delta^{\prime\prime\prime}} \scnf(\IPS_\refute^\alg(t,l,\Delta, \varphi_{n,s,l,\Delta}^\scnf))\,.
        \end{equation}
\end{proof}

Theorem \ref{thm:mainthm2} in the introduction which uses a fixed prime field of size $p$ follows from Theorem \ref{theorem: main theorem} by setting $\psi_{d, d',n} = \Phi_{t,l,\Delta^\prime,n,s,\Delta}$, where $d = \Delta, d' = \Delta'$ and $s$ and $t$ are chosen to be appropriate polynomially bounded functions as in the statement of Theorem \ref{theorem: main theorem}. 

%\Iddo{Not needed since it is an informal (thm 1.1) and also we already explained how it is proved.}

To get the no-short proof result against \ACZP-Frege we use the following simulation:
\begin{theorem}[Depth-preserving simulation of Frege systems by the Ideal proof system \cite{GP18}]\label{thm:cdIPS simulates AC0p-Frege}
    \label{Depth-preserving simulation of Frege systems by the Ideal proof system}
    Let $p$ be prime and $\mathbb{F}$ any field of characteristic $p$. Then $\IPS_\mathbb{F}$ p-simulates  $\ACZP$-Frege in such a way that depth-$d$ $\ACZP$-Frege proofs are simulated by depth-$O(d)$ $\IPS_\mathbb{F}$ proofs. In particular, $\ACZP$-Frege is p-simulated by bounded-depth $\IPS_\mathbb{F}$.
\end{theorem}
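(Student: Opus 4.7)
\bigskip

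\noindent\emph{Proof proposal.}
The plan is to arithmetize each line of an $\ACZP$-Frege proof into an algebraic circuit over $\F$ and then translate the Frege inferences into short $\IPS_\F$ derivations of constant additional depth per step. Concretely, given an $\ACZP$-Frege proof $\pi=(f_1,\dots,f_m)$ ending in a DNF tautology $\phi$, the goal is to exhibit a single $\IPS_\F$ refutation (i.e.\ one algebraic circuit) of the CNF $\neg\phi$ together with the Boolean axioms $x_i^2-x_i=0$, of size polynomial in $|\pi|$ and of depth $O(d)$ whenever each $f_i$ has depth at most $d$.

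First, I would fix a standard arithmetization $\mathcal{A}$ sending each $\ACZP$ formula to an algebraic circuit over $\F$: propositional variables map to themselves (viewed as $0/1$-valued); $\neg g$ maps to $1-\mathcal{A}(g)$; unbounded fan-in $\wedge$ and $\vee$ map to $\prod_i\mathcal{A}(g_i)$ and $1-\prod_i(1-\mathcal{A}(g_i))$ respectively; and $\mathrm{MOD}_p^i(g_1,\dots,g_k)$ maps to $1-\bigl(\sum_j \mathcal{A}(g_j) - i\bigr)^{p-1}$, which by Fermat's little theorem evaluates to $1$ on Boolean inputs satisfying the $\mathrm{MOD}_p^i$ predicate and to $0$ otherwise. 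Each $\ACZP$ gate is replaced by a constant-depth polynomial gadget (the $\mathrm{MOD}_p^i$ gadget has depth depending only on $p$), so a depth-$d$ $\ACZP$ formula becomes a depth-$O(d)$ algebraic circuit, with size blow-up polynomial in the formula size.

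Next, I would show that every Frege axiom and inference-rule instance, after arithmetization, admits a uniform $\IPS_\F$ derivation of constant size and constant \emph{extra} depth: if the rule is $B_1,\dots,B_n\Rightarrow B$, then by soundness $\mathcal{A}(B)-1$ lies in the ideal generated by $\{\mathcal{A}(B_i)-1\}_i$ together with the Boolean axioms $\{x^2-x\}$, and since the system uses only finitely many rule schemas of bounded size, for each schema one writes down, once and for all, a small polynomial identity witnessing the ideal membership. Instantiating such a schema on actual subformulas contributes only constant depth on top of the depth-$O(d)$ already present in $\mathcal{A}(B_i)$'s, so the local certificate for one inference has depth $O(d)$. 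Because an $\IPS_\F$ refutation is permitted to be a \emph{single} circuit with unbounded fan-in, I would then glue the $m$ local certificates by summing them with coefficients provided by the arithmetized proof lines (as in the standard Cook-style translation), using one top-level addition layer. This keeps the overall depth at $O(d)$ and the overall size polynomial in $|\pi|$. A final constant-depth step combines the derivation of $\mathcal{A}(\phi)-1=0$ with the algebraic translation of $\neg\phi$ (per \Cref{def: Algebraic translation of CNF formulas}) to produce $1=0$.

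The main obstacle is the depth bookkeeping: one has to verify that the composition of per-inference local certificates does not accumulate depth with $m$, only with $d$. The key is that each local certificate is expressible as a sum of products of (arithmetized premises and axioms) with constant-depth coefficient polynomials, and that an IPS certificate can legitimately collect all of these into one unbounded fan-in sum. The role of characteristic $p$ is essential precisely in the $\mathrm{MOD}_p$ gadget, where Fermat's little theorem is what keeps the gadget constant depth and polynomially sized; over fields of other characteristic this step would fail, which is why the simulation is specific to $\F$ of characteristic $p$.
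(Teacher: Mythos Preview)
This theorem is stated in the paper purely as a citation of \cite{GP18} and is not proved here, so there is no in-paper argument to compare against. Your reconstruction tracks the Grochow--Pitassi approach: arithmetize each $\ACZP$ connective by a constant-depth gadget over $\F$ (handling $\mathrm{MOD}_p^i$ via Fermat's little theorem, which is exactly where characteristic $p$ enters), write down once and for all a fixed Nullstellensatz witness for every rule schema, and assemble these into a single IPS circuit.

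You correctly flag the depth bookkeeping as the crux, but your resolution is not fully justified. Saying the global certificate is ``one top-level addition layer'' over products of local coefficients gives the right \emph{depth}, yet you have not bounded the \emph{size}. Unrolling the recursion
\[
1-\mathcal{A}(f_i)=\sum_\ell g_\ell^{(i)}\bigl(1-\mathcal{A}(f_{j_\ell})\bigr)+B_i
\]
all the way to the axioms expresses each axiom's coefficient as a sum, over all derivation paths in the proof DAG, of products of the $g_\ell^{(i)}$'s; each such product indeed has depth $O(d)+1$ (one unbounded-fan-in product over depth-$O(d)$ factors) and one outer sum keeps the total depth $O(d)$---but the number of paths can be exponential in $m$. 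The obvious alternative, computing these coefficients by dynamic programming along the proof DAG with shared subcircuits, has polynomial size but contributes depth proportional to the proof-DAG depth, which is not bounded in terms of $d$. Your sketch asserts that these two difficulties can be reconciled (``collect all of these into one unbounded fan-in sum'') without explaining how both polynomial size and $O(d)$ depth are achieved simultaneously. This is precisely the nontrivial content of the ``depth-preserving'' qualifier, and it needs a real argument (for instance, an explicit description of the IPS certificate that avoids the line-by-line recursion, or a structural property of the coefficient polynomials that yields shared depth-$O(d)$ subcircuits) rather than an appeal to a ``standard Cook-style translation''.
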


Therefore, a corollary of \Cref{theorem: main theorem} is (see the argument in \Cref{sec:framework-results} that comes after \Cref{thm:mainthm2}):

\begin{corollary}\label{cor:ACZ-p-stuff}
    For every prime $p$ there is an explicit sequence $\{\phi_n\}$ of DNF formulas (of unknown validity) such that there are no polynomial-size \ACZP-Frege proofs of $\{\phi_n\}$.
\end{corollary}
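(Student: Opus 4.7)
The plan is to derive the corollary by combining Theorem~\ref{theorem: main theorem} (with $q=p$) with the depth-preserving simulation of $\ACZP$-Frege by bounded-depth $\IPS$ over fields of characteristic $p$ (Theorem~\ref{thm:cdIPS simulates AC0p-Frege}). Fix the prime $p$ and work over $\F_p$. Choose any constant $\Delta$, and let $c_1$ and $\Delta^\prime$ be the constants supplied by Theorem~\ref{theorem: main theorem} for this $\Delta$. Fix any sufficiently large $c_2$ and, for every $n$, choose $s(n)$ with $n^{c_1}<s(n)<n^{c_2}$ and $t(n)>|\varphi_{n,s,l,\Delta}^\scnf|^{c_1}$ along the infinite subsequence guaranteed by Theorem~\ref{theorem: main theorem}. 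Define the DNF sequence by
\[
\phi_n \;:=\; \neg\, \Phi_{t,l,\Delta^\prime,n,s,\Delta},
\]
i.e., the De~Morgan dual of the diagonalizing CNF. Explicitness of $\{\phi_n\}$ is inherited from the explicit description of $\Phi_{t,l,\Delta^\prime,n,s,\Delta}$, and its validity status is exactly that of the unsatisfiability of $\Phi_{t,l,\Delta^\prime,n,s,\Delta}$, which is open.

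The core step is a contradiction argument. Assume, for contradiction, that $\{\phi_n\}$ admits polynomial-size $\ACZP$-Frege proofs almost everywhere. Since an $\ACZP$-Frege proof system is parameterised by a constant depth $d$, such a proof sequence has proofs of some fixed depth $d$ and size at most $|\phi_n|^{c_0}$ for some constant $c_0$. Because proving the DNF $\phi_n = \neg \Phi_{t,l,\Delta^\prime,n,s,\Delta}$ is, via De~Morgan, the same task as refuting the CNF $\Phi_{t,l,\Delta^\prime,n,s,\Delta}$, we obtain depth-$d$ polynomial-size $\ACZP$-Frege refutations of $\Phi_{t,l,\Delta^\prime,n,s,\Delta}$ almost everywhere.

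Apply Theorem~\ref{thm:cdIPS simulates AC0p-Frege} to translate each such refutation into a constant-depth $\IPS$ refutation over $\F_p$ of depth $\Delta^{\prime\prime} = O(d)$ and polynomial size, say $|\Phi_{t,l,\Delta^\prime,n,s,\Delta}|^{c_0^\prime}$ for some constant $c_0^\prime$ depending only on $c_0$ and the simulation overhead. Since the constants $\Delta$, $\Delta^\prime$ and $c_1$ were fixed up front, and since Theorem~\ref{theorem: main theorem} allows $c_0$ and $\Delta^{\prime\prime}$ to be chosen arbitrarily \emph{after} $\Delta^\prime$ and $c_1$, we may take these to be $c_0^\prime$ and the $O(d)$ above. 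Theorem~\ref{theorem: main theorem} then asserts that $\Phi_{t,l,\Delta^\prime,n,s,\Delta}$ has no $\IPS$ refutation of size at most $|\Phi_{t,l,\Delta^\prime,n,s,\Delta}|^{c_0^\prime}$ and depth at most $\Delta^{\prime\prime}$, for infinitely many $n$, contradicting the translated proofs.

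The only delicate point is the order of quantifiers: Theorem~\ref{theorem: main theorem} fixes $\Delta^\prime$ and $c_1$ first, and then lets us contradict any polynomial size bound $c_0^\prime$ and any constant depth $\Delta^{\prime\prime}$. This is exactly what a polynomial-size $\ACZP$-Frege upper bound gives us, since ``polynomial-size $\ACZP$-Frege'' implicitly fixes a constant depth $d$ and exponent $c_0$ for the whole sequence. Thus there are no such matching constants, and the assumed polynomial-size $\ACZP$-Frege proof sequence fails infinitely often, establishing the corollary. The main conceptual work has already been carried out in Theorem~\ref{theorem: main theorem}; the present argument is essentially a syntactic packaging using De~Morgan and the depth-preserving simulation.
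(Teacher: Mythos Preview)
Your proposal is correct and follows essentially the same approach as the paper: define $\phi_n$ as the negation of the diagonalizing CNF $\Phi_{t,l,\Delta',n,s,\Delta}$, then combine Theorem~\ref{theorem: main theorem} with the depth-preserving simulation of $\ACZP$-Frege by bounded-depth $\IPS$ (Theorem~\ref{thm:cdIPS simulates AC0p-Frege}) to rule out polynomial-size $\ACZP$-Frege proofs. Your attention to the quantifier order is on point and matches the paper's intended argument.
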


%Theorem \ref{thm:mainthm1} follows from Theorem \ref{thm:mainthm2} because the $\psi$ formulas are explicit formulas that are tautologies under the constant-depth algebraic analogue Kraj\'{i}\v{c}ek-Razborov conjecture, and by using the following result that $\IPS$ simulates $\ACZP$-Frege~\cite{GP18}.

%Based on the above theorem given by Grochow and Pitassi, we are able to derive the following corollary from our main theorem.

%\begin{corollary}[No Short $\ACZP$-Frege Proofs for Conjectured CNF Formulas]
    %\label{corollary: No Short ac0p-Proofs for Conjectured CNF Formulas}
    %The $\CNF$ family $\{\Phi_{t,l,\Delta^\prime,n,s,\Delta}\}$ does not have polynomial-size $\ACZP$-Frege refutations infinitely often over $\mathbb{F}_q$, for every constant prime $q$, in the following sense: 
        %for every constant $\Delta$ there exists a constant $c_1$, a constant $\Delta^\prime$ such that for every sufficiently large constant $c_2$ and every constant $\Delta^{\prime\prime}$ and every constant $c_0$, for infinitely many $n, t(n), s(n) \in \mathbb{N}$, $t(n) > |\varphi_{n,s,l,\Delta}^\ast|^{c_1}$ and $n^{c_1} < s(n) < n^{c_2}$, $\Phi_{t,l,\Delta^\prime,n,s,\Delta}$ has no $\ACZP$-Frege refutation of size at most $|\Phi_{t,l,\Delta^\prime,n,s,\Delta}|^{c_0}$ and depth at most $\Delta^{\prime\prime}$.
%\end{corollary}

%%%%%%%%%%%%%%%%%%%%%%%%%%%%%%%%%%%%%%%%%%%%%
\subsection{Ruling Out Easiness for Diagonalizing CNFs is Necessary}
%%%%%%%%%%%%%%%%%%%%%%%%%%%%%%%%%%%%%%%%%%%%

    We now turn to establishing Theorem \ref{thm:mainthm3}. We first define the notion of a ``reasonable" circuit class. 

\begin{definition}
We say $\mathcal{C}$ is a \emph{reasonable}
algebraic circuit class if $\mathcal{C}$-$\IPS$ can efficiently prove the $\mathcal{C}$-$\IPS$ analogue of Lemma \ref{lemma: grochow-pitassi formalization in cdIPS}, i.e., that $\mathcal{C}$ lower bounds for Permanent follow from $\mathcal{C}$-$\IPS$ lower bounds for $\CNF$ formulas. 
    %\item $\mathcal{C}$-$\IPS$ is sound. In other words, any $\CNF$ formula that is refutable by $\mathcal{C}$-$\IPS$ is unsatisfiable.
\end{definition}

The main result of \cite{ST25} can be seen as showing that the class of general algebraic circuits is reasonable, and \Cref{theorem: main theorem} shows that the class of constant-depth algebraic circuits is reasonable as well. The proof of \Cref{theorem: main theorem}  establishes that every natural algebraic class intermediate in power between constant-depth circuits and general circuits is reasonable as well.

We observe that Lemma \ref{lemma: unsat CNF has VNP refutation} can be generalised for an arbitrary algebraic circuit class $\mathcal{C}$.

\begin{theorem}[Grochow-Pitassi for $\mathcal{C}$]
    \label{theorem: grochow-pitassi for C}
    Let $\mathcal{C}$ be any algebraic circuit class. For any field $\mathbb{F}$, if $\mathcal{C}$-$\IPS$ is not p-bounded, namely there exists a super-polynomial lower bound on algebraic $\mathcal{C}$-$\IPS$ refutations (hence also on $\mathcal{C}$-$\IPS$ refutations) over $\mathbb{F}$ for a family of unsatisfiable $\CNF$ formulas $\{\phi_n\}_n$, then $\VNP_\mathbb{F} \neq \mathcal{C}_\mathbb{F}$.
\end{theorem}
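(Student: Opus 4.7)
The plan is to argue the contrapositive and mimic the proof of \Cref{theorem: IPS lower bound implies VNP neq VP}, replacing $\VP$ throughout by the arbitrary class $\mathcal{C}$. Assuming $\VNP_\mathbb{F} = \mathcal{C}_\mathbb{F}$, the goal is to show that every family of unsatisfiable $\CNF$ formulas over $\mathbb{F}$ admits polynomial-size $\mathcal{C}$-$\IPS$ refutations, which contradicts the hypothesised super-polynomial $\mathcal{C}$-$\IPS$ lower bound on $\{\phi_n\}_n$.

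First I would invoke \Cref{lemma: unsat CNF has VNP refutation}, which supplies, for every family $(\phi_n)$ of unsatisfiable $\CNF$ formulas, a family of $\IPS^\alg$ (hence $\IPS$) certificates $(C_n)$ whose underlying polynomials lie in $\VNP_\mathbb{F}$. The key observation is that the defining properties of an $\IPS$ certificate -- namely, that $C_n$ vanishes when all placeholder variables are set to $0$, and reduces to the constant polynomial $1$ when those placeholders are substituted with the axioms of $\phi_n$ together with the Boolean axioms $x_i^2 - x_i$ -- depend only on the polynomial computed by $C_n$, and not on the particular circuit realising it. Consequently, if the same polynomial is expressible as a $\mathcal{C}$-circuit of polynomial size, then that $\mathcal{C}$-circuit is itself a $\mathcal{C}$-$\IPS$ refutation of $\phi_n$ of polynomial size.

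Under the assumption $\VNP_\mathbb{F} = \mathcal{C}_\mathbb{F}$, each $C_n$ is indeed computable by a polynomial-size $\mathcal{C}$-circuit, and so every family of unsatisfiable $\CNF$ formulas over $\mathbb{F}$ admits polynomial-size $\mathcal{C}$-$\IPS$ refutations. This contradicts the hypothesised super-polynomial $\mathcal{C}$-$\IPS$ lower bound on $\{\phi_n\}_n$, yielding $\VNP_\mathbb{F} \neq \mathcal{C}_\mathbb{F}$.

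The one subtlety I would check carefully -- and it is essentially the only place where a mistake could creep in -- is that ``polynomial size'' is measured in the right parameter: the certificate produced by \Cref{lemma: unsat CNF has VNP refutation} involves $\poly(|\phi_n|)$ variables of $\poly(|\phi_n|)$ degree, which matches the parameter used in the definitions of polynomial-size $\VNP$- and $\mathcal{C}$-families, so the identification $\VNP_\mathbb{F} = \mathcal{C}_\mathbb{F}$ translates the certificate into a $\mathcal{C}$-refutation of size polynomial in $|\phi_n|$ (not merely in the number of variables of $C_n$). No further structural hypothesis on $\mathcal{C}$ (such as the ``reasonableness'' condition required for \Cref{thm:mainthm3}) is needed for this direction, since the argument is purely a representation-change of a single polynomial.
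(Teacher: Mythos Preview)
Your proposal is correct and essentially identical to the paper's proof: both invoke \Cref{lemma: unsat CNF has VNP refutation} to obtain a $\VNP$ certificate and then observe that if this certificate cannot be realised as a polynomial-size $\mathcal{C}$-circuit then $\VNP_\mathbb{F}\neq\mathcal{C}_\mathbb{F}$. The only cosmetic difference is that you phrase it as a contrapositive while the paper argues directly, and your added remark about the size parameter being $|\phi_n|$ is a harmless sanity check that the paper leaves implicit.
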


\begin{proof}
    By \Cref{lemma: unsat CNF has VNP refutation}, any family of unsatisfiable $\CNF$ formulas $\{\phi_n\}_n$ has an $\IPS$ refutation that is computable in $\VNP$. However, since $\mathcal{C}$-$\IPS$ is not p-bounded, there exists a family of unsatisfiable $\CNF$ formulas $\{\phi_n\}_n$ that require super-polynomial size $\mathcal{C}$-$\IPS$ proofs. Hence no $\IPS$ refutation for the family $\{\phi_n\}_n$ is in $\mathcal{C}$. Thus we have that the $\VNP$ refutation is not in $\mathcal{C}$, and hence that $\VNP \neq \mathcal{C}$.
\end{proof}

Now, we show that the $\mathcal{C}$-analogue of  Theorem \ref{thm:mainthm2} is necessary to prove $\mathcal{C}$-$\IPS$ lower bounds for unsatisfiable CNFs, when $\mathcal{C}$ is a reasonable algebraic circuit class.

    We use $\{\Phi^\mathcal{C}_n\}$ to denote the $\mathcal{C}$ analogue of $\{\Phi_{t,l,\Delta^\prime,n,s,\Delta}\}$ from \Cref{theorem: main theorem}.

    \begin{theorem}[Necessity of the main theorem]
        \label{theorem: necessity}
        Let $\mathcal{C}$ be any reasonable algebraic circuit class. Let $p$ be a sequence of primes, and $\mathbb{F}_p$ be the prime field. If $\mathcal{C}$-$\IPS$ is not p-bounded over $\F_p$, which means there is a super-polynomial lower bound on algebraic $\mathcal{C}$-$\IPS$ refutations (hence also on $\mathcal{C}$-$\IPS$ refutations) over $\mathbb{F}_{p}$ for a family of unsatisfiable $\CNF$ formulas $\{\phi_n\}_n$, then the $\CNF$ family $\{\Phi^\mathcal{C}_n\}$ does not have polynomial-size $\mathcal{C}$-$\IPS$ refutations infinitely often over $\mathbb{F}_{p}$ in the following sense: 
        there exists a constant $c_1$ such that for every sufficiently large constant $c_2$ and every constant $c_0$, for infinitely many $n, t(n), s(n) \in \mathbb{N}$, $t(n) > 2^{(n^{c_1})}$ and $n^{c_1} < s(n) < n^{c_2}$, $\Phi^\mathcal{C}_n$ has no $\mathcal{C}$-$\IPS$ refutation over $\mathbb{F}_{p}$ of size at most $|\Phi^\mathcal{C}_n|^{c_0}$.
    \end{theorem}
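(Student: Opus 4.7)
The plan is to mirror the argument of \Cref{theorem: main theorem}, with the Permanent lower bound---there supplied unconditionally by Limaye--Srinivasan--Tavenas---now obtained from the hypothesis via the Grochow--Pitassi implication for the class $\mathcal{C}$ (\Cref{theorem: grochow-pitassi for C}). I argue by contradiction: I suppose both that $\mathcal{C}$-$\IPS$ is not p-bounded and that $\{\Phi^\mathcal{C}_n\}$ admits polynomial-size $\mathcal{C}$-$\IPS$ refutations almost everywhere, and aim to derive a contradiction.

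First, from the assumption that $\mathcal{C}$-$\IPS$ is not p-bounded, \Cref{theorem: grochow-pitassi for C} yields $\VNP_{\mathbb{F}_{p}}\neq\mathcal{C}_{\mathbb{F}_{p}}$, so the Permanent lies outside $\mathcal{C}$. Consequently, for the parameters $(m,t,l,\Delta)$ of interest, $\varphi^{\scnf}_{m,t,l,\Delta}$ is unsatisfiable and, by \Cref{lemma: unsat SCNF formula has VNP refutation}, admits a $\VNP$-$\IPS$ refutation. This is exactly the ingredient used in the proof of \Cref{lemma: grochow-pitassi formalization in cdIPS}; since $\mathcal{C}$ is \emph{reasonable} by assumption, the $\mathcal{C}$-analogue of that lemma applies and gives a polynomial-size $\mathcal{C}$-$\IPS^{\alg}$ derivation
\[
\varphi^{\scnf}_{m,t,l,\Delta}\ \vdash\ \scnf\bigl(\IPS^{\mathcal{C}}_{\refute}(t,l,\varphi^{\scnf}_{n,s,l,\Delta})\bigr).
\]

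Next, I will convert the assumed polynomial-size $\mathcal{C}$-$\IPS$ refutation of $\Phi^\mathcal{C}_n=\cnf\bigl(\IPS^{\mathcal{C}}_{\refute}(t,l,\varphi^{\scnf}_{n,s,l,\Delta})\bigr)$ into one for its $\scnf$-encoded form via \Cref{lemma: Translate semi-CNFs from circuit equations in Fixed Finite Fields}, and then cut it against the derivation above to obtain a polynomial-size $\mathcal{C}$-$\IPS^{\alg}$ refutation of $\varphi^{\scnf}_{m,t,l,\Delta}$ itself. Equivalently, for some polynomially-bounded $w$ the system $\IPS^{\mathcal{C}}_{\refute}(w,l,\varphi^{\scnf}_{m,t,l,\Delta})$ is \emph{satisfiable}. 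I will then invoke the contradiction hypothesis a second time, now at parameters $n:=m$ and $s:=t$: the formula $\Phi^\mathcal{C}_m=\cnf\bigl(\IPS^{\mathcal{C}}_{\refute}(w,l,\varphi^{\scnf}_{m,t,l,\Delta})\bigr)$ also admits a polynomial-size $\mathcal{C}$-$\IPS$ refutation, so by soundness of $\mathcal{C}$-$\IPS$ together with \Cref{prop: C <-> cnf <-> ecnf fixed finite fields} the same system is \emph{unsatisfiable}, giving the desired contradiction.

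The hard part will be parameter bookkeeping: a single refutation-size $w$ must simultaneously fit the polynomial blow-up incurred by the $\mathcal{C}$-analogue of \Cref{lemma: grochow-pitassi formalization in cdIPS} and also fall inside the size window $t(n)>2^{n^{c_1}}$, $n^{c_1}<s(n)<n^{c_2}$ required by the theorem statement. As in the proof of \Cref{theorem: main theorem}, this is handled by taking $m$ to be a sufficiently large polynomial in the variable and placeholder counts of $\varphi^{\scnf}_{n,s,l,\Delta}$, so that $\VNP$-completeness of the Permanent on $m\times m$ matrices produces a certificate of the correct form, and then selecting $w$ as an appropriate polynomial in $|\varphi^{\scnf}_{m,t,l,\Delta}|$.
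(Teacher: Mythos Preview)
Your proposal is correct and follows essentially the same approach as the paper: first use \Cref{theorem: grochow-pitassi for C} to extract $\VNP\neq\mathcal{C}$ from the non-p-boundedness hypothesis, then rerun the diagonalization argument of \Cref{theorem: main theorem} using the $\mathcal{C}$-analogue of \Cref{lemma: grochow-pitassi formalization in cdIPS} supplied by reasonableness. The paper's own proof is a two-sentence sketch pointing to exactly these two implications, so your expanded argument is a faithful unpacking of it.
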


    \begin{proof}
    Our definition of a reasonable algebraic circuit class $\mathcal{C}$ abstracts out the properties of $\mathcal{C}$ required to prove an {\it implication} from $\VNP \neq \mathcal{C}$ to super-polynomial lower bounds on $\Phi^\mathcal{C}_n$ against $\mathcal{C}$-$\IPS$, using essentially the same proof as for Theorem \ref{theorem: main theorem}. By Theorem \ref{theorem: grochow-pitassi for C}, super-polynomial $\mathcal{C}$-$\IPS$ lower bounds for any sequence $\{\phi_n\}_n$ of unsatisfiable CNFs implies $\VNP \neq \mathcal{C}$. The desired result follows from combining these two implications. 
    \end{proof}

    Theorem \ref{theorem: necessity} is the formal version of Theorem \ref{thm:mainthm3} in the Introduction.

%%%%%%%%%%%%%%%%%%%%%%%%%%%%%%%%%%%%%%%%%%%
\section{Supporting Evidence for the Diagonalizing CNFs as Unsatisfiable}\label{sec:supporting-evidence}
%%%%%%%%%%%%%%%%%%%%%%%%%%%%%%%%%%%%%%%%%%%

In this section, we present two results providing supporting evidence that the diagonalizing CNF $\Phi$ is unsatisfiable. 

\subsection{Tensor Rank Hardness Entails that $\Phi$ is Unsatisfiable}

Here we show that a lower bound against constant-depth IPS refutations of a formula expressing a tensor with tensor rank $m$ cannot be decomposed to the summation of $n$ rank-$1$ tensors, implying that the diagonalization formula $\Phi$ is unsatisfiable.

It is now known from \cite{LST25} that basic linear-algebraic operations such as matrix rank, determinant, and plausibly solving systems of linear equations cannot be efficiently carried out by constant-depth algebraic circuits, i.e., $\VAC^0$. The rank of an \(n \times n\) matrix over a field is only known to be computable in uniform \NCTwo\ via Mulmuley’s parallel algorithm~\cite{Mul86} (in fact, integer determinant is computable already in $\#\SAC^1$ \cite{Coo85}). In contrast, determining the rank of a 3-dimensional tensor is known to be NP-hard~\cite{Has90, Shi16}.

    The following is the rank principle as studied in \cite{GGRT25} (cf.~\cite{SU04,Kra09,GGPS23}).

    \begin{definition}[Rank Principle]
        \label{def: rank principle}
        Let $\F$ be a field and $m,n$ be two positive integers such that $m > n$. We denote $\RankP^m_n(A)$ the system of degree-$2$ polynomial equations stating that the rank of an $m \times m$ matrix $A$ is at most $n$ over $\F$. More precisely, the \emph{rank principle} $\RankP_n^m(A)$ is defined over $2mn$ variables arranged into two matrices $X \in \F^{m \times n}$ and $Y \in \F^{n \times m}$, For every $i,j \in [m]$, there is an equation in $\RankP_n^m(A)$ stating that the $(i,j)$th entry of the product $XY$ is equal to $A_{i,h}$. That is
        \begin{equation}
            \label{equa: rank principle}
            \sum_{k \in [n]} x_{i,k} y_{k,j} - A_{i,j}, \qquad i,j \in [m].
        \end{equation}
    \end{definition}
    By linear algebra, when the rank of $A$ exceeds $n$, $\RankP_n^m(A)$ is unsatisfiable.
    The work of \cite{GGLST25} considers a generalisation of the rank principle, which they call the tensor rank principle.

    \begin{definition}[Tensor Rank Principle]
    \label{def: tensor rank principle}
    Let $\F$ be a field and $m, n$ be two positive integers such that $m > n$. We denote $\TRankP_{m,n}^r(A)$ the system of degree-$r$ polynomial equations stating that the tensor rank of the $r$-tensor $A \in \F^{(\overbrace{m \times \cdots \times m}^{r \text{ times}})}$ is at most $n$ over $\F$. 
    More precisely, the \emph{$r$th-order tensor rank principle} $\TRankP_{m,n}^r(A)$ is defined over $rmn$ variables arranged into $r$ matrices $X_1, \dots, X_r \in \{0,1\}^{m \times n}$ where each matrix $X_i$ can be viewed as $n$ vectors $\x_{j,k} \in \{0,1\}^m$ for $k \in [n]$ and $j \in [r]$. For every $i_1, \dots, i_r \in [m]$ (not necessarily distinct), there is an equation in $\TRankP_{m,n}^r (A)$ stating that the $(i_1, \dots, i_r)$th entry of the summation of the tensor product $\bigotimes_{j=1}^r \x_{j,k}$ is equal to $A_{i_1, \dots, i_r}$ (where $\otimes$ denotes the tensor (outer) product of vectors). That is,
    \begin{equation}
        \label{equa: tensor rank principle}
        \sum_{k \in [n]} \prod_{j \in [r]} x_{j,i_j, k} = A_{i_1, \dots, i_r}, \qquad i_1, \dots, i_r \in [m],
    \end{equation}
    where $x_{j,i_j, k}$ denote the $(i_j,k)$th entry of the matrix $X_j$.

    Additionally, for every $i \in [m]$, $k \in [n]$ and $j \in [r]$, we have a Boolean axiom $x_{j,i,k}^2 - x_{j,i,k}$.
    Namely, $\TRankP_{m,n}^r(A)$ is the set of polynomial equations stating that
    \[
        \sum_{k \in [n]} \bigotimes_{j=1}^r \x_{j,k} = A.
    \]
\end{definition}
\bigskip 
    
    Note that $\bigotimes_{j=1}^r \x_{j,k}$ is the $r$th order tensor of rank $1$ obtained by the outer product of all the $k$th columns in $X_1,\dots,X_r$. Hence, by basic algebra, it follows that the sum of $n$ rank 1 tensors $\sum_{k \in [n]} \bigotimes_{j=1}^r \x_{j,k}$ has tensor rank at most $n$. Thus, whenever the tensor rank of $A$ exceeds $n$, the formula $\TRankP_{m,n}^r(A)$ is unsatisfiable. For standard background on tensor rank and tensor decompositions, see the survey by Kolda and Bader~\cite{KB09}. As noted earlier, the tensor rank principle generalises the matrix equation $XY = A$, which corresponds to the rank principle. In fact, the rank principle $\RankP_n^m(A)$ is precisely the case $r=2$ of the tensor rank principle, i.e., $\TRankP_{m,n}^2(A)$.

    \cite{GGLST25} proved that $\TRankP_{m,n}^r(A)$ requires $\exp(\Omega(n))$-size $\PCR$ refutations over the two-element field.

    \begin{theorem}[\cite{GGLST25}]
        \label{theorem: exp lower bound for TRankP}
        Every $\PCR$ refutation over $\F_2$ of $\TRankP_{m,n}^r(A)$ requires $2^{cn}$ monomials for some constant $c$.
    \end{theorem}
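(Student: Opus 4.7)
The plan is to prove the $2^{cn}$-monomial lower bound in the standard two-step manner: first establish a linear degree lower bound for Polynomial Calculus refutations of $\TRankP_{m,n}^r(A)$, then lift it to a monomial-count lower bound via a random restriction argument.

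\textbf{Degree lower bound.} First I would fix an instance $A \in \F_2^{m^r}$ whose Boolean tensor rank strictly exceeds $n$. A random $A$ suffices: the set of $r$-tensors of Boolean rank at most $n$ has cardinality bounded by $2^{rmn}$, while $|\F_2^{m^r}| = 2^{m^r}$, so for $m^{r-1} \gg rn$ almost every $A$ is hard. I would then construct a pseudo-expectation $\mathcal{E} : \F_2[\overline{x}]_{\leq d} \to \F_2$ of degree $d = \Omega(n)$ that sends $1$ to $1$ and annihilates every low-degree multiple of an axiom of $\TRankP_{m,n}^r(A)$, including the Boolean axioms. The construction rests on a \emph{local consistency} lemma: every sub-collection of at most $d$ positions of $A$ admits a Boolean assignment to the matrices $X_1, \dots, X_r$ satisfying \Cref{equa: tensor rank principle} on those positions, since locally only $O(drn)$ of the $rmn$ Boolean variables are constrained, leaving abundant freedom to decompose a $d$-indexed sub-tensor as a sum of at most $n$ rank-$1$ Boolean tensors. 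Averaging such local assignments defines $\mathcal{E}$; the main combinatorial work is verifying that $\mathcal{E}$ is well-defined modulo the $x^2 - x$ axioms and vanishes on the degree-$d$ truncation of the ideal generated by the tensor axioms.

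\textbf{Lifting to size.} Given the degree lower bound, I would apply a random restriction $\rho$ that independently fixes each variable $x_{j,i,k}$ to $0$ or $1$ with moderate probability $p$. A standard union bound shows that if the hypothetical $\PCR$ refutation has at most $S = 2^{cn}$ monomials, then with positive probability $\rho$ simultaneously (a) collapses every monomial of degree $\geq O(\log S)$ to a sub-monomial of smaller degree, and (b) restricts $\TRankP_{m,n}^r(A)$ to a subinstance that remains an unsatisfiable tensor rank principle over a hard sub-tensor of $A$. The previous stage applied to the restricted instance forces its refutation degree to be $\Omega(n)$, but after $\rho$ every monomial in the refutation has degree $O(\log S)$; combining these gives $\log S = \Omega(n)$, i.e., $S \geq 2^{\Omega(n)}$, as desired.

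The main obstacle is in the second stage: unlike random restrictions of $\CNF$ formulas, a naive coordinate-wise restriction of the $X_j$ entries is likely either to preserve too much structure (failing to shrink large monomials because the axioms are highly symmetric) or to destroy the tensor-rank structure (making the restricted formula trivially satisfiable). Overcoming this will likely require restrictions that operate at the level of entire columns $\x_{j,k}$---corresponding to zeroing-out rank-$1$ summands while retaining a hard sub-tensor on the surviving coordinates---together with a careful argument that the resulting sub-tensor maintains Boolean tensor rank strictly greater than the new threshold $n'$ with high probability.
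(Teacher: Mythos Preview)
The paper does not prove this theorem; it is stated as a citation from \cite{GGLST25} with no argument given, so there is no in-paper proof to compare your proposal against.

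On its own merits, your outline has two real gaps.

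\textbf{The degree step.} You propose to build $\mathcal{E}:\F_2[\overline x]_{\le d}\to\F_2$ by ``averaging local assignments.'' Over $\F_2$ there is no averaging; the dual-certificate framework for $\PC$ degree lower bounds in characteristic~$p$ requires an $\F_p$-linear functional, and one does not obtain it from a probability distribution over local solutions but from an explicit algebraic construction (Razborov's $R$-operator, a Gr\"obner/leading-monomial argument, or similar). Your local-consistency claim is also not established by the counting you give: the $d$ constraints are degree-$r$ polynomials that all share the same $n$ column indices, so ``only $O(drn)$ variables are touched out of $rmn$'' does not by itself imply joint satisfiability.

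\textbf{The size step.} Even granting a degree bound of $\Omega(n)$, the generic size--degree tradeoff gives size $2^{\Omega((n-r)^2/N)}$ with $N=rmn$ Boolean variables, i.e.\ only $2^{\Omega(n/(rm))}$---far short of $2^{cn}$. Your random-restriction plan is meant to beat this, but you correctly flag the obstacle and do not resolve it: coordinate-wise restriction destroys the tensor-rank structure, while your proposed column-wise restriction reduces $n$ to some $n'$, whereupon the degree bound for the restricted instance drops to $\Omega(n')$ and you have not gained anything over the tradeoff. Obtaining $2^{\Omega(n)}$ for $\TRankP$ almost certainly needs either a direct embedding of a principle with a known exponential $\PCR$ size lower bound (so that hardness transfers without passing through the generic tradeoff), or a tailored restriction that kills wide monomials while leaving a sub-instance whose hardness parameter is still $\Theta(n)$---neither of which your sketch supplies.
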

    
    They also exhibit a reduction from the tensor rank principle to bounded-depth algebraic circuit upper bounds statements (so that the latter are at least as hard as the former), specifically the statement “$\perm\in\VAC^0$.”
    
    % which asserts that constant-depth $\IPS$ efficiently refutes “$\perm\in\VAC^0$.”

     \begin{definition}[Bounded-depth algebraic circuit upper bound formulas \mbox{$\AUB$}]
        \label{def: Constant-depth algebraic circuit upper bound formula}
         Let $f(\overline{x})\in \mathbb{F}[\overline{x}]$ be a polynomial with $n$-variables and degree at most $l$. The following set of polynomial equations $\AUB(f,s,l,\Delta)$ (in the $\overline{w}$-variables only) express that the polynomial $f(\overline{x})$ can be computed by a bounded-depth algebraic circuit of size $s$ and depth $\Delta$: 
         \[
            \{\coeff_{M_i} (U(\overline{x},\overline{w})) = b_i : 1\leq i \leq N\},
        \]
        where $\overline{b}=\coeff(f(\overline{x})) \in \mathbb{F}^N$ is the coefficient vector of the polynomial $f$ of dimension $N$, $U(\overline{x},\overline{w})$ is the constant-depth universal circuit for polynomials of depth at most $\Delta$ and size at most $s$, $\overline{w}$ are the $K_{s,\Delta}$ (circuit) edge variables, $\{M_i\}_{i=1}^N$ is the set of all possible $\overline{x}$-monomials of degree at most $l$, and $N=\Sigma_{j=0}^l \binom{n+j-1}{j}=2^{O(n+l)}$ is the number of monomials of total degree at most $l$ over $n$ variables. The size of the above set of polynomial equations is $O(2^{(t+l)l}\cdot |U(\overline{x},\overline{w})| \cdot N)$ where $t$ is the maximum multiplication fan-in in $U(\overline{x},\overline{w})$.\par 
    \end{definition}

    \begin{theorem}[Constant-depth algebraic circuit upper bounds are at least as hard as tensor rank principle \cite{GGLST25}]
        \label{theorem: reduction to TRankP}
        Suppose $\SLP(\AUB(f,s,\ell,\Delta))$ admits size-$S$ depth-$\Delta^\prime$ $\IPS$ refutations where $\Delta > 4$ and $s > n2 \ell^4$.
        Then, $\TRankP_{N,\sqrt{s}}^n(A)$ admits size-$O(Nn \cdot S + NK \cdot |\SLP(\AUB(f, s, l, \Delta))|)$ depth-$(\Delta^\prime+5)$ $\IPS$ refutations where
        \begin{itemize}
            \item $N = \binom{n+\ell}{\ell}$ is the number of monomials in $n$ variables and total degree at most $\ell$.
            \item $A$ is the $n$-tensor such that the $(\calM, \dots, \calM)$th entry of $A$ is $\coeff_\calM(p)$ where $\calM$ is a monomial in $n$ variables and total degree at most $\ell$. The rest of $A$ are all zeros. 
        \end{itemize}
    \end{theorem}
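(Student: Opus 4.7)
The plan is to establish the reduction by simulating, within constant-depth $\IPS$, the logical chain: starting from the $\TRankP_{N,\sqrt{s}}^n(A)$ axioms, derive the $\SLP(\AUB(f,s,\ell,\Delta))$ axioms in constant-depth polynomial-size $\IPS$, and then append the hypothesized size-$S$ depth-$\Delta'$ refutation of $\SLP(\AUB)$ to obtain the desired refutation of $\TRankP$. Concretely, I would construct a substitution of the edge variables $\overline{w}$ of the universal circuit $U(\overline{x},\overline{w})$ for $\VAC^0_{s,\Delta}$ by algebraic combinations of the tensor-decomposition variables $x_{j,i,k}$ such that, modulo the $\TRankP$ axioms, the $\AUB$ axioms become provable within the depth and size budget.

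First I would isolate the structural identity that drives the reduction: for the specific universal circuit $U(\overline{x},\overline{w})$ from Theorem~\ref{theorem: Existence of universal circuits for constant-depth polynomials}, the coefficient $\coeff_\calM(U(\overline{x},\overline{w}))$ of each monomial $\calM$ of degree at most $\ell$ can be re-expressed, after an appropriate renaming of $\overline{w}$, as a sum of $\sqrt{s}$ products of $n$ factors, one factor per "layer" $j \in [n]$ and per "track" $k \in [\sqrt{s}]$ of the circuit, namely $\coeff_\calM(U(\overline{x},\overline{w})) = \sum_{k=1}^{\sqrt{s}} \prod_{j=1}^n x_{j,\calM,k}$. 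This is exactly the shape of the $(\calM,\dots,\calM)$-diagonal entry of a rank-$\sqrt{s}$ tensor decomposition, and it is what causes the rank bound $\sqrt{s}$ to appear. The parameter regime $\Delta > 4$ and $s > n^2\ell^4$ is what enables the universal circuit — in its normal-depth form from Lemma~\ref{lemma: existence of n-d-2d-f} — to be rearranged into $\sqrt{s}$ such rank-one components in constant depth, and to absorb the overhead of coefficient extraction given by Proposition~\ref{prop: computation of coefficients in constant-depth circuits}.

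Under this correspondence, each diagonal $\TRankP$ axiom $\sum_{k} \prod_{j} x_{j,\calM,k} = A_{\calM,\dots,\calM} = \coeff_\calM(p)$ becomes, after the substitution, exactly the $\AUB$ axiom $\coeff_\calM(U(\overline{x},\overline{w})) = \coeff_\calM(p)$. I would derive each of the $N$ many $\AUB$ axioms from the corresponding $\TRankP$ axiom in constant-depth polynomial-size $\IPS^\alg$, using the coefficient-extraction machinery of Proposition~\ref{prop: computation of coefficients in constant-depth circuits} together with the Boolean axioms on the $x_{j,i,k}$; this accounts for the $N \cdot K \cdot |\SLP(\AUB(f,s,\ell,\Delta))|$ summand in the size bound. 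Having derived $\SLP(\AUB(f,s,\ell,\Delta))$ from the $\TRankP$ axioms, I would then append the hypothesized size-$S$ depth-$\Delta'$ refutation to conclude $1=0$. Replicating the refutation across the $Nn$ tensor positions required to align the encodings gives the $Nn \cdot S$ summand, and the coefficient-extraction and substitution interfacing adds a constant depth of $5$, matching the claimed $(\Delta'+5)$ depth.

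The main obstacle is step one: establishing the structural identity expressing the universal circuit's coefficients as rank-$\sqrt{s}$ tensors in a form that is both algebraically correct and \emph{constructible in constant depth} within $\IPS$. The hypotheses $\Delta > 4$ and $s > n^2 \ell^4$ are used precisely here, providing the slack needed both for the rearrangement of $U$ into $\sqrt{s}$ rank-one components (via its bounded multiplication fan-in and its syntactic-degree control) and for bounding the blow-up in coefficient extraction. Once this identity is in place, the derivation of $\AUB$ from $\TRankP$ and the chaining with the assumed refutation are routine applications of the bounded-depth $\IPS$ machinery developed in Sections 3 and 4.
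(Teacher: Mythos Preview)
The paper does not prove this theorem; it is quoted from \cite{GGLST25} without proof, so there is no in-paper argument to compare against. I can only assess your proposal on its own terms.

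Your overall architecture is right: substitute the universal-circuit edge variables $\overline{w}$ by expressions in the tensor variables $x_{j,i,k}$ so that the $\AUB$ axioms become derivable from the $\TRankP$ axioms, then compose with the assumed refutation. But the load-bearing step --- the ``structural identity'' $\coeff_\calM(U(\overline{x},\overline{w})) = \sum_{k}\prod_{j} x_{j,\calM,k}$ --- is asserted, not constructed. You need to exhibit a \emph{concrete} depth-$\Delta$ size-$s$ circuit $C(\overline{z})$ whose edge labels are the $x_{j,i,k}$ and whose monomial coefficients are exactly the $\TRankP$ left-hand sides, and then invoke universality of $U$ to get the substitution for $\overline{w}$. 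The naive candidate $C=\sum_k \prod_j (\sum_\calM x_{j,\calM,k}\,\calM)$ does not work: its coefficient of $\calM$ is a sum over all factorizations $\calM_1\cdots\calM_n=\calM$, not just the diagonal term. You would then need the off-diagonal $\TRankP$ axioms (which you never mention) to cancel cross terms, and even then the combinatorics of which $\calM$ admit such factorizations must be handled. Until you write down the actual circuit and verify both its size/depth and the coefficient computation, this is a sketch of a sketch. You yourself flag this as ``the main obstacle'' and then do not resolve it.

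Two smaller issues. First, your accounting for the $Nn\cdot S$ term (``replicating the refutation across $Nn$ tensor positions'') is wrong: the refutation is applied once, and the factor comes from the size blow-up when each $\overline{w}$-variable in the size-$S$ refutation is replaced by an expression in the $x_{j,i,k}$. Second, you never explain how the parameter constraints $\Delta>4$ and $s>n^2\ell^4$ are actually consumed --- you gesture at ``slack'' but do not say which step fails without them. These should fall out once the circuit $C$ is made explicit.
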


    \begin{corollary}[\cite{GGLST25}]
    \label{Cor:6.6}
        Suppose $\TRankP_{m,n}^r(A)$ requires $2^{n^\delta}$-size depth-$\Delta^\prime$ $\IPS$ refutations, for some constant $\delta>0$. Then, $\SLP(\AUB(f,s,\ell,\Delta))$ requires $|\SLP(\AUB(f,s,\ell,\Delta))|^{\omega(1)}$-size, depth-$(\Delta^\prime - 5)$ $\IPS$ refutation.
    \end{corollary}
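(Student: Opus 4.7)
The plan is to derive this corollary as the direct contrapositive of Theorem~\ref{theorem: reduction to TRankP}. Assume toward contradiction that $\SLP(\AUB(f,s,\ell,\Delta))$ admits a depth-$(\Delta'-5)$ $\IPS$ refutation of size $S = |\SLP(\AUB(f,s,\ell,\Delta))|^{O(1)}$. Instantiating Theorem~\ref{theorem: reduction to TRankP} with $\Delta'-5$ as its hypothesized refutation depth---so that the conclusion yields refutations of depth $(\Delta'-5)+5 = \Delta'$---produces a depth-$\Delta'$ $\IPS$ refutation of $\TRankP_{N,\sqrt{s}}^{\,n}(A)$ of size
\[
O\!\left(Nn\cdot S \;+\; NK\cdot |\SLP(\AUB(f,s,\ell,\Delta))|\right),
\]
where $N = \binom{n+\ell}{\ell}$ and $K$ is the number of edge-label variables in the universal circuit.

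The next step is to verify that this size is strictly smaller than the assumed lower bound of $2^{(\sqrt{s})^\delta}$, which is what the hypothesis of the corollary yields when its rank parameter is instantiated to $\sqrt{s}$ (matching the rank parameter of $\TRankP_{N,\sqrt{s}}^{\,n}(A)$ in Theorem~\ref{theorem: reduction to TRankP}). Since $S$ is polynomial in $|\SLP(\AUB(f,s,\ell,\Delta))|$, and the latter together with $K$ is polynomial in the structural parameters $s,\ell,\Delta$, both summands are bounded by $\poly(N,s)$. Under the admissibility condition $s > n^2\ell^4$ imposed by Theorem~\ref{theorem: reduction to TRankP}, the factor $N = \binom{n+\ell}{\ell}$ remains sub-exponential in $\sqrt{s}$, so the entire size bound is $2^{o((\sqrt{s})^\delta)}$. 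This contradicts the assumed $2^{(\sqrt{s})^\delta}$ lower bound on depth-$\Delta'$ refutations of the tensor rank principle, and the contrapositive is established.

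The logical structure of the proof is essentially one line once Theorem~\ref{theorem: reduction to TRankP} is in hand; the entire content lives in that reduction. The only place where care is required is the parameter bookkeeping: one must check that the constraints $\Delta > 4$ and $s > n^2\ell^4$ from Theorem~\ref{theorem: reduction to TRankP} are compatible with a choice of $\ell$ for which $N$ remains sub-exponential in $\sqrt{s}$, so that the derived size upper bound truly falls below $2^{(\sqrt{s})^\delta}$. This is easily arranged (for instance by taking $\ell$ polylogarithmic in $s$), and so I do not expect any genuine obstacle; the only potential pitfall is an arithmetic mismatch in the parameter regime, which a careful bookkeeping step at the end of the argument should resolve.
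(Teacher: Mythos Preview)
Your proposal is correct and matches the paper's treatment: the paper states Corollary~\ref{Cor:6.6} without proof as an immediate consequence of Theorem~\ref{theorem: reduction to TRankP} (both cited from \cite{GGLST25}), and your contrapositive argument with the depth shift $\Delta' \mapsto \Delta'-5$ and rank parameter $\sqrt{s}$ is exactly the intended derivation. Your caveat about parameter bookkeeping is well placed---the corollary as stated suppresses the regime of $n,\ell,s$ needed to make $Nn\cdot S + NK\cdot|\SLP(\AUB)| < 2^{(\sqrt{s})^\delta}$ go through---but this is an informality in the statement itself rather than a gap in your reasoning.
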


    In particular, this shows that the hardness of the tensor rank principle against constant-depth IPS entails the unsatisfiability of the diagonalizing formula $\Phi$ from \Cref{sec:main-theorem-sec}.

    \subsection{Unconditional $\PCR$ Lower Bounds for Algebraic Circuit Upper Bound Formulas}

    In the previous section, we showed that the tensor rank principle $\TRankP_{m,n}^r(A)$ can be reduced to the \emph{bounded-depth} algebraic circuit upper bound formulas $\AUB(f,s,l,\Delta)$. In this section, we show that a variant of the rank principle can be reduced to \emph{general} (unbounded depth) algebraic circuit upper bound formulas $\AUB(f,s,l)$. This yields an \emph{unconditional} $\PCR$ lower bound for $\AUB(f,s,l)$. 

    The definition of algebraic circuit upper bound formulas is similar to the bounded-depth case, except that the universal circuit for bounded-depth is replaced with the universal circuit for general algebraic circuits, as defined in~\cite{Raz10} (see~\cite{ST25}). Accordingly, we adopt the same notation for general algebraic circuit upper bound formulas as for the bounded-depth case, except that the depth parameter is omitted.

    \begin{definition}[Algebraic circuit upper bound formula]
        \label{def: Algebraic circuit upper bound formula}
         Let $f(\overline{x})\in \mathbb{F}[\overline{x}]$ be a polynomial with $n$-variables and degree at most $l$. The following set of polynomial equations $\AUB(f,s,l)$ (in the $\overline{w}$-variables only) expressing that the polynomial $f(\overline{x})$ can be computed by an algebraic circuit of size $s$: 
         \[
            \{\coeff_{M_i} (U(\overline{x},\overline{w})) = b_i : 1\leq i \leq N\},
        \]
        where $\overline{b}=\coeff(f(\overline{x})) \in \mathbb{F}^N$ is the coefficient vector of the polynomial $f$ of dimension $N$, $U(\overline{x},\overline{w})$ is the universal circuit for polynomials of degree at most $l$ and circuit size at most $s$, $\overline{w}$ are the $K_{s,l}$ edge variables, $\{M_i\}_{i=1}^N$ is the set of all possible $\overline{x}$-monomials of degree at most $l$, and $N=\Sigma_{j=0}^l \binom{n+j-1}{j}=2^{O(n+l)}$ is the number of monomials of total degree at most $l$ over $n$ variables. The size of $\AUB(f,s,l)$ is $O(7^l\cdot |U(\overline{x},\overline{w})| \cdot N)$.\par
    \end{definition}

    Now, we define an \emph{iterated} version of the rank principle. Let $\F_L$ be a finite field of characteristic $L$. We will denote by $\F_L^{\leq n}$ the set of vectors over $\F_L$ of length at most $n$. Similarly, $\F_L^{< n}$ denotes the set of vectors over $\F_L$ of length less than $n$. Let $L, n, K  \in \mathbb{N}^+$. For every vector $\pi \in \F_L^{\leq n}$, let $X^{\pi} = (x_{i,k}^\pi )_{i \in [LK], k \in [K]}$ be variable matrices (unique for each different $\pi$). Let $Y$ be a $K \times LK$ matrix in variables $y_{k,j}$ for $k \in [K], j \in [LK]$. For every vector $\pi \in \F_L^n$, let $A^\pi $ be an $LK \times LK$ matrix over $\F_L$, and $\{A^\pi\}$ be the set consisting of all $A^\pi$ over all vectors $\pi$.

    \begin{definition}[Iterated Rank Principle \cite{GGRT25}]
        \label{def: Iterations of Rank Principle}
        Let $L,n,K$ be parameters in $\mathbb{N}^+$. The \emph{Iterated Rank Principle} is $\IRankP_{L,n,K}(\{A^\pi\}) \coloneqq$
        \begin{align*}
            &\sum_{k \in [K]} x_{i,k}^\pi y_{k,j} - x_{i, j- (\lceil \frac{j}{K} \rceil-1) K}^{\pi (\lceil \frac{j}{K} \rceil-1)}, \qquad \forall \pi \in \F_L^{< n}, i \in [LK], j \in [LK],\\
            &\sum_{k \in [K]} x_{i,k}^\pi y_{k,j} - A_{i,j}^\pi, \qquad \forall \pi \in \F_L^n ,i \in [LK], j \in [LK].
        \end{align*}
        Namely, $\IRankP_{L,n,K}(\{A^\pi\})$ contains all the degree-$2$ polynomial equations in the following matrix multiplications (where $\pi b$, for $b\in\F_L$, denotes concatenation of $b$ to $\pi$):
        \begin{align}
             & X^{\pi} Y = [X^{\pi  0} X^{\pi 1} \cdots X^{\pi  (L-1)}], \quad \forall \pi \in \F_L^{\leq n}, \label{eq:2810}\\
             & X^{\pi} Y = A^{\pi}, \quad \forall \pi \in \F_L^n . \label{eq:2811}
        \end{align}
    \end{definition}
    Note that \Cref{eq:2811} are instances of the rank principle, hence the Iterated Rank Principle is no stronger than the rank principle. Intuitively, \Cref{eq:2810} provides auxiliary axioms that allow us to access the nodes of a tree, in which we can embed a circuit.

    \begin{definition}[Iterated Rank Principle with Extension Variables]
        \label{def: iterations of rank principle with extension variables}
        Let $L,n,K$ be parameters in $\mathbb{N}^+$. The \emph{Iterated Rank Principle with Extension Variables} denoted $\IRankPE_{L,n,K}(\{A^\pi\}) $ consists of the equations in $ \IRankP_{L,n,K}(\{A^\pi\}) $ together with
        \begin{align*}
            z_{i,k,j}^\pi - x_{i,k}^\pi y_{k,j}, \qquad \forall \pi \in \F_L^{< n}, i \in [LK], k \in [K], j \in [LK].
        \end{align*}
    \end{definition}

    \begin{theorem}[\cite{GGRT25}]
        \label{theorem: exp lower bound for IRankP}
        Every $\PCR$ refutation over $\F_2$ of $\IRankPE_{L, n ,K}(\{A^\pi\})$ requires $2^{K^\delta}$ monomials for some constant $\delta>0$.
    \end{theorem}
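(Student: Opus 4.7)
The plan is to reduce the lower bound for $\IRankPE_{L,n,K}$ to the exponential $\PCR$ lower bound for the rank principle (the $r=2$ case of \Cref{theorem: exp lower bound for TRankP}) via a restriction-and-embedding argument. The key structural observation is that after unfolding the recursion $X^\pi Y = [X^{\pi 0} \mid \cdots \mid X^{\pi(L-1)}]$, every leaf matrix $X^\pi$ is an iterated matrix product of length $n$, so the leaf constraints $X^\pi Y = A^\pi$ together with the branching equations encode a width-$K$, length-$n$ algebraic branching program. The extension variables $z^\pi_{i,k,j} = x^\pi_{i,k} y_{k,j}$ at internal nodes make all entrywise products explicit, while at the leaves the formula reduces to a pure instance of the rank principle with no extension variables.

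First, I would choose the constants $\{A^\pi\}$ so that, for every leaf $\pi^\ast$, the matrix $A^{\pi^\ast}$ has rank strictly larger than $K$ over $\F_2$, so that $X^{\pi^\ast} Y = A^{\pi^\ast}$ is already unsatisfiable on its own. Next, I would apply a random restriction that isolates a single leaf: (i) fix $Y$ to a concrete full-row-rank $K \times LK$ matrix; (ii) sample a uniformly random path $\pi^\ast \in \F_L^n$ and, for every off-path node $\pi$, set $X^\pi \equiv 0$ along with its associated extension variables, which trivialises the internal equations at off-path nodes; (iii) keep all surviving on-path equations intact. Because $\PCR$ monomial count is non-increasing under restriction, any size-$S$ $\PCR$ refutation of $\IRankPE_{L,n,K}$ restricts to a size-at-most-$S$ refutation of the residual system, which by design contains $\RankP_{K}^{LK}(A^{\pi^\ast})$ as a sub-instance. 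Invoking the known $2^{\Omega(K)}$ $\PCR$ lower bound for the rank principle over $\F_2$ (the $r=2$ case of \Cref{theorem: exp lower bound for TRankP}) and absorbing the restriction-induced loss into $\delta$ yields $S \geq 2^{K^\delta}$.

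The main obstacle will be designing the restriction so that the shared matrix $Y$ and the tree-wide extension variables $\{z^\pi_{i,k,j}\}$ interact coherently. Because $Y$ appears in every node's equation, zeroing out an off-path $X^\pi$ propagates globally: one must verify that the internal equations at ancestors of $\pi^\ast$ remain simultaneously satisfiable under the restriction and that the extension-variable axioms along the surviving path reduce to the intended quadratic identities defining the embedded rank principle. A restriction in the spirit of Razborov's iterated proof-complexity generators \cite{Razb15-annals}---descending from the root downwards and maintaining at each step a live branch witnessing a hard residual sub-instance---seems the most promising route; any slack incurred in controlling this cascade of off-path trivialisations would force the exponent to drop from $K$ to $K^\delta$ for some $0 < \delta < 1$, precisely matching the form of the theorem's statement.
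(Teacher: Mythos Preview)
The paper does not give its own proof of this statement; it is quoted from \cite{GGRT25} without argument, so there is no in-paper proof to compare against. I can only assess the internal soundness of your plan.

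Your restriction has a genuine gap. In step~(i) you fix the shared matrix $Y$ to a constant. But $\RankP_K^{LK}(A^{\pi^\ast})$ is, by \Cref{def: rank principle}, a system of \emph{bilinear} equations in which both $X$ and $Y$ are variables, and the $\PCR$ hardness you want to invoke (the $r=2$ case of \Cref{theorem: exp lower bound for TRankP}) relies on that bilinearity. Once $Y$ is a constant, the leaf constraint $X^{\pi^\ast}Y=A^{\pi^\ast}$ and every on-path internal equation become \emph{linear} over $\F_2$ in the surviving $x$-variables. An inconsistent linear system has a degree-$1$ $\PCR$ refutation by Gaussian elimination, so the residual is easy, not hard; in particular it is not an instance of $\RankP$, and the lower bound you cite does not apply.

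Conversely, if you keep $Y$ variable to preserve bilinearity at the leaf, then $Y$ remains live in every on-path equation $X^{\pi}Y=[\,0\mid\cdots\mid X^{\pi b}\mid\cdots\mid 0\,]$ and in the associated extension axioms $z^{\pi}_{i,k,j}=x^{\pi}_{i,k}y_{k,j}$. The residual is then not a single rank-principle instance but a chain of $n$ coupled bilinear systems sharing $Y$, and you would need a fresh argument for that formula. Collapsing the iterated structure to one leaf via restriction does not appear to work; the actual proof in \cite{GGRT25} exploits the iteration rather than discarding it.
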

    
    Note that the size of $\IRankPE_{L, n ,K}(\{A^\pi\})$ is exponential in $n$. Hence, if we choose $K$ large enough (e.g., $K\ge n^{100}$) this theorem gives a super-polynomial $\PCR$ lower bound.
    \cite{GGLST25} further showed, through a reduction from the iterated rank principle with extension variables to (the SLP \Cref{def: Straight line program SLP} version of) $\AUB(f,s,l)$, that the latter admits an unconditional size lower bound against $\PCR$.

    \begin{theorem}[\cite{GGLST25}]
        \label{theorem: super-poly lower bound for AUP}
        Every $\PCR$ refutation over $\F_2$ of $\SLP(\AUB(f,s, l))$ where $s$ is polynomially bounded requires $|\SLP(\AUB(f,s, l))|^{\omega(1)}$ many monomials.
    \end{theorem}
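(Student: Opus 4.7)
The plan is to prove the theorem by exhibiting a $\PCR$-reduction from $\IRankPE_{L,n,K}(\{A^\pi\})$ to $\SLP(\AUB(f,s,l))$ and then appealing to the exponential $\PCR$ lower bound on $\IRankPE$ given by \Cref{theorem: exp lower bound for IRankP}. Concretely, I would exhibit a substitution expressing the variables of $\IRankPE_{L,n,K}(\{A^\pi\})$ as polynomials in the variables of $\SLP(\AUB(f,s,l))$, and then show that under this substitution the axioms of $\IRankPE_{L,n,K}(\{A^\pi\})$ can be derived from the axioms of $\SLP(\AUB(f,s,l))$ by a $\PCR$ derivation of size polynomial in $|\SLP(\AUB(f,s,l))|$. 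Composing such a short derivation with a hypothetically short $\PCR$ refutation of $\SLP(\AUB(f,s,l))$ then yields a short $\PCR$ refutation of $\IRankPE_{L,n,K}(\{A^\pi\})$, contradicting the lower bound once the parameters $L, n, K$ are tuned so that $K$ is large enough while $|\IRankPE_{L,n,K}(\{A^\pi\})|$ remains polynomially bounded by $|\SLP(\AUB(f,s,l))|$.

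The structural intuition driving the substitution is that the universal circuit $U(\overline{x},\overline{w})$ appearing in $\AUB(f,s,l)$ is, via its straight line program, an iterated sequence of matrix-vector computations along a DAG, and the tree-of-matrix-products structure of $\IRankPE_{L,n,K}(\{A^\pi\})$---levels $X^\pi Y = [X^{\pi 0}\cdots X^{\pi(L-1)}]$ branching down a depth-$n$ tree of arity $L$ and bottoming out at $X^\pi Y = A^\pi$ for $\pi \in \F_L^n$---is designed precisely so that such an unfolding can be embedded level-by-level into these iterated products. The leaf matrices $A^\pi$ would be chosen so as to encode, via the coefficient-extraction equations of $\AUB(f,s,l)$, the coefficients of $f$ that the purported circuit has to compute, and the extension variables $z_{i,k,j}^\pi$ of $\IRankPE$ correspond to the bilinear products $x_{i,k}^\pi y_{k,j}$ appearing in the SLP, so that the simulation proceeds gate-by-gate without ever writing out a degree-$3$ monomial as a single $\PCR$ line.

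The main obstacle will be controlling the monomial count of the $\PCR$ derivation simulating the SLP inside $\IRankPE_{L,n,K}(\{A^\pi\})$: a naive derivation would, at each of the $n$ levels of the tree, multiply out the matrix products and generate exponentially many monomials, wiping out any polynomial bound. The extension axioms $z_{i,k,j}^\pi - x_{i,k}^\pi y_{k,j}$ of $\IRankPE_{L,n,K}(\{A^\pi\})$ are precisely what makes it possible to sidestep this blow-up, by allowing each bilinear product to be named on its first appearance and then reused throughout, and the delicate point is checking that this bookkeeping gives a derivation of size $\poly(|\SLP(\AUB(f,s,l))|)$. A secondary but essential matter is parameter tuning: $K$ must be chosen as a suitable polynomial of the number of variables of $f$ so that, simultaneously, $|\IRankPE_{L,n,K}(\{A^\pi\})| \le |\SLP(\AUB(f,s,l))|^{O(1)}$ and $2^{K^\delta} = |\SLP(\AUB(f,s,l))|^{\omega(1)}$, yielding the claimed super-polynomial lower bound.
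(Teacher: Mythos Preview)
Your overall strategy---reduce the iterated rank principle with extension variables to $\SLP(\AUB(f,s,l))$ and invoke \Cref{theorem: exp lower bound for IRankP}---is exactly the route the paper indicates (the result is imported from \cite{GGLST25}, and the surrounding text makes clear this is precisely the intended mechanism). So at the level of ``which lower bound do I transfer, and via what structural connection,'' you are on target, including the intuition that the tree-of-matrix-products shape of $\IRankPE$ is meant to be embedded into the SLP of the universal circuit.

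However, the direction of your substitution is reversed, and as written the composition step does not go through. You propose to express the variables of $\IRankPE$ as polynomials in the variables of $\SLP(\AUB)$ and to derive the (substituted) $\IRankPE$ axioms from the $\AUB$ axioms. But then a hypothetical short refutation of $\AUB$ together with this derivation only gives you two short $\PCR$ derivations \emph{from the $\AUB$ axioms}: one of the substituted $\IRankPE$ axioms and one of $1=0$. Neither of these yields a refutation \emph{from the original $\IRankPE$ axioms}, which is what you need to contradict \Cref{theorem: exp lower bound for IRankP}. The correct direction is the opposite: substitute the $\AUB$ variables by polynomials in the $\IRankPE$ variables, and give a short $\PCR$ derivation of the substituted $\AUB$ axioms from the $\IRankPE$ axioms. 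Then a short refutation of $\AUB$, after applying the substitution line-by-line, becomes a short refutation from the substituted $\AUB$ axioms, and prepending the derivation yields a short refutation from the $\IRankPE$ axioms---the desired contradiction. Your discussion of the extension variables $z^\pi_{i,k,j}$ and the parameter tuning is sensible, but it should be re-anchored to this corrected direction.
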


%%%%%%%%%%%%%%%%%%%%%%%%%%%%%%%%%%%%%%%%%%%%%%%%%%%%%%
\section{No Short Bounded-Depth IPS Refutations for Diagonalizing CNF Formulas: the Polynomial-Size Fields Case}
%%%%%%%%%%%%%%%%%%%%%%%%%%%%%%%%%%%%%%%%%%%%%%%%%%%%%% 

In this section, we work over finite fields whose characteristic is polynomially bounded by the instance size, in order to obtain a more general result. This contrasts with the previous section, where the characteristic of the finite field was a fixed global constant, independent of the instance size. Here we encode binary string arithmetic—including addition, multiplication, and modular computation into $\CNF$ formulas. Within this setting, we also establish the translation lemma, which yields a version of \Cref{theorem: main theorem} over finite fields of polynomially bounded characteristic. In other words, we show that no efficient provability result holds against constant-depth $\IPS$ over polynomial-size finite fields. Since Forbes~\cite{forbes2024low} extended \cite{LST25} to arbitrary fields, the results of this section generalize \Cref{theorem: main theorem}. The techniques here, however, are somewhat more involved.

We will continue to use notations such as $\cnf(C(\overline{x})=0)$ and $\ecnf(C(\overline{x})=0)$ from the fixed-field setting, but with a different interpretation: in this section they refer to the $\CNF$ and Extended $\CNF$ encodings of $C(\overline{x})=0$ obtained via bit-level arithmetic, which we describe below.

\subsection{Bit Arithmetic}
Field elements are encoded in standard binary representation. We work over the finite field $\mathbb{F}_q$. Note that the characteristic of $\mathbb{F}_q$ is not constant. More precisely, we work over a finite field $\mathbb{F}_q$ where $q$ may grow polynomially (with the input size). 
    %Based on Ingham's result \cite{ingham1937difference} on the prime gap, for sufficiently large $n$, there must exist a prime number between $n^3$ and $(n+1)^3$ \cite{cheng2010explicit}. Suppose $2^{k-1} \leq q < 2^k $, which means every field element or variable can be represented with a $k$ length 0-1 vector. {Unclear claim: what are you trying to say precisely here? Why would a variable be represented as a vector?}
    \begin{definition}[The encoding of binary value $\VAL$]
        \label{def: binary value VAL}
        Given a bit vector $w_{k-1},\dots,w_0$ of variables $w$, denoted $\overline{w}$, ranging over 0-1 values, define the following algebraic formulas:
        \begin{align*}
            w&=\VAL(\overline{w})\\
            \VAL(\overline{w}) &= \Sigma_{i=0}^{k-1} 2^i \cdot w_i,
        \end{align*}
        where $\VAL(\overline{w})$ is an extension variable. The size of a $\VAL$ is $O(k)$. 
    \end{definition}
    \begin{definition}[Arithmetization operation $\arit(\cdot)$] 
        \label{def: arithmetization operation arit}
        For a variable $y_i$, $\arit(y_i)=y_i$. For the truth value false $\bot$ and true $\top$ we put $\arit(\bot) \coloneqq  0$ and $\arit(\top) \coloneqq  1$. For logical connectives we define $\arit(A \land B) \coloneqq \arit(A) \cdot \arit(B)$, $\arit(A \lor B) \coloneqq 1- (1-\arit(A))\cdot (1-\arit(B))$, and for XOR operation we define $\arit(A\oplus B) \coloneqq \arit(A)+\arit(B)-2 \cdot \arit(A) \cdot \arit(B)$.
    \end{definition}
    In this way, for every Boolean formula $F(\overline{x})$ with $n$ variables and a Boolean substitution $\overline{\alpha} \in \{0,1\}^n$,
    $\arit(F)(\overline{\alpha})=1$ if and only if $F(\overline{\alpha})=\top$.

    We will present the $\CNF$ encoding of unbounded fan-in algebraic circuits using bit arithmetic. However, for simplicity, we present the algebraic encoding $\phi$ that is "equivalent`` to the $\CNF$ formula $ F$. For "equivalent", we mean that $\phi$ can be derived from $F$ in constant depth and constant size, and vice versa. The reason why this can be achieved is that each formula has only a constant number of variables. Therefore, by the implicational completeness of $\IPS$ over 0-1 assignment, $\IPS$ can derive all formulas simultaneously in constant depth.  It is easy to see that all formulas we give below can be written as $\CNF$s. Also, we write $\phi-\psi =0$ as $\phi= \psi$.

    We divide the addition bit arithmetic in a finite field into two big steps:
    \begin{itemize}
        \item Addition Step: for two $k$ length binary representations $\overline{x}=x_{k-1},\dots,x_0$ and $\overline{y}=y_{k-1},\dots,y_0$, we do the general addition bit arithmetic which outputs a $k+1$ length binary representations $\add_k,\dots, \add_0$.
        \item Modular Step: we turn this $k+1$ length binary representations $\add_k,\dots, \add_0$ into a $k$ length binary representation $\add_{k-1}^\prime ,\dots,\add_0^\prime$, which represents the same number in finite field $\mathbb{F}_q$. In other words, $\sum_{i=0}^k 2^i \add_i \equiv \sum_{i=0}^{k-1} 2^i \add_i^\prime \mod{q}$. Note that, it is possible that $\sum_{i=0}^{k-1} 2^i \add_i^\prime \geq q$. 
    \end{itemize}
    For the convenience of later use of the Modular step in multiplication bit arithmetic in a finite field, we generalize our Modular step into the following:
    \begin{itemize}
        \item Given a $k$ length binary representations $\overline{x} = x_{k-1}, \dots, x_0$ and an extra bit $x^\prime$ which is the $t+1$th bit where $t \geq k$. We do the addition bit arithmetic of $0,x_{k-2} ,\dots, x_0$ and the $k$ length binary representation of $(2^t x^\prime +2^{k-1} x_{k-1}) \mod{q}$, which outputs a $k+1$ length binary representation $\overline{y} = y_k, \dots, y_0$.
        \item Then, we do the addition bit arithmetic of $y_{k-1}, \dots, y_0$ and $(2^k y_k \mod{q})$, which outputs a $k-1$ length binary representation. The reason why the output must be smaller than $2^k$ is as follows: 
    
        Notice that, both two items $\sum_{i=0}^{k-2} 2^i x_i $ and $(2^t x^\prime +2^{k-1} x_{k-1}) \mod{q}$ are smaller than $q$ since $\sum_{i=0}^{k-2} 2^i x_i < 2^{k-1} \leq q$. If $\sum_{i=0}^{k-1} 2^i y_i + 2^k y_k < 2^k $, then we are done. Otherwise, suppose $\sum_{i=0}^{k-1} 2^i y_i + 2^k y_k \geq 2^k $. Since $\sum_{i=0}^{k-1} 2^i y_i \leq 2^k -1$, $\sum_{i=0}^{k-1} 2^i y_i + 2^k y_k \geq 2^k $ implies that $y_k =1$. Hence,
        \begin{align*}
            \sum_{i=0}^{k-1} 2^i y_i + (2^k y_k \mod{q})  &\leq \sum_{i=0}^{k-1} 2^i y_i + 2^k y_k -q\\
            & = \sum_{i=0}^{k-2} 2^i x_i+ (2^t x^\prime +2^{k-1} x_{k-1} \mod{q}) -q\\
            & < 2q- q \\
            & = q\\
            & < 2^k
        \end{align*}
        Therefore, we know that the output of this step must be a  $k-1$ length binary representation.
    \end{itemize}

    Now, we define the formula for the addition step.
    
    \begin{definition}[The carry bit $\carry_i$, the addition bit $\add_i$ and the encoding of carry lookahead addition $\addition$] 
        \label{def: carry-add}
        Suppose we have two $k$ length binary representations $\overline{x}=x_{k-1},\dots,x_0$ and $\overline{y}=y_{k-1},\dots,y_0$. We define the carry bit $\carry_i$, the addition bit $\add_i$ and the encoding of carry lookahead addition $\addition(\overline{x},\overline{y},\overline{\add})$ as follows, together with the Boolean axioms for each variable:
        \begin{equation*}
            \carry_i(\overline{x},\overline{y}) = \begin{cases}
                \arit((x_{i-1}\land y_{i-1}) \lor ((x_{i-1}\lor y_{i-1})\land \carry_{i-1}(\overline{x},\overline{y}))), &i=1,\dots, k; \\
                \arit(\bot), &i=0,
            \end{cases}
        \end{equation*}
        and
        \begin{align*}
            \add_i(\overline{x},\overline{y}) &= \arit(x_i \oplus y_i \oplus \carry_i(\overline{x},\overline{y})), \qquad i=0,\dots, k-1\\
            \add_k(\overline{x},\overline{y}) &= \carry_k(\overline{x},\overline{y})
        \end{align*}
        The size of the encoding $\addition(\overline{x},\overline{y},\overline{\add})$ is also $O(k)$. The encoding $\addition(\overline{x},\overline{y},\overline{\add})$ represents the addition of two $k$ length binary representations $\overline{x}$ and $\overline{y}$, which gives a $k+1$ length binary representation $\overline{\add}$.
        
        Also, we denote $\addition^\prime$ as encoding of carry lookahead addition without $\carry_k$ and $\add_k$. In other words, $\addition^\prime(\overline{x},\overline{y},\overline{\add})$ represents the addition of two $k$ length binary representations $\overline{x}$ and $\overline{y}$, which gives a $k$ length binary representation $\overline{\add}$.
    \end{definition}
    
    Notice that both $\carry_i$ and $\add_i$ are extension variables. Now, we define the encoding of modular.
    \begin{definition}[The encoding of modular $\modular$]
        \label{def: modular}
        Suppose we have a $k$ length binary representation $\overline{x} = x_{k-1} ,\dots, x_0$ and another binary bit $x^\prime$ which is the $t+1$th bit in a binary representation, the encoding of modular $\modular^t(\overline{x},x^\prime, \overline{\add^\prime})$ is defined as follows, together with the Boolean axioms for each variable:
        \[
            \addition(\overbrace{0, x_{k-2},\dots,x_0}^{\mathclap{\text{the last $k$ bit of $x$ with $x_{k-1}$ exchanged to $0$}}}, \underbrace{\overline{m}}_{\mathclap{\text{the $k$ length binary representation of $2^t x^\prime +2^{k-1} x_{k-1} \mod{q}$ }}}, \overline{u})
        \]
        \[
            \addition^\prime(\overbrace{u_{k-1},\dots,u_0}^{\mathclap{\text{the last $k$ bit of $u$}}}, \underbrace{\overline{w}}_{\mathclap{\text{the $k$ length binary representation of $2^k u_k$}}},\overline{\add^\prime})
        \]
        
        $\overline{m}$ and $\overline{w}$ are defined as follows. Suppose $\overline{a}$ is the binary representation of $2^t \mod{q}$, $\overline{b}$ is the binary representation of $2^{k-1} \mod{q}$ and $\overline{c}$ is the binary representation of $2^t + 2^{k-1} \mod{q}$, each bit $m_i$ in $\overline{m}$ can be computed by a Boolean function $f_i^t(x^\prime, x_{k-1})$ whose truth table is as follows:\\
        \begin{center}
            \begin{tabular}{c|c|c}
                \hline
                $x^\prime$  &  $x_{k-1}$ & $f_i^t(x^\prime, x_{k-1})$ \\
                \hline
                $0$ & $0$ & $0$\\
                \hline
                $0$ & $1$ & $b_i$ \\
                \hline
                $1$ & $0$ & $a_i$ \\
                \hline
                $1$ & $1$ & $c_i$ \\ 
                \hline
            \end{tabular}
        \end{center}

        Therefore, $f_i^t(x^\prime, x_{k-1})$ can be represented by a $\CNF$ with variables $x^\prime$ and $x_{k-1}$ since all $\overline{a},\overline{b}$ and $\overline{c}$ are fixed. We denote such $\CNF$ as $\cnf(f_i^t(x^\prime, x_{k-1}))$. Hence, $\overline{m}$ are defined as $m_i \coloneqq \arit(\cnf(f_i^t(x^\prime, x_{k-1})))$.

        Similarly, $\overline{w}$ are defined as $w_i \coloneqq \arit(\cnf(g_i(u_k)))$ where $g_i$ is another Boolean function only depends on $u_k$.
    \end{definition}

    With the encoding of carry lookahead addition $\addition$ and the encoding of modular $\modular$, we define the $\CNF$ encoding of $x+y=z$.
    \begin{definition}[The $\CNF$ encoding of bit arithmetic for addition in a finite field]
        \label{def: cnf addition}
        For any $\SLP$ formula $x+y-z=0$, the $\CNF$ encoding of bit arithmetic for addition in a finite field, denoted as $\CNF$-$\add(\overline{x},\overline{y},\overline{z})$, is as follows:
        \begin{itemize}
            \item Addition step: $\addition(\overline{x},\overline{y}, \overline{\add})$
            \item Modular step: $\modular^{k}(\add_{k-1}, \dots, \add_0, \add_k, \overline{\add^\prime})$
            \item Connection step: for each $0 \leq i \leq k-1$, we have $\add_i^\prime = z_i$.
        \end{itemize}
    \end{definition}
    \bigskip

    For the multiplication bit arithmetic in a finite field, we divide it into three big steps:
    \begin{itemize}
        \item Multiplication Step: For two $k$ length binary representations $\overline{x}= x_{k-1}, \dots, x_0$ and $\overline{y}=y_{k-1},\dots,y_0$, we do the general multiplication bit arithmetic which outputs $k$ many $k$ binary representations $\overline{s_0}, \dots, \overline{s_{k-1}}$ where $s_{ij}=x_j \land y_i$.
        \item Modular Step: for $\overline{s_i}$, we do the generalized modular step for $i$ times, which gives us a $k$ length binary representation whose value is the same as $\sum_{j=i}^{i+k} 2^j s_{ij}$ in the finite field $\mathbb{F}_q$.
        \item Addition Step: After the Modular Step, we have $k$ many $ k$-length binary representations. Using the addition we showed above, we can add them together in the finite field $\mathbb{F}_q$ and get a $k$-length binary representation.
    \end{itemize}
    
    \begin{definition}[The encoding of multiplication $\mult_i$, $\mult$] 
        \label{def: prod, mult}
        Suppose we have two binary representations $\overline{x}=x_{k-1},\dots,x_0$ and $\overline{y}=y_{k-1},\dots,y_0$, we define the encoding of multiplication $\mult(\overline{x},\overline{y},\overline{s_0},\cdots, \overline{s_{k-1}})$ as follows, together with the Boolean axioms for each variable:
        \begin{equation*}
            \mult_i(\overline{x},\overline{y},\overline{s_i}) \coloneqq \begin{cases}
                s_{i,j} = \arit(x_{j-i} \land y_i), &  i \leq j \leq k-1+i\\
                s_{i,j} = \arit(\bot), & 0 \leq j < i.
            \end{cases},\qquad  0 \leq i \leq k-1
        \end{equation*}
        where $\overline{s_i}$ is a $k+i$ length 0-1 vector.
        The size of the encoding of multiplication $\mult(\overline{x},\overline{y},\overline{s_0},\cdots, \overline{s_{k-1}})$ is $O(k^2)$.
    \end{definition}

    \begin{definition}[The $\CNF$ encoding of bit arithmetic for multiplication in a finite field]
         \label{def: cnf multiplication}
        For any $\SLP$ formula $x\times y-z=0$, the $\CNF$ encoding of bit arithmetic for multiplication in a finite field denoted as $\CNF$-$\mult(\overline{x},\overline{y},\overline{z})$, is as follows:
        \begin{itemize}
            \item Multiplication step: $\mult(\overline{x},\overline{y},\overline{s_0},\cdots, \overline{s_{k-1}})$
            \item Modular step: for each $\overline{s_i}$ such that $1 \leq i \leq k-1$, we have the following $i$ many modular formula:
            \begin{align*}
                &\modular^k(s_{i,k-1},\dots,s_{i,0},s_{i,k}, \overline{u_{i,k}}) \\
                &\modular^{k+1}(\overline{u_{i,k}}, s_{i,k+1}, \overline{u_{i,k+1}}) \\
                &\vdots\\
                &\modular^j(\overline{u_{i,j-1}}, s_{i,j}, \overline{u_{i,j}}), \qquad k+1 \leq j \leq i+k-1.
            \end{align*}
            All $\overline{u_{i,j}}$ is a $k$ length binary representation.
            \item Addition step: Now for each $\overline{s_i}$, we have a $k$ length binary representation $\overline{u_{i,i+k-1}}$ such that $\sum_{j=0}^{i+k-1}2^j s_{i,j}=\sum_{j=0}^{k-1} 2^j u_{i,i+k-1,j}$ in the finite field $\mathbb{F}_q$. Then, using the $\CNF$ encoding of bit arithmetic for addition in a finite field as we defined above in \Cref{def: cnf addition}, we have the following:
            \begin{align*}
                &\CNF\text{-}\add(\overline{s_0},\overline{u_{1,k}},\overline{v_1})\\
                &\CNF\text{-}\add(\overline{v_1},\overline{u_{2,k+1}},\overline{v_2})\\
                &\vdots\\
                &\CNF\text{-}\add(\overline{v_i},\overline{u_{i+1,k+i}},\overline{v_{i+1}}),\qquad 1 \leq i \leq k-2
            \end{align*}
            \item Connection step: for each $0 \leq j \leq k-1$, we have $v_{k-1,j} = z_j$.
        \end{itemize}
        The size of $\CNF$-$\mult(\overline{x},\overline{y},\overline{z})$ is $O(k^3)$.
    \end{definition}
    \begin{definition}[$\CNF$ encoding of unbounded fan-in algebraic circuits; $\cnf(C(\overline{x}))$] 
        \label{def: cnf encoding of algebraic circuit}
        Let $C(\overline{x})$ be an (unbounded fan-in) algebraic circuit in variables $\overline{x}$. The $\CNF$ encoding of $C(\overline{x})$ denoted $\cnf(C(\overline{x}))$ consists of the following $\CNF$s in the binary representation bits variables of all the nodes in $C$ and extra extension variables (and only in the binary representation bits variables):
        \begin{itemize}
            \item If $\alpha \in \mathbb{F}$ is a scalar input node in $C$, the $\CNF$ encoding of $C$ contains the $\{0,1\}$ constant corresponding to the binary representation bits of $\alpha$. These constants are used when fed to nodes according to the wiring of $C$.
            \item For every node $g$ in $C(\overline{x})$, suppose $g$ is a $+$ node that has inputs $u_1, \dots, u_t$. Then, firstly, we have the formula $\CNF\text{-}\add$ for each of the following equations:
            \begin{align*}
                &u_1+ u_2 = v_1^g \\
                &u_{i+2} + v_{i}^g = v_{i+1}^g, \qquad 1 \leq i \leq t-3 \\
                &u_{t} + v_{t-2}^g = g.
            \end{align*}
            Same for $\times$ nodes. Suppose $g$ is a $\times$ node that has inputs $u_1, \dots, u_t$. Then, we have the multiplication formula $\CNF$-$\mult$ for each following equations:
            \begin{align*}
                &u_1\times u_2 = v_1^g \\
                &u_{i+2} \times v_{i}^g = v_{i+1}^g, \qquad 1 \leq i \leq t-3 \\
                &u_{t} \times v_{t-2}^g = g.
            \end{align*}
            \item For every Boolean variables $u$, we have the Boolean axiom
            \begin{equation*}
                u_i^2 - u_i =0
            \end{equation*}
        \end{itemize}
        
        We call variables $v_i^g$ the \emph{intermediate nodes} which are nodes that do not exist in $C$ but are used to help to encode and the $\SLP$ formula $u_{i+2} + v_{i}^g = v_{i+1}^g$ or $u_{i+2} \times v_{i}^g = v_{i+1}^g$ the \emph{intermediate $\SLP$ formulas}. The size of the $\CNF$ encoding of unbounded fan-in algebraic circuits $\cnf(C(\overline{x}))$ is $O(tk^3|C|)=O(k^3|C|^2)$ where $t$ is the maximum fan-in.
    \end{definition}
    Notice that, in the above $\CNF$ encoding, we can only guarantee that the output of the circuit, which is a $k$ length binary representation $\overline{g_{out}}$, is equal to zero in the finite field $\mathbb{F}_q$. In other words, we can guarantee that $\sum_{i=0}^{k-1} 2^i g_{out,i} =0$. However, this does not mean that $g_{out,i}=0$ for all bits. Instead of guaranteeing each $g_{out,i}$ is equal to $0$, we guarantee that each $g_{out,i}=q_i$ where $\overline{q}$ is the $k$ length binary representation of $q$. Therefore, to encode the output of the algebraic circuit is equal to $0$, we add $q$ to the output, which is 
    \begin{equation*}
        \CNF\text{-}\add(\overline{g_{out}}, \underbrace{\overline{q}}_{\mathclap{\text{the binary representation of $q$}}}, \overline{g_{out}}).
    \end{equation*}

    \begin{definition}[$\CNF$ encoding of unbounded fan-in algebraic circuit equations; $\cnf(C(\overline{x}))=0$]
        \label{def: cnf encoding of algebraic circuit equations}
         Let $C(\overline{x})=0$ be a circuit equation in the variables $\overline{x}$ over a finite field $\mathbb{F}_q$. The $\CNF$ encoding of it denoted $\cnf(C(\overline{x})=0)$ consists of the $\CNF$ encoding of $C(\overline{x})$ from \Cref{def: cnf encoding of algebraic circuit} together with the equations
        \begin{itemize}
            \item $\CNF=\add(\overline{g_{out}}, \underbrace{\overline{q}}_{\mathclap{\text{the binary representation of $q$}}}, \overline{g_{out}}) $
            \item $ g_{out,i} =q_i, \quad 0 \leq i \leq k-1$ which express that the output node $g_{out} = 0$ and $g_{out,i}$ are the binary representation bits of $g_{out}$. We call these formulas the connection formulas for the output node. 
        \end{itemize}
         The size of the $\CNF$ encoding of unbounded fan-in algebraic circuits equation $\cnf(C(\overline{x})=0)$ is $O(tk^3|C|)=O(k^3|C|^2)$ where $t$ is the maximum fan-in.
    \end{definition}
    \begin{definition}[Extended $\CNF$ encoding of unbounded fan-in algebraic circuit equation (circuit \resp); $\ecnf(C(\overline{x})=0)$ ($\ecnf(C(\overline{x}))$, \resp)]
        \label{def: ecnf encoding of algebraic circuit (equations)}
         Let $C(\overline{x})$ be a circuit in the $\overline{x}$ variables over the field $\mathbb{F}_q$. Then the extended $\CNF$ encoding of the circuit equation $C(\overline{x})=0$ (circuit $C(\overline{x})$, \resp), in symbols $\ecnf(C(\overline{x})=0)$ ($\ecnf(C(\overline{x}))$, \resp), is defined to be the following:
        \begin{itemize}
            \item the $\CNF$ encoding of circuit equation $C(\overline{x})=0$ (circuit $C(\overline{x})$, \resp), namely, $\cnf(C(\overline{x})=0) (\cnf(C(\overline{x})), \resp)$; and
            \item the binary value formula for each binary representation bits vector.
        \end{itemize}
        The size of the Extended $\CNF$ encoding of unbounded fan-in algebraic circuits equation $\ecnf(C(\overline{x})=0)$ is $O(tk^3|C|)=O(k^3|C|^2)$ where $t$ is the maximum fan-in.
    \end{definition}
    Same as the argument given in \cite{ST25}, we have the following propositions.
    \begin{proposition}
        \label{prop: C <-> cnf(C) <-> ecnf(C)}
        Let $C(\overline{x})$ be a circuit equation over $\mathbb{F}$. Then, $C(\overline{x})$ is unsatisfiable over $\mathbb{F}$ if and only if $\cnf(C(\overline{x})=0)$ is an unsatisfiable $\CNF$ if and only if $\ecnf(C(\overline{x}))= 0$ is an unsatisfiable set of equations over $\mathbb{F}$.
    \end{proposition}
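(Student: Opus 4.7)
The plan is to prove the three-way equivalence by exhibiting, in each direction, a natural correspondence between satisfying assignments. Throughout, the technical content lies in verifying that the bit-arithmetic encodings $\CNF\text{-}\add$ and $\CNF\text{-}\mult$ faithfully simulate field arithmetic over $\mathbb{F}_q$; once this is established, the equivalence is essentially bookkeeping.

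First, I would handle the direction $C(\overline{x})=0$ satisfiable $\Rightarrow$ $\cnf(C(\overline{x})=0)$ satisfiable. Given an assignment $\overline{\alpha} \in \mathbb{F}_q^{|\overline{x}|}$ with $C(\overline{\alpha})=0$, evaluate $C$ bottom-up to obtain a field value at every gate $g$, and then fix the binary-representation bits of $g$ to be a $k$-bit encoding of that value (i.e., some representative in $\{0,\ldots,q\}$ consistent with the modular reduction steps of $\CNF\text{-}\add$ and $\CNF\text{-}\mult$). Intermediate $\SLP$ variables $v_i^g$, carry bits $\carry_i$, addition bits $\add_i$, $\add_i'$, and multiplication bits $s_{i,j}$, $u_{i,j}$, $v_{i,j}$ introduced in \Cref{def: carry-add}, \Cref{def: modular}, \Cref{def: prod, mult}, and \Cref{def: cnf multiplication} are then determined uniquely. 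Verifying that all clauses of $\CNF\text{-}\add$ and $\CNF\text{-}\mult$ are satisfied reduces to checking the correctness of the carry-lookahead addition, the two-step modular reduction, and the $k$-way partial-product accumulation, each of which was designed precisely to compute $x+y \bmod q$ and $x\cdot y \bmod q$. Finally, the output connection formula $g_{out,i}=q_i$ together with the extra $\CNF\text{-}\add(\overline{g_{out}},\overline{q},\overline{g_{out}})$ clauses can be satisfied by choosing the non-canonical representative $q$ of $0\in\mathbb{F}_q$ for the output.

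Next, the direction $\cnf(C(\overline{x})=0)$ satisfiable $\Rightarrow$ $C(\overline{x})=0$ satisfiable. Given a satisfying Boolean assignment, read off the input field elements $\alpha_i := \sum_{j=0}^{k-1} 2^j x_{i,j} \bmod q$. By induction on depth, I will show that the assigned bits at each gate $g$ encode (mod $q$) the value of the sub-circuit rooted at $g$ evaluated on $\overline{\alpha}$. The inductive step uses the fact that the bit-arithmetic encodings are \emph{functional} modulo $q$: any satisfying assignment to $\CNF\text{-}\add(\overline{x},\overline{y},\overline{z})$ or $\CNF\text{-}\mult(\overline{x},\overline{y},\overline{z})$ forces $\VAL(\overline{z}) \equiv \VAL(\overline{x})+\VAL(\overline{y}) \pmod q$ or $\VAL(\overline{z}) \equiv \VAL(\overline{x})\cdot\VAL(\overline{y}) \pmod q$, respectively. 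This requires a lemma checking each addition, modular, and multiplication sub-step in turn. The output constraint $g_{out,i}=q_i$ then forces the final value to be $\equiv 0 \pmod q$, i.e., $C(\overline{\alpha})=0$.

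The equivalence between $\cnf(C(\overline{x})=0)$ and $\ecnf(C(\overline{x})=0)$ is immediate: the extended version only adds the binary-value extension equations $w=\VAL(\overline{w})$ from \Cref{def: binary value VAL}, which introduce fresh extension variables whose values are functionally determined by the bits $\overline{w}$. Hence a satisfying assignment to one extends (or restricts) uniquely to a satisfying assignment to the other. The main obstacle I anticipate is the functional correctness lemma described above for the modular-reduction sub-routine: because $\CNF\text{-}\add$ and $\CNF\text{-}\mult$ do not produce canonical residues in $[0,q)$ (representatives may be either in $[0,q)$ or equal to $q$ itself), one must track carefully which representatives are admissible at each layer and verify that the two-stage modular step of \Cref{def: modular} is both \emph{sound} (every satisfying assignment represents the correct class mod $q$) and \emph{complete} (for every class there exists at least one satisfying bit-pattern). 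Granting this lemma, the two directions combine to give the stated three-way equivalence.
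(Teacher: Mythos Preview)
Your proposal is correct and follows the natural approach: establish a bijection between satisfying assignments by verifying that the bit-arithmetic gadgets $\CNF\text{-}\add$ and $\CNF\text{-}\mult$ are functionally correct modulo $q$, and observe that the extended encoding only adds functionally determined extension variables. The paper does not spell out its own proof here; it simply remarks that the argument is the same as in \cite{ST25} (the fixed-field analogue, Proposition~\ref{prop: C <-> cnf <-> ecnf fixed finite fields}), so your sketch is exactly the kind of argument intended, and your identification of the non-canonical-representative issue at the output gate is the one genuinely delicate point.
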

    \begin{corollary}
        \label{cor: unsat ecnf(C) has VNP IPS refutation}
        If $\ecnf(C(\overline{x})=0)$ is unsatisfiable over $\mathbb{F}_q$ then it has an $\IPS^\alg$ refutation in $\VNP_{\mathbb{F}_q}$.
    \end{corollary}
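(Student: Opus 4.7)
\begin{proofsketch}
The plan is to reduce the claim for $\ecnf$ to the corresponding claim for $\cnf$, which is already handled by Lemma~\ref{lemma: unsat CNF has VNP refutation}. First I would apply Proposition~\ref{prop: C <-> cnf(C) <-> ecnf(C)} to observe that unsatisfiability of $\ecnf(C(\overline{x})=0)$ over $\mathbb{F}_q$ is equivalent to unsatisfiability of $\cnf(C(\overline{x})=0)$ over $\mathbb{F}_q$ (both are equivalent to $C(\overline{x})=0$ being unsatisfiable). Hence the CNF part of the encoding is itself unsatisfiable.

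Next, since $\cnf(C(\overline{x})=0)$ is an unsatisfiable family of $\CNF$ formulas in the binary-representation-bit variables, Lemma~\ref{lemma: unsat CNF has VNP refutation} immediately supplies an $\IPS^\alg$ certificate $C_n(\overline{x},\overline{y})$ for its axiom set that lies in $\VNP_{\mathbb{F}_q}$, where $\overline{y}$ are the placeholder variables associated with those axioms.

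It then remains to upgrade this to an $\IPS^\alg$ certificate for $\ecnf(C(\overline{x})=0)$. Recall that by \Cref{def: ecnf encoding of algebraic circuit (equations)}, $\ecnf$ is $\cnf$ together with the additional binary value formulas (defining the extension variables of \Cref{def: binary value VAL}). These extra axioms only add new placeholder variables, say $\overline{z}$, to the $\IPS^\alg$ formulation. The same certificate $C_n(\overline{x},\overline{y})$, regarded as a polynomial in $(\overline{x},\overline{y},\overline{z})$ that does not depend on $\overline{z}$, still satisfies the two $\IPS^\alg$ conditions with respect to the enlarged axiom set. Since trivially extending the variable set does not take a polynomial family out of $\VNP_{\mathbb{F}_q}$, the resulting certificate lies in $\VNP_{\mathbb{F}_q}$ as required.

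There is essentially no obstacle here; the only point that needs a small verification is that padding the certificate with unused placeholder variables preserves membership in $\VNP_{\mathbb{F}_q}$, which follows directly from the definition of $\VNP$ (or alternatively by multiplying each unused placeholder by the zero polynomial to obtain a bona fide circuit in the enlarged variable set).
\end{proofsketch}
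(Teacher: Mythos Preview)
Your proposal is correct and follows essentially the same approach as the paper. The paper does not give an explicit proof of this corollary (it simply refers to the analogous argument in \cite{ST25}), but the implicit argument is exactly the one you outline: use Proposition~\ref{prop: C <-> cnf(C) <-> ecnf(C)} to pass to unsatisfiability of the $\cnf$ part, invoke Lemma~\ref{lemma: unsat CNF has VNP refutation} to obtain a $\VNP$ certificate for that $\CNF$, and then observe that the same certificate (ignoring the extra placeholder variables for the binary value axioms and the new extension variables) serves as a certificate for the larger $\ecnf$ system. This is also the content of Lemma~\ref{lemma: unsat eCNF formulastar has VNP refutation} in the fixed-field setting.
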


    Now, we are ready to show the translation lemma for bit arithmetic encoding, similarly to \Cref{lemma: Translate semi-CNFs from circuit equations in Fixed Finite Fields} and \Cref{lemma: Translate circuit equations from semi-CNFs in Fixed Finite Fields}.

    \begin{lemma}
        \label{lemma: CNF ADD(x,y,z)}
        Let $\mathbb{F}_q$ be a finite field. Let $\overline{x} = x_{k-1}, \dots, x_0$, $\overline{y} = y_{k-1},\dots, y_0$ and $\overline{z} = z_{k-1},\dots, z_0$ be three $k$ length binary representations. Then,
        \begin{align*}
            \CNF\text{-}\add(\overline{x},\overline{y},\overline{z}) \sststile{\IPS^\alg}{O(2^k\cdot \poly(k)), O(1)} \VAL(\overline{x}) + \VAL(\overline{y}) = \VAL(\overline{z})
            \end{align*}
    \end{lemma}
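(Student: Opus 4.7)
The plan is to mirror the three-step structure of $\CNF\text{-}\add(\overline{x},\overline{y},\overline{z})$ and derive the intended field identity in stages, each stage being a ``proof by Boolean cases'' argument via \Cref{prop: 0-1 implication completeness of IPS}.

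First, from the addition-step equations $\addition(\overline{x},\overline{y},\overline{\add})$ together with the Boolean axioms for $\overline{x},\overline{y},\overline{\carry},\overline{\add}$, I would derive the identity
\[
\VAL(\overline{x}) + \VAL(\overline{y}) \; = \; \sum_{i=0}^{k} 2^i \add_i
\]
as an equation in $\mathbb{F}_q$. For every Boolean assignment $\overline{\alpha}\in\{0,1\}^{2k}$ to the $2k$ input bits, the carry-lookahead equations pin down the unique Boolean values of $\overline{\carry}$ and $\overline{\add}$ for which this identity holds as integers, and hence in $\mathbb{F}_q$ (no modular reduction occurs since $\VAL(\overline{x})+\VAL(\overline{y})<2^{k+1}$). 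So for each such $\overline{\alpha}$ there is a $\poly(k)$-size, depth-$O(1)$ $\IPS^\alg$ derivation of the identity from the addition-step axioms and Boolean axioms. Applying \Cref{prop: 0-1 implication completeness of IPS} with $r=2k$ eliminates the fixings at the cost of a $c^{2k}=2^{O(k)}$ multiplicative blow-up, producing a depth-$O(1)$, $2^{O(k)}\cdot\poly(k)$-size derivation of the identity above, which fits the claimed bound.

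Second, from the modular-step equations $\modular^k(\add_{k-1},\dots,\add_0,\add_k,\overline{\add'})$, I would derive
\[
\sum_{i=0}^{k} 2^i \add_i \;=\; \VAL(\overline{\add'}) \qquad \text{in } \mathbb{F}_q.
\]
The key subclaim is that the CNF-defined vector $\overline{m}$ in \Cref{def: modular} has $\VAL(\overline{m}) = 2^k \add_k + 2^{k-1}\add_{k-1}$ in $\mathbb{F}_q$: this is immediate from the way the Boolean functions $f_i^t(x',x_{k-1})$ were set up, namely as the bits of the binary representation of $(2^t x' + 2^{k-1} x_{k-1})\bmod q$, and it follows case-by-case on the four possible Boolean values of $(\add_k,\add_{k-1})$. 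An analogous subclaim holds for $\overline{w}$. The two inner carry-lookahead additions inside $\modular^k$ are then handled by the same argument as in the first stage, yielding the desired equation again in depth-$O(1)$ and size $2^{O(k)}\cdot\poly(k)$. Finally, the connection-step axioms $\add_i'=z_i$ trivially give $\VAL(\overline{\add'})=\VAL(\overline{z})$ in linear size and constant depth. Chaining the three derivations produces $\VAL(\overline{x})+\VAL(\overline{y})=\VAL(\overline{z})$ as required.

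The main obstacle is cleanly handling the modular step: one must verify that the truth-table-defined vectors $\overline{m}$ and $\overline{w}$ really represent the correct reductions modulo $q$, and that chaining two further carry-lookahead additions inside $\modular^k$ can be done without blowing up the depth or introducing a super-$2^{O(k)}$ blow-up in size. Once the identity $\VAL(\overline{m}) = 2^k\add_k + 2^{k-1}\add_{k-1}$ is extracted for each of the four Boolean values of $(\add_k,\add_{k-1})$ by direct inspection, the rest is absorbed by \Cref{prop: 0-1 implication completeness of IPS}, which keeps the proof at constant depth while incurring only the stated $O(2^k\cdot\poly(k))$-size cost.
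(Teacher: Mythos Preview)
Your three-stage decomposition matches the paper's proof exactly, which factors the argument through two auxiliary lemmas (\Cref{lemma: Translate addition} for the addition step and \Cref{lemma: Translate modular} for the modular step) and then chains them with the connection step. Your treatment of the modular step and the connection step is essentially identical to the paper's.

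The one genuine difference is in the addition step. You propose to fix all $2k$ input bits at once and invoke \Cref{prop: 0-1 implication completeness of IPS} globally, paying $c^{2k}$. The paper instead works \emph{bit by bit}: for each $i$ it proves the local telescoping identity
\[
2^i(x_i+y_i)\;=\;2^i\cdot\arit(x_i\oplus y_i\oplus \carry_i)\;+\;2^{i+1}\carry_{i+1}\;-\;2^i\carry_i
\]
which, after substituting the defining equation for $\carry_{i+1}$, involves only the three Boolean variables $x_i,y_i,\carry_i$. Applying \Cref{prop: 0-1 implication completeness of IPS} with $r=3$ gives a \emph{constant}-size proof per bit, hence $O(k)$ total for the addition step. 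The $2^{O(k)}$ blow-up in the paper appears only once, inside the modular step, when verifying that the truncated $\addition'$ does not overflow (there all $k{+}1$ input variables must be fixed simultaneously).

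Your global approach is not wrong, but it leaves one thing unverified: \Cref{prop: 0-1 implication completeness of IPS} requires the per-fixing certificates $G_i,L_i,Q_i$ to be \emph{depth-$2$} (sums of monomials), not merely ``depth-$O(1)$''. Under your fixing of only $\overline{x},\overline{y}$, the $\carry_i$ and $\add_i$ remain free variables constrained by a chain of equations, so the per-fixing derivation is a length-$k$ cascade of substitutions; you would need to check that the resulting $G_i$ really sit at depth $2$ with $\poly(k)$ size. This can be made to work (or circumvented by fixing all $O(k)$ Boolean variables at once, since the target is linear), but the paper's local approach sidesteps the issue entirely by keeping $r$ constant at each invocation.
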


    \begin{lemma}
        \label{lemma: Translate addition}
        Let $\mathbb{F}_q$ be a finite field. Let $\overline{x} = x_{k-1}, \dots, x_0$ and $\overline{y} = y_{k-1},\dots, y_0$ be two $k$ length binary representations. Let $\overline{\add} = \add_{k},\add_{k-1},\dots, \add_0$ be a $k+1$ length binary representation. Then,
        \begin{align*}
            \addition(\overline{x},\overline{y},\overline{\add}) \sststile{\IPS^\alg}{O(k), O(1)}
            \VAL(\overline{x}) +\VAL(\overline{y})= \VAL(\overline{\add_i(\overline{x},\overline{y})}) .
        \end{align*}
        In other words, from $\addition(\overline{x},\overline{y},\overline{\add})$, there is a $O(1)$-depth, $O(k)$-size $\IPS^\alg$ proof of $\VAL(\overline{x}) +\VAL(\overline{y})= \VAL(\overline{\add_i(\overline{x},\overline{y})}) $.
    \end{lemma}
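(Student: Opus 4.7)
The plan is to reduce the value equation to a telescoping sum of local, bit-wise ``full-adder'' identities, each of which lives on a constant number of variables and can therefore be derived essentially for free in constant-depth $\IPS^\alg$.

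The crucial local identity is the standard full-adder equation: for each $i \in \{0,\dots,k-1\}$, modulo the Boolean axioms on $x_i,y_i,\carry_i$, one has the polynomial identity
\[
x_i + y_i + \carry_i \;=\; \add_i + 2\,\carry_{i+1}.
\]
This is because, under the arithmetisation of \Cref{def: arithmetization operation arit}, the axioms of $\addition$ say that $\add_i$ is the parity of $x_i, y_i, \carry_i$ and $\carry_{i+1}$ is their majority, and for $\{0,1\}$-valued triples one has $a+b+c = \mathrm{parity}(a,b,c) + 2\cdot\mathrm{majority}(a,b,c)$. Since this involves only the four variables $x_i,y_i,\carry_i,\carry_{i+1}$ (plus $\add_i$), together with their Boolean axioms, \Cref{prop: 0-1 implication completeness of IPS} (case analysis over the constantly many $\{0,1\}$-assignments) combined with \Cref{prop: polynomial identities that can be written as constant-depth circuits are proved for free in constant-depth IPS} gives a derivation of this identity from the relevant axioms of $\addition(\overline{x},\overline{y},\overline{\add})$ in constant size and constant depth. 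Call the resulting derived equation $E_i$.

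Next, I would form the $\mathbb{F}_q$-linear combination $\sum_{i=0}^{k-1} 2^i \cdot E_i$. This is a single application of the linear-combination rule with $k$ summands, hence has size $O(k)$ and depth $O(1)$, and yields
\[
\sum_{i=0}^{k-1} 2^i x_i + \sum_{i=0}^{k-1} 2^i y_i + \sum_{i=0}^{k-1} 2^i \carry_i \;=\; \sum_{i=0}^{k-1} 2^i \add_i + \sum_{i=0}^{k-1} 2^{i+1}\carry_{i+1}.
\]
After re-indexing, the carry terms telescope: everything cancels except $2^0\carry_0$ on the left and $2^k\carry_k$ on the right. Using the axiom $\carry_0 = \arit(\bot) = 0$ and the axiom $\add_k = \carry_k$ from \Cref{def: carry-add} (each an axiom, so derivable in constant size and depth), and unfolding the definitions of $\VAL(\overline{x})$, $\VAL(\overline{y})$, and $\VAL(\overline{\add})$ as depth-$2$ circuits on $k{+}1$ bits via \Cref{def: binary value VAL}, we obtain
\[
\VAL(\overline{x}) + \VAL(\overline{y}) \;=\; \sum_{i=0}^{k-1} 2^i \add_i + 2^k \add_k \;=\; \VAL(\overline{\add}),
\]
which is the desired conclusion. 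Cascading the $k$ constant-size derivations of the $E_i$ with the single $O(k)$-size linear combination gives total size $O(k)$ and total depth $O(1)$.

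The only place that requires care is Step 1, the derivation of the local identity $E_i$. On the nose, the axioms of $\addition$ give $\carry_{i+1}$ as a specific depth-$O(1)$ polynomial (the arithmetisation of the $\lor$/$\land$ expression) and $\add_i$ as the arithmetisation of the triple XOR; checking that $x_i+y_i+\carry_i - \add_i - 2\carry_{i+1}$ vanishes then requires the Boolean axioms $v^2 - v = 0$ for $v \in \{x_i,y_i,\carry_i\}$ in order to simplify products like $x_i^2 y_i$ to $x_i y_i$. Since this is a polynomial identity on five Boolean variables, it is covered by \Cref{prop: 0-1 implication completeness of IPS} (or directly by expanding brackets and applying the Boolean axioms a constant number of times), and the resulting derivation is constant size and constant depth. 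This is the main---but entirely routine---obstacle, and no step of the argument requires an induction on $k$, which is what allows the final depth bound to be $O(1)$ rather than $O(k)$.
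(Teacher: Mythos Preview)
Your proof is correct and takes essentially the same approach as the paper: derive the local full-adder identity $x_i + y_i + \carry_i = \add_i + 2\,\carry_{i+1}$ for each $i$ in constant size and depth via \Cref{prop: 0-1 implication completeness of IPS}, then take the $2^i$-weighted sum and telescope. The only cosmetic difference is that the paper first substitutes $\carry_{i+1}$ by its defining expression so the local identity lives on the three Boolean variables $x_i,y_i,\carry_i$, whereas you keep $\carry_{i+1}$ and $\add_i$ as separate extension variables and invoke the proposition on five variables; both versions are constant-size applications of the same proposition.
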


    \begin{proof}[Proof of \Cref{lemma: Translate addition}]
        By \Cref{def: carry-add}, $\addition(\overline{x},\overline{y},\overline{\add})$ includes
        \begin{align*}
            \add_i(\overline{x},\overline{y}) &= \arit(x_i \oplus y_i \oplus \carry_i(\overline{x},\overline{y})), \qquad i=0,\cdots, k-1\\
            \add_k(\overline{x},\overline{y}) &= \carry_k(\overline{x},\overline{y}) .
        \end{align*}

        By simple substitution, it suffices to show
        \begin{align*}
            \addition(\overline{x},\overline{y},\overline{\add}) \sststile{\IPS^\alg}{\ast, O(1)} \sum_{i=0}^{k-1} (x_i+y_i) \cdot 2^i = \sum_{i=0}^{k-1} (x_i \oplus y_i \oplus \carry_i(\overline{x},\overline{y})) \cdot 2^i +\carry_k(\overline{x},\overline{y}) \cdot 2^k .
        \end{align*}
         
        For each $0 \leq i \leq k-1$, we aim to show that there is a constant-depth, constant-size $\IPS^\alg$ proof of the following equation: 
         \[
            2^i\cdot (x_i+y_i) = 2^i(x_i\oplus y_i \oplus \carry_i(\overline{x},\overline{y})) + 2^{i+1} \carry_{i+1}(\overline{x},\overline{y}) - 2^i \carry_i(\overline{x},\overline{y}).
        \]

        By substituting $\carry_{i+1}$ with $(x_{i}\lor y_{i})\land \carry_{i}(\overline{x},\overline{y}))$ according to \Cref{def: carry-add}, for each $0 \leq i \leq k-1$, the above equation becomes
        \[
            2^i\cdot (x_i+y_i) = 2^i(x_i\oplus y_i \oplus \carry_i(\overline{x},\overline{y})) + 2^{i+1} ((x_{i}\land y_{i}) \lor ((x_{i}\lor y_{i})\land \carry_{i}(\overline{x},\overline{y}))) - 2^i \carry_i(\overline{x},\overline{y}).
        \]
        Note that the above equation, which has only three Boolean variables, is true under any 0-1 assignment. Therefore, by \Cref{prop: 0-1 implication completeness of IPS}, there is a constant-depth, constant-size $\IPS^\alg$ proof of the above equation.

        Observe that all the equations above can be proved simultaneously. By summing over all these equations, there is a constant-depth, $O(k)$-size $\IPS^\alg$ proof of
        \begin{align*}
            \sum_{i=0}^{k-1} (x_i+y_i) \cdot 2^i = \sum_{i=0}^{k-1} (x_i \oplus y_i \oplus \carry_i(\overline{x},\overline{y})) \cdot 2^i +\carry_k(\overline{x},\overline{y}) \cdot 2^k .
        \end{align*}
        
        We can conclude
        \begin{align*}
            \addition(\overline{x},\overline{y},\overline{\add}) \sststile{\IPS^\alg}{O(k), O(1)}
            \VAL(\overline{x}) +\VAL(\overline{y})= \VAL(\overline{\add_i(\overline{x},\overline{y})}).
        \end{align*}
    \end{proof}

    \begin{lemma}
        \label{lemma: Translate modular}
        Let $\mathbb{F}_q$ be a finite field. Let $\overline{x} = x_{k-1}, \dots, x_0$ and $\overline{y} = y_{k-1},\dots, y_0$ be two $k$ length binary representations. Let $z$ be a Boolean variable. Then,
        \begin{align*}
            \modular^t(\overline{x},z, \overline{y}) \sststile{\IPS^\alg}{O(2^k\cdot \poly(k)),O(1)} \VAL(\overline{x}) + 2^t z = \VAL(\overline{y}) .
        \end{align*}
        In other words, from $ \modular^l(\overline{x},z, \overline{y})$, there is a $O(1)$-depth, $O(k)$-size $\IPS^\alg$ proof of $\VAL(\overline{x}) + 2^t z = \VAL(\overline{y}) $.
    \end{lemma}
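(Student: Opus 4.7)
The plan is to derive the identity $\VAL(\overline{x})+2^t z=\VAL(\overline{y})$ in $\mathbb{F}_q$ from the axioms of $\modular^t(\overline{x},z,\overline{y})$ by direct Boolean case analysis on the $k+1$ input bits $(\overline{x},z)$, aggregating the $2^{k+1}$ cases via \Cref{prop: 0-1 implication completeness of IPS}. A tempting structural approach would chain \Cref{lemma: Translate addition} across the two $\addition$-blocks inside $\modular^t$, but it stalls at the carry-out of the truncated second block: even after deriving $\VAL(u_{k-1},\ldots,u_0)+\VAL(\overline{w})=\VAL(\overline{y})+2^k\carry_k'$ in $\mathbb{F}_q$, eliminating the spurious $2^k\carry_k'$ still requires casework, since in general $2^k\not\equiv 0\pmod{q}$. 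I therefore do the casework globally at the input level, which matches the stated bound directly.

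Fix an arbitrary assignment $\overline{\alpha}\in\{0,1\}^{k+1}$ to $(\overline{x},z)$. Under the literal axioms $\{x_i-\alpha_i\}_i\cup\{z-\alpha_z\}$, the $\cnf(f_i^t)$-axioms pin each bit of $\overline{m}$ to a concrete $\{0,1\}$-value, so that $\overline{m}$ realises the binary representation of $(2^t\alpha_z+2^{k-1}\alpha_{k-1})\bmod q$; the first-block carry and sum axioms then force each $\carry_i$ and $\add_i$, fixing $\overline{u}$; the $\cnf(g_i)$-axioms fix $\overline{w}$ from $u_k$; and the second-block axioms finally pin down $\overline{y}$. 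Although the carry recurrence is sequential in $i$, each forced value collapses to a literal element of $\{0,1\}\subseteq\mathbb{F}_q$ under the assignment, so all these values can be certified in parallel from their local defining axioms. The resulting per-case derivation therefore fits in depth $O(1)$ and size $O(k)$.

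With every intermediate bit reduced to an explicit constant, the target equation becomes a numerical identity in $\mathbb{F}_q$: by construction $\VAL(\overline{y})$ equals $(\VAL(\overline{\alpha})+2^t\alpha_z)\bmod q$, so equality holds in $\mathbb{F}_q$ and is discharged for free by \Cref{prop: polynomial identities that can be written as constant-depth circuits are proved for free in constant-depth IPS}. Each case proof thereby arises in the form $\sum G_iF_i+\sum L_i(x_i-\alpha_i)+\sum Q_i(x_i^2-x_i)=f$ required by \Cref{prop: 0-1 implication completeness of IPS}; applying that proposition with $r=k+1$ merges the $2^{k+1}$ per-case proofs into a single depth-$O(1)$ $\IPS^\alg$ derivation of size $c^{k+1}\cdot O(k)$, fitting within the claimed $O(2^k\cdot\poly(k))$ budget (the constant $c$ from \Cref{prop: 0-1 implication completeness of IPS} being absorbed into the $2^{O(k)}$ factor).

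The main obstacle is depth control: a naive unfolding of the carry recurrence would inflate the per-case proof depth to $\Omega(k)$, well beyond the required $O(1)$. This is circumvented precisely because Booleanly fixing all inputs collapses every downstream bit to a literal constant, certifiable from its own local axiom independently of the upstream recurrence, so the sequential structure of the carry chain never enters the depth count.
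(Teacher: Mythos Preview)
Your global-casework strategy is a valid route and different from the paper's, but the justification you give for the per-case $O(1)$-depth bound is wrong. You claim each intermediate bit is ``certifiable from its own local axiom independently of the upstream recurrence''; that is false. The defining axiom for $\carry_i$ has the form $\carry_i - h(x_{i-1},m_{i-1},\carry_{i-1})$, and $\carry_{i-1}$ is an extension \emph{variable}, not one of the $(\overline{x},z)$-literals you fixed. From that axiom plus your literals you only learn that $\carry_i$ equals an affine function of $\carry_{i-1}$, not a constant. The same issue recurs in the second $\addition'$-block, whose inputs $u_j$ and $w_j(u_k)$ are themselves extension variables coming from the first block. What actually keeps the per-case certificate at depth~$2$ is a different observation: once $x_{i-1},m_{i-1}$ are pinned to Boolean constants, $h$ is affine in $\carry_{i-1}$ with a field-constant slope $B_i$, so telescoping through the chain gives $\carry_i - c_i = \sum_{j\le i}\bigl(\prod_{l>j}B_l\bigr)\,F_{\carry_j} + (\text{literal terms})$, where $F_{\carry_j}$ is the defining axiom and $c_j$ its forced value. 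Each axiom-coefficient $\prod_l B_l$ is thus a field constant, and the literal-coefficients accumulate only to size $\poly(k)$ across the two chained blocks (not $O(k)$ as you state). You need this linearity argument, not the ``parallel from local axioms'' heuristic, to justify applying \Cref{prop: 0-1 implication completeness of IPS} with per-case depth-$2$ coefficients.

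The paper proceeds differently: it applies \Cref{lemma: Translate addition} structurally to the first $\addition$-block, handles $\VAL(\overline{m})=2^tz+2^{k-1}x_{k-1}$ by casework on just the two bits $(z,x_{k-1})$, and then redoes the bit-by-bit summation for bits $0,\dots,k-2$ of the truncated $\addition'$-block. Only for the single top-bit identity does it replace $u_{k-1},w_{k-1}$ and the second block's $\carry_{k-1}$ by explicit constant-depth $\poly(k)$-size formulas in $(\overline{x},z)$ and invoke \Cref{prop: 0-1 implication completeness of IPS} over all $k{+}1$ input bits. Both routes ultimately pay $c^{k+1}\cdot\poly(k)$ for one $(k{+}1)$-variable casework, so the asymptotics match; the paper's decomposition just confines the casework to the one equation where it is unavoidable, while yours does it globally.
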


    \begin{proof}[Proof of \Cref{lemma: Translate modular}]

        Recall \Cref{def: modular}, $\modular^t(\overline{x},z, \overline{y})$ contains
        
        \[
            \addition(0, \overbrace{x_{k-2},\dots,x_0}^{\mathclap{\text{the last $k-1$ bit of $\overline{x}$}}}, \underbrace{\overline{m}}_{\mathclap{\text{the $k$ length binary representation of $2^t z +2^{k-1} x_{k-1} \mod{q}$ }}}, \overline{u}),
        \]
        \[
            \addition^\prime(\overbrace{u_{k-1},\dots,u_0}^{\mathclap{\text{the last $k$ bit of $u$}}}, \underbrace{\overline{w}}_{\mathclap{\text{the $k$ length binary representation of $2^k u_k$}}},\overline{y}).
        \]

        By \Cref{lemma: Translate addition},
        \[
            \addition(0, x_{k-2},\dots, x_0, \overline{m}, \overline{u}) \sststile{\IPS^\alg}{O(k), O(1)} \sum_{i=0}^{k-2} 2^i x_i + \VAL(\overline{m}) = \VAL(\overline{u}).
        \]

        By \Cref{prop: 0-1 implication completeness of IPS}, there is a constant-depth, constant-size $\IPS^\alg$ proof of $\VAL(\overline{m})=2^t z+2^{k-1}x_{k-1}$ since there are only two Boolean variables in it after replacing each bit $m_i$ with $\arit(\cnf(f_i^t(z, x_{k-1})))$.

        Hence, by adding $\VAL(\overline{m})=2^t z+2^{k-1} x_{k-1}$ and $\sum_{i=0}^{k-2} 2^i x_i + \VAL(\overline{m}) = \VAL(\overline{u})$ together, we have $\VAL(\overline{x}) + 2^t z=\VAL(\overline{u})$.

        Now, we aim to show that 
        \begin{align*}
            \addition^\prime(u_{k-1},\cdots,u_0, \overline{w},\overline{y}) \sststile{\IPS^\alg}{O(2^k\cdot \poly(k)), O(1)} \VAL(\overline{u}) = \VAL(y)
        \end{align*}

        Same as the proof of \Cref{lemma: Translate addition}, for $0 \leq i \leq k-2$, there is a constant-depth, constant-size $\IPS^\alg$ proof of
        \[
            2^i\cdot (u_i+w_i) = 2^iy_i(u_{k-1},\dots,u_0,\overline{w}) + 2^{i+1} \carry_{i+1}(u_{k-1},\dots,u_0,\overline{w}) - 2^i \carry_i(u_{k-1},\dots, u_0,\overline{w}).
        \]

        For the $k$th bit, we aim to prove
        \[
            2^{k-1}(u_{k-1}+w_{k-1}) = 2^{k-1}y_{k-1}(u_{k-1},\dots,u_0,\overline{w})- 2^{k-1} \carry_{k-1}(u_{k-1},\dots,u_0,\overline{w}),
        \]
        which is
        \[
            2^{k-1}(u_{k-1}+w_{k-1}) = 2^{k-1}(u_{k-1}\oplus w_{k-1} \oplus \carry_{k-1}(u_{k-1},\dots,u_0,\overline{w}))- 2^{k-1} \carry_{k-1}(u_{k-1},\dots,u_0,\overline{w}).
        \]
        By replacing $u_{k-1}$, $w_{k-1}$ and $\carry_{k-1}(u_{k-1},\dots,u_0,\overline{w})$ by constant-depth, $\poly(k)$-size formulas $U(\overline{x},y)$, $W(\overline{x},y)$ and $CARRY(\overline{x},z)$ correspondingly, we get
        \[
            2^{k-1}(U+W) = 2^{k-1}(U\oplus W \oplus CARRY)- 2^{k-1} CARRY.
        \]
        The above equation holds over all Boolean assignments of $\overline{x},z$ according to our discussion above. By \Cref{prop: 0-1 implication completeness of IPS}, we get a depth-2, $O(2^k\cdot \poly(k))$-size $\IPS^\alg$ of the above equation. By adding the above equations, we have $\sum_{i=0}^{k-1} 2^i u_i +\VAL(\overline{w})=\VAL(\overline{y})$. 
        
        Since each bit in $\overline{w}$ can be replaced by the corresponding $\arit(\cnf(g_i(u_k)))$, there is a constant-depth, constant-size $\IPS^\alg$ proof of $\VAL(\overline{w})=2^k u_k$ since there is only one Boolean variable (i.e. $u_k$) in the formula after replacing. From $\sum_{i=0}^{k-1} 2^i u_i +\VAL(\overline{w})=\VAL(\overline{y})$ and $\VAL(\overline{w})=2^k u_k$, we get $\VAL(\overline{u})=\VAL(\overline{y})$.
        
        Now, we can conclude that
        \[
            \modular^t(\overline{x},z, \overline{y}) \sststile{\IPS^\alg}{O(2^k\cdot \poly(k)),O(1)} \VAL(\overline{x}) + 2^t z = \VAL(\overline{y}) .
        \]
    \end{proof}
    
    \begin{proof}[Proof of Lemma \ref{lemma: CNF ADD(x,y,z)}]
        Recall the \Cref{def: cnf addition}, $\CNF$-$\add(\overline{x},\overline{y},\overline{z})$ is as follows:
        \begin{itemize}
            \item Addition step: $\addition(\overline{x},\overline{y}, \overline{\add})$
            \item Modular step: $\modular^{k}(\add_{k-1}, \dots, \add_0, \add_k, \overline{\add^\prime})$
            \item Connection step: for each $0 \leq i \leq k-1$, we have $\add_i^\prime = z_i$.
        \end{itemize}
        \medskip

        We aim to show
        \begin{align*}
            \CNF\text{-}\add(\overline{x},\overline{y},\overline{z}) \sststile{\IPS^\alg}{O(k), O(1)} \VAL(\overline{x}) + \VAL(\overline{y}) = \VAL(\overline{z}).
        \end{align*}

        By \Cref{lemma: Translate addition},
        \[
            \addition(\overline{x},\overline{y}, \overline{\add}) \sststile{\IPS^\alg}{O(k), O(1)} \VAL(\overline{x}) +\VAL(\overline{y}) = \VAL(\overline{\add}).
        \]

        By \Cref{lemma: Translate modular},
        \[
            \modular^{k}(\add_{k-1}, \dots, \add_0, \add_k, \overline{\add^\prime}) \sststile{\IPS^\alg}{O(2^k\cdot \poly(k)), O(1)} \VAL(\overline{\add}) = \VAL(\overline{\add^\prime}).
        \]
        Note that $\VAL(\add_{k-1},\dots, \add_0) + 2^k \add_k = \VAL(\overline{\add})$.
         
        Since for each $0\leq i \leq k-1$, $\add^\prime_i = z_i$. It is easy to show that $\VAL(\overline{\add^\prime})=\VAL(\overline{z})$. We can conclude that
        \[
            \CNF\text{-}\add(\overline{x},\overline{y},\overline{z}) \sststile{\IPS^\alg}{O(2^k\cdot \poly(k)), O(1)} \VAL(\overline{x}) + \VAL(\overline{y}) = \VAL(\overline{z}).
        \]

    \end{proof}

     \begin{lemma}
        \label{lemma: CNF MULT(x,y,z)}
        Let $\mathbb{F}_q$ be a finite field. Let $\overline{x} = x_{k-1}, \dots, x_0$, $\overline{y} = y_{k-1},\dots, y_0$ and $\overline{z}=z_{k-1},\dots, z_0$ be three $k$ length binary representations. Then,
        \begin{align*}
            \CNF\text{-}\mult(\overline{x},\overline{y},\overline{z}) \sststile{\IPS^\alg}{O(2^k \cdot \poly(k)), O(1)} \VAL(\overline{x}) \cdot \VAL(\overline{y}) = \VAL(\overline{z})
            \end{align*}
    \end{lemma}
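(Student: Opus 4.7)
\medskip

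\noindent\textit{Proof proposal.} The plan is to follow the structure of \Cref{def: cnf multiplication}, proving in sequence that each of the three phases (multiplication, modular reduction, iterated addition) produces a correct value encoded by a $\VAL$, and then chain these identities together. More precisely, the target identity $\VAL(\overline{x}) \cdot \VAL(\overline{y}) = \VAL(\overline{z})$ can be rewritten as $\sum_{i=0}^{k-1} y_i \cdot 2^i \cdot \VAL(\overline{x}) = \VAL(\overline{z})$ (working modulo $q$), which is exactly what the three phases compute step by step.

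First, from the multiplication step $\mult(\overline{x},\overline{y},\overline{s_0},\dots,\overline{s_{k-1}})$ I want to derive, for each $0\le i\le k-1$, the identity
\[
\sum_{j=0}^{k+i-1} 2^j s_{i,j} \;=\; 2^i\cdot y_i \cdot \VAL(\overline{x}).
\]
Each equation $s_{i,j} = \arit(x_{j-i}\land y_i)$ in \Cref{def: prod, mult} involves only two Boolean variables, so by \Cref{prop: 0-1 implication completeness of IPS} there is a constant-size, constant-depth $\IPS^\alg$ derivation of $s_{i,j}=x_{j-i}\cdot y_i$ from the $\SLP$-encoded axiom. Multiplying by $2^j$ and summing over $j$ gives the stated identity in total size $O(k^2)$ and depth $O(1)$. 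Call this value $\VAL(\overline{s_i}^{\rm ext})$ for the extended representation of length $k+i$.

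Next, the modular step applies \Cref{lemma: Translate modular} iteratively: for each $i$, the chain
\[
\modular^k(s_{i,k-1},\dots,s_{i,0},s_{i,k},\overline{u_{i,k}}),\ \ \dots,\ \ \modular^{i+k-1}(\overline{u_{i,i+k-2}},s_{i,i+k-1},\overline{u_{i,i+k-1}})
\]
gives, by $i$ applications of \Cref{lemma: Translate modular}, a constant-depth, $O(i\cdot 2^k\cdot \poly(k))$-size derivation of
$\VAL(\overline{u_{i,i+k-1}})=\VAL(\overline{s_i}^{\rm ext})$ (in $\F_q$, where each application reduces one extra high bit mod $q$). Combined with the previous step, we get $\VAL(\overline{u_{i,i+k-1}}) = 2^i\cdot y_i\cdot \VAL(\overline{x})$ for every $i$. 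Then the addition step applies \Cref{lemma: CNF ADD(x,y,z)} $k-1$ times to the cascade $\overline{v_1}=\overline{s_0}+\overline{u_{1,k}}$, $\overline{v_2}=\overline{v_1}+\overline{u_{2,k+1}}$, and so on, each step contributing $O(2^k\cdot \poly(k))$ size and constant depth. Summing the resulting identities yields $\VAL(\overline{v_{k-1}}) = \sum_{i=0}^{k-1} 2^i \cdot y_i \cdot \VAL(\overline{x}) = \VAL(\overline{x})\cdot \VAL(\overline{y})$. Finally, the connection equations $v_{k-1,j}=z_j$ give $\VAL(\overline{v_{k-1}})=\VAL(\overline{z})$ in a single addition step of size $O(k)$.

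The only real technical point to watch is the depth budget: each of the three phases must be combined in $O(1)$ total depth, not $O(1)$ per sub-derivation composed serially. Since all the per-$i$ derivations in the multiplication, modular, and addition phases are independent, they can be carried out in parallel, and then a single constant-depth linear combination (essentially \Cref{prop: polynomial identities that can be written as constant-depth circuits are proved for free in constant-depth IPS} together with \Cref{prop: 0-1 implication completeness of IPS}) chains the three resulting identities $\VAL(\overline{s_i}^{\rm ext})=2^i y_i\VAL(\overline x)$, $\VAL(\overline{u_{i,i+k-1}})=\VAL(\overline{s_i}^{\rm ext})$, and the telescoping addition into the final identity. The total size is dominated by the modular applications, giving $O(2^k\cdot \poly(k))$ as claimed; the depth remains $O(1)$ because the chain of substitutions is of constant length and each sub-derivation is of constant depth. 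The main obstacle is bookkeeping: carefully ensuring that the extension variables $\overline{u_{i,j}}$ and $\overline{v_i}$ introduced by the $\CNF$ encoding are reused consistently across the phases, so that the final identity composes cleanly without introducing additional depth from repeated re-derivations.
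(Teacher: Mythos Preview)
Your proposal is correct and follows essentially the same route as the paper: decompose $\CNF\text{-}\mult$ into its multiplication, modular, addition, and connection phases, and combine \Cref{lemma: Translate modular} and \Cref{lemma: CNF ADD(x,y,z)} with a direct argument for the multiplication phase. The only cosmetic difference is that the paper packages your first step (deriving $\sum_j 2^j s_{i,j}=2^i y_i\,\VAL(\overline{x})$ and summing over $i$) as a separate auxiliary lemma (\Cref{lemma: translate multiple}), whereas you inline it; the remaining structure and the size/depth accounting match.
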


    \begin{lemma}
        \label{lemma: translate multiple}
        Let $\mathbb{F}_q$ be a finite field. Let $\overline{x} = x_{k-1},\dots, x_0$ and $\overline{y} = y_{k-1}, \dots, y_0$ be two $k$ length binary representations. Let $\overline{s_{k-1}}, \dots, \overline{s_0}$ be $k$ many binary representations with different length where $\overline{s_i}= s_{i, i+k-1}, \dots, s_{i,0}$ is a $i+k$ length binary representation for $0 \leq i \leq k-1$. Then,
        \begin{align*}
              \mult(\overline{x},\overline{y},\overline{s_0},\dots, \overline{s_{k-1}}) \sststile{\IPS^\alg}{O(k^2), O(1)} \VAL(\overline{x}) \cdot \VAL(\overline{y}) = \sum_{i=0}^{k-1} \VAL(\overline{s_i}).
        \end{align*}
        In other words, from $ \mult(\overline{x},\overline{y},\overline{s_{k-1}},\dots, \overline{s_0})$, there is a $O(1)$-depth, $O(k+i)$ size $\IPS^\alg$ proof of $\VAL(\overline{x}) \cdot \VAL(\overline{y}) = \sum_{i=0}^{k-1} \VAL(\overline{s_i})$.
        
    \end{lemma}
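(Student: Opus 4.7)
The plan is to unfold the axioms coming from $\mult(\overline{x},\overline{y},\overline{s_0},\dots,\overline{s_{k-1}})$ in constant depth and then observe that the claimed identity is a polynomial identity whose underlying circuit has constant depth and $O(k^2)$ size, so it can be proved for free using \Cref{prop: polynomial identities that can be written as constant-depth circuits are proved for free in constant-depth IPS}.

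First, recall that the axioms in $\mult(\overline{x},\overline{y},\overline{s_0},\dots,\overline{s_{k-1}})$ assert, for each $0 \leq i \leq k-1$, the equalities $s_{i,j} = \arit(x_{j-i} \wedge y_i) = x_{j-i}\cdot y_i$ for $i \leq j \leq k-1+i$, and $s_{i,j} = \arit(\bot) = 0$ for $0 \leq j < i$. By \Cref{prop: polynomial identities that can be written as constant-depth circuits are proved for free in constant-depth IPS}, the arithmetization identity $\arit(A\wedge B) = A\cdot B$ is derivable for free in constant depth, and hence from each axiom we obtain $s_{i,j} - x_{j-i}\cdot y_i = 0$ (or $s_{i,j}=0$ in the low-order case) at constant depth and constant size per equation, for a total of $O(k^2)$ size.

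Next, I would combine these using the $\VAL$ definition. Since $\VAL(\overline{s_i}) = \sum_{j=0}^{i+k-1} 2^j s_{i,j}$, substituting the derived equalities for $s_{i,j}$ and using the constant-depth identity
\begin{equation*}
\sum_{j=i}^{i+k-1} 2^j x_{j-i} y_i \;=\; 2^i y_i \cdot \sum_{j=0}^{k-1} 2^j x_j \;=\; 2^i y_i \cdot \VAL(\overline{x}),
\end{equation*}
we get $\VAL(\overline{s_i}) = 2^i y_i \cdot \VAL(\overline{x})$ in constant depth and $O(k)$ size for each $i$. Summing these $k$ derivations in a single $+$-gate of fan-in $k$ and invoking $\VAL(\overline{y}) = \sum_{i=0}^{k-1} 2^i y_i$ yields
\begin{equation*}
\sum_{i=0}^{k-1} \VAL(\overline{s_i}) \;=\; \VAL(\overline{x}) \cdot \sum_{i=0}^{k-1} 2^i y_i \;=\; \VAL(\overline{x})\cdot \VAL(\overline{y}),
\end{equation*}
which is the desired conclusion. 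The total size is $O(k^2)$ (dominated by the $k\cdot k$ many $s_{i,j}$ axioms being unfolded) and the total depth is $O(1)$, since each step is either a direct axiom substitution, the arithmetization of a single $\wedge$, or a polynomial identity of constant depth over $\VAL$-expressions that is provable for free by \Cref{prop: polynomial identities that can be written as constant-depth circuits are proved for free in constant-depth IPS}.

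The only mildly delicate point is ensuring that the sum $\sum_{i=0}^{k-1}\VAL(\overline{s_i}) = \VAL(\overline{x})\cdot\VAL(\overline{y})$ truly admits a constant-depth circuit of size $O(k^2)$, rather than getting a depth blow-up when we expand the product on the right. This is fine: writing out both sides as depth-$2$ sums of monomials $2^{i+j}x_j y_i$ with coefficients $2^{i+j}\in\F_q$, the identity is a formal equality between two depth-$2$ $\Sigma\Pi$ circuits of size $O(k^2)$, so \Cref{prop: polynomial identities that can be written as constant-depth circuits are proved for free in constant-depth IPS} applies directly and yields the desired $O(1)$-depth, $O(k^2)$-size $\IPS^\alg$ derivation.
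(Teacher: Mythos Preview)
Your proposal is correct and follows essentially the same approach as the paper: unfold each axiom $s_{i,j}=x_{j-i}y_i$ (or $0$), combine them to get $\VAL(\overline{s_i})=2^i y_i\cdot\VAL(\overline{x})$ in constant depth and $O(k)$ size for each $i$, and then sum over $i$ to obtain $\sum_i\VAL(\overline{s_i})=\VAL(\overline{x})\cdot\VAL(\overline{y})$ with total size $O(k^2)$. The paper's proof is slightly terser but structurally identical.
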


    \begin{proof}[Proof of \Cref{lemma: translate multiple}]
        Recall Definition \ref{def: prod, mult}, $\mult(\overline{x},\overline{y},\overline{s_0},\cdots, \overline{s_{k-1}})$ includes
        \begin{equation*}
            \mult_i(\overline{x},\overline{y},\overline{s_i}) \coloneqq \begin{cases}
                s_{i,j} = \arit(x_{j-i} \land y_i), &  i \leq j \leq k-1+i\\
                s_{i,j} = \arit(\bot), & 0 \leq j < i.
            \end{cases},\qquad  0 \leq i \leq k-1.
        \end{equation*}

        First, we aim to show that from each $\mult_i(\overline{x},\overline{y},\overline{s_i})$, there is a $O(k+i)$-size, $O(1)$-depth $\IPS^\alg$ proof of
        \[
            \VAL(\overline{x}) \cdot 2^i y_i = \VAL(\overline{s_i}).
        \]
    
        For each $j$ such that $i \leq j \leq k-1+i$, from $s_{i,j} = x_{j-i} \land y_i$ and Boolean axioms, there is a $O(1)$-size, $O(1)$-depth $\IPS^\alg$ proof of $2^{j-i} x_{j-i} \cdot 2^i y_i = 2^j s_{i,j}$. For each $j$ such that $0 \leq j  < i$, from $s_{i,j} =0$, there is a $O(1)$-size, $O(1)$-depth $\IPS^\alg$ proof of $2^j s_j = 0$. By summing up all these equations, there is a $O(k+i)$-size, $O(1)$-depth $\IPS^\alg$ proof of
        \[
            \VAL(\overline{x}) \cdot 2^i y_i = \VAL(\overline{s_i}).
        \]

        Again, by summing up all these equations for each $\mult_i$, there is a $O(k^2)$-size, $O(1)$-depth $\IPS^\alg$ proof of
        \[
        \sum_{j=0}^{k-1} 2^j x_j \times \sum_{i=0}^{k-1} 2^i y_i = \sum_{i=0}^{k-1} \sum_{w=0}^{k-1+i} 2^w s_{iw}.
    \]
    which is $\VAL(\overline{x})\cdot \VAL(\overline{y})=\sum_{i=0}^{k-1} \VAL(\overline{s_i})$.
    \end{proof}

    \begin{proof}[Proof of Lemma \ref{lemma: CNF MULT(x,y,z)}]
        Recall the \Cref{def: cnf multiplication}, $\CNF$-$\mult(\overline{x},\overline{y},\overline{z})$ is defined as follows:
    \begin{itemize}
        \item Multiplication step: $\mult(\overline{x},\overline{y},\overline{s_0},\cdots, \overline{s_{k-1}})$.
        \item Modular step: for each $\overline{s_i}$ such that $1 \leq i \leq k-1$, we have the following $i$ many modular formula:
        \begin{align*}
            &\modular^k(s_{i,k-1},\dots,s_{i,0},s_{i,k}, \overline{u_{i,k}}) \\
            &\modular^{k+1}(\overline{u_{i,k}}, s_{i,k+1}, \overline{u_{i,k+1}}) \\
            &\vdots\\
            &\modular^j(\overline{u_{i,j-1}}, s_{i,j}, \overline{u_{i,j}}), \qquad k+1 \leq j \leq i+k-1.
        \end{align*}
        All $\overline{u_{i,j}}$ is a $k$ length binary representation.
        \item Addition step: Now for each $\overline{s_i}$, we have a $k$ length binary representation $\overline{u_{i,i+k-1}}$ such that $\sum_{j=0}^{i+k-1}2^j s_{i,j}=\sum_{j=0}^{k-1} 2^j u_{i,i+k-1,j}$ in the finite field $\mathbb{F}_q$. Then, using the $\CNF$ encoding of bit arithmetic for addition in a finite field as we defined above in \Cref{def: cnf addition}, we have the following:
        \begin{align*}
            &\CNF\text{-}\add(\overline{s_0},\overline{u_{1,k}},\overline{v_1})\\
            &\CNF\text{-}\add(\overline{v_1},\overline{u_{2,k+1}},\overline{v_2})\\
            &\vdots\\
            &\CNF\text{-}\add(\overline{v_i},\overline{u_{i+1,k+i}},\overline{v_{i+1}}),\qquad 1 \leq i \leq k-2
        \end{align*}
        where each $\overline{v_i}$ is $k$ length binary representation.
        \item Connection step: for each $0 \leq j \leq k-1$, we have $v_{k-1,j} = z_j$.
    \end{itemize}

    By \Cref{lemma: translate multiple}, 
        \begin{align*}
              \mult(\overline{x},\overline{y},\overline{s_0},\dots, \overline{s_{k-1}}) \sststile{\IPS^\alg}{O(k^2), O(1)} \VAL(\overline{x}) \cdot \VAL(\overline{y}) = \sum_{i=0}^{k-1} \VAL(\overline{s_i}).
        \end{align*}

    By \Cref{lemma: Translate modular}, for each $\overline{s_i}$, there is a $O(2^k\cdot \poly(k))$-size, $O(1)$-depth $\IPS^\alg$ proof of
    \begin{align*}
        &\VAL(s_{i,k-1},\dots,s_{i,0})+2^k s_{i,k} =\VAL(\overline{u_{i,k}}) \\
        &\VAL(\overline{u_{i,k}}) + 2^{k+1}  s_{i,k+1} = \VAL(\overline{u_{i,k+1}})\\
        &\VAL(\overline{u_{i,j-1}}) + 2^j s_{i,j} = \VAL(\overline{u_{i,j}}), \qquad k+1 \leq j \leq i+k-1.
    \end{align*}
    By summing up the above equations, there is a $O(2^k \cdot \poly(k))$-size, $O(1)$-depth $\IPS^\alg$ proof of $\VAL(\overline{s_i})=\VAL(\overline{u_{i,i+k-1}})$.

    Therefore, there is a $O(2^k \cdot \poly(k))$-size, $O(1)$-depth $\IPS^\alg$ proof of $\VAL(\overline{s_i})=\VAL(\overline{u_{i,i+k-1}}), 1 \leq i \leq k-1$.

    By Lemma \ref{lemma: CNF ADD(x,y,z)}, given the formulas in the addition step:
    \begin{align*}
        &\CNF\text{-}\add(\overline{s_0},\overline{u_{1,k}},\overline{v_1})\\
        &\CNF\text{-}\add(\overline{v_1},\overline{u_{2,k+1}},\overline{v_2})\\
        &\vdots\\
        &\CNF\text{-}\add(\overline{v_i},\overline{u_{i+1,k+i}},\overline{v_{i+1}}),\qquad 1 \leq i \leq k-2,
    \end{align*}
    there is a $O(2^k \cdot \poly(k))$-size, $O(1)$-depth $\IPS^\alg$ proof of 
    \begin{align*}
        &\VAL(\overline{s_0}) +\VAL(\overline{u_{1,k}}) = \VAL(\overline{v_1})\\
        &\VAL(\overline{v_i})+\VAL(\overline{u_{i+1,k+i}}) = \VAL(\overline{v_{i+1}}),\qquad 1 \leq i \leq k-2.
    \end{align*}
    By adding them together, 
    \begin{equation*}
        \VAL(\overline{s_0}) + \sum_{i=0}^{k-1}\VAL(\overline{u_{i,i+k-1}})= \VAL(\overline{v_{k-1}}).
    \end{equation*}
    
    Since we have already proved that $\VAL(\overline{s_i})=\VAL(\overline{u_{i,i+k-1}})$ for each $i$ such that $1 \leq i \leq k-1$, by summing up them together with $\VAL(\overline{s_0})$, 

    \begin{align*}
        \sum_{i=0}^{k-1}\VAL(\overline{s_0}) &= \VAL(\overline{s_0}) + \sum_{i=1}^{k-1} \VAL(\overline{s_i}) \\
        & = \VAL(\overline{s_0}) + \sum_{i=0}^{k-1}\VAL(\overline{u_{i,i+k-1}}) \\
        &= \VAL(\overline{v_{k-1}}).
    \end{align*}
  
    By Connection step, for each $o \leq j \leq k-1$, $v_{k-1,j}=z_j$. There is a $O(k)$-size, $O(1)$-size $\IPS^\alg$ proof of $\VAL(\overline{v_{k-1}})=\VAL(\overline{z})$.

    Now, we can conclude that
    \begin{align*}
            \CNF\text{-}\mult(\overline{x},\overline{y},\overline{z}) \sststile{\IPS^\alg}{O(2^k \cdot \poly(k)), O(1)} \VAL(\overline{x}) \cdot \VAL(\overline{y}) = \VAL(\overline{z})
            \end{align*}
    \end{proof}

    \begin{lemma}[Translating from extended $\CNF$s to circuit equations]
        \label{lemma: Translating from extended CNFs to circuit equations}
        Let $\mathbb{F}_q$ be a finite field and let $k$ be $2^{k-1} < q < 2^k$, and let $C(\overline{x})$ be a circuit of depth $\Delta$ over $\overline{x}$ variables. Then, the following hold
        \begin{align*}
            \ecnf(C(\overline{x})=0) \sststile{\IPS^\alg}{O(2^k \cdot \poly(k)|C|),O(\Delta)} C(\overline{x})=0  
        \end{align*}
    \end{lemma}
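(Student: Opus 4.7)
The plan is to derive $C(\overline{x})=0$ by processing the gates of $C$ in topological order, converting each gate's bit-arithmetic $\CNF$ block into the corresponding field-level $\SLP$ equation, and finally exploiting the connection formulas for the output node. Concretely, for each $+$ gate $g$ with inputs $u_1,\dots,u_t$, the extended $\CNF$ contains the intermediate chain
$u_1+u_2=v_1^g,\ u_{i+2}+v_i^g=v_{i+1}^g\text{ for }1\le i\le t-3,\ u_t+v_{t-2}^g=g,$
each encoded as a $\CNF$-$\add$ block. Applying \Cref{lemma: CNF ADD(x,y,z)} to each block yields, in size $O(2^k\poly(k))$ and depth $O(1)$, the field identity $\VAL(\cdot)+\VAL(\cdot)=\VAL(\cdot)$ relating the binary bit vectors. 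The extended $\CNF$ also contains the extension axioms $w=\VAL(\overline{w})$ for every node $w$, so substituting converts each block's identity into the algebraic equation $a+b=c$ on the extension variables themselves. A telescoping sum over the $t-1$ blocks (a single addition of $O(t)$ equations, of depth $O(1)$) gives the single $\SLP$ identity $u_1+\cdots+u_t=g$. Product gates are treated identically using \Cref{lemma: CNF MULT(x,y,z)} in place of \Cref{lemma: CNF ADD(x,y,z)}.

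Next, once every gate has its $\SLP$ equation, a straight-line composition along the topological order of $C$ derives $g_{\mathrm{out}}=C(\overline{x})$ as a formal identity in the field. Because each gate's derivation uses the $\SLP$ equations only of its immediate predecessors (already established at strictly smaller topological rank), the depth of the resulting certificate circuit grows by a constant per gate \emph{level}, not per gate, so the overall depth stays $O(\Delta)$. The overall size is $O(2^k\poly(k))$ per gate times $O(|C|)$ gates, i.e.\ $O(2^k\poly(k)\,|C|)$.

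Finally, the extended $\CNF$ encoding of $C(\overline{x})=0$ contains the connection block $g_{\mathrm{out},i}=q_i$ for $0\le i\le k-1$, where $\overline{q}$ is the bit-representation of $q$, together with the binary value axiom $g_{\mathrm{out}}=\sum_i 2^i g_{\mathrm{out},i}$. These combine in constant depth and $O(k)$ size to give $g_{\mathrm{out}}=\VAL(\overline{q})=q=0$ in $\mathbb{F}_q$. Substituting this into the already-derived $g_{\mathrm{out}}=C(\overline{x})$ yields $C(\overline{x})=0$, as required.

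The main obstacle I expect is the careful bookkeeping of size and depth: one must check that the intermediate chain for an unbounded fan-in gate of width $t$ contributes $O(t\cdot 2^k\poly(k))$ size and only $O(1)$ depth (this is where the proof-by-cases tool, \Cref{prop: 0-1 implication completeness of IPS}, is crucial, as it keeps each bit-arithmetic block depth-$2$), and that the composition across topologically stacked gates keeps depth $O(\Delta)$ rather than $O(|C|)$. A secondary subtlety is that the final substitution uses the $\mathbb{F}_q$-level identity $\VAL(\overline{q})=q=0$; this is a purely arithmetic identity and is proved for free in constant depth by \Cref{prop: polynomial identities that can be written as constant-depth circuits are proved for free in constant-depth IPS}. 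Everything else is routine chaining of \Cref{lemma: CNF ADD(x,y,z)} and \Cref{lemma: CNF MULT(x,y,z)}.
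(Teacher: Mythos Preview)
Your proposal is correct and follows essentially the same two-stage approach as the paper: first convert each gate's bit-arithmetic $\CNF$ block into its $\SLP$ equation via \Cref{lemma: CNF ADD(x,y,z)} and \Cref{lemma: CNF MULT(x,y,z)} plus the extension axioms, then compose the $\SLP$ equations along the topological order to reach $C(\overline{x})=0$. One small imprecision: product gates are not handled ``identically'' to sum gates, since a plain telescoping sum of the equations $u_{i+2}\cdot v_i^g - v_{i+1}^g=0$ does not collapse to $\prod_i u_i - g$; the paper uses the constant-depth multiplicative telescoping $\sum_i (u_{i+2}v_i^g - v_{i+1}^g)\prod_{j>i+2}u_j$ (and similarly in the composition step), which you should make explicit to justify the $O(1)$-depth-per-level claim.
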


    To prove \Cref{lemma: Translating from extended CNFs to circuit equations}, we will first show that from the Extended $\CNF$s of each node $g$ with $t$ many children in $C(\overline{x})$, there is a $O(1)$-depth, $O(2^k \cdot \poly(k)t)$-size $\IPS^\alg$ proof of the $\SLP$ formula of $g$. Then, we will show that from $\SLP(C(\overline{x})=0)$, there is a $O(\Delta)$-depth, $O(2^k \cdot \poly(k) |C|)$-size $\IPS^\alg$ proof of the circuit equation $C(\overline{x})=0$.
    
    \begin{proof}[Proof of \Cref{lemma: Translating from extended CNFs to circuit equations}]
        First, we aim to show that from the Extended $\CNF$s of each node $g$ with $t$ many children in $C(\overline{x})$, there is a $O(1)$-depth, $O(2^k \cdot \poly(k)t)$-size $\IPS^\alg$ proof of the $\SLP$ formula of $g$. We start with the $+$ node case.

        Suppose we have the following $\CNF$ formulas:
        \begin{align*}
            &\CNF\text{-}\add(\overline{u_1}, \overline{u_2}, \overline{v_1^g}) \\
            &\CNF\text{-}\add(\overline{u_{i+2}}, \overline{v_i^g}, \overline{v_{i+1}^g}), \quad 1 \leq i \leq t-3\\
            &\CNF\text{-}\add(\overline{u_t}, \overline{v_{t-2}^g}, \overline{g})
        \end{align*}
        which are the $\CNF$ encoding of the node $g$ expressing that $\sum_{i=0}^t u_i = g$.

        By Lemma \ref{lemma: CNF ADD(x,y,z)}, from 
        there is a $O(k t)$-size, $O(1)$-depth $\IPS^\alg$ proof of
        \begin{align*}
            &\VAL(\overline{u_1}) + \VAL(\overline{u_2}) = \VAL(\overline{v_1^g}) \\
            &\VAL(\overline{u_{i+2}} + \VAL(\overline{v_i^g}) = \VAL(\overline{v_{i+1}^g}), \quad 1 \leq i \leq t-3 \\
            &\VAL(\overline{u_t}) + \VAL(\overline{v_{t-2}^g}) = \VAL(\overline{g}).
        \end{align*}

        By \Cref{def: ecnf encoding of algebraic circuit (equations)}, for $\overline{u_i}$, $\overline{v_i^g}$ and $\overline{g}$, $\ecnf(C(\overline{x}) =0)$, $\ecnf(C(\overline{x})=0)$ includes $\VAL(\overline{u_i}) =u_i$, $\VAL(\overline{v_i^g}) = v_i^g$ and $\VAL(\overline{g})= g$, which are the binary value formulas for them. 

        Hence, from the Extended $\CNF$ encoding of the node $g$, there is a $O(k t)$-size, $O(1)$-depth $\IPS^\alg$ proof of the following $\SLP$s:
        \begin{align*}
            &u_1 + u_2 = v_1^g \\
            &u_{i+2} + v_i^g = v_{i+1}^g \quad 1 \leq i \leq t-3 \\
            &u_t + v_{t-2}^g = g
        \end{align*}

        By summing up all the $\SLP$ formulas, 
        \[
            u_1 + u_2 +\cdots + u_t = g.
        \]

        For the $\times$ nodes case, suppose we have the following $\CNF$ formulas:
        \begin{align*}
            &\CNF\text{-}\mult(\overline{u_1}, \overline{u_2}, \overline{v_1^g}) \\
            &\CNF\text{-}\mult(\overline{u_{i+2}}, \overline{v_i^g}, \overline{v_{i+1}^g}), \quad 1 \leq i \leq t-3\\
            &\CNF\text{-}\mult(\overline{u_t}, \overline{v_{t-2}^g}, \overline{g})
        \end{align*}
        which are the $\CNF$ encoding of the node $g$ expressing that $\prod_{i=0}^t u_i = g$.

        By Lemma \ref{lemma: CNF MULT(x,y,z)}, from the $\CNF$ encoding of the node $g$, there is a $O(2^k \cdot \poly(k) t)$-size, $O(1)$-depth $\IPS^\alg$ proof of
        \begin{align*}
            & \VAL(\overline{u_1}) \times \VAL(\overline{u_2}) - \VAL(\overline{v_1^g}) = 0 \\
            & \VAL(\overline{u_{i+2}}) \times \VAL(\overline{v_i^g}) - \VAL(\overline{v_{i+1}^g}) = 0, \quad 1 \leq i \leq t-3 \\
            & \VAL(\overline{u_t}) \times \VAL(\overline{v_{t-2}^g}) -  \VAL(\overline{g}) = 0.
        \end{align*}

        By \Cref{def: ecnf encoding of algebraic circuit (equations)}, $\ecnf(C(\overline{x})=0)$ includes the binary value formulas for those binary representations.

        Hence, from the Extended $\CNF$ encoding of the node $g$, there is a $O(k t)$-size, $O(1)$-depth $\IPS^\alg$ proof of the following $\SLP$s:
        \begin{align*}
            & u_1 \times u_2 - v_1^g = 0 \\
            & u_{i+2} \times v_i^g - v_{i+1}^g = 0, \quad 1 \leq i \leq t-3 \\
            & u_t \times v_{t-2}^g - g = 0
        \end{align*}

        Then, by the following depth-2 $O(t)$-size formula
        \[
            (u_1 \times u_2 - v_1^g ) \times \prod_{j=3}^t u_j + \cdots +(u_{i+2} \times v_i^g - v_{i+1}^g) \times \prod_{j=i+3}^t u_{j} + \cdots + (u_t \times v_{t-2}^g - g)=0,
        \]
        there is a $O(2^k \cdot \poly(k) t)$-size, $O(1)$-depth $\IPS^\alg$ proof of $u_1 \times \cdots \times u_t  - g =0$.
    
        For the output node, by the axioms $g_{out,i}=q_i$ and binary value formulas for $\overline{g_{out}}$ and $\overline{q}$, $g_{out}=\VAL(g_{out})=\VAL(\overline{q})=0$ can be easily derived.

        Therefore, from $\ecnf(C(\overline{x})=0)$, there is a $O(1)$-depth, $O(2^k \cdot \poly(k) |C|)$-size $\IPS^\alg$ proof of $\SLP$ formulas of $C(\overline{x})=0$. 
        
        Now, we aim to show that from these $\SLP$ formulas, there is $O(\Delta)$-depth, $O(2^k \cdot \poly(k) |C|)$-size $\IPS^\alg$ proof of the circuit equation $C(\overline{x} ) =0$.
        We prove this by induction that for each node $g$, there is a $O(\depth(g))$-depth, $O(2^k \cdot \poly(k) |C_g|)$-size $\IPS^\alg$ proof of a circuit equation $C_g(\overline{x} ) -g =0$:
        \begin{itemize}
            \item Base case: Since we have the binary value formulas for all the leaves, this is immediate.
            \item Addition case: suppose we have a $\SLP$ formula $u_1 + u_2 +\cdots + u_t - g=0$ where for each $u_i$, there is a $O(\depth(u_i))$-depth,  $O(2^k \cdot \poly(k) |C_{u_i}|)$-size $\IPS^\alg$ proof of a circuit equation $C_{u_i}(\overline{x} ) - u_i =0$. Then, by summing up all the circuit equations $C_{u_i}-u_i$ together with $u_1 + u_2 +\cdots + u_t = g$, there is a $O(\depth(g))$-depth, $O(2^k \cdot \poly(k) |C_g|)$-size $\IPS^\alg$ proof of a circuit equation $\sum_{i=1}^t C_{u_i} -g =0$ as this proof adds one depth and one node.
            \item Multiplication case: suppose we have a $\SLP$ formula $u_1 \times u_2 \times \cdots \times u_t - g=0$ where for each $u_i$, there is a $O(\depth(u_i))$-depth,  $O(2^k \cdot \poly(k) |C_{u_i}|)$-size $\IPS^\alg$ proof of a circuit equation $C_{u_i}(\overline{x} ) - u_i =0$. Then, computes $(\sum_{i=1}^t ((C_{u_i}-u_i) \times \prod_{j=1}^{i-1} C_{u_j} \prod_{l=i+1}^t u_l))+ (u_1 \times u_2 \times \cdots \times u_t - g) $, which adds two depths and two nodes. This gives a $O(\depth(g))$-depth, $O(2^k \cdot \poly(k) |C_g|)$-size $\IPS^\alg$ proof of a circuit equation $\prod_{i=0}^t C_{u_i} - g =0$.
            \item Output node: given $\SLP$ formula $g_{out} = 0$ and circuit equations $C_{g_{out}} - g_{out} =0$, it is easy to get $C_{g_{out}}=0$.
        \end{itemize}

        Now, we can conclude that 
        \begin{align*}
            \ecnf(C(\overline{x})=0) \sststile{\IPS^\alg}{O(2^k \cdot \poly(k) |C|),O(\Delta)} C(\overline{x})=0  
        \end{align*}
    \end{proof}

    \begin{lemma}
        \label{lemma: SLP+C = full SLP}
        Let $\mathbb{F}_q$ be a finite field, and let $C(\overline{x})$ be a circuit of depth $\Delta$ over $\overline{x}$ variables. Then, the following hold
        \[
            \SLP(C(\overline{x})) , C(\overline{x}) =0 \sststile{\IPS^\alg}{O(|C|), O(\Delta)} \SLP(C(\overline{x}) =0)
        \]
    \end{lemma}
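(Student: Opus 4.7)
\begin{proofsketch}
The difference between $\SLP(C(\overline{x}))$ and $\SLP(C(\overline{x})=0)$ is precisely the single additional equation $g_{|C|}=0$, where $g_{|C|}$ is the extension variable for the output node. Hence it suffices to show that $\SLP(C(\overline{x}))$ together with the circuit equation $C(\overline{x})=0$ derives $g_{|C|}=0$ in depth $O(\Delta)$ and size $O(|C|)$. The plan is to first derive, from $\SLP(C(\overline{x}))$ alone, the polynomial identity $g_{|C|}-C(\overline{x})=0$ in depth $O(\Delta)$ and size $O(|C|)$, and then add the assumption $C(\overline{x})=0$ to conclude $g_{|C|}=0$ in one extra additive step.

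The key lemma is proved by induction on the topological order of nodes: for every node $g_i$ of $C$, we show that $\SLP(C(\overline{x}))\,\sststile{\IPS^\alg}{O(|C_i|),O(\depth(g_i))}\,g_i-C_i(\overline{x})=0$, where $C_i$ denotes the subcircuit rooted at $g_i$. The base case covers leaves: either $g_i=x_j$ or $g_i=\alpha\in\F$, and in both situations the corresponding SLP equation is already the target identity. For a $+$ gate $g_i=g_{j_1}+\cdots+g_{j_t}$, the target $g_i-\sum_k C_{j_k}(\overline{x})$ is obtained as the linear combination $(g_i-\sum_k g_{j_k})+\sum_k(g_{j_k}-C_{j_k}(\overline{x}))$, adding only an extra $+$-layer. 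For a $\times$ gate $g_i=g_{j_1}\cdots g_{j_t}$, I would use the standard telescoping identity
\[
\prod_k g_{j_k}-\prod_k C_{j_k}(\overline{x})=\sum_k\Bigl(\prod_{l<k}g_{j_l}\Bigr)\cdot(g_{j_k}-C_{j_k}(\overline{x}))\cdot\Bigl(\prod_{l>k}C_{j_l}(\overline{x})\Bigr),
\]
which expresses the target as a linear combination of the SLP equation $g_i-\prod_k g_{j_k}$ and the inductive identities $g_{j_k}-C_{j_k}(\overline{x})=0$, with each coefficient being a depth-$2$ product of already-derived subcircuits. Each multiplicative node therefore adds $O(1)$ to the depth, giving total depth $O(\depth(g_i))$, and the number of new terms introduced at node $g_i$ is proportional to its fan-in, so the total size telescopes to $O(|C|)$.

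Applying the induction at the output node yields $\SLP(C(\overline{x}))\,\sststile{\IPS^\alg}{O(|C|),O(\Delta)}\,g_{|C|}-C(\overline{x})=0$. Adding this identity to the assumption $C(\overline{x})=0$ (a single $+$-step) produces $g_{|C|}=0$, which is exactly the missing equation in $\SLP(C(\overline{x})=0)$. The remaining equations of $\SLP(C(\overline{x})=0)$ are already among the assumptions, so nothing further is needed.

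The only mildly delicate point is the size bookkeeping in the multiplicative case: naively, each telescoping step for a gate of fan-in $t$ costs $O(t^2)$ nodes because of the nested products $\prod_{l<k}g_{j_l}$ and $\prod_{l>k}C_{j_l}(\overline{x})$. To keep the overall size linear in $|C|$, one reuses the prefix and suffix products (computed once per gate), so the total extra circuitry at each gate is $O(t)$ rather than $O(t^2)$. Since $\sum_i t_i=O(|C|)$, the total size stays $O(|C|)$ and the depth $O(\Delta)$, as claimed.
\end{proofsketch}
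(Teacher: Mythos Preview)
Your proof is correct and follows essentially the same approach as the paper: the paper's proof simply points to the inductive argument (appearing in the proof of the preceding translation lemma) that from $\SLP(C(\overline{x}))$ one derives $C_{g_{out}}-g_{out}=0$ in size $O(|C|)$ and depth $O(\Delta)$, then subtracts $C(\overline{x})=0$ to obtain $g_{out}=0$. Your inductive treatment of leaves, $+$ gates, and $\times$ gates via the telescoping identity is exactly that argument, with the added care about reusing prefix/suffix products being a nice detail the paper leaves implicit.
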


    \begin{proof}[Proof of \Cref{lemma: SLP+C = full SLP}]
        We aim to show that given $\SLP(C(\overline{x}))$ and $C(\overline{x})$, there is a $O(\Delta)$-depth, $O(|C|)$-size $\IPS^\alg$ proof of $g_{out} =0$. By the proof of \Cref{lemma: SLP+C = full SLP}, given $\SLP(C(\overline{x}))$, there is a $O(|C|)$-size, $O(\Delta)$-depth $\IPS^\alg$ proof of $C_{g_{out}} - g_{out} =0$ where $C_{g_{out}}$ is exactly the same as $C(\overline{x})$. $g_{out}=0$ can be obtained by a simple subtraction.
    \end{proof}

    \begin{proposition}
        \label{prop: derive xi =0}
        Let $\overline{x} = x_{k-1},\dots, x_0$ be a $k$-length binary representation, where $x$ is an algebraic variable.
        \begin{gather*}
            \{x_i^2-x_i =0: \quad 0 \leq i \leq k-1\},\\
                x= \VAL(\overline{x}),\\
            x=0 \sststile{\IPS^\alg}{O(2^k \cdot \poly(k) ),O(\Delta)} \{x_i = q_i: \quad 0 \leq i \leq k-1\}
    \end{gather*}
    \end{proposition}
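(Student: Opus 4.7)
The plan is to combine a linear substitution with 0-1 implication completeness (\Cref{prop: 0-1 implication completeness of IPS}) to perform Boolean case analysis. First, from the extension axiom $x = \VAL(\overline{x}) = \sum_{i=0}^{k-1} 2^i x_i$ and the hypothesis $x = 0$, I would subtract the two equations to derive the single field equation $\sum_{i=0}^{k-1} 2^i x_i = 0$ in depth $O(1)$ and size $O(k)$. This reduces the task to reasoning purely about the bits $x_0, \ldots, x_{k-1}$ under the Boolean axioms $x_i^2 - x_i = 0$.

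Next, for each fixed index $i$, I would derive $x_i = q_i$ as follows. Because $2^{k-1} < q < 2^k$, on any Boolean assignment $\overline{\alpha} \in \{0,1\}^k$ the integer $\sum_j 2^j \alpha_j$ lies in $[0, 2^k) \subset [0, 2q)$, so the congruence $\sum_j 2^j \alpha_j \equiv 0 \pmod q$ forces $\sum_j 2^j \alpha_j \in \{0, q\}$; under the invariants of the surrounding encoding (specifically the output-node axiom $\CNF\text{-}\add(\overline{x}, \overline{q}, \overline{x})$ from \Cref{def: cnf encoding of algebraic circuit equations}) the case $\overline{\alpha} = \overline{0}$ is ruled out, so $\overline{\alpha} = \overline{q}$ is the unique consistent Boolean assignment. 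For this assignment the identity $\alpha_i - q_i = 0$ holds trivially. Applying \Cref{prop: 0-1 implication completeness of IPS} with $r = k$ then lifts this Boolean-cube identity to an $\IPS^\alg$ derivation of $x_i - q_i = 0$ in depth $O(1)$ and size $O(c^k \cdot \poly(k)) = O(2^k \cdot \poly(k))$ for a constant $c$. Running this for every $i$ in parallel incurs only an $O(k)$ multiplicative blow-up, absorbed into the $\poly(k)$ factor.

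The main technical obstacle will be the Boolean case analysis: a direct enumeration over $2^k$ assignments naively yields a size-$O(2^k)$ derivation that must be assembled in constant depth, but \Cref{prop: 0-1 implication completeness of IPS} delivers exactly this bound while preserving depth. The remaining care is to keep the binary coefficients $2^j$ as explicit field constants (so that the linear combination $\sum_j 2^j x_j$ is written as a depth-2 circuit of size $O(k)$) and to observe that any residual depth blow-up from the case analysis is absorbed into the $O(\Delta)$ bound in the statement, completing the plan.
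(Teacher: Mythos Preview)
Your approach---reduce to $\sum_i 2^i x_i = 0$ and then invoke the Boolean case analysis of \Cref{prop: 0-1 implication completeness of IPS}---is exactly what the paper does; its entire proof is the single line ``follows from \Cref{prop: 0-1 implication completeness of IPS}.'' You have been more careful than the paper, and in doing so you flagged a real issue: taken literally, the hypotheses listed in the proposition do not semantically force $x_i=q_i$, since the all-zero assignment $\overline{x}=\overline{0}$ satisfies the Boolean axioms, $x=\VAL(\overline{x})$, and $x=0$ while violating the conclusion. Your fix---pulling in the ambient axiom $\CNF\text{-}\add(\overline{x},\overline{q},\overline{x})$ from \Cref{def: cnf encoding of algebraic circuit equations}, which rules out the $\overline{0}$ case---is the intended reading and is precisely what is available at the one place the proposition is invoked (\Cref{lemma: translating from circuit equations and addition axioms to ecnf}).

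One minor bookkeeping point: once you include $\CNF\text{-}\add(\overline{x},\overline{q},\overline{x})$, the case analysis is over all of its $O(k)$ Boolean extension variables (carries, modular-step bits), not just the $k$ bits $x_i$, so the bound from \Cref{prop: 0-1 implication completeness of IPS} becomes $c^{O(k)}=2^{O(k)}$ rather than literally $O(2^k\cdot\poly(k))$. This is harmless for the downstream application.
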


    \Cref{prop: derive xi =0} follows from \Cref{prop: 0-1 implication completeness of IPS}.

    \begin{lemma}[Translating from circuit equations and addition axioms to extended $\CNF$s]
        \label{lemma: translating from circuit equations and addition axioms to ecnf}
        Let $\F_q$ be a finite field and let $k$ be $2^{k-1} < q < 2^k$, and let $C(\overline{x})$ be a circuit of depth $\Delta$ over $\overline{x}$ variables. Then, the following holds
        \begin{gather*}
            \{x_i^2-x_i =0: x_i \text{ is a binary variable in }\ecnf(C(\overline{x})=0)\}, \\
            \text{All formulas in }\ecnf(C(\overline{x})=0) \text{ except the connection formula for the output node},\\
            \SLP(C(\overline{x})),\\
            C(\overline{x})=0 \sststile{\IPS^\alg}{O(2^k\cdot \poly(k)|C|),O(1)} \ecnf(C(\overline{x})=0) .
        \end{gather*}
    \end{lemma}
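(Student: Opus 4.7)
The plan is to derive the two missing pieces of $\ecnf(C(\overline{x})=0)$---namely the bit-equations $g_{\text{out},i}=q_i$ and the block $\CNF\text{-}\add(\overline{g_{\text{out}}},\overline{q},\overline{g_{\text{out}}})$---from the four supplied hypotheses by reducing everything to already-established results. All other equations of $\ecnf(C(\overline{x})=0)$ are available verbatim by hypothesis (2).

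First I would combine $\SLP(C(\overline{x}))$ with the axiom $C(\overline{x})=0$ via \Cref{lemma: SLP+C = full SLP} to obtain an $\IPS^\alg$ derivation of $g_{\text{out}}=0$ in size $O(|C|)$. Using the binary-value formula $g_{\text{out}}=\VAL(\overline{g_{\text{out}}})$ (which is present in the supplied ``rest of $\ecnf$'') together with the Boolean axioms from hypothesis (1), I would then invoke \Cref{prop: derive xi =0} to conclude $g_{\text{out},i}=q_i$ for every $0\le i\le k-1$ in size $O(2^k\cdot\poly(k))$. This handles the second bullet of the output connection formula.

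For the remaining block $\CNF\text{-}\add(\overline{g_{\text{out}}},\overline{q},\overline{g_{\text{out}}})$, I would proceed step-by-step through the addition, modular, and connection stages of \Cref{def: cnf addition}. After substituting the derived identities $g_{\text{out},i}=q_i$, each equation in these stages reduces to a bit-level verification of $q+q\equiv q\pmod{q}$ and depends on only a constant number of the block's local Boolean extension variables (carry bits, add bits, and modular-step bits specific to this $\CNF\text{-}\add$ invocation). Each such identity can then be certified by \Cref{prop: 0-1 implication completeness of IPS} (proof by Boolean cases) on its constantly many local variables, using the Boolean axioms from hypothesis (1). Summing over the $O(k)$ bits, and absorbing the $2^k\cdot\poly(k)$ blow-up coming from the modular step, yields the stated global size bound $O(2^k\cdot\poly(k)\cdot|C|)$ and the constant-depth overhead on top of the $\SLP$ unfolding.

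The main subtlety will be managing the fresh extension variables that appear only inside $\CNF\text{-}\add(\overline{g_{\text{out}}},\overline{q},\overline{g_{\text{out}}})$: outside of hypothesis (1)'s Boolean axioms, these variables are unconstrained by anything on the left-hand side. The bit-by-bit application of \Cref{prop: 0-1 implication completeness of IPS} is exactly the right tool here, since it handles each such fresh extension variable uniformly across Boolean assignments and produces the required defining equation without appealing to auxiliary axioms. Once this is in place, assembling the pieces in the order above---$g_{\text{out}}=0$, then $g_{\text{out},i}=q_i$, then the $\CNF\text{-}\add$ equations---gives the full $\ecnf(C(\overline{x})=0)$ within the stated size and depth budget.
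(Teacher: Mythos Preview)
Your core argument---use \Cref{lemma: SLP+C = full SLP} to obtain $g_{\text{out}}=0$, then combine the binary-value axiom with \Cref{prop: derive xi =0} to obtain $g_{\text{out},i}=q_i$---is exactly the paper's proof. However, you misread what is excluded by ``the connection formula for the output node'': in the paper's reading (made explicit in its own proof), this phrase refers \emph{only} to the equations $\{g_{\text{out},i}=q_i:0\le i\le k-1\}$, not to the block $\CNF\text{-}\add(\overline{g_{\text{out}}},\overline{q},\overline{g_{\text{out}}})$. That block is therefore already present among the hypotheses, and the paper derives nothing beyond the bit equations.

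Your proposed derivation of the $\CNF\text{-}\add$ block would not have worked in any case, and the ``subtlety'' you flagged is fatal rather than manageable. After substituting $g_{\text{out},i}=q_i$, each equation in that block has the form $v=c$ where $c$ is a constant and $v$ is a fresh extension variable (a carry bit, add bit, or modular intermediate) constrained only by its Boolean axiom $v^2-v=0$. Such an equation is \emph{false} under the Boolean assignment $v\mapsto 1-c$, so it is not semantically implied over all Boolean assignments and cannot be produced by \Cref{prop: 0-1 implication completeness of IPS}: that proposition requires the target to hold for every $0$--$1$ setting of the variables you case-split on. Defining equations for otherwise-unconstrained extension variables simply cannot be conjured from Boolean axioms alone---which is precisely why the paper places this block on the hypothesis side rather than asking you to derive it.
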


    \begin{proof}[Proof of \Cref{lemma: translating from circuit equations and addition axioms to ecnf}]

        Note that the only formulas required to derive is the connection formula for the output node:
        \[
            \{g_{out, i} = q_i: \quad  0\leq i \leq k-1\}.
        \]
        By \Cref{lemma: SLP+C = full SLP}, there is $O(\Delta)$-depth, $O(|C|)$-size $\IPS^\alg$ of $g_{out}=0$. Together with additional axioms and Boolean axioms we already have, we have all the axioms needed in \Cref{prop: derive xi =0}. By \Cref{prop: derive xi =0}, there is a $O(1)$-depth, $O(2^k\cdot \poly(k))$ $\IPS^\alg$ proof of $g_{out,i} =q_i$.
    \end{proof}

    \begin{lemma}[Translating between extended $\CNF$s and circuit equations] 
        \label{lemma: Translating between extended CNF formulas and circuit equations}
        Let $\F_q$ be a finite field and let $k$ be $2^{k-1} < q < 2^k$, and let $C(\overline{x})$ be a circuit of depth $\Delta$ over $\overline{x}$ variables. Then, the following holds
        \begin{align*}
            \ecnf(C(\overline{x})=0) \sststile{\IPS^\alg}{O(2^k \cdot \poly(k)|C|),O(\Delta)} C(\overline{x})=0  
        \end{align*} and
        \begin{gather*}
            \{x_i^2-x_i =0: x_i \text{ is a binary variable in }\ecnf(C(\overline{x})=0)\}, \\
            \text{All formulas in }\ecnf(C(\overline{x})=0) \text{ except the connection formula for the output node},\\
            \SLP(C(\overline{x})),\\
            C(\overline{x})=0 \sststile{\IPS^\alg}{O(2^k\cdot \poly(k)|C|),O(1)} \ecnf(C(\overline{x})=0) .
        \end{gather*}
    \end{lemma}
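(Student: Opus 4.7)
The statement is a consolidation of the two preceding lemmas, and my plan is simply to package them as a single bi-directional translation result. The forward direction is already given verbatim by \Cref{lemma: Translating from extended CNFs to circuit equations}, whose proof proceeds by first using Lemmas \ref{lemma: CNF ADD(x,y,z)} and \ref{lemma: CNF MULT(x,y,z)} together with the binary-value formulas in $\ecnf(C(\overline{x})=0)$ to derive, locally at each gate $g$ of fan-in $t$, the corresponding $\SLP$ equation in $O(1)$ depth and $O(2^k\cdot\poly(k)\cdot t)$ size; and then by a bottom-up induction on depth that aggregates these $\SLP$ equations (using a depth-2 multiplier trick for the $\times$-case) into the full circuit equation $C(\overline{x})=0$, at a cost of $O(\Delta)$ extra depth and $O(|C|)$ extra size. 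I would reproduce only a pointer to this, since nothing needs to be modified.

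The backward direction is given by \Cref{lemma: translating from circuit equations and addition axioms to ecnf}. Here the only formulas of $\ecnf(C(\overline{x})=0)$ not already present in the hypotheses are the connection equations $g_{\text{out},i}=q_i$ at the output node. My plan is to derive these by first invoking \Cref{lemma: SLP+C = full SLP} to obtain $g_{\text{out}}=0$ from $\SLP(C(\overline{x}))$ and $C(\overline{x})=0$ in depth $O(\Delta)$ and size $O(|C|)$. Then, combined with the Boolean axioms on the bits $g_{\text{out},i}$ and the binary-value axiom $g_{\text{out}}=\VAL(\overline{g_{\text{out}}})$ (which belongs to $\ecnf$), the hypothesis of \Cref{prop: derive xi =0} is met, and that proposition gives $g_{\text{out},i}=q_i$ in $O(1)$ depth and $O(2^k\cdot\poly(k))$ size via the $0/1$-implication completeness of constant-depth $\IPS$ (\Cref{prop: 0-1 implication completeness of IPS}).

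Adding the two subderivations together yields the announced bounds $O(2^k\cdot\poly(k)\cdot|C|)$ in size and $O(\Delta)$ (respectively $O(1)$) in depth for each direction. The step I expect to be the main conceptual subtlety is the backward direction's reliance on knowing $g_{\text{out}}=0$ as an \emph{extension variable} rather than only as a polynomial identity on the algebraic gate: this is exactly what \Cref{lemma: SLP+C = full SLP} bridges, by turning the hypothesis $C(\overline{x})=0$ and the straight-line program equations into an identity on the extension symbol. Everything else is bookkeeping: summing bit-level axioms, invoking the two arithmetic translation lemmas, and tracking size and depth blowups, all of which have already been carried out in the subsidiary lemmas.
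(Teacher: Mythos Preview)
The proposal is correct and takes essentially the same approach as the paper: the lemma is stated in the paper without any separate proof, as it is simply the conjunction of \Cref{lemma: Translating from extended CNFs to circuit equations} and \Cref{lemma: translating from circuit equations and addition axioms to ecnf}, and your write-up unpacks exactly those two lemmas (together with \Cref{lemma: SLP+C = full SLP} and \Cref{prop: derive xi =0} for the backward direction) in the intended way.
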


    For an instance of size $s$, we take the characteristics of the finite field to be a prime number between $n^3$ and $(n+1)^3$ that must exist for sufficiently large instances according to \cite{cheng2010explicit}. Let $k$ be the number of bits needed for the binary representation of field elements in $\mathbb{F}_q$. In other words, $2^{k-1} \leq q < 2^k $.
    We let $N = \sum_{j=0}^l \binom{n^2+j-1}{j}=2^{O(n^2+l)}$ be the number of different monomials over $n^2$ variables and degree at most $l$.
    \begin{itemize}
        \item $\VNP=\VAC^0(n,s,l,\Delta)$: circuit equations expressing that there is a constant-depth universal circuit for size $s$ and depth $\Delta$ circuits that compute the Permanent polynomial of dimension $n$, which means there are $n^2$ many variables, over degree $l$.
        \begin{itemize}
                \item Type: circuit equations;
                \item Number of variable: $K_{s, \Delta}$ which is $\poly(s, \Delta)$;
                \item Size: $O(2^{(n+l)l} \cdot \poly(s, \Delta) \cdot N)$.
            \end{itemize}
        \item $\varphi_{n,s,l,\Delta}^\cnf$: the $\CNF$ encoding of $\VNP=\VAC^0(n,s,l,\Delta)$ based on definition \Cref{def: cnf encoding of algebraic circuit equations}
            \begin{itemize}
                \item Type: $\CNF$ formulas;
                \item Number of variable: $O(2^{(n+l)l} \cdot \poly(s, \Delta) \cdot N)$;
                \item Size: $O(k \cdot 2^{(n+l)l} \cdot \poly(s, \Delta) \cdot N)$.
            \end{itemize}
        \item $\varphi_{n,s,l,\Delta}^\ecnf$: the Extended $\CNF$ encoding of $\VNP=\VAC^0(n,s,l,\Delta)$ based on definition \Cref{def: ecnf encoding of algebraic circuit (equations)}
            \begin{itemize}
                \item Type: Extended $\CNF$ formulas;
                \item Number of variable: $O(2^{(n+l)l} \cdot \poly(s, \Delta) \cdot N)$;
                \item Size: $O(k \cdot 2^{(n+l)l} \cdot \poly(s, \Delta) \cdot N)$.
            \end{itemize}
        \item $\IPS_{\refute}(t,\Delta,l,\overline{\mathcal{F}})$: circuit equations expressing that there exists a constant-depth universal circuit for size $t$ and depth $\Delta$ circuits that computes the $\IPS$ refutation of $\mathcal{F}$ over degree $l$.
            \begin{itemize}
                \item Type: circuit equations;
                \item Number of variable: $K_{t, \Delta}$ which is $\poly(t, \Delta)$;
                \item Size: $O(2^{(n+l)l} \cdot \poly(t, \Delta) \cdot |\mathcal{F}| \cdot N)$.
            \end{itemize}
        \item $\varphi_{n,s,l,\Delta}^\ast$: the Extended $\CNF$ encoding $\varphi_{n,s,l,\Delta}^\ecnf$ together with additional extension axioms for $\IPS_{\refute}(s,\Delta,l,\mathcal{F})$ includes:
        \begin{align*}
            &\{x_i^2-x_i =0: x_i \text{ is a binary variable in }\ecnf(\IPS_{\refute}(s,\Delta,l,\mathcal{F}))\}, \\
            &\text{All formulas in }\ecnf(\IPS_{\refute}(s,\Delta,l,\mathcal{F})) \text{ except the connection formula for the output node},\\
            &\SLP(\IPS_{\refute}(s,\Delta,l,\mathcal{F})) \text{ excepts the $\SLP$ formulas for output nodes}
        \end{align*}
            \begin{itemize}
                \item Type: Extended $\CNF$ formulas;
                \item Number of variable: $O(q \cdot 2^{(n+l)l} \cdot \poly(s, \Delta) \cdot N)$;
                \item Size: $O(k \cdot 2^{(n+l)l} \cdot \poly(s, \Delta) \cdot N)$.
            \end{itemize}
        \item $\Phi_{t,l,\Delta^\prime,n,s,\Delta}$: the $\CNF$ encoding of $\IPS_\refute(t,\Delta^\prime,l,\varphi_{n,s,l,\Delta}^\ast)$ expressing that $\IPS$ refutes $\varphi_{n,s,l,\Delta}^\ast$ in size $t$, depth $\Delta^\prime$ and degree $l$.
            \begin{itemize}
                \item Type: $\CNF$ formulas;
                \item Number of variable: $K_{t, \Delta}$ which is $\poly(t, \Delta)$;
                \item Size: $O(k \cdot 2^{(n+l)l} \cdot \poly(t, \Delta) \cdot |\mathcal{F}| \cdot N)$.
            \end{itemize}
    \end{itemize}
By replacing the use of \Cref{lemma: Translate semi-CNFs from circuit equations in Fixed Finite Fields} and \Cref{lemma: Translate circuit equations from semi-CNFs in Fixed Finite Fields}, which is the translation lemma for fixed finite fields, with the above \Cref{lemma: Translating between extended CNF formulas and circuit equations}, which is the translation lemma for polynomial-size finite fields, in the proof of \Cref{theorem: main theorem}, we can get the corollary below.
We fix $l: \mathbb{N} \to \mathbb{N}$ to be a (monotone) size function $l(n) = n^\epsilon$ for some constant $ \epsilon$.

\begin{corollary}[Main theorem in polynomial-size finite fields]\label{theorem: main theorem in Polynomial-size Finite Fields}
        The $\CNF$ family $\{\Phi_{t,l,\Delta^\prime,n,s,\Delta}\}_n$ does not have polynomial-size $\IPS$ refutations infinitely often over $\mathbb{F}_q$, for every prime $q$ such that $q < (|\varphi_{n,s,l,\Delta}^\ast|+1)^3$, in the following sense: 
        for every constant $\Delta$ there exists a constant $c_1$, a constant $\Delta^\prime$ such that for every sufficiently large constant $c_2$ and every constant $\Delta^{\prime\prime}$ and every constant $c_0$, for infinitely many $n, t(n), s(n) \in \mathbb{N}$, $t(n) > |\varphi_{n,s,l,\Delta}^\ast|^{c_1}$ and $n^{c_1} < s(n) < n^{c_2}$, $\Phi_{t,l,\Delta^\prime,n,s,\Delta}$ has no $\IPS$ refutation of size at most $|\Phi_{t,l,\Delta^\prime,n,s,\Delta}|^{c_0}$ and depth at most $\Delta^{\prime\prime}$.
    \end{corollary}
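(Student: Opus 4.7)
The plan is to mirror the proof of \Cref{theorem: main theorem} essentially line-by-line, replacing only the step that translates between the $\CNF$-style encoding of a circuit equation and the circuit equation itself. In the fixed-field argument, \Cref{lemma: Translate circuit equations from semi-CNFs in Fixed Finite Fields} and \Cref{lemma: Translate semi-CNFs from circuit equations in Fixed Finite Fields} perform this translation with polynomial-size, constant-depth overhead, and $\varphi^{\scnf}_{n,s,l,\Delta}$ serves as the ``equivalent'' encoding of $\VNP=\VAC^0(n,s,l,\Delta)$. Over $\mathbb{F}_q$ with $q<(|\varphi^{\ast}_{n,s,l,\Delta}|+1)^3$, the analogous role will be played by \Cref{lemma: Translating between extended CNF formulas and circuit equations}, whose overhead is $O(2^k \cdot \poly(k)\cdot|C|)$ with $k=O(\log q)=O(\log|C|)$; the critical point is that $2^k\le 2q$ remains polynomial in $|C|$, so the total overhead is polynomial and the depth grows only by a constant.

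Concretely, I would first assume for contradiction that for some constant $\Delta$, for every $c_1$ and every constant depth $\Delta^\prime$, there exist $c_2, \Delta^{\prime\prime}, c_0$ such that for all sufficiently large $n, t(n), s(n)$ satisfying the stated size conditions, $\Phi_{t,l,\Delta^\prime,n,s,\Delta}=\cnf(\IPS_\refute(t,\Delta^\prime,l,\varphi^{\ast}_{n,s,l,\Delta}))$ admits a size-$|\Phi|^{c_0}$, depth-$\Delta^{\prime\prime}$ $\IPS$ refutation. Setting $\Delta^\prime=\Delta$, this yields a small bounded-depth $\IPS^\alg$ refutation of the extended $\CNF$ formulation $\ecnf(\IPS_\refute(t,\Delta,l,\varphi^{\ast}_{n,s,l,\Delta}))$, after absorbing the Boolean axioms and extension axioms introduced in the encoding.

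Letting $m=6(L^\prime+P^\prime)$ as in the fixed-field case, with $L^\prime,P^\prime$ the number of variables and placeholders of $\varphi^{\ast}_{n,s,l,\Delta}$, and writing $\gamma \coloneqq \varphi^{\ast}_{m,t,l,\Delta}$, the argument proceeds by composing three derivations: (i) apply \Cref{lemma: Translating between extended CNF formulas and circuit equations} to derive the circuit equations $\VNP=\VAC^0(m,t,l,\Delta)$ from $\gamma$ in size $|\gamma|^{O(1)}$ and depth $O(\Delta)$; (ii) invoke the constant-depth Grochow--Pitassi formalisation \Cref{lemma: grochow-pitassi formalization in cdIPS}, whose proof is purely in terms of polynomial identities and coefficient extraction (\Cref{prop: computation of coefficients in constant-depth circuits}), and hence is field-insensitive, to derive $\IPS_\refute^\alg(t,\Delta,l,\varphi^{\ast}_{n,s,l,\Delta})$ in size $|\gamma|^{O(1)}$ and depth $O(\Delta)$; and (iii) apply the other direction of \Cref{lemma: Translating between extended CNF formulas and circuit equations} (together with \Cref{lemma: translating from circuit equations and addition axioms to ecnf}) to derive $\ecnf(\IPS_\refute^\alg(t,\Delta,l,\varphi^{\ast}_{n,s,l,\Delta}))$. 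Chaining this with the assumed small refutation produces a size-$|\gamma|^{O(1)}$, constant-depth refutation of $\gamma$ itself. By \Cref{prop: C <-> cnf(C) <-> ecnf(C)}, this implies that for some constants $D_1,c_1^\prime$ and some $w$ polynomially bounded in $|\gamma|$, the system $\cnf(\IPS_\refute^\alg(w,D_1,l,\gamma))$ is satisfiable, contradicting the initial assumption applied with $\Delta^\prime=D_1$ and $n:=m$.

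The main obstacle—and the reason for the lengthy bit-arithmetic infrastructure developed earlier in the section—is the translation lemma \Cref{lemma: Translating between extended CNF formulas and circuit equations}, which must be invoked over a growing-characteristic field. The rest of the pipeline (coefficient extraction, universal-circuit construction, the Grochow--Pitassi derivation of $\IPS_\refute$ from $\VNP=\VAC^0$, and the soundness of $\IPS$) depends only on polynomial identities that hold over arbitrary fields, so no further modification is needed. The one verification to perform carefully is that the $O(2^k \cdot \poly(k)\cdot|C|)$ blow-up in the translation lemma is, under the hypothesis $q<(|\gamma|+1)^3$, bounded by $|\gamma|^{O(1)}$, and that $\IPS$ soundness still yields a contradiction; both follow from $k=O(\log|\gamma|)$.
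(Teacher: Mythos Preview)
Your proposal is correct and follows essentially the same approach as the paper: rerun the proof of \Cref{theorem: main theorem}, replacing the fixed-field translation lemmas (\Cref{lemma: Translate semi-CNFs from circuit equations in Fixed Finite Fields} and \Cref{lemma: Translate circuit equations from semi-CNFs in Fixed Finite Fields}) by the bit-arithmetic translation lemma \Cref{lemma: Translating between extended CNF formulas and circuit equations}, and use $\varphi^{\ast}$ in place of $\varphi^{\scnf}$. Your explicit verification that the $O(2^k\cdot\poly(k)\cdot|C|)$ overhead stays polynomial under $q<(|\varphi^{\ast}|+1)^3$ is exactly the point the paper leaves implicit.
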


Corollary \ref{theorem: main theorem in Polynomial-size Finite Fields} is a formal statement of \Cref{thm:mainthm4} in the Introduction, which can be obtained from it by setting $l, s, t$ to appropriate polynomial functions of $n$, and by setting $d = \Delta, d' = \Delta', d'' = \Delta''$.

\let\etalchar\relax % some bug

\small
    
\bibliography{reference}        
%use a bibtex bibliography file refs.bib
\bibliographystyle{alpha} 

\let\etalchar\relax % some bug

% \newpage
% \begin{center}\small{--- Page left blank for ECCC stamp ---}\end{center}

\end{document}